\numberwithin{equation}{section}
\newtheorem{theorem}{Theorem}[section]
\newtheorem{lemma}[theorem]{Lemma}
\newtheorem{proposition}[theorem]{Proposition}
\newtheorem{corollary}[theorem]{Corollary}
\theoremstyle{definition}
\theoremstyle{remark}
\DeclareMathOperator{\sinc}{sinc}
\DeclareMathOperator{\ch}{ch}
\DeclareMathOperator{\sh}{sh}
\DeclareMathOperator{\tg}{tg}
\DeclareMathOperator{\thyp}{th}
\DeclareMathOperator{\arctg}{arctg}
\DeclareMathOperator{\Cos}{Cos}
\DeclareMathOperator{\Sinc}{Sinc}
\newcommand{\R}{{\mathbb R}}
\newcommand{\Z}{{\mathbb Z}}
\renewcommand{\C}{{\mathbb C}}
\renewcommand{\Im}{\mathrm{Im}\,}
\renewcommand{\Re}{\mathrm{Re}\,}
\begin{document}

\title[Coupling constant dependence]{Coupling constant dependence \\for the Schr\"odinger equation \\with an inverse-square potential}

\author{A.G.~Smirnov}
\address{I.~E.~Tamm Theory Department, P.~N.~Lebedev
Physical Institute, Leninsky prospect 53, Moscow 119991, Russia}
\email{smirnov@lpi.ru}
\keywords{Schr\"odinger equation, inverse-square potential, self-adjoint extension, eigenfunction expansion, Titchmarsh-Weyl $m$-function}

\begin{abstract}
We consider the one-dimensional Schr\"odinger equation $-f''+q_\alpha f = Ef$ on the positive half-axis with the potential $q_\alpha(r)=(\alpha-1/4)r^{-2}$. It is known that the value $\alpha=0$ plays a special role in this problem: all self-adjoint realizations of the formal differential expression $-\partial^2_r + q_\alpha(r)$ for the Hamiltonian have infinitely many eigenvalues for $\alpha<0$ and at most one eigenvalue for $\alpha\geq 0$. We find a parametrization of self-adjoint boundary conditions and eigenfunction expansions that is analytic in $\alpha$ and, in particular, is not singular at $\alpha = 0$. Employing suitable singular Titchmarsh--Weyl $m$-functions, we explicitly find the spectral measures for all self-adjoint Hamiltonians and prove their smooth dependence on $\alpha$ and the boundary condition. Using the formulas for the spectral measures, we analyse in detail how the ``phase transition'' through the point $\alpha=0$ occurs for both the eigenvalues and the continuous spectrum of the Hamiltonians. 
\end{abstract}

\maketitle

\section{Introduction}
\label{intro}

This paper is devoted to eigenfunction expansions connected with the one-dimensional Schr\"odinger equation
\footnote{The shift of the coupling constant by $1/4$ in the potential term is a matter of technical convenience: it allows us to get simpler expressions for the solutions of~(\ref{1}).}
\begin{equation}\label{1}
-\partial^2_r f(r) + \frac{\alpha-1/4}{r^2} f(r) = Ef(r),\quad r\in\R_+,
\end{equation}
where $\alpha$ and $E$ are real parameters and $\R_+$ denotes the positive half-axis $(0,\infty)$.

There are two special values of the coupling constant $\alpha$ at which this problem undergoes a structural change. One of them is $\alpha=1$. For $\alpha<1$, all solutions of~(\ref{1}) are square-integrable on the interval $(0,a)$ for every $a>0$. At the same time,  only one solution (up to a constant factor) for each $E$ possesses this property for $\alpha\geq 1$. In terms of the well-known Weyl alternative, this means that the differential expression
\begin{equation}\label{2}
-\partial^2_r + \frac{\alpha - 1/4}{r^2}
\end{equation}
corresponds at $r=0$ to the limit point case for $\alpha\geq 1$ and to the limit circle case for $\alpha<1$. As a consequence, (\ref{2}) has a unique self-adjoint realization in $L_2(\R_+)$\footnote{Here and subsequently, we let $L_2(\R_+)$ denote the Hilbert space of (equivalence classes of) square-integrable complex functions on $\R_+$.} for $\alpha\geq 1$ and infinitely many self-adjoint realizations in $L_2(\R_+)$ for $\alpha<1$. The latter correspond to various self-adjoint boundary conditions at $r=0$.

Another special value of $\alpha$ is $\alpha=0$. It has long been known~\cite{Case, Meetz} that the spectrum of all self-adjoint realizations of~(\ref{2}) is not bounded from below and contains infinitely many negative eigenvalues for $\alpha<0$. On the other hand, every self-adjoint realization of~(\ref{2}) has at most one eigenvalue for $\alpha\geq 0$ (the continuous spectrum is $[0,\infty)$ for all real $\alpha$).

If $\kappa\in\R$ and $\alpha=\kappa^2$, then the function\footnote{In this paper, we use the symbol $\sqrt{x}$ only for nonnegative $x$; the notation $z^{1/2}$ will be used for a suitable branch of the square root in the complex plane.} $f(r) = \sqrt{r} J_\kappa(\sqrt{E}r)$, where $J_\kappa$ is the Bessel function of the first kind of order $\kappa$, is a solution of~(\ref{1}) for every $E>0$ (this follows immediately from the fact that $J_\kappa$ satisfies the Bessel equation). These solutions can be used to expand square-integrable functions on $\R_+$. More precisely, given $\kappa\geq 0$ and a square-integrable complex function $\psi$ on $\R_+$ that vanishes for large $r$, we can define the function $\hat \psi$ on $\R_+$ by setting
\begin{equation}\label{3}
\hat\psi(E) = \frac{1}{\sqrt{2}}\int_0^\infty \sqrt{r}J_{\kappa}(\sqrt{E} r)\psi(r)\,dr,\quad E>0.
\end{equation}
The map $\psi\to\hat\psi$ then coincides up to a change of variables with the well-known Hankel transformation~\cite{Hankel} and induces a uniquely determined unitary operator in $L_2(\R_+)$. Since the development of a general theory of singular Sturm--Liouville operators by Weyl~\cite{Weyl}, this transformation has been used by many authors to illustrate various approaches to eigenfunction expansions for problems of this type~\cite{Weyl1,Titchmarsh,Naimark,GesztesyZinchenko,Fulton,KST}.

For $\alpha\geq 1$, transformation~(\ref{3}) with $\kappa=\sqrt{\alpha}$ provides an eigenfunction expansion (i.e., a diagonalizing unitary operator) for the unique self-adjoint realization of~(\ref{2}). If $0\leq\alpha<1$, then it is an eigenfunction expansion for one of infinitely many self-adjoint realizations of~(\ref{2}), namely, for the Friedrichs extension of the minimal operator $h_\alpha$ associated with~(\ref{2}) (see~\cite{EverittKalf}; the precise definition of $h_\alpha$ will be given later in this section). As we shall see, the latter is not bounded from below and, therefore, has no Friedrichs extension for $\alpha<0$. Accordingly, the right-hand side of~(\ref{3}) as
a function of $\alpha$ has a branch point at $\alpha=0$ and cannot be analytically continued to the region $\alpha<0$.

For $\alpha\geq 0$, eigenfunction expansions corresponding to all self-adjoint realizations of~(\ref{2}) were found in~\cite{Titchmarsh} (however, without explicitly using the language of operators in Hilbert space). In~\cite{Meetz}, all self-adjoint Hamiltonians associated with~(\ref{2}) and corresponding eigenfunction expansions were constructed for every real $\alpha$ using the theory of self-adjoint extensions of symmetric operators (a somewhat different treatment of this problem in the framework of self-adjoint extensions can be found in~\cite{GTV2010,GTV2012}).

The generalized eigenfunctions used in~\cite{Titchmarsh,Meetz,GTV2010,GTV2012} had the same type of branch point singularity at $\alpha=0$ as that appearing in Hankel transformation~(\ref{3}).
As a result, the cases $0<\alpha<1$, $\alpha=0$, and $\alpha<0$ were treated separately and eigenfunction expansions for $\alpha = 0$ could not be obtained from those for $0<\alpha<1$ and $\alpha<0$ by taking the limit $\alpha\to 0$. In~\cite{Smirnov2016}, we considered problem~(\ref{1}) with $\alpha=\kappa^2$ and constructed a parametrization of self-adjoint realizations of~(\ref{2}) and corresponding eigenfunction expansions that is continuous in $\kappa$ on the interval $(-1,1)$ (and, in particular, at $\kappa=0$). This work was motivated by our previous research~\cite{Smirnov2015} of the Aharonov--Bohm model, where zero and nonzero $\kappa$ correspond to integer and noninteger values of the dimensionless magnetic flux through the solenoid. In terms of $\alpha$, the results of~\cite{Smirnov2016} give a continuous transition from the region $0<\alpha<1$ to $\alpha=0$.

In this paper, we extend the treatment in~\cite{Smirnov2016} to also cover the region $\alpha<0$. We parametrize all eigenfunction expansions associated with~(\ref{2}) in such a way that the generalized eigenfunctions turn out to be analytic in $\alpha$ for $\alpha<1$, while the corresponding spectral measures are infinitely differentiable in $\alpha$ on the same interval. Using explicit formulas for the spectral measures, we analyse in detail how the transition through the point $\alpha=0$ occurs for both the eigenvalues and the continuous spectrum of self-adjoint realizations of~(\ref{2}) in this parametrization.

We now give a brief structural description of our results.

For every $\alpha\in\C$, we define the function $q_\alpha$ on $\R_+$  by setting
\begin{equation}\label{qalpha}
q_\alpha(r) = \frac{\alpha-1/4}{r^2}, \quad r>0.
\end{equation}
For real $\alpha$, $q_\alpha$ is the potential term in~(\ref{2}).

Let $\lambda_+$ be the restriction to $\R_+$ of the Lebesgue measure $\lambda$ on $\R$ and $\mathcal D$ be the space of all complex continuously differentiable functions on $\R_+$ whose derivative is absolutely continuous on $\R_+$ (i.e., absolutely continuous on every segment $[a,b]$ with $0<a\leq b<\infty$).
Given $\alpha,z\in\C$, we let $\mathscr L_{\alpha,z}$ denote the linear operator from $\mathcal D$ to the space of complex $\lambda_+$-equivalence classes such that
\begin{equation}\label{lalphaz}
(\mathscr L_{\alpha,z} f)(r) = -f''(r)+ q_\alpha(r) f(r) - z f(r)
\end{equation}
for $\lambda$-a.e.\footnote{Throughout the paper, a.e. means either ``almost every'' or ``almost everywhere.''} $r\in \R_+$ and set
\begin{equation}\label{lalpha0}
\mathscr L_\alpha = \mathscr L_{\alpha,0}.
\end{equation}

Let $\alpha\in\R$. We define the linear subspace $\Delta_\alpha$ of $\mathcal D$ by setting
\begin{equation}\label{Delta}
\Delta_\alpha = \{f\in \mathcal D : f \mbox{ and } \mathscr L_\alpha f \mbox{ are both square-integrable on } \R_+\}.
\end{equation}
For every linear subspace $Z$ of $\Delta_\alpha$, let $H_\alpha(Z)$ be the linear operator in $L_2(\R_+)$ defined by the relations\footnote{Here and subsequently, we let $D_F$ denote the domain of definition of a map $F$.}
\begin{equation}\label{H_alpha(Z)}
\begin{split}
& D_{H_\alpha(Z)} = \{[f]: f\in Z\},\\
& H_\alpha(Z) [f] = \mathscr L_\alpha f,\quad f\in Z,
\end{split}
\end{equation}
where $[f]=[f]_{\lambda_+}$ denotes the $\lambda_+$-equivalence class of $f$.
We clearly have $C_0^\infty(\R_+)\subset \Delta_\alpha$, where $C_0^\infty(\R_+)$ is the space of all smooth functions on $\R_+$ with compact support. The operator
\begin{equation}\label{checkh}
\check h_\alpha = H_\alpha(C_0^\infty(\R_+))
\end{equation}
is obviously symmetric and, hence, closable.  The closure of $\check h_\alpha$ is denoted by $h_\alpha$,
\begin{equation}\label{hkappadef}
h_\alpha = \overline{\check h_\alpha}.
\end{equation}
We shall see that the adjoint $h_\alpha^*$ of $h_\alpha$ is given by
\begin{equation}\label{halpha*}
h_\alpha^* = H_\alpha(\Delta_\alpha).
\end{equation}
If $T$ is a symmetric extension of $h_\alpha$, then $h^*_\alpha$ is an extension of $T^*$ and hence of $T$. By~(\ref{halpha*}), we conclude that $T$ is of the form $H_\alpha(Z)$ for some subspace $Z$ of~$\Delta_\alpha$.

Self-adjoint operators of the form $H_\alpha(Z)$ can be naturally viewed as self-adjoint realizations of differential expression~(\ref{2}). If $H_\alpha(Z)$ is self-adjoint, then equality~(\ref{halpha*}) and the closedness of $h_\alpha$ imply that $H_\alpha(Z)$ is an extension of $h_\alpha$ because $H_\alpha(\Delta_\alpha)$ is an extension of $H_\alpha(Z)$. Therefore, the self-adjoint realizations of~(\ref{2}) are precisely the self-adjoint extensions of $h_\alpha$ (or, equivalently, of $\check h_\alpha$).

For every $\alpha,z\in\C$, we shall construct real-analytic functions $\mathcal A^\alpha(z)$ and $\mathcal B^\alpha(z)$ on $\R_+$ such that
\begin{equation}\label{LAB}
\mathscr L_{\alpha,z} \mathcal A^\alpha(z) = \mathscr L_{\alpha,z} \mathcal B^\alpha(z) = 0,\quad \alpha,z\in\C.
\end{equation}
The functions $\mathcal A^\alpha(z)$ and $\mathcal B^\alpha(z)$ are real for real $\alpha$ and $z$.
Moreover, the quantities\footnote{Given a map $F$ whose values are also maps, we let $F(x|y)$ denote the value of $F(x)$ at a point $y$: $F(x|y) = (F(x))(y)$.} $\mathcal A^\alpha(z|r)$ and $\mathcal B^\alpha(z|r)$ are entire analytic in $\alpha$ and $z$ for every fixed $r>0$ and, in particular, are not singular at $\alpha=0$. If $\alpha<1$ and $z\in\C$, then $\mathcal A^\alpha(z)$ and $\mathcal B^\alpha(z)$ are linear independent and are both square-integrable on the interval $(0,a)$ for every $a>0$ (as mentioned above, we have the limit circle case for $\alpha<1$).

Given $f,g\in \mathcal D$, we let $W_r(f,g)$ denote the Wronskian of $f$ and $g$ at a point $r>0$,
\begin{equation}\label{wronskian}
W_r(f,g) = f(r)g'(r) - f'(r)g(r).
\end{equation}
Clearly, $r\to W_r(f,g)$ is an absolutely continuous function on $\R_+$.

For every $\alpha,\vartheta,z\in\C$, we define the function $\mathcal U^\alpha_\vartheta(z)$ on $\R_+$ by the relation
\begin{equation}\label{Ualphatheta}
\mathcal U^\alpha_\vartheta(z) = \mathcal A^\alpha(z)\cos\vartheta + \mathcal B^\alpha(z)\sin\vartheta.
\end{equation}
By~(\ref{LAB}), we obviously have
\begin{equation}\label{lalpha}
\mathscr L_{\alpha,z} \mathcal U^\alpha_\vartheta(z) = 0,\quad \alpha,\vartheta,z\in\C.
\end{equation}
The properties of $\mathcal A^\alpha(z)$ and $\mathcal B^\alpha(z)$ imply that
$\mathcal U^\alpha_\vartheta(z)$ is real for real $\alpha,\vartheta$, and $z$ and the quantity $\mathcal U^\alpha_\vartheta(z|r)$ is entire analytic in $\alpha,\vartheta$, and $z$ for every fixed $r>0$. If $\alpha<1$, then $\mathcal U^\alpha_\vartheta(z)$ is nontrivial for every $\vartheta,z\in\C$. We shall show that $\lim_{r\downarrow 0} W_r(\mathcal U^\alpha_\vartheta(z),\mathcal U^\alpha_\vartheta(z')) = 0$ for all $\alpha<1$, $\vartheta\in\R$, and $z,z'\in\C$. This condition means that $\mathcal U^\alpha_\vartheta(z|r)$ for various $z$ have the same asymptotics as $r\downarrow 0$.

Let $f\in \mathcal D$ and $\alpha,\vartheta,z\in\C$. In view of~(\ref{lalpha}), integration by parts yields
\[
\int_r^a (\mathscr L_{\alpha,z}f)(r') \mathcal U^\alpha_\vartheta(z|r')\,dr' = W_r(\mathcal U^\alpha_\vartheta(z),f) - W_a(\mathcal U^\alpha_\vartheta(z),f)
\]
for every $a, r>0$. If $\alpha<1$, then $\mathcal U^\alpha_\vartheta(z)$ is square-integrable on $(0,a)$ for every $a>0$ and this equality implies that $W_r(\mathcal U^\alpha_\vartheta(z),f)$ has a limit as $r\downarrow 0$ for every $\vartheta,z\in\C$ and $f\in \Delta_\alpha$. Given $\alpha<1$ and $\vartheta\in\R$, we define the operator $h_{\alpha,\vartheta}$ in $L_2(\R_+)$ by the relation
\begin{equation}\label{halphatheta}
h_{\alpha,\vartheta} = H_\alpha(Z_{\alpha,\vartheta}),
\end{equation}
where the linear subspace $Z_{\alpha,\vartheta}$ of $\Delta_\alpha$ is given by
\begin{equation}\label{zalphatheta}
Z_{\alpha,\vartheta} = \{f\in \Delta_\alpha : \lim_{r\downarrow 0} W_r(\mathcal U^\alpha_\vartheta(0),f) = 0\}.
\end{equation}
By~(\ref{Ualphatheta}) and the definition of $h_{\alpha,\vartheta}$, we have
\begin{equation}\label{periodh}
h_{\alpha,\vartheta+\pi} = h_{\alpha,\vartheta},\quad \alpha<1,\,\vartheta\in\R.
\end{equation}

The next statement gives a complete description of the self-adjoint extensions of $h_\alpha$ for every $\alpha\in\R$.
\begin{theorem}\label{t_sa}
For $\alpha\geq 1$, the operator $h_\alpha$ is self-adjoint. If $\alpha<1$, then $h_{\alpha,\vartheta}$ is a self-adjoint extension of $h_\alpha$ for every $\vartheta\in \R$ and, conversely, every self-adjoint extension of $h_\alpha$ is equal to $h_{\alpha,\vartheta}$ for some $\vartheta\in\R$. Given $\vartheta,\vartheta'\in\R$, we have $h_{\alpha,\vartheta} = h_{\alpha,\vartheta'}$ if and only if $\vartheta-\vartheta'\in\pi\Z$.
\end{theorem}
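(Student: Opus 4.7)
The plan is to separate the two regimes according to the Weyl alternative at $r=0$: the limit point case for $\alpha\geq 1$ and the limit circle case for $\alpha<1$ (at $r=\infty$, the potential $q_\alpha$ decays to zero, so expression~(\ref{2}) is in the limit point case for all real $\alpha$). In the limit circle case, I would identify the self-adjoint extensions with one-dimensional Lagrangian subspaces of the boundary form at $r=0$, naturally spanned by the classes of $\mathcal A^\alpha(0)$ and $\mathcal B^\alpha(0)$.

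The starting point is Green's identity: for $f,g\in\Delta_\alpha$,
\begin{equation*}
\langle \mathscr L_\alpha f,g\rangle_{L_2(\R_+)}-\langle f,\mathscr L_\alpha g\rangle_{L_2(\R_+)}=\lim_{a\to\infty}W_a(\bar g,f)-\lim_{r\downarrow 0}W_r(\bar g,f),
\end{equation*}
provided both boundary limits exist. The limit at infinity vanishes for every $f,g\in\Delta_\alpha$ by the standard limit point argument: a nonvanishing boundary term would allow one to construct a nontrivial $L_2$ solution of $\mathscr L_\alpha u=0$ near $\infty$, contradicting the limit point property there. For $\alpha<1$, the limit at $0$ exists by the integration-by-parts remark stated just before~(\ref{halphatheta}); a parallel argument, combined with cutoff techniques, yields the identification~(\ref{halpha*}).

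For $\alpha\geq 1$, the limit point property at $r=0$ additionally forces $\lim_{r\downarrow 0}W_r(\bar g,f)=0$ for all $f,g\in\Delta_\alpha$, so $h_\alpha^*=H_\alpha(\Delta_\alpha)$ is symmetric, hence self-adjoint, hence equals its restriction $h_\alpha$ (which is its closure). For $\alpha<1$, the sesquilinear form $[f,g]:=\lim_{r\downarrow 0}W_r(\bar g,f)$ vanishes on $C_0^\infty(\R_+)$ and therefore descends to a skew-Hermitian form on $D_{h_\alpha^*}/D_{h_\alpha}$; this quotient is two-dimensional, with a basis given by the classes of $\mathcal A^\alpha(0)$ and $\mathcal B^\alpha(0)$ (linearly independent solutions of $\mathscr L_\alpha u=0$ that are square-integrable near $0$), and the form is nondegenerate on it because $W_r(\mathcal A^\alpha(0),\mathcal B^\alpha(0))$ is a nonzero constant. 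Standard self-adjoint extension theory then identifies the self-adjoint extensions of $h_\alpha$ with the one-dimensional subspaces of this quotient on which $[\cdot,\cdot]$ vanishes; since every such subspace is spanned by $\mathcal U^\alpha_\vartheta(0)=\mathcal A^\alpha(0)\cos\vartheta+\mathcal B^\alpha(0)\sin\vartheta$ for some real $\vartheta$, every self-adjoint extension equals $h_{\alpha,\vartheta}$. Bilinearity of the Wronskian yields $\lim_{r\downarrow 0}W_r(\mathcal U^\alpha_\vartheta(0),\mathcal U^\alpha_{\vartheta'}(0))=c\sin(\vartheta'-\vartheta)$ with $c\neq 0$, which vanishes iff $\vartheta'-\vartheta\in\pi\Z$; combined with~(\ref{periodh}) this yields the uniqueness claim.

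The main obstacle is the limit-circle bookkeeping at $r=0$ for $\alpha<1$: verifying that $[\cdot,\cdot]$ is nondegenerate on $D_{h_\alpha^*}/D_{h_\alpha}$ and that the classes of $\mathcal A^\alpha(0)$ and $\mathcal B^\alpha(0)$ are independent there. Because the parametrization by $\mathcal A^\alpha,\mathcal B^\alpha$ is designed to be analytic through $\alpha=0$, these verifications must be carried out uniformly in $\alpha$ rather than by splitting into the subcases $0<\alpha<1$, $\alpha=0$, $\alpha<0$, and this uniformity is one of the main points of the paper; the behavior of solutions near $r=0$ must be controlled in an $\alpha$-independent manner, presumably using the explicit near-origin asymptotics of $\mathcal A^\alpha(0)$ and $\mathcal B^\alpha(0)$ that will be developed earlier in the text.
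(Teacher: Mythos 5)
Your route is essentially the paper's: both rest on the Weyl alternative (l.p.c.\ at both endpoints for $\alpha\geq 1$; l.c.c.\ at $0$ and l.p.c.\ at $\infty$ for $\alpha<1$, read off from the explicit solutions $r^{1/2\pm\kappa}$, $r^{1/2}$, $r^{1/2}\ln r$) and both parametrize the self-adjoint extensions by nontrivial real solutions of $\mathscr L_\alpha f=0$ through the boundary condition $\lim_{r\downarrow 0}W_r(\mathcal U^\alpha_\vartheta(0),f)=0$, with the uniqueness claim reduced to the nonvanishing Wronskian. The only real difference is that the paper does not re-derive the extension-theoretic input: it quotes it (Lemma~9.4 in~\cite{Teschl} for~(\ref{lq*}), hence~(\ref{halpha*}); Theorem~5.8 in~\cite{Weidmann} for the description~(\ref{Lqf}) of all self-adjoint extensions in the mixed l.c.c./l.p.c.\ case, including the statement that $L_q^f$ determines $f$ up to a nonzero real factor) together with Lemma~\ref{ll}, which identifies $h_\alpha$ with the minimal operator $L_{q_\alpha}$; you instead sketch the boundary-form/Lagrangian-line argument that lies behind those citations.

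Two corrections to your sketch. First, $\mathcal A^\alpha(0)$ and $\mathcal B^\alpha(0)$ do not belong to $\Delta_\alpha$: by Lemma~\ref{l_nonint}, no nontrivial solution of $\mathscr L_\alpha f=0$ is square-integrable on all of $\R_+$ when $\alpha<1$, so their classes cannot literally serve as a basis of $D_{h_\alpha^*}/D_{h_\alpha}$. You must replace them by cutoff versions (equal to the solutions near $0$ and vanishing near $\infty$); after that the rest of your argument goes through, since only the behavior of the Wronskian as $r\downarrow 0$ enters. Second, the uniformity in $\alpha$ that you single out as the main obstacle is not an issue for this theorem: linear independence of $\mathcal A^\alpha(0)$, $\mathcal B^\alpha(0)$ and nondegeneracy of the boundary form are supplied in a single analytic formula by~(\ref{WABalpha}), namely $W(\mathcal A^\alpha(0),\mathcal B^\alpha(0))=-2\pi\Sinc^2(\pi^2\alpha)\neq 0$ for $\alpha<1$, and square-integrability near $0$ holds for each fixed $\alpha<1$ by the explicit solutions; no case-splitting into $0<\alpha<1$, $\alpha=0$, $\alpha<0$ and no uniform near-origin asymptotics are needed in the proof of Theorem~\ref{t_sa} itself (the analyticity in $\alpha$ matters elsewhere in the paper, not here).
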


As will be obvious from the explicit definitions of $\mathcal A^\alpha(z)$ and $\mathcal B^\alpha(z)$ in Sec.~\ref{s_def}, these functions are actually square integrable on intervals $(0,a)$ with $a>0$ for all $\alpha$ belonging to the domain
\[
\Pi = \{\alpha\in\C: \alpha =\kappa^2 \mbox{ for some $\kappa\in\C$ such that }|\Re\kappa|<1\}.\
\]
and all $z\in\C$. Hence, the above definition of $h_{\alpha,\vartheta}$ can be naturally extended to all $\alpha\in\Pi$ and $\vartheta\in\C$. Moreover, it is possible to show that such an extended family of operators is holomorphic on $\Pi\times\C$ in the sense of Kato (see Ch.~7 in~\cite{Kato}) and, therefore, $h_{\alpha,\vartheta}$ with $\alpha<1$ and $\vartheta\in\R$ constitute a real-analytic family of operators. This can be proved using a technique similar to that employed in~\cite{BruneauDerezinskiGeorgescu} for the case of extensions of $h_\alpha$ homogeneous with respect to dilations of $\R_+$. The spectral analysis of a holomorphic family similar to $h_{\alpha,\vartheta}$ can be found in~\cite{DerezinskiRichard2017}, where, however, $\kappa$ rather than $\alpha$ was used as a parameter and the case $\alpha=0$ was treated separately (see also~\cite{DFNR2020} for an analogous treatment of the Coulomb potential; a possibility of removing the singularity at $\kappa = 0$ was indicated in Remark~2.5 in~\cite{DerezinskiRichard2017}). However, the analysis of $h_{\alpha,\vartheta}$ for complex $\alpha$ and $\vartheta$ is beyond the scope of this paper. In the sequel, we confine ourselves to the self-adjoint case $\alpha<1$, $\vartheta\in\R$.

Given a positive Borel measure $\nu$ on $\R$ and a $\nu$-measurable complex function $g$, we let $\mathcal T^\nu_g$ denote the operator of multiplication by $g$ in the Hilbert space $L_2(\R,\nu)$ of $\nu$-square-integrable complex functions on $\R$.\footnote{More precisely, $\mathcal T^\nu_g$ is the operator in $L_2(\R,\nu)$ whose graph consists of all pairs $(\varphi_1,\varphi_2)$ such that $\varphi_1,\varphi_2\in L_2(\R,\nu)$ and $\varphi_2(E) = g(E)\varphi_1(E)$ for $\nu$-a.e. $E$.} If $g$ is real, then $\mathcal T^\nu_g$ is self-adjoint. We let $L_2^c(\R_+)$ denote the subspace of $L_2(\R_+)$ consisting of all its elements vanishing $\lambda$-a.e. outside some compact subset of $\R_+$.

It turns out that the functions $\mathcal U^\alpha_\vartheta(E)$ with real $E$ can be used as generalized eigenfunctions for constructing eigenfunction expansions for $h_{\alpha,\vartheta}$. More precisely, for every $\alpha<1$ and $\vartheta\in\R$, we shall construct a positive Radon measure\footnote{We recall that a Borel measure $\nu$ on $\R$ is called a Radon measure on $\R$ if $\nu(K)<\infty$ for every compact set $K\subset \R$.} $\mathcal V_{\alpha,\vartheta}$ on $\R$ such that
\begin{equation}\label{est_meas}
\int_\R \frac{d\mathcal V_{\alpha,\vartheta}(E)}{E^2+1} < \infty
\end{equation}
and the following statement holds.

\begin{theorem}\label{leig2}
Let $\alpha<1$ and $\vartheta\in \R$. Then there is a unique unitary operator $U_{\alpha,\vartheta}\colon L_2(\R_+)\to L_2(\R,\mathcal V_{\alpha,\vartheta})$ such that
\begin{equation}\nonumber
(U_{\alpha,\vartheta}\psi)(E) = \int_0^\infty \mathcal U^\alpha_\vartheta(E|r)\psi(r)\,dr,\quad \psi\in L_2^c(\R_+),
\end{equation}
for $\mathcal V_{\alpha,\vartheta}$-a.e. $E$, and we have
\begin{equation}\label{diag}
h_{\alpha,\vartheta} = U_{\alpha,\vartheta}^{-1} \mathcal T^{\mathcal V_{\alpha,\vartheta}}_\iota U_{\alpha,\vartheta},
\end{equation}
where $\iota$ is the identity function on $\R$ (i.e., $\iota(E)=E$ for all $E\in\R$).
\end{theorem}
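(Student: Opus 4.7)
The plan is to follow the classical Weyl--Titchmarsh route, using a singular $m$-function adapted to the boundary pair $(\mathcal U^\alpha_\vartheta,\mathcal U^\alpha_{\vartheta+\pi/2})$ at $r=0$. Since $q_\alpha$ decays at infinity, the expression $-\partial_r^2 + q_\alpha$ is in the limit-point case at $+\infty$, so for every $z\in\C\setminus\R$ there is, up to a scalar, a unique solution $\mathcal M^\alpha_\vartheta(z)$ of $\mathscr L_{\alpha,z}f=0$ that is square-integrable near $+\infty$. I would normalize it as $\mathcal M^\alpha_\vartheta(z) = \mathcal U^\alpha_{\vartheta+\pi/2}(z) + m_{\alpha,\vartheta}(z)\mathcal U^\alpha_\vartheta(z)$, which defines the $m$-function $m_{\alpha,\vartheta}$. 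The identity $\lim_{r\downarrow 0}W_r(\mathcal U^\alpha_\vartheta(z),\mathcal U^\alpha_\vartheta(z')) = 0$ cited in the excerpt, together with the standard Lagrange computation, would show that $m_{\alpha,\vartheta}$ is Herglotz on $\C_+$. The measure $\mathcal V_{\alpha,\vartheta}$ is then the one associated with $m_{\alpha,\vartheta}$ by Stieltjes inversion, and the bound~(\ref{est_meas}) is the Herglotz growth estimate.

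Next I would identify the resolvent of $h_{\alpha,\vartheta}$ as the integral operator $R(z)$ with kernel
\[
G_{\alpha,\vartheta}(z\mid r,r') = \frac{1}{W(z)}\,\mathcal U^\alpha_\vartheta(z\mid r\wedge r')\,\mathcal M^\alpha_\vartheta(z\mid r\vee r'),
\]
where $W(z)=W_r(\mathcal U^\alpha_\vartheta(z),\mathcal M^\alpha_\vartheta(z))$ is constant in $r$. For $\psi\in L_2^c(\R_+)$ one checks that $R(z)\psi \in \mathcal D$, that $(\mathscr L_{\alpha,z})R(z)\psi = \psi$, and that $R(z)\psi$ lies in $\Delta_\alpha$ and meets the boundary condition $\lim_{r\downarrow 0}W_r(\mathcal U^\alpha_\vartheta(0),R(z)\psi)=0$; the latter follows since for small $r$ the kernel reduces to a multiple of $\mathcal U^\alpha_\vartheta(z\mid r)$, for which the Wronskian with $\mathcal U^\alpha_\vartheta(0)$ vanishes in the limit. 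Combined with the self-adjointness of $h_{\alpha,\vartheta}$ from Theorem~\ref{t_sa}, this identifies $R(z)=(h_{\alpha,\vartheta}-z)^{-1}$.

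The expansion itself then falls out of Stone's formula: for $\psi\in L_2^c(\R_+)$ one computes
\[
\langle \psi,(R(E+i\varepsilon)-R(E-i\varepsilon))\psi\rangle = 2i\,\Im m_{\alpha,\vartheta}(E+i\varepsilon)\,|(\tilde U_{\alpha,\vartheta}\psi)(E+i\varepsilon)|^2 + (\text{terms } \to 0),
\]
where $\tilde U_{\alpha,\vartheta}\psi(z) = \int_0^\infty \mathcal U^\alpha_\vartheta(z\mid r)\psi(r)\,dr$ is an entire function in $z$ thanks to the compact support of $\psi$. Passing to the limit $\varepsilon\downarrow 0$, integrating against a test interval, and using Stieltjes inversion yields $\|\psi\|_{L_2(\R_+)}^2 = \int_\R |(\tilde U_{\alpha,\vartheta}\psi)(E)|^2\,d\mathcal V_{\alpha,\vartheta}(E)$. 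This isometry extends by density to a unique isometry $U_{\alpha,\vartheta}$ on all of $L_2(\R_+)$, which settles the uniqueness claim. The intertwining relation~(\ref{diag}) is then verified for $\psi\in D_{h_{\alpha,\vartheta}}\cap L_2^c(\R_+)$ by integrating by parts twice: the boundary term at $+\infty$ vanishes because $\psi$ is compactly supported, and the boundary term at $0$ vanishes because both $\psi$ and $\mathcal U^\alpha_\vartheta(E)$ satisfy the same Wronskian boundary condition, the vanishing being ensured by the limit $\lim_{r\downarrow 0}W_r(\mathcal U^\alpha_\vartheta(E),\mathcal U^\alpha_\vartheta(0))=0$.

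The main obstacle is surjectivity of $U_{\alpha,\vartheta}$, i.e., the completeness of the generalized eigenfunctions. I would argue that the range $\mathcal R$ is a closed subspace of $L_2(\R,\mathcal V_{\alpha,\vartheta})$, and that~(\ref{diag}) combined with the fact that $h_{\alpha,\vartheta}$ is self-adjoint forces $\mathcal R$ to be invariant under the bounded functional calculus of $\mathcal T^{\mathcal V_{\alpha,\vartheta}}_\iota$, in particular under multiplication by all resolvents $(E-z)^{-1}$. Since these functions are total in $L_2(\R,\mathcal V_{\alpha,\vartheta})$ (by the bound~(\ref{est_meas}) together with a Stone--Weierstrass or Herglotz-type density argument), $\mathcal R^\perp$ would have to be trivial, completing the proof.
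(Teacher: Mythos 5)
Your overall strategy is the same as the paper's: build a singular Titchmarsh--Weyl $m$-function from the pair $(\mathcal U^\alpha_\vartheta,\mathcal U^\alpha_{\vartheta+\pi/2})$, use the limit-point property at infinity, and diagonalize via the associated Herglotz measure. The paper, however, delegates the analytic core (Parseval identity, completeness, diagonalization) to Proposition~\ref{t_eig}, quoted from~\cite{KST}, and its own proof consists of verifying the hypotheses (Lemma~\ref{lW0}, the nonvanishing of $W(\mathcal U^\alpha_\vartheta(z),\mathcal U^\alpha_{\vartheta+\pi/2}(z))$ from~(\ref{WUalpha})) and, crucially, of computing the $m$-function explicitly by taking the Hankel-function solution $\mathscr V^\alpha(z)$ as the Weyl solution and using $W(\mathscr V^\alpha(z),\mathcal U^\alpha_\vartheta(z))=R(\alpha,\vartheta,z)$ (formula~(\ref{xxx})), which identifies it with $\mathscr M_{\alpha,\vartheta}$ and hence the abstract spectral measure with the concrete $\mathcal V_{\alpha,\vartheta}$ of Section~\ref{s_def}. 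Your proposal reproves the machinery from scratch, which is legitimate in principle, but two steps do not hold up as written.

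First, the measure you produce is not the measure named in the theorem. With your normalization $\mathcal M^\alpha_\vartheta(z)=\mathcal U^\alpha_{\vartheta+\pi/2}(z)+m_{\alpha,\vartheta}(z)\mathcal U^\alpha_\vartheta(z)$ one gets $m_{\alpha,\vartheta}(z)=-R(\alpha,\vartheta+\pi/2,z)/R(\alpha,\vartheta,z)=2\pi^2\Sinc^2(\pi^2\alpha)\,\mathscr M_{\alpha,\vartheta}(z)$, so the Stieltjes-inversion measure of $m_{\alpha,\vartheta}$ equals $2\pi\Sinc^2(\pi^2\alpha)\,\mathcal V_{\alpha,\vartheta}$, not $\mathcal V_{\alpha,\vartheta}$; the discrepancy is exactly the Wronskian $W(\mathcal U^\alpha_{\vartheta+\pi/2}(z),\mathcal U^\alpha_\vartheta(z))=2\pi\Sinc^2(\pi^2\alpha)\neq 1$, and the same factor $1/W(z)$ (carried by your Green's kernel) is missing from your displayed Stone-formula identity. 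To reach the statement you must keep that factor, so that the relevant Herglotz function is $m_{\alpha,\vartheta}/W=\pi\mathscr M_{\alpha,\vartheta}$, and you must actually compute $m_{\alpha,\vartheta}$ by exhibiting the right-square-integrable solution as $\mathscr V^\alpha(z)$ and evaluating the Wronskians; without this identification nothing connects your abstractly defined measure to the $\mathcal V_{\alpha,\vartheta}$ constructed in Section~\ref{s_def}, which is what the theorem asserts. Second, the surjectivity argument fails as stated: a closed subspace of $L_2(\R,\mathcal V_{\alpha,\vartheta})$ invariant under multiplication by all resolvents $(E-z)^{-1}$ (indeed under the whole bounded functional calculus) need not be everything --- every subspace of the form $\chi_A L_2(\R,\mathcal V_{\alpha,\vartheta})$ with $A$ Borel has these properties, and the totality of the resolvent functions does not exclude it. One must in addition rule out that all transforms $U_{\alpha,\vartheta}\psi$, $\psi\in L_2^c(\R_+)$, vanish $\mathcal V_{\alpha,\vartheta}$-a.e.\ on a common set of positive measure, e.g.\ by combining the continuity of $E\to\int\mathcal U^\alpha_\vartheta(E|r)\psi(r)\,dr$ with the nontriviality of $\mathcal U^\alpha_\vartheta(E)$, or by following the completeness proof of~\cite{KST}; this is precisely the part of the argument that the paper outsources to Proposition~\ref{t_eig}.
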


Clearly, the measures $\mathcal V_{\alpha,\vartheta}$ (which will be referred to as the spectral measures) contain all information about the spectral properties of the operators $h_{\alpha,\vartheta}$. In particular, $E\in \R$ is an eigenvalue of $h_{\alpha,\vartheta}$ if and only if the measure $\mathcal V_{\alpha,\vartheta}$ of the one-point set $\{E\}$ is strictly positive. In agreement with~(\ref{Ualphatheta}) and~(\ref{periodh}), $\mathcal V_{\alpha,\vartheta}$ is $\pi$-periodic in $\vartheta$,
\begin{equation}\label{periodicV}
\mathcal V_{\alpha,\vartheta+\pi} = \mathcal V_{\alpha,\vartheta},\quad \alpha<1,\,\vartheta\in \R.
\end{equation}

Let $\omega$ be the $\pi$-periodic function on $\R$ such that
\begin{equation}\label{omega}
\omega(\vartheta) = \left(1-\frac{2|\vartheta|}{\pi}\right)^2,\quad -\frac{\pi}{2}\leq\vartheta\leq \frac{\pi}{2}.
\end{equation}
Clearly, we have $0\leq\omega(\vartheta)\leq 1$ for all $\vartheta\in\R$.
We define the subsets $Q_0$, $Q_1$, and $Q_\infty$ of $\R^2$ by the relations (see Fig.~\ref{f0})
\begin{align}
& Q_0 = \{ (\alpha,\vartheta)\in\R^2 : \omega(\vartheta)\leq \alpha<1\},\label{Q0}\\
& Q_1 = \{ (\alpha,\vartheta)\in\R^2 : 0\leq \alpha<\omega(\vartheta)\},\label{Q11}\\
& Q_\infty = \{ (\alpha,\vartheta)\in\R^2 : \alpha<0\}.\label{Qi}
\end{align}
The analysis of the measures $\mathcal V_{\alpha,\vartheta}$ shows that $h_{\alpha,\vartheta}$ has no eigenvalues for $(\alpha,\vartheta)\in Q_0$, one eigenvalue for $(\alpha,\vartheta)\in Q_1$, and infinitely many eigenvalues that are not bounded from below for $(\alpha,\vartheta)\in Q_\infty$. Using a parametrization of generalized eigenfunctions that is analytic on the entire domain
\begin{equation}\label{domain}
Q = Q_0\cup Q_1\cup Q_\infty = \{(\alpha,\vartheta)\in\R^2: \alpha<1\}
\end{equation}
allows us to understand in detail what happens to eigenvalues as we pass from $Q_\infty$ to $Q_1$ through the line $\alpha=0$. It turns out that there is one eigenvalue that crosses this line in an analytic manner, while the rest infinitely many eigenvalues tend either to $-\infty$ or to zero as $\alpha\uparrow 0$ and die away there. Moreover, the density of $\mathcal V_{\alpha,\vartheta}$ corresponding to the continuous part of the spectrum turns out to be real-analytic on $Q$.

If $\alpha<0$, then the operator $h_\alpha$ is not bounded from below because otherwise it would have self-adjoint extensions (e.g., its Friedrichs extension) that are bounded from below, in contradiction to Theorem~\ref{t_sa} and the described properties of eigenvalues of $h_{\alpha,\vartheta}$ for $(\alpha,\vartheta)\in Q_\infty$. On the other hand, it is easy to see that $h_\alpha$ is positive for $\alpha\geq 0$. Indeed, let $f\in C_0^\infty(\R_+)$, $\psi = [f]$, and $\varrho$ be a real number. Using the integration by parts, we easily derive from~(\ref{checkh}) that
\[
\langle \psi,\check h_\alpha\psi\rangle = \int_0^\infty r^{2\varrho}\left(|\partial_r\tilde f(r)|^2 + \frac{\alpha-1/4-\varrho^2+\varrho}{r^2}|\tilde f(r)|^2\right)dr,
\]
where $\langle\cdot,\cdot\rangle$ is the scalar product in $L_2(\R_+)$ and the function $\tilde f$ on $\R_+$ is given by $\tilde f(r) = r^{-\varrho} f(r)$, $r>0$. The maximum of $-\varrho^2+\varrho$ is attained at $\varrho=1/2$ and is equal to $1/4$. Substituting this value to the above equality, we deduce that $\langle \psi,\check h_\alpha\psi\rangle\geq 0$ for all $\alpha\geq 0$ and $\psi\in D_{\check h_\alpha}$. The positivity of $h_\alpha$ for $\alpha\geq 0$ now follows from~(\ref{hkappadef}).

\begin{figure}
  \includegraphics[scale=0.75]{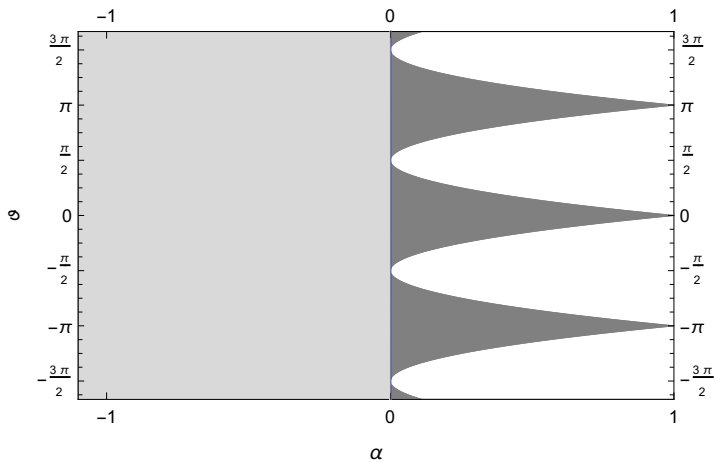}
  \caption{The sets $Q_0$, $Q_1$, and $Q_\infty$ are represented by white, dark gray, and light gray regions respectively.}
  \label{f0}
\end{figure}

It seems probable that the dependence of $\mathcal V_{\alpha,\vartheta}$ on $\alpha$ and $\vartheta$ is not analytic at the boundaries between the regions $Q_0$, $Q_1$, and $Q_\infty$, where eigenvalues arise and disappear (however, we do not prove this claim in this paper). At the same time, we shall show that this dependence is smooth on $Q$ in a suitable sense. To formulate this result precisely, we make use of the Schwartz space $\mathscr S$ of rapidly decreasing smooth functions. More specifically, $\mathscr S$ consists of all infinitely differentiable functions $\varphi$ on $\R$ such that
\[
\sup_{E\in\R,\,k\in \{0,\ldots,n\}} |\varphi^{(k)}(E)|(1+|E|)^n < \infty
\]
for every nonnegative integer $n$, where $\varphi^{(k)}$ stands for the $k$th derivative of $\varphi$. The space $\mathscr S$ is widely used in the theory of generalized functions as the test function space for tempered distributions. In view of~(\ref{est_meas}), every $\varphi \in\mathscr S$ is $\mathcal V_{\alpha,\vartheta}$-integrable for all $\alpha<1$ and $\vartheta\in\R$.

\begin{theorem}\label{t_smooth}
For every $\varphi\in \mathscr S$, the function $(\alpha,\vartheta)\to\int \varphi(E)\,d\mathcal V_{\alpha,\vartheta}(E)$ is infinitely differentiable on the domain $Q$ given by~$(\ref{domain})$.
\end{theorem}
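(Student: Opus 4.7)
My plan is to pass from the spectral measure to an analytically controllable object via a renormalized Cauchy transform and the Helffer--Sjöstrand formula. For $(\alpha,\vartheta)\in Q$ and $z\in\C\setminus\R$, I set
\[
F_{\alpha,\vartheta}(z) = \int_{\R} \frac{1+Ez}{(E-z)(E^2+1)}\, d\mathcal V_{\alpha,\vartheta}(E);
\]
the integral converges absolutely by~(\ref{est_meas}), and $F_{\alpha,\vartheta}$ is a Nevanlinna--Herglotz function holomorphic in $z$ on $\C\setminus\R$. Using the singular Titchmarsh--Weyl $m$-function representation of $\mathcal V_{\alpha,\vartheta}$ built from Wronskians of $\mathcal A^\alpha(z)$ and $\mathcal B^\alpha(z)$ --- both entire in $(\alpha,z)$ for each fixed $r>0$ --- I expect $F_{\alpha,\vartheta}(z)$ to admit an elementary closed-form expression that is jointly real-analytic in $(\alpha,\vartheta)\in Q$ and holomorphic in $z\in\C\setminus\R$.

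For $\varphi\in\mathscr S$, I fix an almost-analytic extension $\tilde\varphi\colon\C\to\C$ with $\tilde\varphi|_{\R}=\varphi$ and $|\bar\partial\tilde\varphi(z)|\le C_N|\Im z|^N(1+|z|)^{-N-2}$ for every $N\ge 0$. The Cauchy--Green identity combined with Fubini (legitimate in view of the rapid vanishing of $\bar\partial\tilde\varphi$ near $\R$ and~(\ref{est_meas})) then yields the Helffer--Sjöstrand representation
\[
\int_{\R}\varphi(E)\, d\mathcal V_{\alpha,\vartheta}(E) = -\frac{1}{\pi}\int_{\C}\bar\partial\tilde\varphi(z)\, F_{\alpha,\vartheta}(z)\, dx\, dy,
\]
where the subtraction $E/(E^2+1)$ built into $F_{\alpha,\vartheta}$ contributes nothing because $\int_{\C}\bar\partial\tilde\varphi\, dx\,dy=0$. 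To differentiate in $(\alpha,\vartheta)$, I apply Cauchy's formula on a small polydisk inside the complex joint-analyticity domain of $F$ to express each $(\alpha,\vartheta)$-derivative of $F_{\alpha,\vartheta}(z)$ as a bounded contour integral of $F$ itself. Combining $|E-z|\ge|\Im z|$ with the crude bound $\mathcal V_{\alpha,\vartheta}(\{|E|\le R\})\le(R^2+1)\int d\mathcal V_{\alpha,\vartheta}/(E^2+1)$ deduced from~(\ref{est_meas}) gives a uniform estimate of the form $|\partial_{(\alpha,\vartheta)}^\beta F_{\alpha,\vartheta}(z)|\le C_{K,\beta}(1+|z|)^{k_\beta}/|\Im z|$ for each multi-index $\beta$ and each compact $K\subset Q$. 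Choosing $N$ large enough in the bound on $\bar\partial\tilde\varphi$ absorbs this blowup, so the integrand and each of its parameter-derivatives is dominated by a single integrable function on $\C$; dominated convergence then licenses differentiating under the integral to any order.

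The main obstacle is the joint holomorphic extension of $F_{\alpha,\vartheta}(z)$ to a complex neighborhood of each compact $K\subset Q$. Although the point spectrum of $h_{\alpha,\vartheta}$ changes dramatically as $\alpha$ crosses $0$ or $\omega(\vartheta)$, with eigenvalues materializing and dissolving into the continuous part of $\mathcal V_{\alpha,\vartheta}$, these transitions are invisible to $F_{\alpha,\vartheta}(z)$ at fixed $z\in\C\setminus\R$: the denominator of the $m$-function cannot vanish for $\Im z\ne 0$ on the real parameter set (a zero would be a non-real eigenvalue of a self-adjoint operator), and by continuity it remains nonzero on a small complex neighborhood of each real $(\alpha,\vartheta)$. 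Once this is verified, the remainder of the argument is routine estimation.
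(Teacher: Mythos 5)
Your overall skeleton is sound and close in spirit to the paper's: reduce the statement to the explicit function $\mathscr M_{\alpha,\vartheta}$ of~(\ref{M}) (your $F_{\alpha,\vartheta}$ is, by~(\ref{herglotz_repr}) and the definition of $\mathcal V_{\alpha,\vartheta}$, just $\pi\mathscr M_{\alpha,\vartheta}-a_{\alpha,\vartheta}-b_{\alpha,\vartheta}z$, and one can check $b_{\alpha,\vartheta}=0$; this identification is left as ``I expect'' but is fillable), and then differentiate a boundary-value representation under uniform bounds --- the Helffer--Sj\"ostrand formula playing the role of the functional $B_\varphi$ and Lemma~\ref{l_diff} in the paper. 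The genuine gap is that you never establish the quantitative estimate that makes this work, i.e.\ the analogue of Proposition~\ref{p_estimate}: bounds on $\partial_\alpha^k\partial_\vartheta^l\mathscr F(\alpha,\vartheta,z)$, uniform over compact parameter sets and valid down to $\Im z\downarrow 0$, with at most polynomial blow-up in $(1+|z|)/\Im z$. Your two substitutes do not deliver it. The measure-theoretic bound $\mathcal V_{\alpha,\vartheta}(\{|E|\le R\})\le (R^2+1)\int d\mathcal V_{\alpha,\vartheta}(E)/(E^2+1)$ controls $F_{\alpha,\vartheta}(z)$ only at \emph{real} parameter values (and even then only after showing $\int d\mathcal V_{\alpha,\vartheta}(E)/(E^2+1)$ is locally bounded in $(\alpha,\vartheta)$), and it cannot be differentiated in $(\alpha,\vartheta)$ because the parameter dependence sits inside the measure; so the only route to derivative bounds is your Cauchy-formula step, which requires bounding $F$ at \emph{complex} parameters on a $z$-independent polydisk.

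That is exactly where the ``nonvanishing by continuity'' argument fails. For $(\alpha,\vartheta)\in Q_{\geq 1}$ the function $z\to R(\alpha,\vartheta,z)$ has real zeros (the eigenvalues in $\Sigma_{\alpha,\vartheta}$), and an arbitrarily small complex perturbation of $\vartheta$ (or $\alpha$) moves such a zero off the real axis with nonzero speed $-R(\alpha,\vartheta+\pi/2,E)/\partial_zR(\alpha,\vartheta,z)|_{z=E}$, which is real and nonzero by~(\ref{coscos}) and Lemma~\ref{l_residue}; for one sign of the perturbation the zero enters $\C_+$. Hence there is \emph{no} complex parameter neighborhood of such $(\alpha,\vartheta)$ on which $F$ is pole-free for all $z\in\C_+$: the admissible polydisk radius shrinks proportionally to $\Im z$ as $z$ approaches the negative real axis near an eigenvalue, and the Cauchy estimates then degenerate in a way you must quantify --- which is precisely the content of the paper's Lemmas~\ref{l_R}, \ref{l_R1}, \ref{l_dR}, \ref{l_comb} and Proposition~\ref{p_estimate}. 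Note also that the bound you assert, $|\partial^\beta_{(\alpha,\vartheta)}F_{\alpha,\vartheta}(z)|\le C_{K,\beta}(1+|z|)^{k_\beta}/|\Im z|$ with the first power of $1/|\Im z|$ for every $\beta$, is almost certainly false: the paper's estimate carries the exponent $k+l+1$ on $(1+|z|)^{1+a}/\Im z$, reflecting the shrinking polydisk just described. Such growing powers would still be harmless for the Helffer--Sj\"ostrand integral (take $N$ large depending on $\beta$), so your strategy is repairable, but only by proving the hard uniform lower bound on $|R(\alpha,\vartheta,z)|$ and the accompanying derivative bounds, which your proposal treats as ``routine estimation.''
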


Thus, our construction of eigenfunction expansions is, as a whole, at least infinitely differentiable.

When considering equation~(\ref{1}), it is convenient to set $\alpha=\kappa^2$ and find its solutions as functions of $\kappa$ (we have actually done so in the case of Hankel transformation~(\ref{3})). To return to the initial variable $\alpha$, it is then necessary to replace $\kappa$ with the square root of $\alpha$. As was discussed above, this may lead to the appearance of branch points and the loss of analyticity. This does not happen, however, if the solution in question is an even holomorphic function of $\kappa$. Indeed, suppose we have an even holomorphic function $g$, which will be assumed for simplicity to be entire. Then $g$ has the power series expansion of the form $g(w) = \sum_{k=0}^\infty c_k w^{2k}$ for every $w\in\C$. If we define the entire analytic function $G$ by the formula $G(\zeta) = \sum_{k=0}^\infty c_k\zeta^k$, $\zeta\in\C$, then we have
\begin{equation}\label{Gg}
G(w^2) = g(w)
\end{equation}
for all $w\in\C$ and, hence, $G(\zeta)$ can be viewed as a result of ``substituting the square root of $\zeta$'' in $g$. More generally, representations of type~(\ref{Gg}) can be obtained for even holomorphic functions on arbitrary reflection-symmetric domains and for the case of several complex variables (see Appendix~\ref{app_even}). Our construction of the solutions $\mathcal A^\alpha(z)$ and $\mathcal B^\alpha(z)$ is based on the described technique. We shall first find functions $\mathfrak a^\kappa(z)$ and $\mathfrak b^\kappa(z)$ that are even in $\kappa$ and satisfy~(\ref{1}) with $\alpha$ and $E$ replaced with $\kappa^2$ and $z$ respectively and then define $\mathcal A^\alpha(z)$ and $\mathcal B^\alpha(z)$ by requiring that $\mathcal A^{\kappa^2}(z)=\mathfrak a^\kappa(z)$ and $\mathcal B^{\kappa^2}(z)=\mathfrak b^\kappa(z)$ for every $\kappa,z\in\C$.

Simple examples of representations of type~(\ref{Gg}), which will be important for us, are obtained if we choose $g$ to be equal either to the cosine or the entire function $\sinc$ that is defined by the formula
\begin{equation}\nonumber
\sinc w = \left\{
\begin{matrix}
w^{-1}\sin w,& w\in\C\setminus\{0\},\\
1,& w=0.
\end{matrix}
\right.
\end{equation}
Proceeding as above, we find that
\begin{equation}\label{Cos1}
\cos w = \Cos(w^2),\quad \sinc w = \Sinc(w^2)
\end{equation}
for every $w\in\C$, where the entire functions $\Cos$ and $\Sinc$ are given by
\begin{equation}\label{CosSinc}
\Cos\zeta = \sum_{k=0}^\infty \frac{(-\zeta)^k}{(2k)!},\quad \Sinc\zeta = \sum_{k=0}^\infty \frac{(-\zeta)^k}{(2k+1)!},\quad \zeta\in\C.
\end{equation}
It follows from~(\ref{Cos1}) that
\begin{equation}\label{Cos2}
\Cos(-w^2) = \cos(iw) = \ch w,\quad w\Sinc(-w^2) = w\sinc(i w) = \sh w
\end{equation}
for every $w\in\C$. In particular, we have
\begin{align}
&\Cos\xi = \cos(\sqrt{\xi}),\quad \Sinc\xi = \sinc(\sqrt{\xi}), \quad \xi\geq 0,\label{Cos3}\\
&\Cos\xi = \ch(\sqrt{|\xi|}), \quad \Sinc\xi = |\xi|^{-1/2}\sh(\sqrt{|\xi|}),\quad \xi < 0. \label{Cos4}
\end{align}
The graphs of $\Cos \xi$ and $\Sinc\xi$ are shown in Fig.~\ref{f00}.

\begin{figure}
  \includegraphics[scale=0.75]{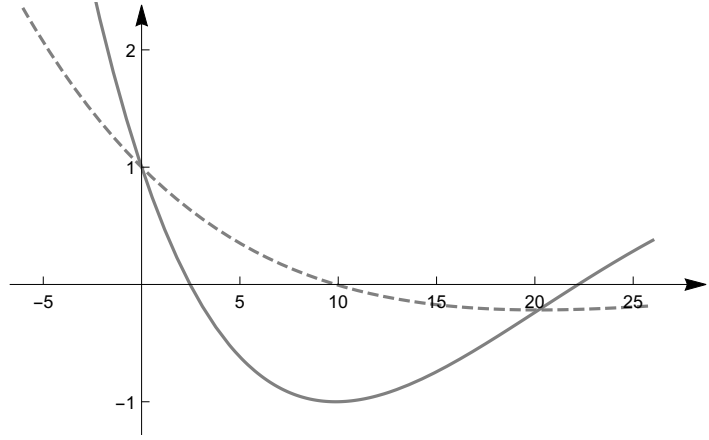}
  \caption{Solid and dashed lines correspond to the functions $\Cos$ and $\Sinc$ respectively.}
  \label{f00}
\end{figure}

Formulas~(\ref{Cos3}) and~(\ref{Cos4}) show that, in spite of being analytic, the functions $\Cos$ and $\Sinc$ are expressed in a piecewise way in terms of the standard trigonometric and hyperbolic functions.
We shall see that various quantities related to the spectral measures (such as eigenvalues and the density of the absolutely continuous part of $\mathcal V_{\alpha,\vartheta}$) can be conveniently expressed through $\Cos$ and $\Sinc$. Accordingly, the formulas for these quantities in terms of the ordinary elementary functions are of a piecewise nature. This suggests that $\Cos$ and $\Sinc$ are more suitable as ``elementary functions'' for our problem. Some properties of these functions that will be necessary for us are summarized in Appendix~\ref{app_CosSinc}.

The paper is organized as follows. In Sec.~\ref{s_def}, we define the solutions $\mathcal A^\alpha(z)$ and $\mathcal B^\alpha(z)$ and the spectral measures $\mathcal V_{\alpha,\vartheta}$, thus completing the formulation of our results. The definition of the measures $\mathcal V_{\alpha,\vartheta}$ in Sec.~\ref{s_def} is given via Herglotz representations (see Appendix~\ref{s_Herglotz}) of suitable holomorphic functions in the upper complex half-plane and is not quite explicit. In Sec.~\ref{s_meas}, we obtain concrete formulas for the point and absolutely continuous parts of $\mathcal V_{\alpha,\vartheta}$. In particular, this allows us to justify the ``phase diagram'' in Fig.~\ref{f0} and analyze the dependence of eigenvalues of $h_{\alpha,\vartheta}$ on $\alpha$ and $\vartheta$. In Sec.~\ref{s_one}, we recall the general theory concerning self-adjoint extensions of one-dimensional Schr\"odinger operators and apply it to the proof of Theorem~\ref{t_sa}. Our treatment of eigenfunction expansions relies on the method of singular Titchmarsh--Weyl $m$-functions~\cite{KST}. In Sec.~\ref{s_eig}, we briefly describe this method and then use it to prove Theorem~\ref{leig2}. Sec.~\ref{s_smooth} is devoted to the proof of Theorem~\ref{t_smooth}.

\section{Definition of generalized eigenfunctions and spectral measures}
\label{s_def}

\subsection{Definition of \texorpdfstring{$\mathcal A^\alpha(z)$}{A(z)} and \texorpdfstring{$\mathcal B^\alpha(z)$}{B(z)}.} For any $z,\kappa\in\C$, we define the function $\mathfrak u^\kappa(z)$ on $\R_+$ by the relation
\begin{equation}\label{ukappa}
\mathfrak u^\kappa(z|r) = r^{1/2+\kappa}\mathcal X_\kappa(r^2 z),\quad r>0,
\end{equation}
where the entire function $\mathcal X_\kappa$ is given by
\begin{equation}\label{Xkappa}
\mathcal X_\kappa(\zeta) = \frac{1}{2^\kappa}\sum_{n=0}^\infty \frac{(-1)^n\zeta^n}{\Gamma(\kappa+n+1)n!2^{2n}},\quad \zeta\in\C.
\end{equation}
The function $\mathcal X_\kappa$ is closely related to Bessel functions: for $\zeta\neq 0$, we have
\begin{equation}\label{bessel}
\mathcal X_\kappa(\zeta) = \zeta^{-\kappa/2}J_\kappa(\zeta^{1/2}).
\end{equation}
Because $J_\kappa$ satisfies the Bessel equation, it follows that
\begin{equation}\label{eqf}
-\partial^2_r \mathfrak u^{\pm\kappa}(z|r) +  \frac{\kappa^2-1/4}{r^2} \mathfrak u^{\pm\kappa}(z|r) = z\mathfrak u^{\pm\kappa}(z|r),\quad r>0,
\end{equation}
for every $\kappa\in\C$ and $z\neq 0$. By continuity, this is also true for $z=0$. We therefore have
\begin{equation}\label{Lu}
\mathscr L_{\kappa^2,z} \mathfrak u^{\pm\kappa}(z) =0,\quad \kappa,z\in\C.
\end{equation}

For every $\kappa,z\in\C$, we define the function $\mathfrak a^\kappa(z)$ on $\R_+$ by setting
\begin{equation}\label{akappa}
\mathfrak a^\kappa(z) = \frac{\mathfrak u^{\kappa}(z)-\mathfrak u^{-\kappa}(z)}{\kappa}\cos\vartheta_\kappa,\quad \kappa\in\C\setminus\{0\},
\end{equation}
and
\begin{equation}\label{a0}
\mathfrak a^0(z|r) = \lim_{\kappa\to 0} \mathfrak a^\kappa(z|r)= 2\left[\left(\ln\frac{r}{2} + \gamma\right)\mathfrak u^0(z|r) -\sqrt{r}\,\mathcal Y(r^2 z)\right],\quad r>0,
\end{equation}
where
\begin{equation}\label{varthetakappa}
\vartheta_\kappa=\frac{\pi\kappa}{2},
\end{equation}
the entire function $\mathcal Y$ is given by
\begin{equation}\nonumber
\mathcal Y(\zeta) = \sum_{n=1}^\infty \frac{(-1)^nc_n}{(n!)^2 2^{2n}} \zeta^n,\quad c_n = \sum_{j=1}^n \frac{1}{j},
\end{equation}
and $\gamma = \lim_{n\to\infty} (c_n -\ln n)=0,577\ldots$ is the Euler constant.\footnote{To compute the limit of $\mathfrak a^\kappa(z|r)$ as $\kappa\to 0$, we can apply L'H\^{o}pital's rule and use the equality $\Gamma'(1+n)/\Gamma(1+n)=c_n-\gamma$ (see formula~(9) in~Sec.~1.7.1 in~\cite{Bateman}).}

Further, for every $\kappa,z\in\C$, we define the function $\mathfrak b^\kappa(z)$ on $\R_+$ by the formula
\begin{equation}\label{bkappa}
\mathfrak b^\kappa(z) = \frac{\pi}{2}(\mathfrak u^\kappa(z) + \mathfrak u^{-\kappa}(z))\sinc\vartheta_\kappa,
\end{equation}
where $\vartheta_\kappa$ is given by~(\ref{varthetakappa}).

Given $\varphi\in \R$, we set $R_\varphi = \{z\in\C: z= re^{i\varphi} \mbox{ for some }r\geq 0\}$ and
\begin{equation}\nonumber
\C_\varphi = \C\setminus R_\varphi.
\end{equation}
Hence, $\C_\varphi$ is the complex plane with a cut along the ray $R_\varphi$.

The next statement shows that, notwithstanding a piecewise definition of $\mathfrak a^\kappa(z)$, both quantities $\mathfrak a^\kappa(z|r)$ and $\mathfrak b^\kappa(z|r)$ are actually analytic in all their arguments.

\begin{lemma}\label{l_analyt0}
There are unique holomorphic functions $F_1$ and $F_2$ on $\C\times\C\times\C_\pi$ such that
\begin{equation}\label{fgab}
F_1(\kappa,z,r) = \mathfrak a^\kappa(z|r),\quad F_2(\kappa,z,r) = \mathfrak b^\kappa(z|r),\quad \kappa,z\in\C,\,r>0.
\end{equation}
\end{lemma}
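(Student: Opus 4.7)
The plan is to build $F_1$ and $F_2$ directly from a joint holomorphic extension of the basic solution $\mathfrak u^\kappa(z|r)$, and then handle the apparent singularity of $F_1$ at $\kappa=0$ by a removable-singularity argument.

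First I would check that $\mathcal X_\kappa(\zeta)$ defined by the series~(\ref{Xkappa}) is jointly entire in $(\kappa,\zeta)\in\C\times\C$. The factor $2^{-\kappa}=e^{-\kappa\log 2}$ is entire, and $1/\Gamma(\kappa+n+1)$ is entire in $\kappa$, uniformly bounded on compact sets. Standard bounds on $1/\Gamma$ combined with the $n!\,2^{2n}$ in the denominator give locally uniform convergence of the series, so the sum is jointly holomorphic. Next, using the principal branch of the logarithm on $\C_\pi$, I define $r^{1/2+\kappa}=\exp((1/2+\kappa)\log r)$; this is jointly holomorphic in $(\kappa,r)\in\C\times\C_\pi$. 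Setting
\[
\tilde{\mathfrak u}(\kappa,z,r)=r^{1/2+\kappa}\mathcal X_\kappa(r^2 z),
\]
I obtain a jointly holomorphic extension of $\mathfrak u^\kappa(z|r)$ to $\C\times\C\times\C_\pi$ that agrees with~(\ref{ukappa}) when $r>0$.

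For $F_2$, the result is immediate: the function
\[
F_2(\kappa,z,r)=\frac{\pi}{2}\bigl(\tilde{\mathfrak u}(\kappa,z,r)+\tilde{\mathfrak u}(-\kappa,z,r)\bigr)\sinc\vartheta_\kappa
\]
is a product/sum of jointly holomorphic functions on $\C\times\C\times\C_\pi$ and coincides with $\mathfrak b^\kappa(z|r)$ for $r>0$ by~(\ref{bkappa}).

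For $F_1$, the issue is the $1/\kappa$ factor in~(\ref{akappa}). The function
\[
g(\kappa,z,r)=\tilde{\mathfrak u}(\kappa,z,r)-\tilde{\mathfrak u}(-\kappa,z,r)
\]
is jointly holomorphic on $\C\times\C\times\C_\pi$ and vanishes identically on the hyperplane $\{\kappa=0\}$. Writing its Taylor expansion in $\kappa$ at any point $(0,z_0,r_0)$, the constant term is zero, so $g/\kappa$ extends to a jointly holomorphic function on $\C\times\C\times\C_\pi$ (this is the multivariable Riemann removable singularity theorem, or a direct power-series argument). Multiplying by the entire factor $\cos\vartheta_\kappa$ then gives
\[
F_1(\kappa,z,r)=\frac{g(\kappa,z,r)}{\kappa}\cos\vartheta_\kappa,
\]
holomorphic on $\C\times\C\times\C_\pi$, which agrees with $\mathfrak a^\kappa(z|r)$ for $\kappa\neq 0$ and $r>0$ by~(\ref{akappa}). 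The value at $\kappa=0$ matches $\mathfrak a^0(z|r)$ by continuity in $\kappa$ and the defining limit in~(\ref{a0}). Uniqueness of $F_1$ and $F_2$ follows from the identity principle: for each fixed $(\kappa,z)$, two holomorphic functions on the connected domain $\C_\pi$ that agree on the non-discrete set $\R_+$ must coincide.

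The only delicate point I expect is the removable singularity step for $F_1$; everything else reduces to the joint holomorphy of $\tilde{\mathfrak u}$, which is essentially term-by-term. If a more elementary argument is preferred, one can expand $r^{\pm\kappa}\mathcal X_{\pm\kappa}(r^2 z)$ as a power series in $\kappa$ with coefficients holomorphic in $(z,r)\in\C\times\C_\pi$, observe that the series for $\tilde{\mathfrak u}(\kappa,z,r)-\tilde{\mathfrak u}(-\kappa,z,r)$ contains only odd powers of $\kappa$, and thus divide termwise by $\kappa$ to obtain the desired jointly holomorphic quotient, bypassing any appeal to the general removable singularity theorem.
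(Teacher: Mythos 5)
Your proposal is correct, and it follows the paper's overall skeleton (extend $\mathfrak u^\kappa(z|r)$ to a jointly holomorphic function on $\C\times\C\times\C_\pi$ via a branch of the logarithm, plug it into the defining formulas, and get uniqueness from the identity theorem on each slice $\{(\kappa,z)\}\times\C_\pi$), but it diverges at the one delicate point, the behaviour of $F_1$ at $\kappa=0$. The paper defines $F_1$ piecewise, using the explicit $\kappa=0$ expression with $\mathrm{Ln}(\zeta/2)$ and $\mathcal Y(\zeta^2 z)$, checks for fixed $(z,\zeta)$ that $\kappa\mapsto F_1(\kappa,z,\zeta)$ is holomorphic off $0$ and continuous at $0$ (the same L'H\^opital computation as in~(\ref{a0})), hence holomorphic in $\kappa$, and then invokes the Hartogs theorem on separate holomorphy to get joint holomorphy. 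You instead observe that $g(\kappa,z,r)=\tilde{\mathfrak u}(\kappa,z,r)-\tilde{\mathfrak u}(-\kappa,z,r)$ is jointly holomorphic and vanishes identically on $\{\kappa=0\}$, so $g/\kappa$ extends jointly holomorphically by the local power-series (division) argument, and the value at $\kappa=0$ agrees with $\mathfrak a^0$ simply because $\mathfrak a^0$ is \emph{defined} in~(\ref{a0}) as the limit of $\mathfrak a^\kappa$. Your route buys two things: you never need the explicit $\kappa=0$ formula or the limit computation with the digamma values, and you avoid Hartogs entirely (your fallback of expanding in powers of $\kappa$ and noting only odd powers occur is the same division argument made termwise, and is also fine). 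The paper's route is more computational but keeps everything at the level of one-variable removable singularities plus a standard several-variables black box. Both are complete proofs; just phrase the ``multivariable Riemann removable singularity'' step as the division/power-series argument you already sketch, since that is what is actually being used.
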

The proof of Lemma~\ref{l_analyt0} is given in Appendix~\ref{app_analyt}.
\medskip

By~(\ref{akappa}) and~(\ref{bkappa}), we have
\begin{equation}\label{reflsym}
\mathfrak a^\kappa(z) = \mathfrak a^{-\kappa}(z),\quad \mathfrak b^\kappa(z) = \mathfrak b^{-\kappa}(z),\quad \kappa,z\in\C.
\end{equation}
For every $\alpha,z\in\C$, we define the functions $\mathcal A^\alpha(z)$ and $\mathcal B^\alpha(z)$ on $\R_+$ by the relations $\mathcal A^\alpha(z) = \mathfrak a^\kappa(z)$ and $\mathcal B^\alpha(z) = \mathfrak b^\kappa(z)$, where $\kappa\in \C$ is such that $\kappa^2 = \alpha$ (by~(\ref{reflsym}), this definition does not depend on the choice of $\kappa$). We therefore have
\begin{equation}\label{Aa}
\mathcal A^{\kappa^2}(z) = \mathfrak a^\kappa(z),\quad \mathcal B^{\kappa^2}(z) = \mathfrak b^\kappa(z),\quad \kappa,z\in\C.
\end{equation}

Equalities~(\ref{eqf}) and~(\ref{akappa}) imply that
\begin{equation}\label{eqa}
-\partial^2_r \mathfrak a^\kappa(z|r) +  \frac{\kappa^2-1/4}{r^2}\mathfrak a^\kappa(z|r) = z\,\mathfrak a^\kappa(z|r),\quad r>0,
\end{equation}
for every $\kappa\in \C\setminus\{0\}$ and $z\in\C$. By Lemma~\ref{l_analyt0}, we can take the limit $\kappa\to 0$ and conclude that (\ref{eqa}) also holds for $\kappa=0$. We hence have $\mathscr L_{\kappa^2,z} \mathfrak a^\kappa(z) = 0$ for all $\kappa,z\in\C$. Since $\mathscr L_{\kappa^2,z} \mathfrak b^\kappa(z) = 0$ for every $\kappa,z\in\C$ by~(\ref{Lu}) and~(\ref{bkappa}), it follows from~(\ref{Aa}) that (\ref{LAB}) is valid for all $\alpha,z\in\C$.

We now use Lemma~\ref{l_analyt0} to prove that the quantities $\mathcal A^\alpha(z|r)$ and $\mathcal B^\alpha(z|r)$ enjoy the same analyticity properties as $\mathfrak a^\kappa(z|r)$ and $\mathfrak b^\kappa(z|r)$.

\begin{lemma}\label{l_analyt}
There are unique holomorphic functions $G_1$ and $G_2$ on $\C\times\C\times\C_\pi$ such that $G_1(\alpha,z,r) = \mathcal A^\alpha(z|r)$ and $G_2(\alpha,z,r) = \mathcal B^\alpha(z|r)$ for every $\alpha,z\in\C$ and $r>0$.
\end{lemma}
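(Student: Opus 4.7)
The plan is to deduce Lemma~\ref{l_analyt} from Lemma~\ref{l_analyt0} together with the evenness of $\mathfrak a^\kappa$ and $\mathfrak b^\kappa$ in $\kappa$, via the ``substitution of the square root'' device advertised in the introduction around equation~(\ref{Gg}) and developed rigorously in Appendix~\ref{app_even}. The desired functions $G_j$ should arise as the unique holomorphic functions representing $F_j$ from Lemma~\ref{l_analyt0} after $\kappa$ has been replaced by $\kappa^2$.

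First I would invoke Lemma~\ref{l_analyt0} to pick the holomorphic $F_1,F_2$ on $\C\times\C\times\C_\pi$ with $F_1(\kappa,z,r)=\mathfrak a^\kappa(z|r)$ and $F_2(\kappa,z,r)=\mathfrak b^\kappa(z|r)$ for $(\kappa,z,r)\in\C\times\C\times\R_+$. The reflection symmetry~(\ref{reflsym}) then gives $F_j(\kappa,z,r)=F_j(-\kappa,z,r)$ on the subset $\C\times\C\times\R_+$. Since $\R_+$ is a uniqueness set inside the connected domain $\C_\pi$ (it has accumulation points there), the identity principle for holomorphic functions of several complex variables extends the equality $F_j(\kappa,z,r)=F_j(-\kappa,z,r)$ to all of $\C\times\C\times\C_\pi$. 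Thus each $F_j$ is an even holomorphic function of $\kappa$ with holomorphic dependence on the parameters $(z,r)\in\C\times\C_\pi$.

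Next I would apply the multivariable version of~(\ref{Gg}) from Appendix~\ref{app_even}: an even holomorphic function of $\kappa$ on a reflection-symmetric domain, depending holomorphically on parameters, admits a unique holomorphic representation as a function of $\kappa^2$ and the same parameters. This yields unique holomorphic $G_1,G_2$ on $\C\times\C\times\C_\pi$ with $F_j(\kappa,z,r)=G_j(\kappa^2,z,r)$ throughout the domain. Combining this with~(\ref{Aa}) and the surjectivity of $\kappa\mapsto\kappa^2$ from $\C$ onto $\C$, we obtain $G_1(\alpha,z,r)=\mathcal A^\alpha(z|r)$ and $G_2(\alpha,z,r)=\mathcal B^\alpha(z|r)$ for every $\alpha\in\C$ and $r>0$. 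Uniqueness of the $G_j$ on the full domain $\C\times\C\times\C_\pi$ follows from the identity principle, since two holomorphic representatives agreeing on $\C\times\C\times\R_+$ must coincide everywhere.

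The only nontrivial ingredient is the multivariable even-function representation from Appendix~\ref{app_even}; the scalar version (expanding an even entire function in powers of $\kappa^2$) is elementary, but obtaining a joint holomorphic dependence on the auxiliary variables $(z,r)$, especially with $r$ ranging over the slit domain $\C_\pi$, needs the precise statement given there. Once that is in hand, Lemma~\ref{l_analyt} is an immediate consequence of Lemma~\ref{l_analyt0}, the reflection symmetry~(\ref{reflsym}), and the definitions~(\ref{Aa}).
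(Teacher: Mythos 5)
Your proposal is correct and follows essentially the same route as the paper: it establishes the evenness of $F_{1,2}$ in $\kappa$ on all of $\C\times\C\times\C_\pi$ via the reflection symmetry~(\ref{reflsym}) and the uniqueness theorem for holomorphic functions, then invokes the multivariable square-root substitution of Appendix~\ref{app_even} (Lemma~\ref{l_root1}) for existence of $G_{1,2}$, with uniqueness again from the identity principle. No gaps.
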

\begin{proof}
Let $F_1$ and $F_2$ be as in Lemma~\ref{l_analyt0}. It follows from~(\ref{fgab}), (\ref{reflsym}), and the uniqueness theorem for holomorphic functions that $F_{1,2}(\kappa,z,\zeta) = F_{1,2}(-\kappa,z,\zeta)$ for all $\kappa,z\in\C$ and $\zeta\in \C_\pi$. The existence of $G_1$ and $G_2$ with the required properties is therefore ensured by Lemma~\ref{l_root1}, (\ref{fgab}), and~(\ref{Aa}). The uniqueness of $G_1$ and $G_2$ follows from the uniqueness theorem for holomorphic functions.
\end{proof}

It follows from~(\ref{Ualphatheta}), (\ref{akappa}), (\ref{bkappa}), and~(\ref{Aa}) that
\begin{equation}\label{wkappa}
\mathcal U^{\kappa^2}_\vartheta(z) = \frac{\mathfrak u^{\kappa}(z)\cos(\vartheta-\vartheta_\kappa)-\mathfrak u^{-\kappa}(z)\cos(\vartheta+\vartheta_\kappa)}{\kappa}
\end{equation}
for every $z,\vartheta\in\C$ and $\kappa\in\C\setminus\{0\}$, where $\vartheta_\kappa$ is given by~(\ref{varthetakappa}).

By~(\ref{ukappa}) and~(\ref{Xkappa}), we have
\begin{equation}\nonumber
\overline{\mathfrak u^\kappa(z)} = \mathfrak u^{\bar \kappa}(\bar z),\quad z,\kappa\in\C,
\end{equation}
where the bar means complex conjugation.
In view of~(\ref{akappa}), (\ref{a0}), and~(\ref{bkappa}), this implies that $\mathfrak a^\kappa(z)$ and $\mathfrak b^\kappa(z)$ are real if $z$ is real and $\kappa$ is either real or purely imaginary. Since every $\alpha\in\R$ is equal to $\kappa^2$ for some $\kappa$ that is either real or purely imaginary, it follows from~(\ref{Aa}) that $\mathcal A^\alpha(z)$ and $\mathcal B^\alpha(z)$ are real for every $\alpha,z\in\R$.

If $f,g\in \mathcal D$ are such that $r\to W_r(f,g)$ is a constant function on $\R_+$ (in particular, this is the case when $f$ and $g$ are solutions of $\mathscr L_{\alpha,z}f=\mathscr L_{\alpha,z}g=0$ for some $\alpha,z\in\C$), then its value will be denoted by $W(f,g)$. Equality~(\ref{Lu}) implies that $W_r(\mathfrak u^\kappa(z),\mathfrak u^{-\kappa}(z))$ does not depend on $r$, and we derive from~(\ref{ukappa}) and~(\ref{Xkappa}) that
\begin{equation}\label{Wukappa}
W(\mathfrak u^\kappa(z),\mathfrak u^{-\kappa}(z)) = \lim_{r\downarrow 0} W_r(\mathfrak u^\kappa(z),\mathfrak u^{-\kappa}(z)) = -\frac{2}{\pi}\sin\pi\kappa,\quad \kappa,z\in\C.
\end{equation}
It follows from~(\ref{akappa}), (\ref{bkappa}), and~(\ref{Wukappa}) that $W(\mathfrak a^\kappa(z),\mathfrak b^\kappa(z)) = -2\pi\sinc^2 \pi\kappa$ for all $\kappa\in \C\setminus\{0\}$ and $z\in\C$. By Lemma~\ref{l_analyt0}, $W(\mathfrak a^\kappa(z),\mathfrak b^\kappa(z))$ is continuous in $\kappa$ at $\kappa=0$ and, therefore, this equality holds for all $\kappa,z\in\C$. In view of~(\ref{Aa}), this yields
\begin{equation}\label{WABalpha}
W(\mathcal A^\alpha(z),\mathcal B^\alpha(z)) = -2\pi\Sinc^2(\pi^2\alpha),\quad \alpha,z\in\C.
\end{equation}
Hence, $\mathcal A^\alpha(z)$ and $\mathcal B^\alpha(z)$ are linearly independent for all $\alpha,z\in\C$ such that $\alpha$ is not a square of a nonzero integer number and, in particular, for all $\alpha<1$ and $z\in\C$.

\subsection{Definition of \texorpdfstring{$\mathcal V_{\alpha,\vartheta}$}{V}.} We now turn to the definition of the spectral measures $\mathcal V_{\alpha,\vartheta}$. In what follows, we let $\ln$ denote the branch of the logarithm on $\C_{3\pi/2}$ satisfying the condition $\ln 1 = 0$ and set $z^\rho=e^{\rho\ln z}$ for all $z\in \C_{3\pi/2}$ and $\rho\in\C$.

\begin{lemma}\label{lR}
There is a unique holomorphic function $R$ on $\C\times\C\times \C_{3\pi/2}$ such that
\begin{equation}\label{R}
R(\kappa^2,\vartheta,z) = \frac{z^{-\kappa/2}e^{i\pi\kappa/2}\cos(\vartheta-\vartheta_\kappa) - z^{\kappa/2}e^{-i\pi\kappa/2}\cos(\vartheta+\vartheta_\kappa)}{\kappa}
\end{equation}
for every $\kappa\in\C\setminus\{0\}$, $\vartheta\in\C$ and $z\in \C_{3\pi/2}$, where $\vartheta_\kappa$ is given by~$(\ref{varthetakappa})$. The function $R$ satisfies the equality
\begin{equation}\label{R0}
R(0,\vartheta,z) = (\pi i - \ln z)\cos\vartheta + \pi\sin\vartheta
\end{equation}
for every $\vartheta\in\C$ and $z\in \C_{3\pi/2}$.
\end{lemma}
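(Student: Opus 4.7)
The strategy mirrors the construction of $\mathcal A^\alpha(z)$ and $\mathcal B^\alpha(z)$: first I would build a holomorphic, even-in-$\kappa$ version of the right-hand side of~(\ref{R}), and then apply Lemma~\ref{l_root1} (just as in the proof of Lemma~\ref{l_analyt}) to descend to a holomorphic function of $\alpha=\kappa^2$.

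Concretely, I would set
\[
\tilde R(\kappa,\vartheta,z) = \frac{z^{-\kappa/2}e^{i\pi\kappa/2}\cos(\vartheta-\vartheta_\kappa) - z^{\kappa/2}e^{-i\pi\kappa/2}\cos(\vartheta+\vartheta_\kappa)}{\kappa}
\]
on $(\C\setminus\{0\})\times\C\times\C_{3\pi/2}$; since $z^{\pm\kappa/2}=e^{\pm(\kappa/2)\ln z}$ is jointly holomorphic in $(\kappa,z)$ on its domain, $\tilde R$ is holomorphic there. Substituting $-\kappa$ for $\kappa$ flips the sign of $\vartheta_\kappa$ and interchanges $e^{i\pi\kappa/2}\leftrightarrow e^{-i\pi\kappa/2}$ and $z^{-\kappa/2}\leftrightarrow z^{\kappa/2}$, so the two summands in the numerator are swapped; this changes the sign of the numerator, which is compensated by the sign change in the denominator, and hence $\tilde R(-\kappa,\vartheta,z)=\tilde R(\kappa,\vartheta,z)$.

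The next step is to show that the singularity of $\tilde R$ at $\kappa=0$ is removable and to identify the limit with the right-hand side of~(\ref{R0}). The numerator vanishes at $\kappa=0$ (both terms equal $\cos\vartheta$), so I would apply L'H\^{o}pital. Differentiating the three relevant factors at $\kappa=0$ gives $\partial_\kappa z^{\mp\kappa/2}|_{\kappa=0}=\mp\tfrac12\ln z$, $\partial_\kappa e^{\pm i\pi\kappa/2}|_{\kappa=0}=\pm i\pi/2$, and $\partial_\kappa\cos(\vartheta\mp\vartheta_\kappa)|_{\kappa=0}=\pm\tfrac{\pi}{2}\sin\vartheta$; applying the product rule to each summand and subtracting collapses the result to $(\pi i-\ln z)\cos\vartheta+\pi\sin\vartheta$. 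The extended $\tilde R$ is then holomorphic on all of $\C\times\C\times\C_{3\pi/2}$ and still even in $\kappa$.

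Finally, Lemma~\ref{l_root1} produces a unique holomorphic $R$ on $\C\times\C\times\C_{3\pi/2}$ with $R(\kappa^2,\vartheta,z)=\tilde R(\kappa,\vartheta,z)$, which yields~(\ref{R}) for $\kappa\neq 0$ and~(\ref{R0}) at $\kappa=0$. Uniqueness of $R$ follows from the uniqueness theorem for holomorphic functions, since~(\ref{R}) already determines $R$ on $(\C\setminus\{0\})\times\C\times\C_{3\pi/2}$, whose first projection $\C\setminus\{0\}$ accumulates at every point of $\C$. The only potentially delicate step is the L'H\^{o}pital calculation, which is short but requires careful bookkeeping of signs; everything else is a direct reuse of the even-in-$\kappa$ machinery already exploited in the paper.
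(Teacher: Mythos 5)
Your proposal is correct and follows essentially the same route as the paper: define the even-in-$\kappa$ function $\tilde R$, verify that the singularity at $\kappa=0$ is removable with limit equal to the right-hand side of~(\ref{R0}), descend to $\alpha=\kappa^2$ via Lemma~\ref{l_root1}, and get uniqueness from the identity theorem. The only minor point you leave implicit is joint holomorphy of the extended $\tilde R$ in all three variables (the paper handles this by combining separate holomorphy with the Hartogs theorem), but this is a routine addition to your argument.
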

\begin{proof}
Let the function $\tilde R$ on $\C\times\C\times \C_{3\pi/2}$ be such that $\tilde R(\kappa,\vartheta,z)$ is equal to the right-hand side of~(\ref{R}) for nonzero $\kappa$ and $\tilde R(0,\vartheta,z)$ is equal to the right-hand side of~(\ref{R0}). For every $\vartheta\in\C$ and $z\in \C_{3\pi/2}$, the function $\kappa\to \tilde R(\kappa,\vartheta,z)$ is holomorphic on $\C\setminus\{0\}$ and is continuous at $\kappa=0$ (the calculation of the limit of the right-hand side of~(\ref{R}) shows that $\lim_{\kappa\to 0} \tilde R(\kappa,\vartheta,z) = \tilde R(0,\vartheta,z)$). This implies that the function $\kappa\to \tilde R(\kappa,\vartheta,z)$ is holomorphic on $\C$ for every $\vartheta\in\C$ and $z\in \C_{3\pi/2}$. On the other hand, the function $(\vartheta,z)\to \tilde R(\kappa,\vartheta,z)$ is obviously holomorphic on $\C\times \C_{3\pi/2}$ for every $\kappa\in\C$. By the Hartogs theorem, we conclude that $\tilde R$ is holomorphic on its domain. Moreover, we have $\tilde R(-\kappa,\vartheta,z) = \tilde R(\kappa,\vartheta,z)$ for every $\kappa,\vartheta\in\C$ and $z\in \C_{3\pi/2}$. Hence, the existence of $R$ follows from Lemma~\ref{l_root1}. The uniqueness of $R$ is ensured by the uniqueness theorem for holomorphic functions. Formula~(\ref{R0}) is obvious from the above.
\end{proof}

It follows from~(\ref{R}) and~(\ref{R0}) that
\begin{equation}\label{periodR}
R(\alpha,\vartheta+\pi,z) = -R(\alpha,\vartheta,z)
\end{equation}
for every $\alpha,\vartheta\in\C$ and $z\in\C_{3\pi/2}$.

Given $z\in\C_{3\pi/2}$, there is a unique $\phi\in(-\pi/2,3\pi/2)$ such that $z=|z|e^{i\phi}$. We shall denote this $\phi$ by $\phi_z$.

\begin{lemma}\label{lIm}
Let $R$ be as in Lemma~$\ref{lR}$. Then we have
\begin{equation}\label{ImR}
\Im\left(R(\alpha,\vartheta+\pi/2,z)\overline{R(\alpha,\vartheta,z)}\right) = \pi(\phi_z-\pi)\Sinc((\phi_z-\pi)^2\alpha)\Sinc(\pi^2\alpha)
\end{equation}
for every $\alpha,\vartheta\in\R$ and $z\in\C_{3\pi/2}$.
\end{lemma}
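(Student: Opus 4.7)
The plan is to verify~(\ref{ImR}) first for $\alpha>0$ by direct computation from the explicit formula~(\ref{R}), and then extend to all real $\alpha$ by analyticity.

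So fix $\kappa>0$, put $\alpha=\kappa^2$, write $z=|z|e^{i\phi_z}$ with $\phi_z\in(-\pi/2,3\pi/2)$, and use the chosen branch of logarithm to get
\[
z^{-\kappa/2}e^{i\pi\kappa/2}=|z|^{-\kappa/2}e^{-i\tau\kappa/2},\qquad z^{\kappa/2}e^{-i\pi\kappa/2}=|z|^{\kappa/2}e^{i\tau\kappa/2},
\]
where $\tau=\phi_z-\pi$. Since $\vartheta$ and $\kappa$ are real, $\vartheta_\kappa$ is real, so the only complex factors in $R(\kappa^2,\vartheta,z)$ are $e^{\pm i\tau\kappa/2}$. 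I would substitute these into~(\ref{R}) to express $R(\kappa^2,\vartheta,z)$, and use $\cos(\vartheta+\pi/2\pm\vartheta_\kappa)=-\sin(\vartheta\pm\vartheta_\kappa)$ to express $R(\kappa^2,\vartheta+\pi/2,z)$. Conjugating the former and multiplying by the latter, the two terms with matching powers $|z|^{\pm\kappa}$ are real and drop out of the imaginary part, while the two cross terms carry factors $e^{\pm i\tau\kappa}$.

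Taking imaginary parts of the cross terms and applying $\sin A\cos B-\sin B\cos A=\sin(A-B)$ to the resulting bracket collapses everything to
\[
\kappa^{2}\,\Im\bigl(R(\kappa^2,\vartheta+\pi/2,z)\overline{R(\kappa^2,\vartheta,z)}\bigr)=\sin(\tau\kappa)\sin(2\vartheta_\kappa)=\sin((\phi_z-\pi)\kappa)\sin(\pi\kappa).
\]
Dividing by $\kappa^2$ and rewriting $\sin(x\kappa)/\kappa=x\sinc(x\kappa)=x\Sinc(x^2\alpha)$ via~(\ref{Cos1}) produces exactly the right-hand side of~(\ref{ImR}).

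To pass to general real $\alpha$, I would invoke real-analyticity. By Lemma~\ref{lR} the function $\alpha\mapsto R(\alpha,\vartheta,z)$ is entire, hence both it and its complex conjugate (for real $\alpha$) are real-analytic on $\R$, and so is the left-hand side of~(\ref{ImR}). The right-hand side is real-analytic because $\Sinc$ is entire. Two real-analytic functions agreeing on $(0,\infty)$ coincide on $\R$, which finishes the proof. The only delicate step is the phase bookkeeping that yields the argument $\phi_z-\pi$ (rather than $\phi_z$): this arises precisely because the explicit $e^{\pm i\pi\kappa/2}$ factors in~(\ref{R}) combine with $z^{\mp\kappa/2}=|z|^{\mp\kappa/2}e^{\mp i\phi_z\kappa/2}$ to give $|z|^{\mp\kappa/2}e^{\mp i(\phi_z-\pi)\kappa/2}$, and this is what ultimately produces the factor $\Sinc((\phi_z-\pi)^2\alpha)$ on the right-hand side.
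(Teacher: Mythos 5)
Your proof is correct. The computational core is the same as the paper's: expand the product $R(\kappa^2,\vartheta+\pi/2,z)\overline{R(\kappa^2,\vartheta,z)}$ from~(\ref{R}), observe that the terms with $|z|^{\pm\kappa}$ are real, and collapse the cross terms via $\sin(\vartheta+\vartheta_\kappa)\cos(\vartheta-\vartheta_\kappa)-\sin(\vartheta-\vartheta_\kappa)\cos(\vartheta+\vartheta_\kappa)=\sin 2\vartheta_\kappa$, giving $\kappa^2\,\Im(\cdots)=\sin((\phi_z-\pi)\kappa)\sin(\pi\kappa)$, which is exactly the imaginary part the paper extracts from its displayed formula for $\kappa^2R(\kappa^2,\vartheta+\pi/2,Ee^{i\phi})\overline{R(\bar\kappa^2,\vartheta,Ee^{i\phi})}$. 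The only real divergence is how you reach $\alpha\leq 0$: the paper keeps $\bar\kappa$ explicit in the computation and then notes that for $\kappa$ real \emph{or purely imaginary} one has $\bar\kappa^2=\kappa^2$ and the remaining three terms stay real, so the formula holds for all nonzero real $\alpha$ at once, with continuity supplying $\alpha=0$; you instead verify only $\alpha>0$ and extend by the identity theorem for real-analytic functions of $\alpha$. Your route is legitimate: although conjugation is not holomorphic, the restriction to real $\alpha$ of the real and imaginary parts of the entire function $\alpha\mapsto R(\alpha,\vartheta,z)$ is real-analytic, so the left-hand side of~(\ref{ImR}) is real-analytic on $\R$ and agreement on $(0,\infty)$ propagates to all of $\R$; the right-hand side is entire in $\alpha$ via $\Sinc$. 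The paper's variant buys a slightly more self-contained argument (no appeal to the identity theorem, only continuity at one point), while yours buys a shorter computation in which $\kappa$ may be taken real and positive throughout; both are sound.
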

\begin{proof}
By~(\ref{R}), we have
\begin{multline}\nonumber
\kappa^2 R(\kappa^2,\vartheta+\pi/2,Ee^{i\phi})\overline{R(\bar\kappa^2,\vartheta,Ee^{i\phi})} = i \sin(\phi-\pi)\kappa \sin\pi\kappa + \\+\cos(\phi-\pi)\kappa \sin 2\vartheta - E^\kappa \cos(\vartheta+\vartheta_\kappa)\sin(\vartheta+\vartheta_\kappa) - E^{-\kappa}\cos(\vartheta-\vartheta_\kappa)\sin(\vartheta-\vartheta_\kappa)
\end{multline}
for all $\kappa\in\C\setminus\{0\}$, $\vartheta\in \R$, $E>0$, and $-\pi/2<\phi<3\pi/2$. If $\kappa$ is real or purely imaginary, then $\bar\kappa^2=\kappa^2$ and the sum of the last three terms in the right-hand side is real. This implies~(\ref{ImR}) for nonzero $\alpha$. By continuity, (\ref{ImR}) remains valid for $\alpha=0$.
\end{proof}

Let $R$ be as in Lemma~$\ref{lR}$. For every $\alpha,\vartheta\in\C$, we let $\mathcal O_{\alpha,\vartheta}$ denote the open subset of $\C_{3\pi/2}$, where the function $z\to R(\alpha,\vartheta,z)$ is nonzero,
\begin{equation}\label{Oalphatheta}
\mathcal O_{\alpha,\vartheta} = \{z\in \C_{3\pi/2} : R(\alpha,\vartheta,z)\neq 0\}.
\end{equation}
Suppose now that $\alpha<1$ and $\vartheta\in\R$. Then it follows from Lemma~\ref{lIm} that
\begin{equation}\label{subset}
\C_+\cup\R_+ \subset \mathcal O_{\alpha,\vartheta},
\end{equation}
where $\C_+$ denotes the open upper half-plane of the complex plane, $\C_+=\{z\in\C: \Im z>0\}$.
Let the holomorphic function $\mathscr M_{\alpha,\vartheta}$ on $\mathcal O_{\alpha,\vartheta}$ be defined by the equality
\begin{equation}\label{M}
\mathscr M_{\alpha,\vartheta}(z) = -\frac{R(\alpha,\vartheta+\pi/2,z)}{2\pi^2\Sinc^2(\pi^2\alpha)R(\alpha,\vartheta,z)},\quad z\in \mathcal O_{\alpha,\vartheta}.
\end{equation}
Lemma~\ref{lIm} implies that
\begin{equation}\label{ImM}
\Im\mathscr M_{\alpha,\vartheta}(z) = \frac{(\pi-\phi_z)\Sinc((\pi-\phi_z)^2\alpha)}{2\pi\Sinc(\pi^2\alpha)|R(\alpha,\vartheta,z)|^2},\quad z\in \mathcal O_{\alpha,\vartheta}.
\end{equation}
By Lemma~\ref{l_Sinc} and~(\ref{subset}), we conclude that $\Im\mathscr M_{\alpha,\vartheta}(z) > 0$ for every $z\in\C_+$ and, hence, $\mathscr M_{\alpha,\vartheta}|_{\C_+}$ is a Herglotz function (see Appendix~\ref{s_Herglotz}). We now define $\mathcal V_{\alpha,\vartheta}$ as the Herglotz (and, hence, Radon) measure associated with the function $\pi \mathscr M_{\alpha,\vartheta}|_{\C_+}$.
It follows from~(\ref{herglotz_meas}) and Lemma~\ref{l_herglotz} that (\ref{est_meas}) is valid and
\begin{equation}\label{nualphavartheta}
\int \varphi(E)\,d\mathcal V_{\alpha,\vartheta}(E) = \lim_{\eta\downarrow 0} \int \varphi(E) \Im\mathscr M_{\alpha,\vartheta}(E+i\eta)\,dE
\end{equation}
for every continuous complex function $\varphi$ on $\R$ satisfying the bound $|\varphi(E)|\leq C(1+E^2)^{-2}$, $E\in \R$, for some $C\geq 0$. In particular, (\ref{nualphavartheta}) holds for every
continuous function $\varphi$ on $\R$ with compact support. In view of the Riesz representation theorem, this implies that $\mathcal V_{\alpha,\vartheta}$ is uniquely determined by equality~(\ref{nualphavartheta}). It follows from~(\ref{periodR}) and~(\ref{M}) that $\mathscr M_{\alpha,\vartheta+\pi} = \mathscr M_{\alpha,\vartheta}$ and, therefore, $\mathcal V_{\alpha,\vartheta}$ has $\pi$-periodicity property~(\ref{periodicV}).

\section{Explicit formulas for the spectral measures}
\label{s_meas}

In this section, we assume that Theorems~\ref{t_sa} and~\ref{leig2} are valid and obtain explicit expressions for both the point and absolutely continuous parts of the spectral measures $\mathcal V_{\alpha,\vartheta}$. The proofs of Theorems~\ref{t_sa} and~\ref{leig2} in Secs.~\ref{s_one} and~\ref{s_eig} do not rely on the results of this section.

\subsection{General structure of \texorpdfstring{$\mathcal V_{\alpha,\vartheta}$}{V}.}
Given a positive Radon measure $\nu$ on $\R$, we let $L_2^c(\R,\nu)$ denote the subspace of $L_2(\R,\nu)$ consisting of all its elements vanishing $\nu$-a.e. outside some compact subset of $\R$.

\begin{lemma}\label{cor_eig}
Let $\alpha<1$, $\vartheta\in \R$, and $U_{\alpha,\vartheta}$ be as in Theorem~$\ref{leig2}$. Then we have
\begin{equation}\label{v-1}
(U_{\alpha,\vartheta}^{-1}\varphi)(r) = \int \mathcal U^\alpha_\vartheta(E|r) \varphi(E)\,d\mathcal V_{\alpha,\vartheta}(E),
\quad \varphi\in L_2^c(\R,\mathcal V_{\alpha,\vartheta}),
\end{equation}
for $\lambda$-a.e. $r\in \R_+$.
An $E\in\R$ is an eigenvalue of $h_{\alpha,\vartheta}$ if and only if $\mathcal V_{\alpha,\vartheta}(\{E\})>0$. For every eigenvalue $E$, the corresponding eigenspace is one-dimensional and is spanned by $[\mathcal U^\alpha_\vartheta(E)]$, and we have the equality $\|[\mathcal U^\alpha_\vartheta(E)]\| = \mathcal V_{\alpha,\vartheta}(\{E\})^{-1/2}$.
\end{lemma}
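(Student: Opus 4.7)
The plan is to derive the inversion formula (\ref{v-1}) directly from Theorem~\ref{leig2} by exploiting the unitarity of $U_{\alpha,\vartheta}$ (so that $U_{\alpha,\vartheta}^{-1}=U_{\alpha,\vartheta}^*$), and then read off the eigenvalue statements from the standard spectral analysis of the multiplication operator $\mathcal T^{\mathcal V_{\alpha,\vartheta}}_\iota$ through the diagonalization identity (\ref{diag}).

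To prove (\ref{v-1}), fix $\varphi\in L_2^c(\R,\mathcal V_{\alpha,\vartheta})$ supported (modulo $\mathcal V_{\alpha,\vartheta}$) in a compact set $K\subset\R$ and an arbitrary $\psi\in L_2^c(\R_+)$ supported in a compact set $K'\subset\R_+$. By Theorem~\ref{leig2}, $U_{\alpha,\vartheta}\psi$ is represented for $\mathcal V_{\alpha,\vartheta}$-a.e. $E$ by the integral of $\mathcal U^\alpha_\vartheta(E|r)\psi(r)$ over $\R_+$, so the inner product $\langle U_{\alpha,\vartheta}\psi,\varphi\rangle_{L_2(\R,\mathcal V_{\alpha,\vartheta})}$ equals a double integral of $\mathcal U^\alpha_\vartheta(E|r)\psi(r)\overline{\varphi(E)}$ against $dr\otimes d\mathcal V_{\alpha,\vartheta}$. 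By Lemma~\ref{l_analyt} the kernel $\mathcal U^\alpha_\vartheta(E|r)$ is (real-)analytic and, in particular, bounded on $K\times K'$; since $\mathcal V_{\alpha,\vartheta}$ is a Radon measure, $\mathcal V_{\alpha,\vartheta}(K)<\infty$, so Fubini's theorem applies. Swapping the order of integration and using that $\mathcal U^\alpha_\vartheta(E|r)$ is real for real arguments yields
\begin{equation*}
\langle U_{\alpha,\vartheta}\psi,\varphi\rangle_{L_2(\R,\mathcal V_{\alpha,\vartheta})} = \int_0^\infty \psi(r)\, \overline{\Phi(r)}\, dr,\qquad \Phi(r)=\int \mathcal U^\alpha_\vartheta(E|r)\varphi(E)\,d\mathcal V_{\alpha,\vartheta}(E).
\end{equation*}
On the other hand, unitarity of $U_{\alpha,\vartheta}$ gives $\langle U_{\alpha,\vartheta}\psi,\varphi\rangle_{L_2(\R,\mathcal V_{\alpha,\vartheta})} = \langle \psi,U_{\alpha,\vartheta}^{-1}\varphi\rangle_{L_2(\R_+)}$. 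Equating the two expressions for every $\psi\in L_2^c(\R_+)$ and using the density of $L_2^c(\R_+)$ in $L_2(\R_+)$, we conclude that the $\lambda_+$-equivalence class of $\Phi$ equals $U_{\alpha,\vartheta}^{-1}\varphi$, which is (\ref{v-1}).

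For the eigenvalue statements, the diagonalization (\ref{diag}) shows that $E\in\R$ is an eigenvalue of $h_{\alpha,\vartheta}$ if and only if it is an eigenvalue of $\mathcal T^{\mathcal V_{\alpha,\vartheta}}_\iota$. A standard fact about multiplication operators (any function in the eigenspace must vanish $\mathcal V_{\alpha,\vartheta}$-a.e. off $\{E\}$) says this happens precisely when $\mathcal V_{\alpha,\vartheta}(\{E\})>0$, in which case the eigenspace of $\mathcal T^{\mathcal V_{\alpha,\vartheta}}_\iota$ is one-dimensional and spanned by the indicator $\chi_{\{E\}}$, whose norm in $L_2(\R,\mathcal V_{\alpha,\vartheta})$ is $\mathcal V_{\alpha,\vartheta}(\{E\})^{1/2}$. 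Since $\chi_{\{E\}}\in L_2^c(\R,\mathcal V_{\alpha,\vartheta})$, we may apply (\ref{v-1}) with $\varphi=\chi_{\{E\}}$ to obtain $U_{\alpha,\vartheta}^{-1}\chi_{\{E\}} = \mathcal V_{\alpha,\vartheta}(\{E\})\,[\mathcal U^\alpha_\vartheta(E)]$. Transferring back via the unitary $U_{\alpha,\vartheta}$, the eigenspace of $h_{\alpha,\vartheta}$ is therefore one-dimensional and spanned by $[\mathcal U^\alpha_\vartheta(E)]$; comparing norms, $\mathcal V_{\alpha,\vartheta}(\{E\})\cdot\|[\mathcal U^\alpha_\vartheta(E)]\| = \mathcal V_{\alpha,\vartheta}(\{E\})^{1/2}$, which gives the asserted equality $\|[\mathcal U^\alpha_\vartheta(E)]\|=\mathcal V_{\alpha,\vartheta}(\{E\})^{-1/2}$.

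The only place requiring care is the application of Fubini in the first step, but this is entirely routine given the analyticity of $\mathcal U^\alpha_\vartheta(E|r)$ and the Radon property of $\mathcal V_{\alpha,\vartheta}$; the substantive content has already been established in Theorem~\ref{leig2}.
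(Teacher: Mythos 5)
Your proposal is correct and follows essentially the same route as the paper's own proof: the inversion formula is obtained by pairing with $\psi\in L_2^c(\R_+)$, using unitarity of $U_{\alpha,\vartheta}$ and Fubini (the paper likewise invokes Fubini, without spelling out the boundedness of the kernel on compacts), and the eigenvalue statements are read off from the diagonalization~(\ref{diag}) together with the application of~(\ref{v-1}) to the characteristic function $\chi_{\{E\}}$ and a norm comparison. No substantive differences from the paper's argument.
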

\begin{proof}
For brevity, we set $h=h_{\alpha,\vartheta}$, $U = U_{\alpha,\vartheta}$, $\mathcal V = \mathcal V_{\alpha,\vartheta}$, and $\mathcal U = \mathcal U^\alpha_\vartheta$.
Given $\varphi\in L_2^c(\R,\mathcal V)$ and $r>0$, let $\check \varphi(r)$ denote the right-hand side of~(\ref{v-1}). By the unitarity of $U$ and the Fubini theorem, we have
\begin{multline}\nonumber
\langle \psi, U^{-1}\varphi\rangle_{L_2(\R_+)} = \langle U\psi, \varphi\rangle_{L_2(\R,\mathcal V)} = \\=\int_\R d\mathcal V(E) \varphi(E)\int_{\R_+} \overline{\psi(r)} \mathcal U(E|r)\,dr = \int_{\R_+} \overline{\psi(r)}\check \varphi(r)\,dr
\end{multline}
for any $\psi\in L^c_2(\R_+)$, where $\langle\cdot,\cdot\rangle_{L_2(\R_+)}$ and $\langle\cdot,\cdot\rangle_{L_2(\R,\mathcal V)}$ are the scalar products in $L_2(\R_+)$ and $L_2(\R,\mathcal V)$ respectively. This implies (\ref{v-1}). Given $E\in\R$, let $G_E$ be the subspace of $L_2(\R_+)$ composed of all $\psi$ in the domain of $h$ such that $h\psi = E\psi$ and $\tilde G_E$ be the subspace of $L_2(\R,\mathcal V)$ composed of all $\varphi$ in the domain of $\mathcal T^{\mathcal V}_\iota$ such that $\mathcal T^{\mathcal V}_\iota\varphi = E\varphi$, where $\iota$ is the identity function on $\R$. By Theorem~\ref{leig2}, $U$ induces an isomorphism between $G_E$ and $\tilde G_E$ for every $E\in\R$. This means, in particular, that the operators $h$ and $\mathcal T^{\mathcal V}_\iota$ have the same eigenvalues. Hence, $E\in \R$ is an eigenvalue of $h$ if and only if $\mathcal V(\{E\})>0$. If  $\mathcal V(\{E\})>0$, then $\tilde G_E$ is one-dimensional and is spanned by $[\chi_E]_{\mathcal V}$, where $\chi_E$ is the characteristic function of the one-point set $\{E\}$. By~(\ref{v-1}), we have $U^{-1}[\chi_E]_{\mathcal V}=\mathcal V(\{E\})[\mathcal U^\alpha_\vartheta(E)]$. The space $G_E$ is therefore one-dimensional and is spanned by $[\mathcal U^\alpha_\vartheta(E)]$. Since the norm of $[\chi_E]_{\mathcal V}$ in $L_2(\R,\mathcal V)$ is equal to $\mathcal V(\{E\})^{1/2}$, the unitarity of $U$ implies that $\|[\mathcal U^\alpha_\vartheta(E)]\| = \mathcal V_{\alpha,\vartheta}(\{E\})^{-1/2}$.
\end{proof}

As in Sec.~\ref{intro}, let $\omega$ be the $\pi$-periodic function on $\R$ satisfying~(\ref{omega}) and let $Q_0$, $Q_1$, $Q_\infty$, and $Q$ be defined by~(\ref{Q0}), (\ref{Q11}), (\ref{Qi}), and~(\ref{domain}) respectively. We set
\begin{equation}\label{Q1}
Q_{\geq1} = Q_1\cup Q_\infty = \{ (\alpha,\vartheta)\in\R^2 : \alpha<\omega(\vartheta)\}.
\end{equation}
We obviously have
\begin{equation}\label{Q0Q1}
Q_0\cap Q_{\geq1} = \varnothing,\quad Q_0\cup Q_{\geq1} = Q.
\end{equation}
By~(\ref{omega}), we have
\begin{equation}\label{coco}
\Cos\frac{\pi^2 \omega(\vartheta)}{4} = \cos\left(\frac{\pi}{2}-|\vartheta|\right) = |\sin\vartheta|
\end{equation}
for every $\vartheta\in [-\pi/2,\pi/2]$. Since both sides of~(\ref{coco}) are $\pi$-periodic, (\ref{coco}) remains valid for all $\vartheta\in\R$. By Lemma~\ref{l_Sinc}, the function $\alpha\to \Cos(\pi^2\alpha/4)$ is nonnegative and strictly decreasing on $(-\infty,1]$. Hence, for every $\alpha<1$ and $\vartheta\in\R$, we have the chain of equivalent conditions
\[
\alpha < \omega(\vartheta) \Leftrightarrow \Cos \frac{\pi^2\omega(\vartheta)}{4} < \Cos\frac{\pi^2\alpha}{4} \Leftrightarrow |\sin\vartheta| < \Cos\frac{\pi^2\alpha}{4} \Leftrightarrow \sin^2\vartheta < \Cos^2\frac{\pi^2\alpha}{4}.
\]
In view of~(\ref{Q0}) and~(\ref{Q1}), it follows that
\begin{align}
& Q_0 = \{ (\alpha,\vartheta)\in\R^2 : \alpha<1\mbox{ and }\sin^2\vartheta\geq \Cos^2(\pi^2\alpha/4)\},\label{Q0'}\\
& Q_{\geq1} = \{ (\alpha,\vartheta)\in\R^2 : \alpha<1\mbox{ and }\sin^2\vartheta < \Cos^2(\pi^2\alpha/4)\}.\label{Q1'}
\end{align}
Given $\alpha,\vartheta\in\C$, we set
\begin{equation}\label{Nalphatheta}
\Sigma_{\alpha,\vartheta} = \{ E < 0 : R(\alpha,\vartheta,E) = 0\},
\end{equation}
where $R$ is as in Lemma~$\ref{lR}$. By~(\ref{R}), we have
\begin{equation}\label{eig_eq}
E\in \Sigma_{\kappa^2,\vartheta} \Longleftrightarrow E<0 \mbox{ and }|E|^{-\kappa/2}\cos(\vartheta-\vartheta_\kappa) = |E|^{\kappa/2}\cos(\vartheta+\vartheta_\kappa)
\end{equation}
for every $\kappa\in\C\setminus\{0\}$ and $\vartheta\in\C$.

In view of~(\ref{Oalphatheta}) and~(\ref{subset}), we have $R(\alpha,\vartheta,E)\neq 0$ for every $\alpha<1$, $\vartheta\in \R$ and $E>0$. For every $\alpha<1$ and $\vartheta\in\R$, we define the function $t_{\alpha,\vartheta}$ by the formula
\begin{equation}\label{density}
t_{\alpha,\vartheta}(E) = \left\{
\begin{matrix}
\frac{1}{2|R(\alpha,\vartheta,E)|^2},& E > 0,\\
0,& E \leq 0.
\end{matrix}
\right.
\end{equation}

Given a positive Radon measure $\nu$ on $\R$, we set
\[
\mathcal P(\nu) = \{ E\in\R: \nu(\{E\}) > 0\}.
\]
Since $\nu$ is $\sigma$-additive, the set $\mathcal P(\nu)$ is at most countable. We define the continuous part $\nu^c$ and the point part $\nu^p$ of $\nu$ by the relations
\begin{equation}\label{nuc}
\nu^c = (1-\chi_{\mathcal P(\nu)})\,\nu, \quad \nu^p = \chi_{\mathcal P(\nu)}\,\nu,
\end{equation}
where $\chi_{\mathcal P(\nu)}$ is the characteristic function of the set $\mathcal P(\nu)$ (i.e., it is equal to unity on $\mathcal P(\nu)$ and to zero on $\R\setminus\mathcal P(\nu)$). Clearly, $\nu^c(\{E\})=0$ for every $E\in\R$ and
\[
\nu = \nu^c + \nu^p.
\]
A function $\varphi$ on $\R$ is $\nu^p$-integrable if and only if $\{\nu(\{E\})\varphi(E)\}_{E\in\mathcal P(\nu)}$ is a summable family in $\C$, in which case we have
\begin{equation}\label{nup}
\int \varphi(E)\,d\nu^p(E) = \sum_{E\in \mathcal P(\nu)} \nu(\{E\})\varphi(E).
\end{equation}
Thus, to completely describe a positive Radon measure $\nu$ on $\R$, it suffices to find $\mathcal P(\nu)$ and $\nu^c$ and specify $\nu(\{E\})$ for every $E\in\mathcal P(\nu)$. The next theorem gives such a description for the measures $\mathcal V_{\alpha,\vartheta}$. As in Sec.~\ref{intro}, we let $\lambda$ denote the Lebesgue measure on $\R$.

\begin{theorem}\label{t_measure}
For every $\alpha<1$ and $\vartheta\in\R$, we have $\mathcal P(\mathcal V_{\alpha,\vartheta}) = \Sigma_{\alpha,\vartheta}$ and $\mathcal V^c_{\alpha,\vartheta} = t_{\alpha,\vartheta}\,\lambda$, where $t_{\alpha,\vartheta}$ is given by~$(\ref{density})$. If $(\alpha,\vartheta)\in Q_0$, then $\Sigma_{\alpha,\vartheta} = \varnothing$. If $(\alpha,\vartheta)\in Q_{\geq1}$ and $E\in \Sigma_{\alpha,\vartheta}$, then
\begin{equation}\label{point}
\mathcal V_{\alpha,\vartheta}(\{E\}) = \frac{|E|}{2\Sinc(\pi^2\alpha)(\Cos^2(\pi^2\alpha/4)-\sin^2\vartheta)}.
\end{equation}
\end{theorem}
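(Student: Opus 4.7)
The plan is to exploit that $\mathscr M_{\alpha,\vartheta}$ is the ratio of two functions holomorphic on $\C\times\C\times\C_{3\pi/2}$ (Lemma~\ref{lR}), hence meromorphic on $\C_{3\pi/2}$, which contains an open neighborhood of $\R\setminus\{0\}$. By standard Herglotz theory, $\mathcal V_{\alpha,\vartheta}|_{\R\setminus\{0\}}$ therefore has no singular continuous part and decomposes into an absolutely continuous part with density $\Im\mathscr M_{\alpha,\vartheta}$ plus atoms at the real poles of $\mathscr M_{\alpha,\vartheta}$. By~(\ref{subset}) the function $R(\alpha,\vartheta,\cdot)$ has no zeros on $\C_+\cup\R_+$, so these real poles lie in $(-\infty,0)\cup\{0\}$; a direct inspection of~(\ref{R})/(\ref{R0}) as $z\to 0$ shows that $\mathscr M_{\alpha,\vartheta}$ remains bounded near the origin, ruling out an atom there and giving $\mathcal P(\mathcal V_{\alpha,\vartheta})=\Sigma_{\alpha,\vartheta}$. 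For the absolutely continuous part, taking $\eta\downarrow 0$ in~(\ref{ImM}): for $E>0$, $\phi_{E+i\eta}\to 0^+$ so the limit equals $t_{\alpha,\vartheta}(E)$; for $E<0$ outside $\Sigma_{\alpha,\vartheta}$, $\phi_{E+i\eta}\to\pi^-$ and the prefactor $(\pi-\phi_{E+i\eta})$ forces the limit to $0$. Combined with~(\ref{nualphavartheta}) and dominated convergence on compact sets, this identifies $\mathcal V_{\alpha,\vartheta}^c$ with $t_{\alpha,\vartheta}\,\lambda$.

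The heart of the proof is the atom mass. The Herglotz atom formula applied to $\pi\mathscr M_{\alpha,\vartheta}$ gives, at each simple zero $E_0\in\Sigma_{\alpha,\vartheta}$,
\[
\mathcal V_{\alpha,\vartheta}(\{E_0\})=-\pi\,\mathrm{Res}_{z=E_0}\mathscr M_{\alpha,\vartheta}(z)=\frac{R(\alpha,\vartheta+\pi/2,E_0)}{2\pi\Sinc^2(\pi^2\alpha)\,\partial_z R(\alpha,\vartheta,E_0)}.
\]
Substituting $z=|E_0|e^{i\pi}$ into~(\ref{R}) (so that $z^{\pm\kappa/2}e^{\mp i\pi\kappa/2}=|E_0|^{\pm\kappa/2}$ and the exponential factors cancel), the vanishing condition reads
\[
A:=|E_0|^{-\kappa/2}\cos(\vartheta-\vartheta_\kappa)=|E_0|^{\kappa/2}\cos(\vartheta+\vartheta_\kappa).
\]
Differentiating~(\ref{R}) in $z$ and using both forms of $A$ at $z=E_0$ yields $\partial_z R(\alpha,\vartheta,E_0)=A/|E_0|$. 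The analogous calculation with $\vartheta+\pi/2$ replacing $\vartheta$, after multiplying by $A$ and again invoking both forms, collapses the trigonometric combination to $\sin(2\vartheta_\kappa)/\kappa=\pi\Sinc(\pi^2\alpha)$ (via $\Cos(\pi^2\alpha/4)\Sinc(\pi^2\alpha/4)=\Sinc(\pi^2\alpha)$), giving $A\,R(\alpha,\vartheta+\pi/2,E_0)=\pi\Sinc(\pi^2\alpha)$. Multiplying the two forms of $A$ directly produces $A^2=\cos(\vartheta-\vartheta_\kappa)\cos(\vartheta+\vartheta_\kappa)=\cos^2\vartheta-\sin^2\vartheta_\kappa=\Cos^2(\pi^2\alpha/4)-\sin^2\vartheta$. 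Substituting these three identities into the residue formula produces exactly~(\ref{point}); simplicity of the zero is confirmed by $\partial_z R(\alpha,\vartheta,E_0)\neq 0$.

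To show $\Sigma_{\alpha,\vartheta}=\varnothing$ on $Q_0$, the vanishing condition on $(-\infty,0)$ rewrites (for real $\kappa\in(0,1)$) as $|E|^\kappa\cos(\vartheta+\vartheta_\kappa)=\cos(\vartheta-\vartheta_\kappa)$, which admits a positive root $|E|$ iff the two cosines share a sign, iff $A^2=\Cos^2(\pi^2\alpha/4)-\sin^2\vartheta>0$. By~(\ref{Q1'}) this is precisely $Q_{\geq 1}$, so no solution exists in $Q_0$; degenerate subcases (one cosine vanishing) would force $\kappa\in\Z$, impossible for $\kappa\in(0,1)$. The boundary case $\alpha=0$ is handled by~(\ref{R0}): $(0,\vartheta)\in Q_0$ forces $\cos\vartheta=0$, whence $R(0,\vartheta,\cdot)=\pm\pi\neq 0$; and $\alpha<0$ is confined to $Q_\infty\subset Q_{\geq 1}$, so never enters $Q_0$ at all. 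The principal technical obstacle is the bookkeeping in the residue step --- tracking the branches of $z^{\pm\kappa/2}$ on the negative real axis and confirming the continuous passage to $\kappa=0$ --- but the holomorphic extensions of Lemma~\ref{lR} render this routine rather than delicate.
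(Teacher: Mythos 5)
Your overall route is the same as the paper's: Stieltjes inversion of $\pi\mathscr M_{\alpha,\vartheta}$ for the continuous part, a residue computation at the zeros of $R(\alpha,\vartheta,\cdot)$ for the atoms, and a sign analysis of the vanishing condition for the emptiness of $\Sigma_{\alpha,\vartheta}$ on $Q_0$. The residue step written through the quantity $A$ is a clean, correct compression of the paper's corresponding lemma (the paper instead multiplies by $\cos(\vartheta-\vartheta_\kappa)$, which is nonzero on $Q_{\geq1}$), and the $\alpha=0$ case that you defer really is a two-line computation with~(\ref{R0}). Two small slips: in the $Q_0$ argument, a \emph{single} vanishing cosine does not force $\kappa\in\Z$ (only both vanishing does), but since one vanishing cosine already precludes a positive root of $|E|^{\kappa}\cos(\vartheta+\vartheta_\kappa)=\cos(\vartheta-\vartheta_\kappa)$, the conclusion survives.

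The genuine gap is the exclusion of an atom at $E=0$, which you need for $\mathcal P(\mathcal V_{\alpha,\vartheta})=\Sigma_{\alpha,\vartheta}$ because $\Sigma_{\alpha,\vartheta}\subset(-\infty,0)$ by~(\ref{Nalphatheta}). Your justification -- that inspection of~(\ref{R}) and~(\ref{R0}) shows $\mathscr M_{\alpha,\vartheta}$ is bounded near the origin -- is false precisely in the regime the paper cares most about: for $\alpha<0$ the zeros of $R(\alpha,\vartheta,\cdot)$ on the negative half-axis accumulate at $0$ (the eigenvalues $-e^{S(\alpha,\vartheta)+2\pi k/\sqrt{|\alpha|}}\to 0^-$ as $k\to-\infty$), so $\mathscr M_{\alpha,\vartheta}$ has poles in every punctured neighbourhood of $0$ and cannot be bounded there; moreover, even for $0\le\alpha<1$ there are boundary points of $Q_0$ (e.g. $\cos(\vartheta-\vartheta_\kappa)=0$, or $\alpha=0$ with $\cos\vartheta=0$) where $\mathscr M_{\alpha,\vartheta}(z)$ behaves like $cz^{-\kappa}$ or $c\ln z$ as $z\to 0$. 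The conclusion $\mathcal V_{\alpha,\vartheta}(\{0\})=0$ is true, but it needs an actual argument: either show $\eta\,\Im\mathscr M_{\alpha,\vartheta}(i\eta)\to 0$ as $\eta\downarrow 0$ (Lemma~\ref{lIm} bounds $|R(\alpha,\vartheta,i\eta)|\,|R(\alpha,\vartheta+\pi/2,i\eta)|$ from below, and an upper bound on the second factor then bounds $\Im\mathscr M_{\alpha,\vartheta}$ along the imaginary axis), or argue as the paper does: by Lemma~\ref{cor_eig} an atom at $0$ would make $[\mathcal U^\alpha_\vartheta(0)]$ a square-integrable eigenfunction of $h_{\alpha,\vartheta}$, contradicting Lemma~\ref{l_nonint}, which says no nontrivial solution of $\mathscr L_\alpha f=0$ is square-integrable for $\alpha<1$. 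As it stands, your proposal leaves open the possibility $0\in\mathcal P(\mathcal V_{\alpha,\vartheta})$, so the identification $\mathcal P(\mathcal V_{\alpha,\vartheta})=\Sigma_{\alpha,\vartheta}$ is not yet proved.
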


\begin{corollary}\label{cor_measure}
Let $\alpha<1$ and $\vartheta\in \R$. Then the set of eigenvalues of $h_{\alpha,\vartheta}$ is precisely $\Sigma_{\alpha,\vartheta}$. For every $E\in \Sigma_{\alpha,\vartheta}$, the corresponding eigenspace is one-dimensional and is spanned by $[\mathcal U^\alpha_\vartheta(E)]$, and we have
\[
\int_0^\infty \mathcal U^\alpha_\vartheta(E|r)^2\,dr =  2|E|^{-1}\Sinc(\pi^2\alpha)(\Cos^2(\pi^2\alpha/4)-\sin^2\vartheta).
\]
\end{corollary}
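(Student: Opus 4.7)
The plan is essentially bookkeeping: this corollary is a direct transcription of Lemma~\ref{cor_eig} into the concrete language supplied by Theorem~\ref{t_measure}, which has already been stated (and is intended to be proved in the sequel). First, Lemma~\ref{cor_eig} asserts that $E\in\R$ is an eigenvalue of $h_{\alpha,\vartheta}$ if and only if $\mathcal V_{\alpha,\vartheta}(\{E\})>0$, equivalently $E\in\mathcal P(\mathcal V_{\alpha,\vartheta})$. Theorem~\ref{t_measure} identifies this set as $\Sigma_{\alpha,\vartheta}$, which immediately yields the first claim. The assertions that every eigenspace is one-dimensional and is spanned by $[\mathcal U^\alpha_\vartheta(E)]$ are then also read off from Lemma~\ref{cor_eig} without any additional argument.

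For the norm formula, Lemma~\ref{cor_eig} gives the identity $\|[\mathcal U^\alpha_\vartheta(E)]\|^2 = \mathcal V_{\alpha,\vartheta}(\{E\})^{-1}$. Substituting the explicit closed-form expression~(\ref{point}) from Theorem~\ref{t_measure} and inverting yields
\[
\|[\mathcal U^\alpha_\vartheta(E)]\|^2 = 2|E|^{-1}\Sinc(\pi^2\alpha)\bigl(\Cos^2(\pi^2\alpha/4)-\sin^2\vartheta\bigr).
\]
To convert the left-hand side into the integral $\int_0^\infty \mathcal U^\alpha_\vartheta(E|r)^2\,dr$ appearing in the statement, I would invoke the realness of $\mathcal U^\alpha_\vartheta(E|r)$ for real $\alpha,\vartheta,E$, which was established after~(\ref{Aa}) and carries over to the linear combination~(\ref{Ualphatheta}). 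Then $|\mathcal U^\alpha_\vartheta(E|r)|^2 = \mathcal U^\alpha_\vartheta(E|r)^2$, and the $L_2$-norm-squared is the stated integral.

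There is no genuine obstacle at the level of this corollary; the entire analytic content has been absorbed into Theorem~\ref{t_measure} (where the real work—computing the point part of the spectral measure, presumably via residues of $\mathscr M_{\alpha,\vartheta}$ at zeros of $R(\alpha,\vartheta,\cdot)$ on $(-\infty,0)$—still needs to be carried out). The corollary itself is just the repackaging of Lemma~\ref{cor_eig} and Theorem~\ref{t_measure} into a convenient eigenvalue-and-eigenfunction statement and can be written out in a few lines.
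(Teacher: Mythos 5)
Your proposal is correct and matches the paper's own argument, which simply derives the corollary directly from Lemma~\ref{cor_eig} and Theorem~\ref{t_measure}; your extra remark about the realness of $\mathcal U^\alpha_\vartheta(E|r)$ (so that $\|[\mathcal U^\alpha_\vartheta(E)]\|^2$ equals the stated integral) is a harmless explicit spelling-out of the same bookkeeping.
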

\begin{proof}
The statement follows directly from Lemma~\ref{cor_eig} and Theorem~\ref{t_measure}.
\end{proof}

To prove Theorem~\ref{t_measure}, we shall need several auxiliary lemmas.

\begin{lemma}\label{l_nonint}
Let $\kappa\in\C$ be such that $-1<|\Re\,\kappa|<1$ and $f\in \mathcal D$ be a nontrivial solution of $\mathscr L_{\kappa^2} f = 0$. Then $f$ is not square-integrable on $\R_+$.
\end{lemma}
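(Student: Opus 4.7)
The plan is to solve $\mathscr L_{\kappa^2} f = 0$ explicitly and verify that every nontrivial solution fails to be square-integrable at $+\infty$ when $|\Re\kappa| < 1$.

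First I would observe that $-f''+(\kappa^2 - 1/4) r^{-2} f = 0$ is an Euler equation, with indicial equation $(s-1/2)^2 = \kappa^2$ and roots $s = 1/2 \pm \kappa$. Hence for $\kappa \neq 0$ the general solution is $f(r) = A r^{1/2+\kappa} + B r^{1/2-\kappa}$ with $(A,B) \in \C^2$, while for $\kappa = 0$ it is $f(r) = A r^{1/2} + B r^{1/2} \ln r$. Equivalently, one may read the first form off the explicit formulas~(\ref{ukappa})--(\ref{Xkappa}) (noting that $\mathfrak u^{\pm\kappa}(0|r)$ is a nonzero multiple of $r^{1/2 \pm \kappa}$) and use~(\ref{Wukappa}) to confirm linear independence for $\kappa \notin \Z$.

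Next I would establish non-integrability at $+\infty$. By the $\kappa \leftrightarrow -\kappa$ symmetry (swapping $A$ and $B$), one may assume $\Re\kappa = a \in [0,1)$, and write $\kappa = a + i b$. For $\kappa \neq 0$ one has
\[
|f(r)|^2 = |A|^2 r^{1+2a} + |B|^2 r^{1-2a} + 2 \Re(A\bar B\, r^{1+2ib}).
\]
Both exponents $1\pm 2a$ exceed $-1$, so each diagonal term alone is non-integrable on $(1,\infty)$. When $a > 0$ the cross-term has modulus at most $2|AB| r$, which is asymptotically negligible compared with the larger diagonal contribution $|A|^2 r^{1+2a}$ (if $A\neq 0$) or $|B|^2 r^{1-2a}$ (if $A = 0$, $B\neq 0$); hence $\int_1^\infty |f|^2\,dr = \infty$. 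The case $\kappa = 0$ is immediate since $|f(r)|^2 = r|A + B\ln r|^2$ is manifestly non-integrable at $\infty$ whenever $(A,B)\neq(0,0)$.

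The main obstacle I expect is the remaining case $a = 0$, $b\neq 0$, in which both base solutions have the same growth $r^{1/2}$ and no single term automatically dominates, so cross-cancellation must be ruled out by hand. Here I would use $\int_1^R r^{1+2ib}\,dr = (R^{2+2ib}-1)/(2+2ib)$: the cross contribution to $\int_1^R |f(r)|^2\,dr$ is bounded in modulus by $|AB|(R^2+1)/\sqrt{1+b^2}$, while the diagonal part equals $(|A|^2+|B|^2)(R^2-1)/2$. Since $(|A|^2+|B|^2)/2 \geq |AB|$ by AM--GM, and $1/\sqrt{1+b^2} < 1$ for $b \neq 0$, the leading $R^2$-coefficient in the resulting lower bound for $\int_1^R |f|^2\,dr$ is strictly positive whenever $(A,B)\neq(0,0)$, forcing divergence as $R \to \infty$.
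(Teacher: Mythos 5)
Your proof is correct and follows essentially the same route as the paper: write out the explicit Euler solutions $c_1 r^{1/2+\kappa}+c_2 r^{1/2-\kappa}$ (resp.\ $r^{1/2}$, $r^{1/2}\ln r$ for $\kappa=0$) and show the integral diverges at infinity, controlling the oscillatory cross term after integration by an AM--GM bound together with the strict factor $1/\sqrt{1+\kappa''^2}<1$. The only difference is organizational: the paper treats all nonzero $\kappa$ with $|\Re\kappa|<1$ in a single computation via $\sigma=\sqrt{(1-\kappa'^2)/(1+\kappa''^2)}<1$ (integrating from $0$), whereas you split off the case $\Re\kappa>0$ with a pointwise dominance argument and reserve the AM--GM trick for purely imaginary $\kappa$, which is exactly the paper's key case.
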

\begin{proof}
If $\mathscr L_0 f = 0$, then there exist $c_1, c_2\in\C$ such that $f(r) = c_1r^{1/2} + c_2 r^{1/2}\ln r$ for all $r>0$. It is straightforward to verify that such a function is square-integrable on $\R_+$ if and only if $c_1=c_2=0$. This proves our statement for $\kappa=0$. If $\kappa=\kappa'+i\kappa''$ is nonzero and $\mathscr L_{\kappa^2} f = 0$, then there exist $c_1, c_2\in\C$ such that $f(r) = c_1r^{1/2+\kappa} + c_2 r^{1/2-\kappa}$ for all $r>0$. Since $|\kappa'|<1$, $f$ is square-integrable on $(0,r]$ for every $r>0$, and we have
\begin{equation}\label{int2}
\int_0^r|f(r')|^2 dr' = \frac{g(r)}{2} + \Re\frac{c_1\bar c_2 r^{2+2i\kappa''}}{1+i\kappa''}
\end{equation}
for every $r>0$, where the function $g$ on $\R_+$ is given by
\begin{equation}\label{g(r)}
g(r) = \frac{|c_1|^2r^{2+2\kappa'}}{1+\kappa'} + \frac{|c_2|^2r^{2-2\kappa'}}{1-\kappa'},\quad r>0.
\end{equation}
Applying the inequality $2ab\leq a^2 + b^2$ to $g(r)$, we obtain
\[
\frac{g(r)}{2}\geq \frac{|c_1 c_2| r^2}{\sqrt{1-\kappa'^2}}= \frac{1}{\sigma}\left|\frac{c_1\bar c_2 r^{2+2i\kappa''}}{1+i\kappa''}\right|,\quad r>0,
\]
where $\sigma = [(1+\kappa''^2)^{-1}(1-\kappa'^2)]^{1/2}$. In view of~(\ref{int2}), we conclude that
\begin{equation}\label{int2'}
\int_0^r|f(r')|^2 dr' \geq \frac{1-\sigma}{2}g(r)
\end{equation}
for every $r>0$. Since $\kappa\neq 0$, we have $\sigma<1$. If $f$ is nontrivial, then $c_1$ and $c_2$ are not both zero and it follows from~(\ref{g(r)}) that $g(r)\to\infty$ as $r\to\infty$. By~(\ref{int2'}), this implies that $\int_0^r|f(r')|^2 dr'\to\infty$ as $r\to\infty$ and, hence, $f$ is not square-integrable on $\R_+$.
\end{proof}

In what follows, we set $\R_- = (-\infty,0)$.

\begin{lemma}\label{l_c}
Let $\alpha<1$, $\vartheta\in\R$, and $\chi$ be the characteristic function of $\Sigma_{\alpha,\vartheta}$. Then we have
$(1-\chi)\mathcal V_{\alpha,\vartheta} = t_{\alpha,\vartheta}\,\lambda$.
\end{lemma}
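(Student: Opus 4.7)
The strategy is to identify $(1-\chi)\mathcal V_{\alpha,\vartheta}$ as the absolutely continuous part of $\mathcal V_{\alpha,\vartheta}$, using the standard Stieltjes inversion for Herglotz measures (Appendix~\ref{s_Herglotz}) together with the explicit formula~(\ref{ImM}) for the boundary values of $\Im\mathscr M_{\alpha,\vartheta}$.

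First I would pin down the pointwise $\lambda$-a.e.\ boundary limit of $\Im\mathscr M_{\alpha,\vartheta}(E+i\eta)$ as $\eta\downarrow 0$. For $E>0$, $\phi_{E+i\eta}\to 0$ and $R(\alpha,\vartheta,E)\ne 0$ by~(\ref{subset}), so continuity of $R$ on $\C_{3\pi/2}$ and~(\ref{ImM}) give $\Im\mathscr M_{\alpha,\vartheta}(E+i\eta)\to 1/(2|R(\alpha,\vartheta,E)|^2)=t_{\alpha,\vartheta}(E)$. For $E<0$ with $E\notin\Sigma_{\alpha,\vartheta}$, $\phi_{E+i\eta}\to\pi$, the factor $\pi-\phi_z$ in~(\ref{ImM}) vanishes, and $|R(\alpha,\vartheta,E)|^2>0$, so the limit is $0=t_{\alpha,\vartheta}(E)$. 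A short inspection of~(\ref{R}) and~(\ref{R0}) shows that for every $\alpha<1$ and $\vartheta\in\R$ the map $z\mapsto R(\alpha,\vartheta,z)$ is a nontrivial holomorphic function on the connected domain $\C_{3\pi/2}$, so its real-line zero set $\Sigma_{\alpha,\vartheta}$ is at most countable, and the identity $\lim_{\eta\downarrow 0}\Im\mathscr M_{\alpha,\vartheta}(E+i\eta)=t_{\alpha,\vartheta}(E)$ holds on the full-measure subset $\R\setminus(\Sigma_{\alpha,\vartheta}\cup\{0\})$.

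By the Stieltjes inversion machinery, this shows that the absolutely continuous part of $\mathcal V_{\alpha,\vartheta}$ equals $t_{\alpha,\vartheta}\lambda$, while the singular part is concentrated on the set where the boundary limit is infinite, which is contained in $\Sigma_{\alpha,\vartheta}\cup\{0\}$. To conclude $(1-\chi)\mathcal V_{\alpha,\vartheta}=t_{\alpha,\vartheta}\lambda$, it remains to rule out a point mass at $\{0\}$: I would apply the atom formula $\mathcal V_{\alpha,\vartheta}(\{E\})=\lim_{\eta\downarrow 0}\eta\,\Im\mathscr M_{\alpha,\vartheta}(E+i\eta)$ (also from Appendix~\ref{s_Herglotz}) at $E=0$ and combine it with~(\ref{R}) and~(\ref{ImM}) at $z=i\eta$. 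Writing $\alpha=\kappa^2$ with $|\Re\kappa|<1$ and using $|(i\eta)^{\pm\kappa/2}|=\eta^{\pm\Re\kappa/2}e^{\mp\pi\Im\kappa/4}$, one checks that the worst-case behavior of $|R(\alpha,\vartheta,i\eta)|$ is like $\eta^{|\Re\kappa|/2}$ (occurring when the dominant coefficient $\cos(\vartheta-\vartheta_\kappa)$ vanishes), so $\Im\mathscr M_{\alpha,\vartheta}(i\eta)=O(\eta^{-|\Re\kappa|})$ and $\eta\,\Im\mathscr M_{\alpha,\vartheta}(i\eta)=O(\eta^{1-|\Re\kappa|})\to 0$.

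The main technical point is precisely this analysis near $E=0$: $R(\alpha,\vartheta,\cdot)$ has a coordinate singularity at $z=0$, and for $\alpha<0$ the set $\Sigma_{\alpha,\vartheta}$ accumulates at $0$, so the three regimes $\alpha>0$, $\alpha=0$, $\alpha<0$ display qualitatively different behaviors that must be handled separately. Once this endpoint analysis is in hand, decomposing $\mathcal V_{\alpha,\vartheta}=\mathcal V_{\alpha,\vartheta}^{\mathrm{ac}}+\mathcal V_{\alpha,\vartheta}^{\mathrm{sg}}$ and noting that $\chi$ vanishes $\lambda$-a.e.\ while $1-\chi$ vanishes on the support of $\mathcal V_{\alpha,\vartheta}^{\mathrm{sg}}$ closes the proof.
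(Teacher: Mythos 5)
Your proposal is correct in outline but takes a genuinely different route from the paper. The paper's proof avoids boundary-value (Fatou/de la Vall\'ee Poussin) theory entirely: it splits $\R$ into $O=\R_+\cup(\R_-\setminus\Sigma_{\alpha,\vartheta})$, $\Sigma_{\alpha,\vartheta}$, and $\{0\}$; on $O$ the function $\mathscr M_{\alpha,\vartheta}$ is holomorphic across the real axis, so~(\ref{nualphavartheta}), dominated convergence, and the Riesz representation theorem give the equality of the two measures on $O$ using only test functions compactly supported there; on $\Sigma_{\alpha,\vartheta}$ both measures vanish trivially; and the possible atom at $0$ is excluded by a soft spectral argument: $\mathcal V_{\alpha,\vartheta}(\{0\})>0$ would, via Lemma~\ref{cor_eig}, make $\mathcal U^\alpha_\vartheta(0)$ a nontrivial square-integrable solution, contradicting Lemma~\ref{l_nonint}. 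You instead invoke the full Herglotz boundary machinery (a.e.\ identification of the absolutely continuous density, singular part carried by the set where the boundary limit is infinite, the atom formula). These facts are standard, but they are not in Appendix~\ref{s_Herglotz}, which only proves Lemma~\ref{l_herglotz}, so you would have to import or prove them. What your route buys is independence from Theorem~\ref{leig2} (which Sec.~\ref{s_meas} explicitly assumes when using Lemma~\ref{cor_eig}); what it costs is the hard asymptotics of $R(\alpha,\vartheta,i\eta)$, which the paper never needs.

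One step of your endpoint analysis needs more care than the ``worst case $\eta^{|\Re\kappa|/2}$, when $\cos(\vartheta-\vartheta_\kappa)$ vanishes'' heuristic suggests. For $\alpha<0$ (so $\Re\kappa=0$) neither coefficient $\cos(\vartheta\mp\vartheta_\kappa)$ vanishes, and the two terms of~(\ref{R}) at $z=i\eta$ have constant, $\eta$-independent moduli, namely $e^{\mp\pi\sqrt{|\alpha|}/4}\,|\cos(\vartheta\mp\vartheta_\kappa)|$ with $|\cos(\vartheta-\vartheta_\kappa)|=|\cos(\vartheta+\vartheta_\kappa)|$; the lower bound $|R(\alpha,\vartheta,i\eta)|\geq c>0$ therefore holds because these moduli differ by the fixed factor $e^{\pi\sqrt{|\alpha|}/2}>1$ and the two terms can never cancel along the ray $\arg z=\pi/2$ --- in contrast to the ray $\arg z=\pi$, where they do cancel at the eigenvalues accumulating at $0$. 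This non-cancellation observation must be supplied (the missing factor $\pi$ in the atom formula, coming from $\mathcal V_{\alpha,\vartheta}$ being the Herglotz measure of $\pi\mathscr M_{\alpha,\vartheta}$, is harmless). Also, in the final step you should say that, after the atom at $0$ is excluded, the singular part is \emph{concentrated} on $\Sigma_{\alpha,\vartheta}$, rather than that $1-\chi$ vanishes on its support: for $\alpha<0$ the closed support of the singular part contains $0$, where $1-\chi=1$.
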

\begin{proof}
Let $\nu = (1-\chi)\mathcal V_{\alpha,\vartheta}$ and $\nu' = t_{\alpha,\vartheta}\lambda$. Let $O$ be the open subset of $\R$ defined by the relation $O = \R_+\cup (\R_-\setminus \Sigma_{\alpha,\vartheta})$. As $\R = O\cup \Sigma_{\alpha,\vartheta}\cup\{0\}$, it suffices to show that $\nu$ and $\nu'$ have the same restrictions to each of the sets $O$, $\Sigma_{\alpha,\vartheta}$, and $\{0\}$. Since the functions $t_{\alpha,\vartheta}$ and $1-\chi$ are locally bounded on $O$ and $\mathcal V_{\alpha,\vartheta}$ and $\lambda$ are Radon measures on $\R$, the restrictions of $\nu$ and $\nu'$ to $O$ are Radon measures on $O$. By~(\ref{Oalphatheta}), (\ref{subset}), and~(\ref{Nalphatheta}), we have $O\subset \mathcal O_{\alpha,\vartheta}$, and it follows from~(\ref{ImM}) and~(\ref{density}) that $\Im\mathscr M_{\alpha,\vartheta}(E) = t_{\alpha,\vartheta}(E)$ for every $E\in O$. Since $1-\chi$ is equal to unity on $O$, (\ref{nualphavartheta}) and the dominated convergence theorem imply that
\[
\int \varphi(E)\,d\nu(E) =  \int \varphi(E)\,d\mathcal V_{\alpha,\vartheta}(E) = \int t_{\alpha,\vartheta}(E)\varphi(E)\,dE = \int \varphi(E)\,d\nu'(E)
\]
for every continuous function $\varphi$ on $\R$ such that $\mathrm{supp}\,\varphi$ is a compact subset of $O$.
By the Riesz representation theorem, we conclude that $\nu|_O = \nu'|_O$. Because $1-\chi$ and $t_{\alpha,\vartheta}$ vanish on $\Sigma_{\alpha,\vartheta}$, both $\nu$ and $\nu'$ have zero restrictions to $\Sigma_{\alpha,\vartheta}$. We now note that $\mathcal V_{\alpha,\vartheta}(\{0\})=0$ because otherwise $\mathcal U^\alpha_\vartheta(0)$ would be a nontrivial square-integrable function on $\R_+$ by Lemma~\ref{cor_eig}, in contradiction to~(\ref{lalpha}) and Lemma~\ref{l_nonint}.
Since $\nu'(\{0\}) = 0$ and $\nu(\{0\}) = \mathcal V_{\alpha,\vartheta}(\{0\})$, we conclude that $\nu(\{0\}) = \nu'(\{0\})$.
\end{proof}

Using elementary trigonometric transformations, we find that
\begin{equation}\label{trigid}
\cos(\vartheta-\vartheta_\kappa)\cos(\vartheta+\vartheta_\kappa) = \cos^2(\pi\kappa/2) - \sin^2\vartheta = \Cos^2(\pi^2\kappa^2/4) - \sin^2\vartheta
\end{equation}
for all $\kappa,\vartheta\in\C$, where $\vartheta_\kappa$ is given by~(\ref{varthetakappa}). In view of~(\ref{Q1'}), this equality implies that
\begin{equation}\label{coscos}
\cos(\vartheta-\vartheta_\kappa)\cos(\vartheta+\vartheta_\kappa) > 0
\end{equation}
for all $\vartheta\in\R$ and $\kappa\in\C$ such that $(\kappa^2,\vartheta)\in Q_{\geq1}$.

\begin{lemma}\label{l_empty}
$\Sigma_{\alpha,\vartheta} = \varnothing$ for every $(\alpha,\vartheta)\in Q_0$.
\end{lemma}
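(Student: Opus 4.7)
The plan is to analyze the equation $R(\alpha,\vartheta,E) = 0$ for $E < 0$ directly via the explicit formulas in Lemma~\ref{lR} and to show it has no solutions on the region $Q_0$. Since $Q_0 \subset \{\alpha \geq 0\}$ by~(\ref{Q0}) (recall $\omega \geq 0$), the argument splits naturally into the case $\alpha > 0$ (where~(\ref{eig_eq}) applies with a real $\kappa = \sqrt\alpha \in (0,1)$) and the boundary case $\alpha = 0$ (handled via~(\ref{R0})).

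For $\alpha > 0$, setting $\kappa = \sqrt\alpha$, membership $E \in \Sigma_{\kappa^2,\vartheta}$ is equivalent to $|E|^{-\kappa/2}\cos(\vartheta-\vartheta_\kappa) = |E|^{\kappa/2}\cos(\vartheta+\vartheta_\kappa)$. The key identity~(\ref{trigid}) rewrites the product of the two cosines as $\Cos^2(\pi^2\alpha/4) - \sin^2\vartheta$, and by the reformulation~(\ref{Q0'}) of $Q_0$ this product is $\leq 0$. If the product is strictly negative, the two cosines are nonzero with opposite signs; since $|E|^{\pm\kappa/2} > 0$, the two sides of the equation above have opposite signs, a contradiction. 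If the product is zero, then at least one cosine vanishes; in that case the equation reduces to $0 = |E|^{\pm\kappa/2} \cdot (\text{the other cosine})$, forcing the other cosine to vanish as well. But the simultaneous vanishing $\cos(\vartheta \pm \vartheta_\kappa) = 0$ requires $\pi\kappa = 2\vartheta_\kappa \in \pi\Z$, which is impossible for $\kappa \in (0,1)$. Hence $\Sigma_{\alpha,\vartheta} = \varnothing$ in this case.

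For $\alpha = 0$, the condition $(\alpha,\vartheta) \in Q_0$ forces $\omega(\vartheta) = 0$; by~(\ref{omega}) this means $\vartheta \in \pi/2 + \pi\Z$, so $\cos\vartheta = 0$ and $\sin\vartheta = \pm 1$. For $E < 0$ one has $\ln E = \ln|E| + i\pi$, so applying~(\ref{R0}) gives
\[
R(0,\vartheta,E) = (\pi i - \ln|E| - i\pi)\cos\vartheta + \pi\sin\vartheta = \pm\pi \neq 0,
\]
so $\Sigma_{0,\vartheta} = \varnothing$ as well.

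The main obstacle is the boundary situation $\alpha = \omega(\vartheta) > 0$, where the product of cosines vanishes and one cannot directly use a sign argument; ruling out the possibility that \emph{both} cosines vanish simultaneously is precisely where the constraint $\alpha < 1$ (equivalently $|\kappa| < 1$) is used in an essential way. Outside of this boundary, the sign analysis is routine.
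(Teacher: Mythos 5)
Your proposal is correct and follows essentially the same route as the paper: both split into the cases $\alpha=0$ (where $(\ref{R0})$ with $\cos\vartheta=0$ gives $R=\pm\pi$) and $0<\alpha<1$ (where $(\ref{eig_eq})$, $(\ref{trigid})$, and $(\ref{Q0'})$ are combined, using $\kappa\in(0,1)$ to exclude simultaneous vanishing of $\cos(\vartheta\pm\vartheta_\kappa)$). The only difference is presentational: the paper deduces from the eigenvalue equation that $\Cos^2(\pi^2\alpha/4)-\sin^2\vartheta$ equals the strictly positive average $\tfrac12(|E|^{\kappa}\cos^2(\vartheta+\vartheta_\kappa)+|E|^{-\kappa}\cos^2(\vartheta-\vartheta_\kappa))$, whereas you run the equivalent sign analysis on the product of the two cosines.
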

\begin{proof}
Let $(\alpha,\vartheta)\in Q_0$. Suppose first that $\alpha=0$. By~(\ref{Q0'}), we have $\vartheta = \pi/2 + \pi k$ for some $k\in\Z$. Equality~(\ref{R0}) therefore implies that $R(\alpha,\vartheta,z) = (-1)^k\pi$ for every $z\in\C_{3\pi/2}$. This means that $\Sigma_{\alpha,\vartheta}=\varnothing$. Now let $\alpha\neq 0$. Since $\omega$ is nonnegative, it follows from~(\ref{Q0}) that $0<\alpha<1$ and, hence, $\alpha = \kappa^2$ for some $0<\kappa<1$. Suppose $\Sigma_{\alpha,\vartheta}\neq\varnothing$ and $E\in \Sigma_{\alpha,\vartheta}$. By~(\ref{eig_eq}) and~(\ref{trigid}), it follows that
\begin{equation}\label{eig_cond}
\Cos^2(\pi^2\alpha/4) - \sin^2\vartheta = \frac{1}{2}(|E|^{\kappa}\cos^2(\vartheta+\vartheta_\kappa) + |E|^{-\kappa}\cos^2(\vartheta-\vartheta_\kappa)).
\end{equation}
Since $\kappa$ is real, the right-hand side of~(\ref{eig_cond}) is nonnegative and can be zero only if $\cos(\vartheta-\vartheta_\kappa)=\cos(\vartheta+\vartheta_\kappa)=0$ and, hence, only if $\kappa\in\Z$. The condition $0<\kappa<1$ therefore implies that the right-hand side of~(\ref{eig_cond}) is strictly positive. In view of~(\ref{Q0'}), this contradicts the assumption that $(\alpha,\vartheta)\in Q_0$. Hence, $\Sigma_{\alpha,\vartheta}=\varnothing$.
\end{proof}

\begin{lemma}\label{l_residue}
Let $(\alpha,\vartheta)\in Q_{\geq1}$ and $E\in \Sigma_{\alpha,\vartheta}$. Then we have
\[
\lim_{z\to E} (z-E)\mathscr M_{\alpha,\vartheta}(z) = -\frac{|E|}{2\pi\Sinc(\pi^2\alpha)(\Cos^2(\pi^2\alpha/4)-\sin^2\vartheta)}.
\]
\end{lemma}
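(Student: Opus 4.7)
The plan is to exploit the quotient representation~(\ref{M}) of $\mathscr M_{\alpha,\vartheta}$ and to show that $R(\alpha,\vartheta,\,\cdot\,)$ has a simple zero at $E$, so that the limit in question is just the residue of the simple pole of $\mathscr M_{\alpha,\vartheta}$:
\[
\lim_{z\to E}(z-E)\mathscr M_{\alpha,\vartheta}(z)=-\frac{R(\alpha,\vartheta+\pi/2,E)}{2\pi^{2}\Sinc^{2}(\pi^{2}\alpha)\,\partial_{z}R(\alpha,\vartheta,z)\big|_{z=E}}.
\]
The whole task will then reduce to computing $\partial_{z}R(\alpha,\vartheta,E)$ and $R(\alpha,\vartheta+\pi/2,E)$ and putting them together.

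I would argue first for the generic case $\alpha=\kappa^{2}$ with $\kappa\in\C\setminus\{0\}$, using the explicit formula~(\ref{R}). For $E<0$ one has $E^{\pm\kappa/2}=|E|^{\pm\kappa/2}e^{\pm i\pi\kappa/2}$ under the chosen branch of the logarithm, so that
\[
R(\kappa^{2},\vartheta,E)=\frac{|E|^{-\kappa/2}\cos(\vartheta-\vartheta_{\kappa})-|E|^{\kappa/2}\cos(\vartheta+\vartheta_{\kappa})}{\kappa}.
\]
Differentiating~(\ref{R}) in $z$ and evaluating at $z=E$ with the same branch, the signs conspire to give
\[
\partial_{z}R(\kappa^{2},\vartheta,z)\big|_{z=E}=\frac{1}{2|E|}\bigl[|E|^{-\kappa/2}\cos(\vartheta-\vartheta_{\kappa})+|E|^{\kappa/2}\cos(\vartheta+\vartheta_{\kappa})\bigr].
\]
Setting $X:=|E|^{-\kappa/2}\cos(\vartheta-\vartheta_{\kappa})=|E|^{\kappa/2}\cos(\vartheta+\vartheta_{\kappa})$ (the eigenvalue equation~(\ref{eig_eq})), one has $X^{2}=\cos(\vartheta-\vartheta_{\kappa})\cos(\vartheta+\vartheta_{\kappa})=\Cos^{2}(\pi^{2}\alpha/4)-\sin^{2}\vartheta$ by~(\ref{trigid}); by~(\ref{coscos}) this is strictly positive on $Q_{\geq 1}$, so $X\neq 0$ and $\partial_{z}R(\alpha,\vartheta,z)|_{z=E}=X/|E|\neq 0$, confirming that $E$ is a simple zero.

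For the numerator, the shift $\vartheta\mapsto\vartheta+\pi/2$ sends cosines to $-$sines, giving
\[
R(\kappa^{2},\vartheta+\pi/2,E)=\frac{-|E|^{-\kappa/2}\sin(\vartheta-\vartheta_{\kappa})+|E|^{\kappa/2}\sin(\vartheta+\vartheta_{\kappa})}{\kappa}.
\]
Factoring $X$ out through $|E|^{\mp\kappa/2}\sin(\vartheta\mp\vartheta_{\kappa})=X\tan(\vartheta\mp\vartheta_{\kappa})$, I would combine the tangents via
\[
\tan(\vartheta+\vartheta_{\kappa})-\tan(\vartheta-\vartheta_{\kappa})=\frac{\sin(2\vartheta_{\kappa})}{\cos(\vartheta-\vartheta_{\kappa})\cos(\vartheta+\vartheta_{\kappa})}=\frac{\pi\kappa\,\Sinc(\pi^{2}\alpha)}{\Cos^{2}(\pi^{2}\alpha/4)-\sin^{2}\vartheta},
\]
using $\sin(\pi\kappa)=\pi\kappa\,\Sinc(\pi^{2}\alpha)$ and~(\ref{trigid}) in the denominator. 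The ratio $R(\alpha,\vartheta+\pi/2,E)/\partial_{z}R(\alpha,\vartheta,E)$ then simplifies (the factors $X$ and $\kappa$ cancel) to $\pi|E|\Sinc(\pi^{2}\alpha)/(\Cos^{2}(\pi^{2}\alpha/4)-\sin^{2}\vartheta)$, and plugging into the residue formula yields the claimed value.

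The one remaining point is the case $\alpha=0$, where the formula~(\ref{R}) is replaced by~(\ref{R0}). I would treat this either by a direct short computation with $R(0,\vartheta,z)=(\pi i-\ln z)\cos\vartheta+\pi\sin\vartheta$ (using $\ln E=\ln|E|+i\pi$ for $E<0$) or, more cleanly, by noting that both sides of the asserted equality are continuous in $\alpha$ (the left-hand side via the holomorphy of $R$ on $\C\times\C\times\C_{3\pi/2}$ and the fact that $E$ varies analytically with $\alpha$ thanks to $\partial_{z}R(\alpha,\vartheta,E)\neq 0$ on $Q_{\geq 1}$) and then passing to the limit $\kappa\to 0$ in the formula already established for $\kappa\neq 0$. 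The only mild technicality is the verification that $\partial_{z}R(\alpha,\vartheta,E)\neq 0$ throughout $Q_{\geq 1}$, but that is exactly what the identity $X^{2}=\Cos^{2}(\pi^{2}\alpha/4)-\sin^{2}\vartheta$ together with~(\ref{coscos}) supplies.
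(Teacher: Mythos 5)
Your proposal is correct and takes essentially the same approach as the paper: both extract the residue of the simple pole of $\mathscr M_{\alpha,\vartheta}$ at $E$ from the quotient~(\ref{M}), evaluating $\partial_z R(\alpha,\vartheta,z)|_{z=E}$ and $R(\alpha,\vartheta+\pi/2,E)$ via the eigenvalue equation~(\ref{eig_eq}) together with~(\ref{trigid}) and~(\ref{coscos}); your factoring out of $X$ and the tangent-difference identity is only a cosmetic reorganization of the paper's multiplication by $\cos(\vartheta-\vartheta_\kappa)$. For $\alpha=0$ your first option (the direct computation from~(\ref{R0}), using $\ln E=\ln|E|+i\pi$) is exactly the paper's treatment and is the cleaner choice, since the continuity alternative would still need $\partial_z R(0,\vartheta,E)\neq 0$ established independently and would have to track how the zero $E$ moves with $\alpha$, because the simplified $\kappa\neq 0$ formulas presuppose $R(\alpha,\vartheta,E)=0$.
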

\begin{proof}
We separately consider the cases $\alpha\neq 0$ and $\alpha=0$.
\par\medskip\noindent
1. Let $\alpha\neq 0$ and $\kappa\in \C$ be such that $\kappa^2=\alpha$.
By~(\ref{R}) and~(\ref{eig_eq}), we have
\begin{multline}\label{partialR}
\partial_z R(\alpha,\vartheta,z)|_{z=E} = \frac{|E|^{-\kappa/2}\cos(\vartheta-\vartheta_\kappa) + |E|^{\kappa/2}\cos(\vartheta+\vartheta_\kappa)}{2|E|} =\\= |E|^{\kappa/2-1}\cos(\vartheta+\vartheta_\kappa).
\end{multline}
It follows from~(\ref{coscos}) that $\cos(\vartheta+\vartheta_\kappa)\neq 0$ and, hence, $\partial_z R(\alpha,\vartheta,z)|_{z=E}\neq 0$. By~(\ref{R}), we have
\[
R(\alpha,\vartheta+\pi/2,E) = \frac{|E|^{\kappa/2}\sin(\vartheta+\vartheta_\kappa) - |E|^{-\kappa/2}\sin(\vartheta-\vartheta_\kappa)}{\kappa}.
\]
Multiplying the numerator and denominator by $\cos(\vartheta-\vartheta_\kappa)$ (which is nonzero by~(\ref{coscos})) and using~(\ref{eig_eq}), we obtain
\begin{equation}\label{Rpi}
R(\alpha,\vartheta+\pi/2,E) = \frac{\pi|E|^{\kappa/2}\sinc\pi\kappa}{\cos(\vartheta-\vartheta_\kappa)}.
\end{equation}
In view of~(\ref{M}), we have
\[
\lim_{z\to E} (z-E)\mathscr M_{\alpha,\vartheta}(z) = -\frac{R(\alpha,\vartheta+\pi/2,E)}{2\pi^2\Sinc^2(\pi^2\alpha)\partial_z R(\alpha,\vartheta,z)|_{z=E}}.
\]
Combining this formula with~(\ref{trigid}), (\ref{partialR}), and~(\ref{Rpi}), we arrive at the required equality.
\par\medskip\noindent
2. Let $\alpha=0$. Since $E\in \Sigma_{0,\vartheta}$, it follows from~(\ref{R0}) that
\begin{align}
&\ln |E| \cos\vartheta = \pi\sin\vartheta, \label{eig_eq0} \\
& \partial_z R(0,\vartheta,z)|_{z=E} = \frac{\cos\vartheta}{|E|}.\label{partialR0}
\end{align}
Since $(0,\vartheta)\in Q_{\geq1}$ and $\Cos(0) = 1$, (\ref{Q1'}) implies that $\cos\vartheta\neq 0$ and, hence, $\partial_z R(0,\vartheta,z)|_{z=E}\neq 0$. By~(\ref{R0}) and~(\ref{eig_eq0}), we obtain
\begin{equation}\label{R0pi}
R(0,\vartheta+\pi/2,E) = \ln |E|\sin\vartheta +\pi\cos\vartheta = \frac{\pi}{\cos\vartheta}.
\end{equation}
In view of~(\ref{M}) and the equality $\Sinc(0)=1$, we have
\[
\lim_{z\to E} (z-E)\mathscr M_{0,\vartheta}(z) =- \frac{R(0,\vartheta+\pi/2,E)}{2\pi^2\partial_z R(0,\vartheta,z)|_{z=E}}.
\]
Combining this formula with~(\ref{partialR0}) and~(\ref{R0pi}) yields the required result.
\end{proof}

\begin{proof}[Proof of Theorem~$\ref{t_measure}$]
Let $(\alpha,\vartheta)\in Q_{\geq1}$ and $E\in \Sigma_{\alpha,\vartheta}$. Let $O=\mathcal O_{\alpha,\vartheta}\cup \{E\}$, where $\mathcal O_{\alpha,\vartheta}$ is given by~(\ref{Oalphatheta}). Clearly, $O$ is an open subset of $\C$ containing $E$. By Lemma~\ref{l_residue}, there exists a holomorphic function $g$ on $O$ such that
\begin{equation}\label{residue}
\mathscr M_{\alpha,\vartheta}(z) = -\frac{1}{\pi}\frac{A}{z-E} + g(z)
\end{equation}
for all $z\in \mathcal O_{\alpha,\vartheta}$, where $A$ denotes the right-hand side of~(\ref{point}). By~(\ref{ImM}), $\mathscr M_{\alpha,\vartheta}$ is real on $\mathcal O_{\alpha,\vartheta}\cap \R_-$ and, therefore, $g$ is real on $O\cap\R_-$. By~(\ref{nualphavartheta}), (\ref{residue}) and the dominated convergence theorem, we conclude that $\int\varphi(E')\,d\mathcal V_{\alpha,\vartheta}(E')$ is equal to $A\varphi(E)$ for every continuous function $\varphi$ on $\R$ such that $\mathrm{supp}\,\varphi$ is a compact subset of $O\cap\R_-$. Hence, $\mathcal V_{\alpha,\vartheta}(\{E\})=A$. Thus, formula~(\ref{point}) holds for every $(\alpha,\vartheta)\in Q_{\geq1}$ and $E\in \Sigma_{\alpha,\vartheta}$. This implies, in particular, that $\Sigma_{\alpha,\vartheta}\subset \mathcal P(\mathcal V_{\alpha,\vartheta})$ for all $(\alpha,\vartheta)\in Q_{\geq1}$. By Lemma~\ref{l_empty}, we have $\Sigma_{\alpha,\vartheta}=\varnothing$ for all $(\alpha,\vartheta)\in Q_0$. It follows that $\Sigma_{\alpha,\vartheta}\subset \mathcal P(\mathcal V_{\alpha,\vartheta})$ for all $\alpha<1$ and $\vartheta\in\R$. Since the opposite inclusion also holds by Lemma~\ref{l_c}, we conclude that $\Sigma_{\alpha,\vartheta}= \mathcal P(\mathcal V_{\alpha,\vartheta})$ for all $\alpha<1$ and $\vartheta\in\R$. The equality $\mathcal V^c_{\alpha,\vartheta} = t_{\alpha,\vartheta}\,\lambda$ now follows from~(\ref{nuc}) and Lemma~\ref{l_c}.
\end{proof}

Theorem~\ref{t_measure} implies, in particular, that $t_{\alpha,\vartheta}$ is a locally integrable function on $\R$ for every $\alpha<1$ and $\vartheta\in\R$. It is noteworthy that we established this property of $t_{\alpha,\vartheta}$ without explicitly estimating this function. Instead, we relied on the fact that $\mathcal V_{\alpha,\vartheta}$ is a Radon measure, which follows from its definition as a Herglotz measure. In Lemma~\ref{l_R1} below, we shall obtain an explicit estimate for $|R(\alpha,\vartheta,E)|^{-1}$ that, when substituted in~(\ref{density}), immediately implies the local integrability of $t_{\alpha,\vartheta}$.

\subsection{Eigenvalues of \texorpdfstring{$h_{\alpha,\vartheta}$}{h}.}

We now turn to obtaining an explicit description of the set $\Sigma_{\alpha,\vartheta}$ of eigenvalues of $h_{\alpha,\vartheta}$ for every $\alpha<1$ and $\vartheta\in\R$. To this end, it is convenient to use the logarithmic scale and pass from the set $\Sigma_{\alpha,\vartheta}$ to its inverse image $N_{\alpha,\vartheta}$ under the map $s\to -e^s$ from $\R$ to itself,
\begin{equation}\label{Sigma}
N_{\alpha,\vartheta} = \{ s\in\R: -e^s\in \Sigma_{\alpha,\vartheta}\}.
\end{equation}
We thus have $E\in \Sigma_{\alpha,\vartheta}$ if and only if $E<0$ and $\ln|E|\in N_{\alpha,\vartheta}$. Further, we define the open subsets $W_0$ and $W$ of $\R^2$ by the relations
\begin{align}
&W_0 = \{(\alpha,\vartheta)\in Q_1: \alpha>0 \mbox{ and } -\pi/2<\vartheta<\pi/2\}, \label{W0} \\
&W = \{(\alpha,\vartheta)\in Q_1:  -\pi/2<\vartheta<\pi/2\} \cup Q_\infty.\label{W}
\end{align}
Hence, $W_0$ is the interior of the central dark gray curvilinear triangular region in Fig.~\ref{f0}.
Let $(\alpha,\vartheta)\in W_0$. Since $W_0\subset Q_{\geq1}$, inequality~(\ref{coscos}) for $\kappa=\sqrt{\alpha}$ implies that $\cos(\vartheta+\pi\sqrt{\alpha}/2)\neq 0$ and
\begin{equation}\label{>0}
\frac{\cos(\vartheta-\pi\sqrt{\alpha}/2)}{\cos(\vartheta+\pi\sqrt{\alpha}/2)}>0.
\end{equation}
Hence, we can define a real-valued function $S_0$ on $W_0$ by the formula
\begin{equation}\label{S0}
S_0(\alpha,\vartheta) = \frac{1}{\sqrt{\alpha}} \ln \frac{\cos(\vartheta-\pi\sqrt{\alpha}/2)}{\cos(\vartheta+\pi\sqrt{\alpha}/2)}.
\end{equation}
Applying~(\ref{eig_eq}) to $\kappa=\sqrt{\alpha}$ and using~(\ref{>0}), we conclude that $-\exp(S_0(\alpha,\vartheta))$ belongs to $\Sigma_{\alpha,\vartheta}$ for every $(\alpha,\vartheta)\in W_0$. In view of~(\ref{Sigma}), this means that
\begin{equation}\label{S0N}
S_0(\alpha,\vartheta) \in N_{\alpha,\vartheta},\quad (\alpha,\vartheta)\in W_0.
\end{equation}
We shall see that the set $N_{\alpha,\vartheta}$ for every $(\alpha,\vartheta)\in Q_{\geq1}$ can actually be completely described in terms of the analytic continuation of $S_0$ from $W_0$ to $W$. To construct such an analytic continuation, we calculate the derivative of $S_0(\alpha,\vartheta)$ with respect to $\vartheta$. In view of~(\ref{trigid}), we find that
\begin{equation}\label{Sder}
\partial_\vartheta S_0(\alpha,\vartheta) = \frac{\pi\Sinc(\pi^2\alpha)}{\Cos^2(\pi^2\alpha/4)-\sin^2\vartheta},\quad (\alpha,\vartheta)\in W_0.
\end{equation}
We now observe that the right-hand side of~(\ref{Sder}) is actually well-defined and real-analytic on the entire domain $W$. The real-analytic continuation of $S_0$ to $W$ can therefore be obtained by integrating the right-hand side of~(\ref{Sder}). This argument is central to the proof of the next result.

\begin{lemma}\label{l_S}
Let $W_0$ and $W$ be given by~$(\ref{W0})$ and~$(\ref{W})$ respectively and the function $S_0$ on $W_0$ be defined by~$(\ref{S0})$. There is a unique real-analytic function $S$ on $W$ such that $S|_{W_0} = S_0$.
For every $\vartheta\in (-\pi/2,\pi/2)$, we have
\begin{align}
& S(0,\vartheta) = \pi\tg\vartheta,\label{S=}\\
& S(\alpha,\vartheta) = \frac{2}{\sqrt{|\alpha|}}\arctg\left(\tg\vartheta\thyp\frac{\pi\sqrt{|\alpha|}}{2}\right),\quad \alpha<0.\label{S<}
\end{align}
For all $\vartheta\in\R$ and $\alpha<0$, we have
\begin{equation}\label{Sshift}
S(\alpha,\vartheta+\pi) = S(\alpha,\vartheta) + \frac{2\pi}{\sqrt{|\alpha|}}.
\end{equation}
\end{lemma}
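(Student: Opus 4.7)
The plan is to define $S$ by integrating the right-hand side of~$(\ref{Sder})$, viewed as a real-analytic function on an open set larger than $W_0$, and then to identify the three explicit formulas by direct computation. Let
\[
F(\alpha,\vartheta) = \frac{\pi\Sinc(\pi^2\alpha)}{\Cos^2(\pi^2\alpha/4) - \sin^2\vartheta}.
\]
By Lemma~\ref{l_Sinc}, $\Sinc(\pi^2\alpha)$ has no zero on $(-\infty,1)$, and the denominator is positive exactly on $Q_{\geq 1}$ by~$(\ref{Q1'})$, so $F$ is real-analytic on an open set containing $W$. I claim that for every $(\alpha,\vartheta)\in W$ the segment joining $(\alpha,0)$ and $(\alpha,\vartheta)$ lies in $Q_{\geq 1}$: for $\alpha<0$, $\Cos(\pi^2\alpha/4) = \ch(\pi\sqrt{|\alpha|}/2)\geq 1\geq|\sin t|$ for every $t\in\R$; for $0\leq\alpha<1$, $(\ref{W})$ forces $|\vartheta|<\pi/2$, so $\sin^2 t\leq \sin^2\vartheta<\Cos^2(\pi^2\alpha/4)$ for every $t$ between $0$ and $\vartheta$. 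I would then set $S(\alpha,\vartheta) = \int_0^\vartheta F(\alpha,t)\,dt$; this is real-analytic on $W$ by differentiation under the integral sign. Since $S_0(\alpha,0) = 0$ by~$(\ref{S0})$ and $\partial_\vartheta S_0 = F$ on $W_0$ by~$(\ref{Sder})$, we get $S|_{W_0} = S_0$. Uniqueness of the real-analytic extension follows from the identity theorem applied on the connected open set $W$.

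Formula $(\ref{S=})$ is then immediate, since $F(0,\vartheta) = \pi/\cos^2\vartheta$ integrates to $\pi\tg\vartheta$. For $(\ref{S<})$, I would denote the right-hand side by $T(\alpha,\vartheta)$, observe that $T(\alpha,0) = 0$, and compute $\partial_\vartheta T$ directly. Setting $a = \pi\sqrt{|\alpha|}/2$, a short calculation using the identity $\cos^2\vartheta\,\ch^2 a + \sin^2\vartheta\,\sh^2 a = \ch^2 a - \sin^2\vartheta$ (which follows from $\ch^2 a - \sh^2 a = 1$) together with $\sh(2a) = 2\sh a\,\ch a$ and the identifications $\Cos(\pi^2\alpha/4) = \ch a$, $\Sinc(\pi^2\alpha) = \sh(2a)/(2a)$ from~$(\ref{Cos4})$ yields $\partial_\vartheta T = F(\alpha,\vartheta)$, so $T = S$ on $Q_\infty\cap\{|\vartheta|<\pi/2\}$ by the fundamental theorem of calculus.

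Finally, for $(\ref{Sshift})$ I would note that $F$ is $\pi$-periodic in $\vartheta$, so $\vartheta\mapsto S(\alpha,\vartheta+\pi) - S(\alpha,\vartheta)$ has zero $\vartheta$-derivative on the connected set $Q_\infty$ and is therefore a function $C(\alpha)$ of $\alpha$ alone. Taking the limits $\vartheta\to\pm\pi/2$ in~$(\ref{S<})$ gives $S(\alpha,\pm\pi/2) = \pm\pi/\sqrt{|\alpha|}$ by continuity of $S$ on $Q_\infty$, so $C(\alpha) = S(\alpha,\pi/2) - S(\alpha,-\pi/2) = 2\pi/\sqrt{|\alpha|}$. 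The main obstacle is the trigonometric-hyperbolic manipulation showing $\partial_\vartheta T = F$; the remaining steps are either formal (integration of a parameterized analytic function, the identity theorem on a connected set) or read off directly from the definitions.
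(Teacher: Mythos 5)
Your proposal is correct and follows essentially the same route as the paper: define $S(\alpha,\vartheta)=\int_0^\vartheta F(\alpha,t)\,dt$ with $F$ the right-hand side of~(\ref{Sder}), identify $S|_{W_0}=S_0$, obtain~(\ref{S=}) and~(\ref{S<}) by differentiating the explicit candidates in $\vartheta$, and deduce~(\ref{Sshift}) from the $\pi$-periodicity of the integrand together with $S(\alpha,\pm\pi/2)=\pm\pi/\sqrt{|\alpha|}$. The only point where the paper is more careful is the real-analyticity of the parameterized integral, which it justifies by a dedicated auxiliary statement (Lemma~\ref{l_realanalyt}, via a local holomorphic extension) rather than by ``differentiation under the integral sign,'' which by itself only yields smoothness; your closing remark shows you regard this as the same formal fact, so the argument stands.
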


The graph of the function $S$ described by Lemma~\ref{l_S} is shown in Fig.~\ref{f_S}.

\begin{figure}
  \includegraphics[width=\linewidth]{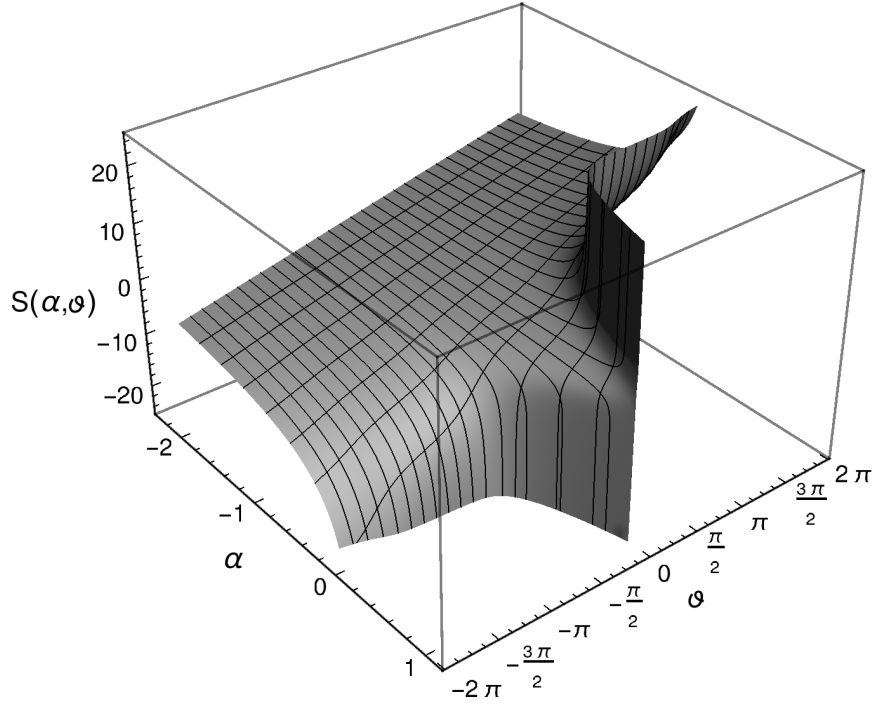}
  \caption{The function $S$ is plotted using formulas~(\ref{S0}), (\ref{S=}), (\ref{S<}), and~(\ref{Sshift}).}
  \label{f_S}
\end{figure}

The proof of Lemma~\ref{l_S} relies on the next auxiliary statement.

\begin{lemma}\label{l_realanalyt}
Let $x_0,\xi_0\in\R$, $a,b>0$, and $f$ be a real-analytic function on the rectangle
\[
R_{a,b}(x_0,\xi_0)=\{(x,\xi)\in\R^2: |x-x_0|<a \mbox{ and }|\xi-\xi_0|<b\}.
\]
Then $(x,\xi)\to \int_{\xi_0}^\xi f(x,\xi')\,d\xi'$ is a real-analytic function on $R_{a,b}(x_0,\xi_0)$.
\end{lemma}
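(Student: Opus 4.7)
The plan is to establish real-analyticity of $F(x,\xi) = \int_{\xi_0}^\xi f(x,\xi')\,d\xi'$ at an arbitrary point $(x_1,\xi_1) \in R_{a,b}(x_0,\xi_0)$. I first decompose
\[
F(x,\xi) = G(x) + H(x,\xi), \qquad G(x) = \int_{\xi_0}^{\xi_1} f(x,\xi')\,d\xi', \qquad H(x,\xi) = \int_{\xi_1}^{\xi} f(x,\xi')\,d\xi',
\]
which splits the task into (i) showing $H$ is real-analytic at $(x_1,\xi_1)$ and (ii) showing $G$ is real-analytic at $x_1$.

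Part (i) is routine. Since $f$ is real-analytic at $(x_1,\xi_1)$, it admits an absolutely convergent expansion $f(x,\xi') = \sum_{m,n\geq 0} a_{m,n}(x-x_1)^m(\xi'-\xi_1)^n$ on a polydisc around that point. Termwise integration in $\xi'$ yields a power series for $H$ in $(x-x_1,\xi-\xi_1)$ with the same domain of absolute convergence, establishing the desired local analyticity of $H$.

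The main obstacle is part (ii), because the integral defining $G$ runs over a segment of fixed length, not an arbitrarily small interval, so a single Taylor expansion does not suffice. I would handle this by a complex-analytic extension argument. Consider the compact segment $K = \{x_1\}\times I$, where $I$ denotes the closed interval with endpoints $\xi_0$ and $\xi_1$; by hypothesis $K \subset R_{a,b}(x_0,\xi_0)$. Standard facts about real-analytic functions, applied to a finite polydisc cover of $K$ and combined with the uniqueness of analytic continuation, yield $\rho>0$, an open neighborhood $W\subset\C$ of $I$, and a holomorphic function $\tilde f$ on $D(x_1,\rho)\times W$ that restricts to $f$ on the real part.

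It then suffices to verify that $\tilde G(z) = \int_{\xi_0}^{\xi_1} \tilde f(z,\xi')\,d\xi'$, which extends $G$, is holomorphic on $D(x_1,\rho)$. I would do this via Morera's theorem: for any closed triangle $T \subset D(x_1,\rho)$, Fubini (justified by continuity of $\tilde f$ on the compact set $\partial T\times I$) interchanges the contour integral over $\partial T$ with the $\xi'$-integral, and the inner contour integral vanishes by holomorphicity of $\tilde f(\cdot,\xi')$. Hence $G$ is real-analytic at $x_1$, and combining (i) with (ii) shows that $F$ is real-analytic at $(x_1,\xi_1)$. Since that point was arbitrary, the lemma follows.
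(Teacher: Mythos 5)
Your proof is correct, but it is organized differently from the paper's. The paper complexifies once and semi-globally: since $f$ is real-analytic on the rectangle, it extends to a holomorphic $\tilde f$ on an open set $O\subset\C^2$ containing the whole rectangle; for a slightly smaller rectangle $R_{a',b'}(x_0,\xi_0)$ one picks a product neighborhood $O'\times O''\subset O$ with $O''$ convex, and then $\tilde F(z,\zeta)=\int_0^1(\zeta-\xi_0)\tilde f(z,\xi_0+(\zeta-\xi_0)t)\,dt$ is holomorphic in $(z,\zeta)$ jointly (the straight path from $\xi_0$ to $\zeta$ stays in $O''$ by convexity) and restricts to $F$ on the real points; letting $a'\uparrow a$, $b'\uparrow b$ finishes. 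You instead argue locally at $(x_1,\xi_1)$ after splitting $F=G+H$ at $\xi_1$: $H$ via termwise integration of the local Taylor series, and $G$ via a holomorphic extension of $f$ along the compact segment $\{x_1\}\times I$ followed by a one-variable Morera/Fubini argument. Both rest on the same complexification idea; the paper's version delivers the joint two-variable holomorphy of the extended integral in one stroke, the only nontrivial ingredient being the convex product (tube) neighborhood, so no splitting, no Morera, and no patching are needed, whereas your version uses only one-variable complex analysis for the nonlocal piece, at the cost of the decomposition and of building the extension along the segment by hand. The one compressed step is that patching: to justify it, observe that two polydiscs centered at points of $\R^2$ meet, if at all, in a connected set whose real trace is a nonempty open subset of $\R^2$, on which both local extensions equal $f$; a holomorphic function vanishing on such a real open set vanishes identically, so the extensions agree on the overlap. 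Alternatively, you could avoid the patching altogether by invoking, as the paper does, the standard fact that a real-analytic function on an open subset of $\R^2$ extends holomorphically to a complex neighborhood of that whole subset.
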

\begin{proof}
Since $f$ is real-analytic on $R_{a,b}(x_0,\xi_0)$, there are an open set $O\subset \C^2$ and a holomorphic function $\tilde f$ on $O$ such that $R_{a,b}(x_0,\xi_0)\subset O$ and $f$ is the restriction of $\tilde f$ to $R_{a,b}(x_0,\xi_0)$. Let $F$ denote the function $(x,\xi)\to \int_{\xi_0}^\xi f(x,\xi')\,d\xi'$ on $R_{a,b}(x_0,\xi_0)$.
Fix $0<a'<a$ and $0<b'<b$. There exist open subsets $O'$ and $O''$ of $\C$ such that $R_{a',b'}(x_0,\xi_0)\subset O'\times O''\subset O$. Moreover, we can assume that $O''$ is convex. We define the function $\tilde F$ on $O'\times O''$ by the formula
\[
\tilde F(z,\zeta) = \int_0^1 (\zeta-\xi_0)\tilde f(z,\xi_0+(\zeta-\xi_0)t)\,dt,\quad z\in O',\zeta\in O''.
\]
Clearly, $\tilde F$ is holomorphic on $O'\times O''$ and coincides with $F$ on $R_{a',b'}(x_0,\xi_0)$. This means that $F$ is real-analytic on $R_{a',b'}(x_0,\xi_0)$. Since $a'$ and $b'$ can be chosen arbitrarily close to $a$ and $b$, we conclude that $F$ is real-analytic on $R_{a,b}(x_0,\xi_0)$.
\end{proof}

\begin{proof}[Proof of Lemma~$\ref{l_S}$]
We define the function $S$ on $W$ by the formula
\begin{equation}\label{Sdef}
S(\alpha,\vartheta) = \pi\Sinc(\pi^2\alpha)\int_0^\vartheta \frac{d\vartheta'}{\Cos^2(\pi^2\alpha/4)-\sin^2\vartheta'},\quad (\alpha,\vartheta)\in W.
\end{equation}
Since $S_0(\alpha,0) = 0$ for all $0<\alpha<1$ by~(\ref{S0}), it follows from~(\ref{Sder}) and~(\ref{Sdef}) that $S_0 = S|_{W_0}$. For every $(\alpha,\vartheta)\in W$, there is $\varepsilon>0$ such that $W$ contains the rectangle
\[
\{(\alpha',\vartheta')\in\R^2: |\alpha'-\alpha|<\varepsilon \mbox{ and } |\vartheta'| < |\vartheta|+\varepsilon\}.
\]
It therefore follows from Lemma~\ref{l_realanalyt} and~(\ref{Sdef}) that $S$ is real-analytic in a neighborhood of every point of $W$. This means that $S$ is real-analytic on $W$. Given $\alpha<0$ and $-\pi/2<\vartheta<\pi/2$, we let $A(\alpha,\vartheta)$ denote the right-hand side of~(\ref{S<}) and set $\sigma_\alpha = \sqrt{|\alpha|}$. Using~(\ref{Cos4}), we find for every $\alpha<0$ that
\[
\partial_\vartheta A(\alpha,\vartheta) = \frac{\sh(\pi\sigma_\alpha)}{\sigma_\alpha(\ch^2(\pi\sigma_\alpha/2) - \sin^2\vartheta)} = \frac{\pi\Sinc(\pi^2\alpha)}{\Cos^2(\pi^2\alpha/4)-\sin^2\vartheta},\quad |\vartheta|<\pi/2.
\]
In view of~(\ref{Sdef}) and the equality $A(\alpha,0) = 0$, this implies that $S(\alpha,\vartheta) = A(\alpha,\vartheta)$ for all $\alpha<0$ and $-\pi/2<\vartheta<\pi/2$, i.e., (\ref{S<}) holds. Formula~(\ref{S=}) follows immediately from~(\ref{Sdef}) for $\alpha=0$. Since $\vartheta\to (\Cos^2(\pi^2\alpha/4)-\sin^2\vartheta)^{-1}$ is a continuous $\pi$-periodic function on $\R$ for every $\alpha<0$, it follows from~(\ref{Sdef}) that
\begin{multline}\nonumber
S(\alpha,\vartheta+\pi) = S(\alpha,\vartheta) + \pi\Sinc(\pi^2\alpha)\int_{-\pi/2}^{\pi/2} \frac{d\vartheta'}{\Cos^2(\pi^2\alpha/4)-\sin^2\vartheta'} =\\= S(\alpha,\vartheta) + S(\alpha,\pi/2) - S(\alpha,-\pi/2)
\end{multline}
for all $\alpha<0$ and $\vartheta\in\R$. This implies~(\ref{Sshift}) because $S(\alpha,\pm\pi/2) = \pm\pi/\sqrt{|\alpha|}$ by~(\ref{S<}) and the continuity of $S$. The uniqueness of $S$ follows from the uniqueness theorem for holomorphic functions.
\end{proof}

\begin{theorem}\label{t_N}
Let the function $S$ on $W$ be as in Lemma~$\ref{l_S}$. For every $\alpha<1$ and $\vartheta\in \R$, $N_{\alpha,\vartheta}$ is equal to the set
\begin{multline}\label{eq_N}
\{s\in\R: s = S(\alpha,\vartheta+\pi k)\mbox{ for some }k\in\Z \mbox{ such that } (\alpha,\vartheta+\pi k)\in W\}.
\end{multline}
\end{theorem}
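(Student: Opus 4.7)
The plan is to establish the two inclusions separately. For the inclusion ``$\supseteq$'' I would use real-analytic continuation from $W_0$ to $W$. Formula~$(\ref{S0N})$ gives $S(\alpha,\vartheta)\in N_{\alpha,\vartheta}$, equivalently $R(\alpha,\vartheta,-e^{S(\alpha,\vartheta)})=0$, on the open set $W_0$. Since $-e^{S(\alpha,\vartheta)}\in\R_-\subset\C_{3\pi/2}$, $R$ is holomorphic on $\C\times\C\times\C_{3\pi/2}$, and $S$ is real-analytic on $W$ by Lemma~$\ref{l_S}$, the function $(\alpha,\vartheta)\mapsto R(\alpha,\vartheta,-e^{S(\alpha,\vartheta)})$ is real-analytic on $W$. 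A direct path construction (join $(\alpha,\vartheta)\in W$ to $(0,0)$ through $(\alpha,0)$, which stays in $W$ since $\omega(0)=1$ and $\omega>0$ on $(-\pi/2,\pi/2)$) shows that $W$ is connected, so the identity principle forces this composition to vanish on all of $W$. Combined with the $\pi$-periodicity $N_{\alpha,\vartheta+\pi}=N_{\alpha,\vartheta}$ (immediate from~$(\ref{periodR})$ together with the definitions $(\ref{Nalphatheta})$ and $(\ref{Sigma})$), this yields the inclusion ``$\supseteq$''.

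For the reverse inclusion I would proceed by case analysis on the regions $Q_0$, $Q_1$, and $Q_\infty$. If $(\alpha,\vartheta)\in Q_0$, Lemma~$\ref{l_empty}$ gives $N_{\alpha,\vartheta}=\varnothing$, and the right-hand side of~$(\ref{eq_N})$ is also empty: the condition $(\alpha,\vartheta+\pi k)\in W$ would require $\alpha<\omega(\vartheta+\pi k)=\omega(\vartheta)$ in the subcase $\alpha\geq 0$, contradicting $(\alpha,\vartheta)\in Q_0$. If $(\alpha,\vartheta)\in Q_1$ (so $0\leq\alpha<1$), the eigenvalue equation~$(\ref{eig_eq})$ with $\kappa=\sqrt\alpha>0$ reads $|E|^{\sqrt\alpha}=\cos(\vartheta-\pi\sqrt\alpha/2)/\cos(\vartheta+\pi\sqrt\alpha/2)$, while~$(\ref{eig_eq0})$ handles $\alpha=0$; in either form the left-hand side is strictly monotonic in $|E|$, so $|N_{\alpha,\vartheta}|\leq 1$. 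Since the right-hand side of~$(\ref{eq_N})$ is nonempty here (a unique $k_0\in\Z$ places $\vartheta+\pi k_0$ in $(-\pi/2,\pi/2)$, and $\omega$ is $\pi$-periodic), the inclusion ``$\supseteq$'' already proved gives equality.

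The main technical obstacle lies in the case $\alpha<0$. Setting $\sigma=\sqrt{|\alpha|}$ and taking $\kappa=i\sigma$ (so $\vartheta_\kappa=i\pi\sigma/2$), equation~$(\ref{eig_eq})$ becomes $e^{i\sigma s}=\cos(\vartheta-i\pi\sigma/2)/\cos(\vartheta+i\pi\sigma/2)$ with $s=\ln|E|$. Writing the right-hand side as $(A+iB)/(A-iB)=e^{2i\Theta(\vartheta)}$, where $A=\cos\vartheta\,\ch(\pi\sigma/2)$ and $B=\sin\vartheta\,\sh(\pi\sigma/2)$ are real and not simultaneously zero, and $\Theta\colon\R\to\R$ is a continuous branch of the argument of $A+iB$ satisfying $\Theta(\vartheta+\pi)=\Theta(\vartheta)+\pi$, yields $s=2\Theta(\vartheta)/\sigma+2\pi k/\sigma$ for $k\in\Z$. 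For $|\vartheta|<\pi/2$ one has $A>0$, so $\Theta(\vartheta)=\arctg(\tg\vartheta\,\thyp(\pi\sigma/2))$; then~$(\ref{S<})$ identifies $2\Theta(\vartheta)/\sigma$ with $S(\alpha,\vartheta)$, and the shift identity $(\ref{Sshift})$ rewrites each $s$ as $S(\alpha,\vartheta+\pi k)$. For general $\vartheta$, the $\pi$-periodicity of $N_{\alpha,\vartheta}$ reduces to this subcase. The delicate step is verifying that the enumeration of integer solutions $k$ in the complex eigenvalue equation corresponds bijectively to the shifts $\vartheta\mapsto\vartheta+\pi k$ appearing in $(\ref{eq_N})$; this is exactly what the periodic shift identity $(\ref{Sshift})$ makes possible.
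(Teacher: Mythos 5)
Your proposal is essentially correct and follows the same overall architecture as the paper's proof: the inclusion of the set~(\ref{eq_N}) into $N_{\alpha,\vartheta}$ is obtained by real-analytic continuation of the identity $R(\alpha,\vartheta,-e^{S(\alpha,\vartheta)})=0$ from $W_0$ to the connected open set $W$ together with~(\ref{periodR}), and the reverse inclusion by a case analysis over $Q_0$, $Q_1$, $Q_\infty$, with an at-most-one-solution argument for $\alpha\geq 0$. The only substantive difference is the case $\alpha<0$: the paper compares an arbitrary $s\in N_{\alpha,\vartheta}$ with the already-known element $s'=S(\alpha,\vartheta)$ and deduces from~(\ref{eig_eq}) with $\kappa=i\sqrt{|\alpha|}$ that $|E/E'|^{\kappa}=1$, hence $s-s'\in(2\pi/\sqrt{|\alpha|})\Z$, after which~(\ref{Sshift}) finishes the job, treating all $\vartheta$ uniformly and without recomputing the solutions; you instead solve the eigenvalue equation explicitly through a continuous argument branch $\Theta$ and identify $2\Theta(\vartheta)/\sqrt{|\alpha|}$ with $S(\alpha,\vartheta)$ via~(\ref{S<}). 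Both routes work; yours is more computational and partly duplicates what is already encoded in Lemma~\ref{l_S}, while the paper's comparison trick exploits the inclusion already proved.

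There is, however, one small hole in your $\alpha<0$ case as written: the identification $2\Theta(\vartheta)/\sqrt{|\alpha|}=S(\alpha,\vartheta)$ is established only for $|\vartheta|<\pi/2$, and the final reduction ``for general $\vartheta$ use $\pi$-periodicity'' does not reach $\vartheta\in\pi/2+\pi\Z$, since such $\vartheta$ cannot be shifted into the open interval $(-\pi/2,\pi/2)$; for $\alpha<0$ these values are not excluded (they lie in $Q_\infty$ and carry infinitely many eigenvalues). The fix is one line: at $\vartheta=\pi/2$ one has $A=0$, $B>0$, so $\Theta(\pi/2)=\pi/2$, while $S(\alpha,\pi/2)=\pi/\sqrt{|\alpha|}$ by the continuity of $S$ on $W$; equivalently, note that $\vartheta\mapsto 2\Theta(\vartheta)/\sqrt{|\alpha|}$ and $\vartheta\mapsto S(\alpha,\vartheta)$ are continuous, agree on $(-\pi/2,\pi/2)$, and obey the same shift rule~(\ref{Sshift}) under $\vartheta\mapsto\vartheta+\pi$, hence agree for all $\vartheta\in\R$. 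A second, cosmetic point: when you rewrite~(\ref{eig_eq}) as a quotient in the case $0\leq\alpha<\omega(\vartheta)$ you divide by $\cos(\vartheta+\pi\sqrt{\alpha}/2)$; this is legitimate but should be justified by~(\ref{coscos}) (or~(\ref{Q1'})), as the paper does.
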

\begin{proof}
Given $\alpha<1$ and $\vartheta\in \R$, we let $\tilde N_{\alpha,\vartheta}$ denote the set~(\ref{eq_N}). We have to prove that
\begin{equation}\label{N=N}
N_{\alpha,\vartheta} = \tilde N_{\alpha,\vartheta}
\end{equation}
for all $\alpha<1$ and $\vartheta\in \R$. By~(\ref{Sigma}) and~(\ref{S0N}), the set $\Sigma_{\alpha,\vartheta}$ contains $-e^{S_0(\alpha,\vartheta)}$ for every $(\alpha,\vartheta)\in W_0$. Since $S$ coincides with $S_0$ on $W_0$, it follows from~(\ref{Nalphatheta}) that
\begin{equation}\label{Rzero}
R(\alpha,\vartheta,-e^{S(\alpha,\vartheta)}) = 0
\end{equation}
for all $(\alpha,\vartheta)\in W_0$. By Lemma~\ref{l_S}, the left-hand side of~(\ref{Rzero}) is a real-analytic function of $(\alpha,\vartheta)$ on $W$. In view of the uniqueness theorem for holomorphic functions, this implies that (\ref{Rzero}) remains valid for all $(\alpha,\vartheta)\in W$. Let $\alpha<1$, $\vartheta\in\R$, and $s\in \tilde N_{\alpha,\vartheta}$. Then there is $k\in \Z$ such that $(\alpha,\vartheta+\pi k)\in W$ and $s = S(\alpha,\vartheta+\pi k)$. By~(\ref{Rzero}), we have $R(\alpha,\vartheta+\pi k,-e^s)=0$. By~(\ref{periodR}), it follows that $R(\alpha,\vartheta,-e^s)=0$, i.e., $s\in N_{\alpha,\vartheta}$. We therefore have the inclusion
\begin{equation}\label{Nsubset}
\tilde N_{\alpha,\vartheta}\subset N_{\alpha,\vartheta},\quad \alpha<1,\,\vartheta\in\R.
\end{equation}
If $(\alpha,\vartheta)\in Q_0$, then $N_{\alpha,\vartheta}=\varnothing$ by Lemma~\ref{l_empty} and~(\ref{Sigma}) and, therefore, (\ref{Nsubset}) implies~(\ref{N=N}). By~(\ref{Q0Q1}), it remains to prove~(\ref{N=N}) for $(\alpha,\vartheta)\in Q_{\geq1}$. In this case, there is at least one $k\in\Z$ such that $(\alpha,\vartheta+\pi k)\in W$. We hence have
\begin{equation}\label{Nempty}
\tilde N_{\alpha,\vartheta} \neq \varnothing,\quad (\alpha,\vartheta)\in Q_{\geq1}.
\end{equation}
We now prove~(\ref{N=N}) for $(\alpha,\vartheta)\in Q_{\geq1}$ by separately considering the cases $\alpha>0$, $\alpha=0$, and $\alpha<0$.
\par\medskip\noindent 1. Let $\alpha>0$, $\kappa=\sqrt{\alpha}$, and $E_1, E_2\in \Sigma_{\alpha,\vartheta}$. By~(\ref{coscos}), we have $\cos(\vartheta\pm\vartheta_\kappa)\neq 0$ and it follows from~(\ref{eig_eq}) that $|E_1/E_2|^\kappa = 1$ and, hence, $E_1=E_2$. This means that $\Sigma_{\alpha,\vartheta}$ and, consequently, $N_{\alpha,\vartheta}$ contain at most one element. In view of~(\ref{Nsubset}) and~(\ref{Nempty}), this implies~(\ref{N=N}).
\par\medskip\noindent 2. Let $\alpha = 0$ and $E_1, E_2\in \Sigma_{0,\vartheta}$. By~(\ref{R0}), we have
\[
-\ln|E_{1,2}|\cos\vartheta + \pi\sin\vartheta = 0.
\]
By~(\ref{Q1'}), the condition $(0,\vartheta)\in Q_{\geq1}$ ensures that $\cos\vartheta\neq 0$. It follows that $\ln|E_1/E_2|=0$ and, hence, $E_1=E_2$. This means that $\Sigma_{0,\vartheta}$ and, consequently, $N_{0,\vartheta}$ contain at most one element. In view of~(\ref{Nsubset}) and~(\ref{Nempty}), this implies~(\ref{N=N}).
\par\medskip\noindent 3. Let $\alpha < 0$ and $s\in N_{\alpha,\vartheta}$. Then we have $(\alpha,\vartheta)\in W$ and, hence, $s'=S(\alpha,\vartheta)$ is an element of $\tilde N_{\alpha,\vartheta}$. Let $E = -e^s$ and $E' = -e^{s'}$. Since $s'\in N_{\alpha,\vartheta}$ by~(\ref{Nsubset}), we have $E,E'\in\Sigma_{\alpha,\vartheta}$. Let $\kappa=i\sqrt{|\alpha|}$. By~(\ref{coscos}), we have $\cos(\vartheta\pm\vartheta_\kappa)\neq 0$ and it follows from~(\ref{eig_eq}) that $|E/E'|^\kappa = 1$. This implies that $s = s'+2\pi k/\sqrt{|\alpha|}$ for some $k\in\Z$. By Lemma~\ref{l_S}, we conclude that $s = S(\alpha,\vartheta+\pi k)$ and, therefore, $s\in \tilde N_{\alpha,\vartheta}$. This means that $N_{\alpha,\vartheta}\subset \tilde N_{\alpha,\vartheta}$, whence (\ref{N=N}) follows by~(\ref{Nsubset}).
\end{proof}

By~(\ref{Q0}), (\ref{Q11}), (\ref{Qi}), (\ref{W}), and Theorem~\ref{t_N}, the set $N_{\alpha,\vartheta}$ is empty for $(\alpha,\vartheta)\in Q_0$, contains precisely one element for $(\alpha,\vartheta)\in Q_1$, and is countably infinite for $(\alpha,\vartheta)\in Q_\infty$. In view of~(\ref{Sigma}), the same is true for $\Sigma_{\alpha,\vartheta}$ (the emptiness of $\Sigma_{\alpha,\vartheta}$ for $(\alpha,\vartheta)\in Q_0$ also follows from Theorem~\ref{t_measure}). Corollary~\ref{cor_measure} therefore implies that $h_{\alpha,\vartheta}$ has no eigenvalues for $(\alpha,\vartheta)\in Q_0$, has one eigenvalue for $(\alpha,\vartheta)\in Q_1$, and has infinitely many eigenvalues for $(\alpha,\vartheta)\in Q_\infty$, in agreement with what was claimed in Sec.~\ref{intro}.

\begin{figure}
  \includegraphics[width=\linewidth]{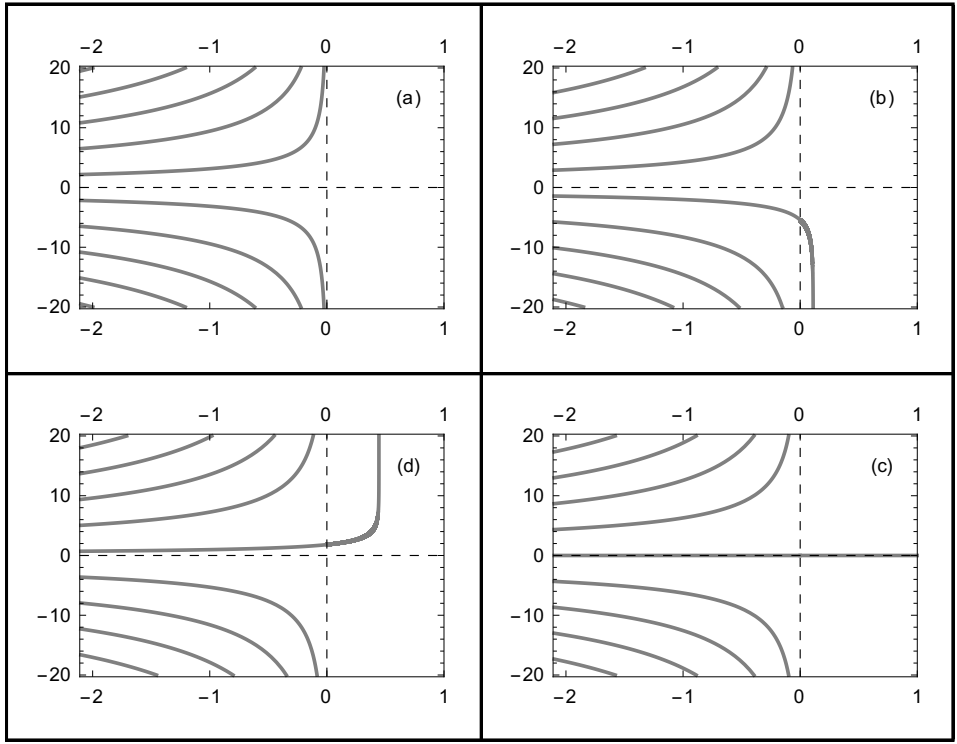}
  \caption{Plots (a), (b), (c), and (d) represent the set $N_\vartheta$ for $\vartheta = \pm\pi/2$, $-\pi/3$, $0$, and $\pi/6$ respectively. The horizontal and vertical axes correspond to the variables $\alpha$ and $s$.}
  \label{f1}
\end{figure}

We now obtain a graphical representation for the sets $N_{\alpha,\vartheta}$. Given $\vartheta\in\R$, we let $N_\vartheta$ denote the subset of $(\alpha,s)$-plane whose sections by the lines $\alpha = \mathrm{const}$ are precisely the sets $N_{\alpha,\vartheta}$,
\[
N_\vartheta = \{(\alpha,s)\in\R^2: \alpha < 1 \mbox{ and } s\in N_{\alpha,\vartheta}\}.
\]
Further, for every $\vartheta\in\R$, we let $S_\vartheta$ denote the function $\alpha\to S(\alpha,\vartheta)$ defined on the domain $D_{S_\vartheta} = \{\alpha\in\R: (\alpha,\vartheta)\in W\}$ and set
\[
G_\vartheta = \mbox{graph of }S_\vartheta = \{(\alpha,s)\in \R^2: (\alpha,\vartheta)\in W \mbox{ and }s = S(\alpha,\vartheta)\}.
\]
By Theorem~\ref{t_N}, we have
\begin{equation}\label{bigcupG}
N_\vartheta = \bigcup_{k\in\Z} G_{\vartheta+\pi k}.
\end{equation}
It follows from~(\ref{Q11}), (\ref{Qi}), and~(\ref{W}) that
\begin{equation}\nonumber
D_{S_\vartheta} =
\left\{
\begin{matrix}
(-\infty,\omega(\vartheta)),& |\vartheta|\leq \pi/2,\\
\R_-,& |\vartheta|> \pi/2,
\end{matrix}
\right.
\end{equation}
where $\omega$ is given by~(\ref{omega}).
If $|\vartheta|<\pi/2$, then it follows from~(\ref{S0}) and Lemma~\ref{l_S} that
\begin{equation}\label{Stheta<}
S_\vartheta(\alpha) =
\left\{
\begin{matrix}
\frac{1}{\sqrt{\alpha}} \ln \frac{\cos(\vartheta-\pi\sqrt{\alpha}/2)}{\cos(\vartheta+\pi\sqrt{\alpha}/2)}, & 0<\alpha<\omega(\vartheta),\\
\pi\tg\vartheta,& \alpha=0,\\
\frac{2}{\sqrt{|\alpha|}}\arctg\left(\tg\vartheta\thyp\frac{\pi\sqrt{|\alpha|}}{2}\right),& \alpha<0.
\end{matrix}
\right.
\end{equation}
Since $S$ is continuous on $W$, we can calculate $S_{\pm\pi/2}(\alpha)$ for $\alpha<0$ by passing to the limits $\vartheta\uparrow \pi/2$ and $\vartheta\downarrow -\pi/2$ in $S_\vartheta(\alpha)$. In view of~(\ref{S<}), we obtain
\begin{equation}
S_{\pm\pi/2}(\alpha) = \pm\frac{\pi}{\sqrt{|\alpha|}},\quad \alpha<0.\nonumber
\end{equation}
By Lemma~\ref{l_S}, we have
\begin{align}
& D_{S_{\vartheta+\pi k}} = \R_-,\nonumber\\
& S_{\vartheta+\pi k}(\alpha) = S_\vartheta(\alpha)+\frac{2\pi k}{\sqrt{|\alpha|}},\quad \alpha<0,\label{Stheta>}
\end{align}
for every $\vartheta\in [-\pi/2,\pi/2]$ and every nonzero $k\in\Z$.
Formulas~(\ref{bigcupG}), (\ref{Stheta<}), and~(\ref{Stheta>}) allow us to draw the set $N_\vartheta$,  for every $\vartheta\in\R$. In Fig.~\ref{f1}, this set, which represents the $\alpha$-dependence of eigenvalues of $h_{\alpha,\vartheta}$ in the logarithmic scale, is shown for $\vartheta = \pm\pi/2$, $-\pi/3$, $0$, and $\pi/6$. For $\vartheta \neq \pi/2+\pi k$, where $k\in\Z$, there is precisely one eigenvalue that crosses the line $\alpha=0$ in an analytic way, while all other eigenvalues die away at zero or minus infinity as $\alpha\uparrow 0$. If $\vartheta = \pi/2+\pi k$ for some $k\in\Z$, then there are no eigenvalues for $\alpha\geq 0$.

\subsection{Continuous part of \texorpdfstring{$\mathcal V_{\alpha,\vartheta}$}{V}.}

We now consider the absolutely continuous part of the spectral measure $\mathcal V_{\alpha,\vartheta}$. By Theorem~\ref{t_measure}, its density $t_{\alpha,\vartheta}$ is given by~(\ref{density}). Let the function $T$ on $\C\times\C\times \C_{3\pi/2}$ be defined by the formula
\begin{equation}\label{T}
T(\alpha,\vartheta,z) = 2 R(\alpha,\vartheta,z)\overline{R(\bar\alpha,\bar\vartheta,z)},\quad \alpha,\vartheta\in\C,\,z\in\C_{3\pi/2}.
\end{equation}
Clearly, $(\alpha,\vartheta)\to T(\alpha,\vartheta,z)$ is a holomorphic function on $\C\times\C$ for every $z\in \C_{3\pi/2}$.
If $\alpha$ and $\vartheta$ are real, then we have
\begin{equation}\label{TR}
T(\alpha,\vartheta,z) = 2 |R(\alpha,\vartheta,z)|^2,\quad z\in\C_{3\pi/2},
\end{equation}
and it follows from~(\ref{density}) that
\begin{equation}\label{Tt}
t_{\alpha,\vartheta}(E) = T(\alpha,\vartheta,E)^{-1},\quad E>0,
\end{equation}
for every $\alpha<1$ and $\vartheta\in\R$. We shall explicitly express $T$ in terms of the functions $\Sinc$ and $\Cos$. In view of~(\ref{Tt}), this will also give us a formula for the density $t_{\alpha,\vartheta}$.

\begin{figure}
  \includegraphics[width=\linewidth]{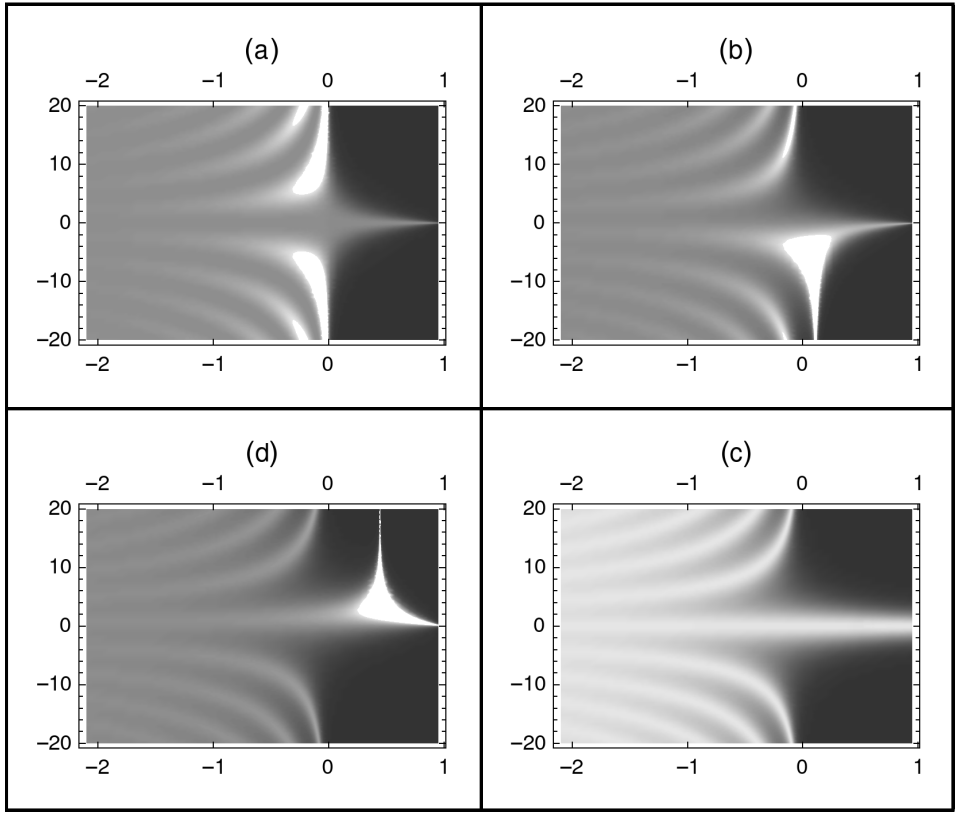}
  \caption{Plots (a), (b), (c), and (d) represent the function $\mathfrak  t_\vartheta(\alpha,s)$ for $\vartheta = \pm\pi/2$, $-\pi/3$, $0$, and $\pi/6$ respectively. The horizontal and vertical axes correspond to the variables $\alpha$ and $s$. The value of $\mathfrak  t_\vartheta$ is encoded in the brightness of the plot: brighter regions correspond to greater values of the function.}
  \label{f2}
\end{figure}

By~(\ref{R}) and~(\ref{T}), we have
\begin{multline}\label{T'}
T(\kappa^2,\vartheta,Ee^{i\phi}) =  \frac{2}{\kappa^2}\left(E^{-\kappa}\cos^2(\vartheta-\vartheta_\kappa) -\right.\\\left. - 2 \cos((\pi-\phi)\kappa) \cos(\vartheta-\vartheta_\kappa)\cos(\vartheta+\vartheta_\kappa) + E^{\kappa}\cos^2(\vartheta+\vartheta_\kappa)\right)
\end{multline}
for all $\kappa\in\C\setminus\{0\}$, $\vartheta\in\C$, $E>0$, and $-\pi/2<\phi<3\pi/2$.
In view of the equality
\[
\ln E \Sinc\left(-\kappa^2\ln^2 E\right) = \frac{1}{\kappa}\sh(\kappa\ln E) = \frac{E^{\kappa}-E^{-\kappa}}{2\kappa},\quad \kappa\in\C\setminus\{0\},\,E>0,
\]
which follows from~(\ref{Cos2}), and the trigonometric identities
\begin{align}
&2\cos^2(\vartheta\pm\vartheta_\kappa) = 1 + \cos2\vartheta \cos\pi\kappa \mp \sin2\vartheta\sin\pi\kappa,\nonumber\\
&2\cos(\vartheta+\vartheta_\kappa)\cos(\vartheta-\vartheta_\kappa) = \cos2\vartheta + \cos\pi\kappa,\nonumber
\end{align}
which hold for all $\vartheta,\kappa\in\C$, we derive from~(\ref{T'}) that
\begin{multline}\label{T''}
T(\kappa^2,\vartheta,Ee^{i\phi}) = \ln^2E \Sinc^2\left(-\frac{\kappa^2}{4}\ln^2 E\right)(1+\cos2\vartheta\cos\pi\kappa) -\\
- 2\pi \ln E\Sinc\left(-\kappa^2\ln^2 E\right)\sinc\pi\kappa \sin 2\vartheta
+ 2\frac{\cos\pi\kappa-\cos(\pi-\phi)\kappa}{\kappa^2}\cos 2\vartheta +\\+  2\frac{1-\cos\pi\kappa\cos(\pi-\phi)\kappa}{\kappa^2}
\end{multline}
for every $\kappa\in \C\setminus\{0\}$, $\vartheta\in\C$, $E>0$, and $-\pi/2<\phi<3\pi/2$.
Let the functions $\tau$ and $\mu$ on $\C\times\C$ be defined by the formulas
\begin{align}\label{tau}
&\tau(\alpha,\phi) = (\pi-\phi)^2\Sinc^2\left(\frac{(\pi-\phi)^2\alpha}{4}\right) - \pi^2\Sinc^2\left(\frac{\pi^2\alpha}{4}\right),\\
&\mu(\alpha,\phi) = 2\pi^2\Sinc^2(\pi^2\alpha) + \Cos(\pi^2\alpha)\tau(\alpha,\phi) \label{mu}
\end{align}
for every $\alpha,\phi\in\C$. Performing elementary trigonometric transformations, we obtain
\begin{align}
&\tau(\kappa^2,\phi) = \frac{4}{\kappa^2}\left(\sin^2\frac{(\pi-\phi)\kappa}{2} - \sin^2 \frac{\pi\kappa}{2}\right) = 2\frac{\cos\pi\kappa - \cos(\pi-\phi)\kappa}{\kappa^2},\label{tau'}\\
&\mu(\kappa^2,\phi) = 2\frac{\sin^2\pi\kappa}{\kappa^2} + \tau(\kappa^2,\phi)\cos\pi\kappa = 2 \frac{1-\cos\pi\kappa\,\cos(\pi-\phi)\kappa}{\kappa^2}\label{mu'}
\end{align}
for all $\kappa\in\C\setminus\{0\}$ and $\phi\in\C$. In view of these formulas, (\ref{T''}) implies that
\begin{multline}\label{T'''}
T(\alpha,\vartheta,Ee^{i\phi}) = \ln^2E \Sinc^2\left(-\frac{\alpha}{4}\ln^2 E\right)(1+\cos2\vartheta\Cos(\pi^2\alpha)) -\\-
2\pi \ln E\Sinc\left(-\alpha\ln^2 E\right)\Sinc(\pi^2\alpha) \sin 2\vartheta + \tau(\alpha,\phi)\cos 2\vartheta +
\mu(\alpha,\phi)
\end{multline}
for all $\alpha\in \C\setminus\{0\}$, $\vartheta\in\C$, $E>0$, and $-\pi/2<\phi<3\pi/2$. By continuity, this equality remains valid for $\alpha=0$. Thus, (\ref{T'''}) holds for all $\alpha,\vartheta\in \C$, $E>0$, and $-\pi/2<\phi<3\pi/2$.

Formulas~(\ref{Tt}) and~(\ref{T'''}) can be illustrated by drawing the graphs of the density $t_{\alpha,\vartheta}(E)$ as a function of $\alpha$ and $E$ for various values of $\vartheta$. For this, it is convenient to use the logarithmic scale for the energy variable and multiply the density $t_{\alpha,\vartheta}$ by the factor $\mu(\alpha,0) = 2\pi^2\Sinc^2(\pi^2\alpha)$. More precisely, given $\vartheta\in\R$, we define the function $\mathfrak  t_\vartheta$ on $(-\infty,1)\times \R$ by setting
\begin{equation}\label{ttt}
\mathfrak  t_\vartheta(\alpha,s) = 2\pi^2\Sinc^2(\pi^2\alpha)t_{\alpha,\vartheta}(e^s),\quad s\in\R,\,\alpha<1.
\end{equation}
By~(\ref{Tt}), (\ref{mu}), and~(\ref{T'''}), we have
\begin{multline}\nonumber
\mathfrak  t_\vartheta(\alpha,s)^{-1} = 1 + \frac{s^2\Sinc^2\left(-\alpha s^2/4\right)}{2\pi^2\Sinc^2(\pi^2\alpha)} (1+\cos2\vartheta\Cos(\pi^2\alpha)) - \frac{s\Sinc\left(-\alpha s^2\right)}{\pi\Sinc(\pi^2\alpha)} \sin 2\vartheta
\end{multline}
for every $\alpha<1$ and $\vartheta,s\in\R$. In Fig.~\ref{f2}, the function $\mathfrak  t_\vartheta$ is plotted using this formula for $\vartheta = \pm\pi/2$, $-\pi/3$, $0$, and $\pi/6$.

\begin{figure}
  \includegraphics[scale=.75]{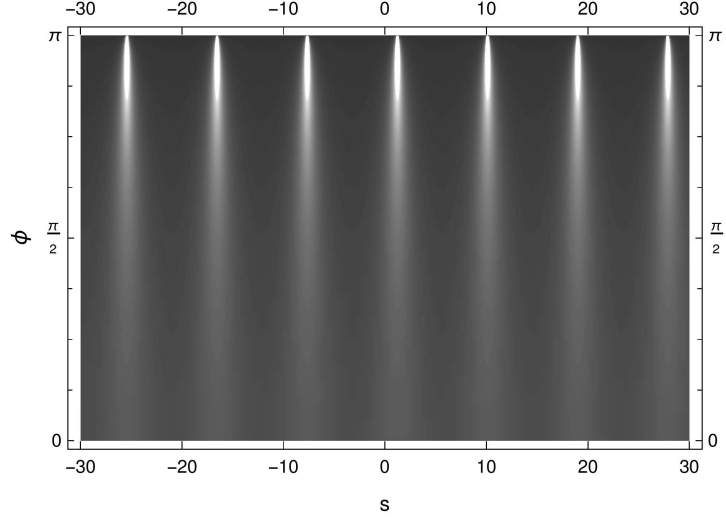}
  \caption{The function $J_{\alpha,\vartheta}(s,\phi)$ is plotted for $\alpha=-1/2$ and $\vartheta = \pi/6$. As in Fig.~\ref{f2}, brighter regions correspond to greater values of the function.}
  \label{f3}
\end{figure}

The comparison between Figs.~\ref{f1} and~\ref{f2} shows that the $\alpha$-dependence of eigenvalues and the density of the continuous part of $\mathcal V_{\alpha,\vartheta}$ follows the same pattern. This phenomenon can be easily understood if we recall that the point and continuous parts of $\mathcal V_{\alpha,\vartheta}$ are both determined by the imaginary part of $\mathscr M_{\alpha,\vartheta}$ via~(\ref{nualphavartheta}) and consider its behaviour in the upper complex half-plane.
As in the case of the density $t_{\alpha,\vartheta}$, to facilitate the visualization of $\Im\mathscr M_{\alpha,\vartheta}(z)$, we multiply it by $2\pi^2\Sinc^2(\pi^2\alpha)$ and pass to the logarithmic scale for $|z|$. Therefore, for every $\alpha<1$ and $\vartheta\in\R$, we introduce the function $J_{\alpha,\vartheta}$ on $\R\times [0,\pi)$ that is defined by the formula
\begin{multline}\nonumber
J_{\alpha,\vartheta}(s,\phi) = 2\pi^2\Sinc^2(\pi^2\alpha)\,\Im\mathscr M_{\alpha,\vartheta}(e^{s+i\phi}) =\\=
2\pi(\pi-\phi)\Sinc((\pi-\phi)^2\alpha)\Sinc(\pi^2\alpha)T(\alpha,\vartheta,e^{s+i\phi})^{-1},\quad s\in\R,\, 0\leq\phi<\pi,
\end{multline}
where the second equality follows from~(\ref{subset}), (\ref{ImM}), and~(\ref{TR}).
In view of~(\ref{Tt}) and~(\ref{ttt}), we have $J_{\alpha,\vartheta}(s,0) = \mathfrak t_\vartheta(\alpha,s)$ for every $\alpha<1$ and $\vartheta,s\in\R$.
Using~(\ref{T'''}), we can explicitly express $J_{\alpha,\vartheta}$ in terms of $\Sinc$ and $\Cos$. In Fig.~\ref{f3}, the function $J_{\alpha,\vartheta}$ is plotted for $\alpha=-1/2$ and $\vartheta = \pi/6$. We see that the graph of $J_{\alpha,\vartheta}$ contains equidistant vertical ridges that connect the points of $N_{\alpha,\vartheta}$ at their upper ends with the maxima of the function $s\to\mathfrak t_\vartheta(\alpha,s)$ at their lower ends. This means that the graph of $\Im\mathscr M_{\alpha,\vartheta}$ contains ridges along logarithmically equidistant semicircles in the upper half-plane that connect eigenvalues of $h_{\alpha,\vartheta}$ on the negative half-axis with the maxima of the density $t_{\alpha,\vartheta}$ on the positive half-axis. The values $\alpha=-1/2$ and $\vartheta = \pi/6$ chosen for Fig.~\ref{f3} play no special role: the functions $J_{\alpha,\vartheta}$ and $\Im\mathscr M_{\alpha,\vartheta}$ behave in the same way for every $\alpha<0$ and $\vartheta\in\R$.

\section{Self-adjoint extensions}
\label{s_one}

In this section, we recall basic facts concerning self-adjoint extensions of one-dimensional Schr\"odinger operators and then apply the general theory to proving Theorem~\ref{t_sa}. We refer the reader to~\cite{Naimark,Teschl,Weidmann} for a detailed treatment of one-dimensional Schr\"odinger operators.

\subsection{General theory}
\label{s_one1}
As in Sec.~\ref{intro}, let $\lambda_+$ be the restriction to $\R_+$ of the Lebesgue measure $\lambda$ on $\R$ and $\mathcal D$ be the space of all complex continuously differentiable functions on $\R_+$ whose derivative is absolutely continuous on $\R_+$.
Let $q$ be a complex locally integrable function on $\R_+$. Given $z\in\C$, we let $l_{q,z}$ denote the linear operator from $\mathcal D$ to the space of complex $\lambda_+$-equivalence classes such that
\begin{equation}\label{lqz}
(l_{q,z} f)(r) = -f''(r)+ q(r) f(r) - z f(r)
\end{equation}
for $\lambda$-a.e. $r\in \R_+$ and set
\begin{equation}\label{lq0}
l_q = l_{q,0}.
\end{equation}
For every $f\in\mathcal D$ and $z\in\C$, we have $l_{q,z}f = l_q f - z[f]$, where, as in Sec.~\ref{intro}, $[f]=[f]_{\lambda_+}$ denotes the $\lambda_+$-equivalence class of $f$.
For every $a>0$ and all complex numbers $z$, $\zeta_1$, and $\zeta_2$, there is a unique solution $f$ of the equation $l_{q,z}f=0$ such that $f(a)=\zeta_1$ and $f'(a)=\zeta_2$. This implies that solutions of $l_{q,z}f=0$ constitute a two-dimensional subspace of $\mathcal D$. If $f,g\in \mathcal D$ are such that $W_r(f,g)$ has a limit as $r\downarrow 0$, then we set
\begin{equation}\label{Wdown}
W^\downarrow(f,g) = \lim_{r\downarrow 0} W_r(f,g).
\end{equation}
Similarly, if $f,g\in \mathcal D$ are such that $W_r(f,g)$ has a limit as $r\uparrow \infty$, then we set
\begin{equation}\label{Wup}
W^\uparrow(f,g) = \lim_{r\uparrow \infty} W_r(f,g).
\end{equation}

In the rest of this subsection, we assume that $q$ is real. Let
\begin{equation}\label{D_q}
\mathcal D_q = \{f\in \mathcal D : f \mbox{ and } l_q f \mbox{ are both square-integrable on } \R_+\}.
\end{equation}
A $\lambda_+$-measurable complex function $f$ is said to be left or right square-integrable on $\R_+$ if respectively $\int_0^a |f(r)|^2\,dr <\infty$ or $\int_a^\infty
|f(r)|^2\,dx <\infty$ for any $a>0$. The subspace of $\mathcal D$ consisting of left or right square-integrable on $\R_+$ functions $f$ such that $l_qf$ is also respectively left or right square-integrable on $\R_+$ is denoted by $\mathcal D_q^\downarrow$ or $\mathcal D_q^\uparrow$. We obviously have $\mathcal D_q = \mathcal D_q^\downarrow\cap \mathcal D_q^\uparrow$.
It follows from~(\ref{lqz}) by integrating by parts that
\[
\int_a^b ((l_{q,z} f)(r)g(r) - f(r) (l_{q,z} g)(r))\,dr = W_b(f,g) - W_a(f,g)
\]
for every $f,g\in \mathcal D$, $z\in\C$, and $a,b>0$. This implies the existence of limits in the right-hand sides of~(\ref{Wdown}) and~(\ref{Wup}) respectively for every $f,g\in \mathcal D_q^\downarrow$ and $f,g\in \mathcal D_q^\uparrow$. Hence, $W^\downarrow(f,g)$ is well-defined for every $f,g\in \mathcal D_q^\downarrow$ and $W^\uparrow(f,g)$ is well-defined for every $f,g\in \mathcal D_q^\uparrow$.
Moreover, it follows that
\begin{equation}\label{anti}
\langle l_q f, [g]\rangle - \langle [f], l_q g\rangle = W^\uparrow(\bar f,g) - W^\downarrow(\bar f,g)
\end{equation}
for any $f,g\in \mathcal D_q$, where $\langle\cdot,\cdot\rangle$ is the scalar product in $L_2(\R_+)$.

For any linear subspace $Z$ of $\mathcal D_q$, let $L_q(Z)$ be the linear operator in $L_2(\R_+)$ defined by the relations
\begin{equation}\label{L_q(Z)}
\begin{split}
& D_{L_q(Z)} = \{[f]: f\in Z\},\\
& L_q(Z) [f] = l_q f,\quad f\in Z.
\end{split}
\end{equation}
We define the minimal operator $L_q$ by setting
\begin{equation}\label{L_q}
L_q = L_q(\mathcal D_q^0),
\end{equation}
where
\begin{equation}\label{D^0_q}
\mathcal D_q^0 = \{f\in \mathcal D_q: W^\downarrow(f,g)=W^\uparrow(f,g)=0\mbox{ for every }g\in \mathcal D_q\}.
\end{equation}
By~(\ref{anti}), the operator $L_q(Z)$ is symmetric if and only if $W^\downarrow(\bar f,g) = W^\uparrow(\bar f,g)$ for any $f,g\in Z$. In particular, $L_q$ is a symmetric operator.
Moreover, $L_q$ is closed and densely defined, and its adjoint $L_q^*$ is given by
\begin{equation}\label{lq*}
L_q^* = L_q(\mathcal D_q)
\end{equation}
(see~Lemma~9.4 in~\cite{Teschl}).

If $W^\downarrow(f,g)=0$ for any $f,g\in \mathcal D_q^\downarrow$, then $q$ is said to be in the limit point case (l.p.c.) on the left. Otherwise $q$ is said to be in the limit circle case (l.c.c.) on the left. Similarly, $q$ is said to be in the l.p.c. on the right if $W^\uparrow(f,g)=0$ for any $f,g\in \mathcal D_q^\uparrow$ and to be in the l.c.c. on the right otherwise. According to the well-known Weyl alternative (see, e.g., \cite{Teschl}, Theorem~9.9), $q$ is in the l.c.c. on the left if and only if all solutions of $l_{q}f = 0$ are left square-integrable on $\R_+$ (and, hence, belong to $\mathcal D_q^\downarrow$).

If $q$ is in the l.p.c. both on the left and on the right, then (\ref{lq*}) implies that $L_q^*$ is symmetric and, therefore, $L_q$ is self-adjoint.

If $q$ is in the l.c.c. on the left and in the l.p.c. on the right, then
$L_q$ has deficiency indices $(1,1)$ and the self-adjoint extensions of $L_q$ are precisely the operators (see~\cite{Weidmann}, Theorem~5.8)
\begin{equation}\label{Lqf}
L_q^f = L_q(Z_q^f),
\end{equation}
where $f$ is a nontrivial real solution of $l_q f = 0$ and the subspace $Z_q^f$ of $\mathcal D_q$ is given by
\begin{equation}\label{Zqf}
Z_q^f = \{g\in \mathcal D_q: W^\downarrow(f,g) = 0\}.
\end{equation}
The operator $L_q^f$ determines $f$ uniquely up to a nonzero real coefficient.

If $q$ is locally square-integrable on $\R_+$, then formulas~(\ref{L_q}) and~(\ref{D^0_q}) imply that $C_0^\infty(\R_+)$ is contained in $\mathcal D_q^0$ and $L_q$ is an extension of $L_q(C_0^\infty(\R_+))$.

\begin{lemma}\label{ll}
Let $q$ be a real locally square-integrable function on $\R_+$. Then $L_q$ is the closure of $L_q(C_0^\infty(\R_+))$.
\end{lemma}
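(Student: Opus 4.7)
Set $T=L_q(C_0^\infty(\R_+))$. The plan is to show that $T^{*}=L_q^{*}$; since $L_q$ is closed and densely defined, this will yield $\overline{T}=T^{**}=L_q^{**}=L_q$, as required. Note that $T$ is densely defined because $C_0^\infty(\R_+)$ is dense in $L_2(\R_+)$, and that $T\subset L_q$ by the observation preceding the lemma, so $T$ is closable with $\overline{T}\subset L_q$ and $L_q^{*}\subset T^{*}$. The nontrivial inclusion is $T^{*}\subset L_q^{*}$.

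To establish it, fix $\phi\in D_{T^{*}}$ and set $\psi=T^{*}\phi\in L_2(\R_+)$. By definition of $T^{*}$, for every $g\in C_0^\infty(\R_+)$ we have
\[
\int_0^\infty (-g''(r)+q(r)g(r))\,\overline{\phi(r)}\,dr \;=\; \int_0^\infty g(r)\,\overline{\psi(r)}\,dr,
\]
which is exactly the statement that $-\phi''+q\phi=\psi$ in the sense of distributions on $\R_+$. Since $q\in L^2_{\rm loc}(\R_+)$ and $\phi\in L_2(\R_+)\subset L^2_{\rm loc}(\R_+)$, the Cauchy--Schwarz inequality gives $q\phi\in L^1_{\rm loc}(\R_+)$; combined with $\psi\in L^1_{\rm loc}(\R_+)$, this means that the distributional second derivative $\phi''=q\phi-\psi$ is in fact a locally integrable function. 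Standard distribution-theoretic regularity then yields a representative $f$ of $\phi$ that is continuously differentiable on $\R_+$ with absolutely continuous derivative, i.e.\ $f\in\mathcal D$, satisfying $-f''+qf=\psi$ for $\lambda$-a.e.\ $r\in\R_+$. Because $f$ and $l_qf=\psi$ are both square-integrable on $\R_+$, we have $f\in\mathcal D_q$, and therefore $\phi=[f]\in D_{L_q(\mathcal D_q)}=D_{L_q^{*}}$ with $L_q^{*}\phi=l_qf=\psi=T^{*}\phi$. This proves $T^{*}\subset L_q^{*}$.

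Combining both inclusions gives $T^{*}=L_q^{*}$, and taking adjoints once more (using that $L_q=L_q^{**}$ because $L_q$ is closed) produces $\overline{T}=T^{**}=L_q^{**}=L_q$.

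The only delicate point in this argument is the regularity step producing the $\mathcal D$-representative $f$ from the distributional identity $-\phi''+q\phi=\psi$. The key input there is that $q$ is only assumed locally square-integrable (not continuous), which is just enough, via Cauchy--Schwarz applied to $q\phi$, to place $\phi''$ in $L^1_{\rm loc}$ and hence to conclude that $\phi'$ is (equivalent to) an absolutely continuous function and $\phi$ is (equivalent to) a $C^1$ function; everything else is bookkeeping between equivalence classes and genuine functions.
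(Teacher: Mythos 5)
Your argument is correct: the reduction to showing $T^*=L_q^*$ (with $T=L_q(C_0^\infty(\R_+))$), the distributional identity $-\phi''+q\phi=\psi$ obtained from testing against $C_0^\infty$ functions, the regularity step (local square-integrability of $q$ puts $q\phi-\psi$ in $L^1_{\mathrm{loc}}$, hence $\phi$ has a representative in $\mathcal D$ solving the equation a.e.), and the final double-adjoint step all go through, and you correctly lean only on facts the paper states beforehand, namely $T\subset L_q$, the closedness of $L_q$, and $L_q^*=L_q(\mathcal D_q)$. Note that the paper itself gives no in-text proof of this lemma — it simply cites Lemma~17 of~\cite{Smirnov2016} — so there is nothing internal to compare against; your adjoint-plus-ODE-regularity route is the standard argument and is complete as written. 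The one point worth making fully explicit in a final write-up is the regularity lemma you invoke: if $u\in L^1_{\mathrm{loc}}(\R_+)$ has distributional second derivative in $L^1_{\mathrm{loc}}(\R_+)$, then $u$ agrees a.e. with a $C^1$ function whose derivative is locally absolutely continuous (two integrations of the "zero distributional derivative implies constant" fact); with that spelled out, every step is justified.
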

\begin{proof}
See Lemma~17 in~\cite{Smirnov2016}.
\end{proof}

\subsection{The case of the inverse-square potential}
\label{s_one2}
By~(\ref{lalphaz}) and~(\ref{lqz}), we have
\begin{equation}
\mathscr L_{\alpha,z} = l_{q_\alpha,z},\quad \alpha,z\in\C,\label{eq1}
\end{equation}
where the function $q_\alpha$ on $\R_+$ is given by~(\ref{qalpha}). In view of~(\ref{lalpha0}) and~(\ref{lq0}), this implies that
\begin{equation}\label{eq2}
\mathscr L_\alpha = l_{q_\alpha},\quad \alpha\in\C.
\end{equation}
By~(\ref{lalpha}) and~(\ref{eq1}), we obtain
\begin{equation}\label{lqalpha}
l_{q_\alpha,z} \mathcal U^\alpha_\vartheta(z) = 0,\quad \alpha,\vartheta,z\in\C.
\end{equation}

If $\alpha$ is real, then $q_\alpha$ is real. It follows from~(\ref{Delta}), (\ref{H_alpha(Z)}), (\ref{D_q}), (\ref{L_q(Z)}), and~(\ref{eq2}) that
\begin{align}
&\Delta_\alpha = \mathcal D_{q_\alpha},\label{eq3}\\
&H_\alpha(Z) = L_{q_\alpha}(Z)\label{eq4}
\end{align}
for every $\alpha\in\R$ and every linear subspace $Z$ of $\Delta_\alpha$. Hence, (\ref{checkh}), (\ref{hkappadef}), and Lemma~\ref{ll} imply that
\begin{equation}\label{halpha}
h_\alpha = L_{q_\alpha},\quad \alpha\in\R.
\end{equation}
By~(\ref{lq*}), (\ref{eq3}), (\ref{eq4}), and~(\ref{halpha}), equality~(\ref{halpha*}) holds for all real $\alpha$.

If $\alpha=\kappa^2$ for $\kappa\in\C$, then the equation $l_{q_\alpha} f = 0$ has linearly independent solutions $r^{1/2\pm\kappa}$ for $\kappa\neq 0$ and $r^{1/2}$ and $r^{1/2}\ln r$ for $\kappa=0$. It follows that
\begin{itemize}
\item[(i)] $q_\alpha$ is in the l.p.c. both on the left and on the right for $\alpha\geq 1$ and
\item[(ii)]  $q_\alpha$ is in the l.p.c. on the right and in the l.c.c. on the left for $\alpha<1$.
\end{itemize}

In view of~(\ref{lq0}) and~(\ref{lqalpha}), we have $l_{q_\alpha}\mathcal U^\alpha_\vartheta(0) = 0$ for every $\alpha,\vartheta\in\C$. For $\alpha<1$ and $\vartheta\in\R$, the function $\mathcal U^\alpha_\vartheta(0)$ is real and nontrivial, and it follows from~(\ref{zalphatheta}), (\ref{Zqf}), and~(\ref{eq3}) that $Z_{\alpha,\vartheta} = Z_{q_\alpha}^{\mathcal U^\alpha_\vartheta(0)}$. By~(\ref{halphatheta}), (\ref{Lqf}), and~(\ref{eq4}), we conclude that
\begin{equation}\label{halphatheta1}
h_{\alpha,\vartheta} = L^{\mathcal U^\alpha_\vartheta(0)}_{q_\alpha},\quad \alpha<1,\,\vartheta\in\R.
\end{equation}

\begin{proof}[Proof of Theorem~$\ref{t_sa}$]
In view of~(\ref{halpha}) and condition~(i), the operator $h_\alpha$ is self-adjoint for $\alpha\geq 1$. Let $\alpha<1$.
By~(\ref{halpha}) and~(\ref{halphatheta1}), $h_{\alpha,\vartheta}$ a self-adjoint extension of $h_\alpha$ for every $\vartheta\in\R$. Conversely, let $H$ be a self-adjoint extension of $h_\alpha$. Then $H=L^f_{q_\alpha}$ for some nontrivial real $f\in \mathcal D$ satisfying $l_{q_\alpha}f=0$. By~(\ref{LAB}), (\ref{lq0}), and~(\ref{eq1}), we have $l_{q_\alpha}\mathcal A^\alpha(0) = l_{q_\alpha}\mathcal B^\alpha(0) = 0$. Since $\mathcal A^\alpha(0)$ and $\mathcal B^\alpha(0)$ are real and linearly independent, it follows from~(\ref{Ualphatheta}) that $f = c\,\mathcal U^\alpha_\vartheta(0)$ for some $c,\vartheta\in\R$ such that $c\neq 0$. In view of~(\ref{halphatheta1}), this means that $H = h_{\alpha,\vartheta}$.

Suppose now that $\vartheta,\vartheta'\in\R$ and $h_{\alpha,\vartheta} = h_{\alpha,\vartheta'}$. By~(\ref{halphatheta1}), we have $\mathcal U^\alpha_\vartheta(0) = c\,\mathcal U^\alpha_{\vartheta'}(0)$ for some nonzero real $c$. In view of~(\ref{Ualphatheta}), this implies that $\sin\vartheta = c\sin\vartheta'$ and $\cos\vartheta = c\cos\vartheta'$ and, therefore, $e^{i\vartheta} = ce^{i\vartheta'}$. It follows that $e^{i(\vartheta-\vartheta')}=c$, whence $c = \pm 1$ and $\vartheta-\vartheta'\in\pi\Z$.
\end{proof}

\section{Eigenfunction expansions}
\label{s_eig}

This section consists of two subsections. In the first one, we briefly describe the construction of eigenfunction expansions of one-dimensional Schr\"odinger operators developed in~\cite{GesztesyZinchenko,KST}. This construction, which is adapted to the case of operators with a simple spectrum and relies on so-called singular Titchmarsh--Weyl $m$-functions, can be viewed as a variant of Kodaira's general approach~\cite{Kodaira} based on matrix-valued measures (see Remark~16 in~\cite{Smirnov2016}). In the second subsection, we prove Theorem~\ref{leig2} by applying the general theory to the case of the inverse-square potential.

\subsection{General theory}
Let $q$ be a real locally integrable function on $\R_+$. We assume that $q$ is in the l.c.c. on the left and in the l.p.c. on the right.

Let $O\subset \C$ be an open set. We say that a map $u\colon O\to \mathcal D$ is a $q$-solution on $O$ if $l_{q,z}u(z) = 0$ for every $z\in O$. A $q$-solution $u$ on $O$ is said to be holomorphic if the functions $z\to u(z|r)$ and $z\to \partial_r u(z|r)$ are holomorphic on $O$ for any $r\in \R_+$. A $q$-solution $u$ on $O$ is said to be nonvanishing if $u(z)\neq 0$ for every $z\in O$. A $q$-solution in $\C$ is said to be real-entire if it is holomorphic on $\C$ and $u(E)$ is real for every $E\in\R$.

Let $u$ be a real-entire $q$-solution. Since $q$ is in l.c.c. on the left, we have $u(z)\in \mathcal D_q^\downarrow$ for every $z\in\C$. Suppose that $u$ is nonvanishing and
\begin{equation}\label{bbb}
W^\downarrow(u(z),u(z')) = 0,\quad z,z'\in \C.
\end{equation}
Let $v$ be a nonvanishing holomorphic $q$-solution on $\C_+$ such that $v(z)$ is right square-integrable for every $z\in\C_+$ (such a $v$ always exists; see Lemma~9.8 in~\cite{Teschl}). If $W(u(z),v(z)) = 0$ for some $z\in \C_+$,\footnote{We recall that $W(f,g)$ denotes the value of the function $r\to W_r(f,g)$ if $f,g\in \mathcal D$ are such that this function is constant (in particular, if $l_{q,z}f=l_{q,z}g=0$ for some $z\in\C$).} then $u(z)$ is proportional to $v(z)$ and, hence, $u(z)\in \mathcal D_q$. In view of~(\ref{Zqf}) and~(\ref{bbb}), this means that $u(z)\in Z_q^{u(0)}$ and, therefore, $[u(z)]$ is an eigenvector of the self-adjoint operator $L_q^{u(0)}$ with the eigenvalue $z$. But this cannot be the case because all eigenvalues of $L_q^{u(0)}$ must be real. It follows that $W(u(z),v(z)) \neq 0$ for all $z\in\C_+$.

Given a nonvanishing real-entire $q$-solution $u$, one can always find another real-entire $q$-solution $\tilde u$ such that $W(u(z),\tilde u(z))\neq 0$ for all $z\in \C$ (see Lemma~2.4 in~\cite{KST}).

Let $u$ and $\tilde u$ be real-entire $q$-solutions such that (\ref{bbb}) holds and $W(u(z),\tilde u(z))\neq 0$ for all $z\in\C$. We define the holomorphic function $\mathcal M^q_{u,\tilde u}$ on $\C_+$ by setting
\begin{equation}\label{mathcalM}
 \mathcal M^q_{u,\tilde u}(z) = \frac{1}{\pi}\frac{W(v(z),\tilde u(z))}{W(v(z),u(z))W(u(z),\tilde u(z))},
\end{equation}
where $v$ is a nonvanishing holomorphic $q$-solution on $\C_+$ such that $v(z)$ is right square-integrable for all $z\in\C_+$ (since $q$ is in the l.p.c. on the right, this definition is independent of the choice of $v$). Following~\cite{KST}, we call such functions singular Titchmarsh-Weyl $m$-functions.

The proof of the next statement can be found in~\cite{KST}.
\begin{proposition}\label{t_eig}
Let a locally integrable real function $q$ on $\R_+$ be in the l.c.c. on the left and in the l.p.c. on the right. Let $u$ be a nonvanishing real-entire $q$ solution such that $(\ref{bbb})$ holds for all $z\in\C$. Then the following statements hold:
\begin{itemize}
\item[$1.$] There exists a unique positive Radon measure $\nu$ on $\R$ (called the spectral measure for $q$ and $u$) such that
\[
\int \varphi(E)\, \Im\mathcal M^q_{u,\tilde u}(E+i\eta)\,dE\to \int \varphi(E)\,d\nu(E)\quad (\eta\downarrow 0)
\]
for every continuous function $\varphi$ on $\R$ with compact support and every real-entire $q$-solution $\tilde u$ such that $W(u(z),\tilde u(z))\neq 0$ for every $z\in\C$.
\item[$2.$] Let $\nu$ be the spectral measure for $q$ and $u$. There is a unique unitary operator $U\colon L_2(\R_+)\to L_2(\R,\nu)$ (called the spectral transformation for $q$ and $u$) such that
\begin{equation}\nonumber
(U\psi)(E) = \int_{\R_+}u(E|r)\psi(r)\,dr,\quad \psi\in L_2^c(\R_+),
\end{equation}
for $\nu$-a.e. $E$.
\item[$3.$] Let $\nu$ and $U$ be the spectral measure and transformation for $q$ and $u$. Then we have
$L_q^{u(0)} = U^{-1}\mathcal T^\nu_\iota U$,
where $\iota$ is the identity function on $\R$.
\end{itemize}
\end{proposition}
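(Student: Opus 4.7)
The plan is to follow the standard singular Weyl--Titchmarsh theory, proving the three parts in order: a Herglotz representation for $\mathcal M^q_{u,\tilde u}$ yields the spectral measure; a Stone's formula computation gives the isometry; cyclicity plus intertwining upgrades the isometry to the diagonalization. Throughout, the key analytic object is the Green's function of $L_q^{u(0)}$, which by condition~(\ref{bbb}) and the definition of $L^f_q$ in~(\ref{Lqf})--(\ref{Zqf}) has the form $G_z(r,r') = W(v(z),u(z))^{-1}u(z|r_<)v(z|r_>)$, where $v(z)$ is the right square-integrable $q$-solution from the definition of $\mathcal M^q_{u,\tilde u}$.

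For part~1, I would first verify that $\mathcal M^q_{u,\tilde u}|_{\C_+}$ is a Herglotz function. A direct computation of $\Im\langle(L_q^{u(0)}-z)^{-1}\tilde u(z),\tilde u(z)\rangle$ via the Green's kernel, combined with the resolvent identity $\Im z\,\|(L_q^{u(0)}-z)^{-1}\psi\|^2 = \Im\langle(L_q^{u(0)}-z)^{-1}\psi,\psi\rangle$, identifies $\Im\mathcal M^q_{u,\tilde u}(z)/\Im z$ with a positive quantity for $z\in\C_+$. Applying the Herglotz representation (Appendix~\ref{s_Herglotz}) produces a unique positive Radon measure $\nu_{u,\tilde u}$ on $\R$ satisfying the boundary-limit identity. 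Independence of the measure from the auxiliary $\tilde u$ on test functions with compact support is then checked by noting that if $\tilde u'$ is a second admissible choice, then $\tilde u'(z) = \alpha(z)u(z) + \beta(z)\tilde u(z)$ with $\alpha,\beta$ real-entire, and the corresponding difference $\mathcal M^q_{u,\tilde u'} - \mathcal M^q_{u,\tilde u}$ is a real-entire function whose boundary values on $\R$ contribute zero imaginary part to any compactly supported test integral. This defines the spectral measure $\nu$.

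For part~2, I would construct $U$ in two steps. On the dense subspace $L_2^c(\R_+)$ the integral $(U\psi)(E)=\int_{\R_+}u(E|r)\psi(r)\,dr$ is well defined. A Parseval identity $\|\psi\|_{L_2(\R_+)}^2 = \int|(U\psi)(E)|^2\,d\nu(E)$ for $\psi\in L_2^c(\R_+)$ is extracted from Stone's formula applied to the self-adjoint operator $L_q^{u(0)}$: for a bounded interval $(a,b)$,
\begin{equation}\nonumber
\langle\chi_{(a,b)}(L_q^{u(0)})\psi,\psi\rangle = \lim_{\eta\downarrow 0}\frac{1}{\pi}\int_a^b \Im\langle(L_q^{u(0)}-E-i\eta)^{-1}\psi,\psi\rangle\,dE,
\end{equation}
and substituting the explicit Green's kernel expresses the right-hand side in terms of $\Im\mathcal M^q_{u,\tilde u}(E+i\eta)$ times $|\int u(E|r)\psi(r)\,dr|^2$ plus contributions that vanish in the limit; part~1 then identifies the result with $\int_a^b|(U\psi)(E)|^2\,d\nu(E)$. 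This yields an isometric extension $U\colon L_2(\R_+)\to L_2(\R,\nu)$, and uniqueness is immediate from the density of $L_2^c(\R_+)$.

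For part~3 and surjectivity, I would compute $U(L_q^{u(0)}-z)^{-1}U^{-1}$ on a dense set by inserting the Green's kernel and the boundary formula for $\nu$, and verify that it equals multiplication by $(E-z)^{-1}$ on $L_2(\R,\nu)$; this both gives the diagonalization $L_q^{u(0)}=U^{-1}\mathcal T^\nu_\iota U$ and, via the uniqueness of the spectral representation of a self-adjoint operator with simple spectrum, forces the range of $U$ to be all of $L_2(\R,\nu)$. Simplicity of the spectrum of $L_q^{u(0)}$ is a standard consequence of the limit-point condition at $\infty$ together with the one-parameter nature of the extension on the left. The main obstacle, in my view, is the Parseval identity in part~2: executing the Stone's formula interchange of limits cleanly requires either dominated convergence estimates on $\Im\mathcal M^q_{u,\tilde u}(E+i\eta)$ uniform in small $\eta>0$, or an approximation by truncated problems on finite intervals $(0,R)$ (where everything is regular) followed by a limit $R\to\infty$ justified by the limit-point assumption on the right.
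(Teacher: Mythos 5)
The paper offers no proof of Proposition~\ref{t_eig} at all---it is imported verbatim from~\cite{KST}---and your outline reproduces the argument of that reference essentially step for step (Herglotz/Stieltjes inversion for the singular $m$-function, Stone's formula for the Parseval identity, resolvent intertwining plus simplicity of the spectrum for the diagonalization and surjectivity), so the approach is the same as the one the paper relies on. The one step that would fail as literally written is the inner product $\langle(L_q^{u(0)}-z)^{-1}\tilde u(z),\tilde u(z)\rangle$, since $\tilde u(z)\notin L_2(\R_+)$ for $z\in\C_+$ when $q$ is in the l.p.c.\ on the right; the positivity of $\Im\mathcal M^q_{u,\tilde u}$ should instead be read off from the identity $\Im\bigl(\pi\mathcal M^q_{u,\tilde u}(z)\bigr)=\Im z\int_0^\infty\bigl|v(z|r)/W(v(z),u(z))\bigr|^2\,dr$ (valid modulo a summand that is real on $\R$, coming from the freedom in $\tilde u$, and hence invisible to the boundary measure), where the normalized Weyl solution $v(z)/W(v(z),u(z))$ genuinely lies in $L_2(\R_+)$ because it is right square-integrable by construction and left square-integrable by the l.c.c.\ hypothesis at $0$.
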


In the next subsection, we shall verify that $\mathcal V_{\alpha,\vartheta}$ is actually the spectral measure for $q_\alpha$ and $\mathcal U^\alpha_\vartheta$. This justifies using the same term ``spectral measure'' for $\mathcal V_{\alpha,\vartheta}$ and for the measures described by Proposition~\ref{t_eig}.

\subsection{The case of the inverse-square potential}
For any $\kappa\in\C$, we define the map $\mathfrak v^\kappa\colon \C_{3\pi/2}\to \mathcal D$ by the relation
\begin{equation}\label{vkappa}
\mathfrak v^\kappa(z|r) = \frac{i\pi}{2} e^{i\pi\kappa/2} r^{1/2}H^{(1)}_{\kappa}(r z^{1/2}),\quad r\in\R_+,\, z\in\C_{3\pi/2},
\end{equation}
where $H^{(1)}_\kappa$ is the first Hankel function of order $\kappa$. Because $H^{(1)}_\kappa$ is a solution of the Bessel equation, we have
\begin{equation}\label{eqv}
\mathscr L_{\kappa^2,z}\mathfrak v^\kappa(z) = 0,\quad \kappa\in\C,\,z\in\C_{3\pi/2}.
\end{equation}
It follows from the relation $H^{(1)}_{-\kappa} = e^{i\pi\kappa}H^{(1)}_{\kappa}$ (formula~(9) in~Sec.~7.2.1 in~\cite{Bateman}) that
\begin{equation}\label{-kappa}
\mathfrak v^{-\kappa}(z) = \mathfrak v^\kappa(z),\quad \kappa\in\C,\, z\in\C_{3\pi/2}.
\end{equation}
In view of~(\ref{Lu}) and~(\ref{eqv}), the Wronskian $W_r(\mathfrak v^{\kappa}(z),\mathfrak u^{\pm\kappa}(z))$ does not depend on $r$. To find it explicitly, we can use the expression for the Wronskian of Bessel functions (formula~(29) in~Sec.~7.11 in~\cite{Bateman}),
\begin{equation}\label{wrbess}
W_z(J_\kappa,H^{(1)}_\kappa) = \frac{2i}{\pi z}.
\end{equation}
Taking~(\ref{-kappa}) into account and combining~(\ref{wrbess}) with~(\ref{ukappa}), (\ref{bessel}), and~(\ref{vkappa}), we derive that
\begin{equation}\label{wrsol1}
W(\mathfrak v^\kappa(z),\mathfrak u^{\kappa}(z)) = z^{-\kappa/2}e^{i\pi\kappa/2},\quad W(\mathfrak v^\kappa(z),\mathfrak u^{-\kappa}(z)) = z^{\kappa/2}e^{-i\pi\kappa/2}
\end{equation}
for any $\kappa\in \C$ and $z\in\C_{3\pi/2}$.

For $\alpha\in\C$ and $z\in\C_{3\pi/2}$, let the function $\mathscr V^\alpha(z)$ on $\R_+$ be defined by the relation $\mathscr V^\alpha(z) = \mathfrak v^\kappa(z)$, where $\kappa\in \C$ is such that $\kappa^2 = \alpha$ (by~(\ref{-kappa}), this definition does not depend on the choice of $\kappa$). We therefore have
\begin{equation}\label{Vv}
\mathscr V^{\kappa^2}(z) = \mathfrak v^\kappa(z)
\end{equation}
for every $\kappa\in \C$ and $z\in\C_{3\pi/2}$. By~(\ref{eqv}) and~(\ref{Vv}), we obtain
\begin{equation}\label{lvkappa}
\mathscr L_{\alpha,z}\mathscr V^{\alpha}(z) = 0,\quad z\in\C_{3\pi/2},\,\alpha\in\C.
\end{equation}

Using the well-known asymptotic form of $H^{(1)}_\kappa(\zeta)$ for $\zeta\to\infty$ (see formula~(1) in~Sec.~7.13.1 in~\cite{Bateman}), we find that
\begin{equation}\nonumber
\mathfrak v^\kappa(z|r) \sim 2^{-1}\sqrt{\pi}(i+1)z^{-1/4} e^{iz^{1/2}r},\quad r\to\infty,
\end{equation}
for every $\kappa\in \C$ and $z\in\C_{3\pi/2}$ and, hence,
$\mathfrak v^\kappa(z)$ is right square-integrable for all $\kappa\in \C$ and $z\in\C_+$. In view of~(\ref{Vv}), this implies that $\mathscr V^\alpha(z)$ is right square-integrable for all $\alpha\in \C$ and $z\in\C_+$.

\begin{lemma}\label{l_analyt1}
There is a unique holomorphic function $F$ on $\C\times\C_{3\pi/2}\times\C_\pi$ such that $F(\alpha,z,r) = \mathscr V^\alpha(z|r)$ for every $\alpha\in\C$, $z\in\C_{3\pi/2}$, and $r>0$.
\end{lemma}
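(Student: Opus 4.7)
The plan is to mimic the proof of Lemma~\ref{l_analyt}: first exhibit a holomorphic function $\tilde F$ on $\C\times\C_{3\pi/2}\times\C_\pi$ extending $(\kappa,z,r)\mapsto \mathfrak v^\kappa(z|r)$ from $r>0$ to $r\in\C_\pi$, then use (\ref{-kappa}) to observe that $\tilde F$ is even in $\kappa$, and finally apply Lemma~\ref{l_root1} to descend to a holomorphic function $F$ of $\alpha=\kappa^2$. Uniqueness of $F$ follows from the identity theorem applied slicewise in the $r$ variable, since holomorphic functions on $\C_\pi$ that agree on $\R_+$ must coincide.

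To construct $\tilde F$, I would combine the standard connection formula $H^{(1)}_\nu=(i\sin\pi\nu)^{-1}(J_{-\nu}-e^{-i\pi\nu}J_\nu)$ (valid for $\nu\notin\Z$) with the identity $r^{1/2}J_{\pm\kappa}(rz^{1/2})=z^{\pm\kappa/2}\mathfrak u^{\pm\kappa}(z|r)$ for $r>0$ and $z\in\C_{3\pi/2}$, which is immediate from (\ref{ukappa}) and (\ref{bessel}), to rewrite (\ref{vkappa}) as
\[
\mathfrak v^\kappa(z|r)=\frac{\pi}{2\sin(\pi\kappa)}\bigl(e^{i\pi\kappa/2}z^{-\kappa/2}\mathfrak u^{-\kappa}(z|r)-e^{-i\pi\kappa/2}z^{\kappa/2}\mathfrak u^{\kappa}(z|r)\bigr)
\]
for $\kappa\in\C\setminus\Z$. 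The factors $z^{\pm\kappa/2}$ are jointly holomorphic on $\C\times\C_{3\pi/2}$ by the definition of $z^\rho$ given before Lemma~\ref{lR}, while $\mathfrak u^{\pm\kappa}(z|r)=r^{1/2\pm\kappa}\mathcal X_{\pm\kappa}(r^2 z)$ is jointly holomorphic on $\C\times\C\times\C_\pi$ (the principal branch of $r^{1/2\pm\kappa}$ being holomorphic on $\C_\pi$, and $\mathcal X_{\pm\kappa}(\zeta)$ entire jointly in $(\kappa,\zeta)$ by (\ref{Xkappa})). Hence the right-hand side above defines a holomorphic function on $(\C\setminus\Z)\times\C_{3\pi/2}\times\C_\pi$.

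The crux of the proof is showing that the apparent poles at $\kappa\in\Z$ are removable. I would derive from (\ref{Xkappa}), using that $1/\Gamma$ vanishes at non-positive integers, the identity $\mathcal X_{-n}(\zeta)=(-\zeta)^n\mathcal X_n(\zeta)$ for $n\in\Z_{\geq 0}$, and consequently the relation $\mathfrak u^{-n}(z|r)=(-1)^nz^n\mathfrak u^n(z|r)$ on $\C\times\C_\pi$. Substituting this at $\kappa=n$ turns the bracketed numerator into $z^{n/2}\mathfrak u^n(z|r)\bigl((-1)^ne^{i\pi n/2}-e^{-i\pi n/2}\bigr)$, whose scalar factor equals $\pm 2i\sin(\pi n/2)$ for even $n$ and $\pm 2\cos(\pi n/2)$ for odd $n$ and hence vanishes. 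Since $\sin(\pi\kappa)$ has only simple zeros at the integers and the numerator vanishes along each locus $\{\kappa=n\}$, Riemann's removable singularity theorem (applied slicewise in $\kappa$) combined with Hartogs' theorem shows that $\tilde F$ extends holomorphically across $\kappa=n$ for $n\geq 0$; removability at negative integers follows from the evenness $\tilde F(-\kappa,z,r)=\tilde F(\kappa,z,r)$, which is immediate from the displayed formula.

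An application of Lemma~\ref{l_root1} then produces the required holomorphic $F$ on $\C\times\C_{3\pi/2}\times\C_\pi$ satisfying $F(\kappa^2,z,r)=\tilde F(\kappa,z,r)$, and by (\ref{Vv}) this $F$ has the property that $F(\alpha,z,r)=\mathscr V^\alpha(z|r)$ for all $\alpha\in\C$, $z\in\C_{3\pi/2}$, and $r>0$. The main obstacle I anticipate is the removability verification, which hinges on the somewhat nonstandard identity $\mathcal X_{-n}(\zeta)=(-\zeta)^n\mathcal X_n(\zeta)$ linking $\mathcal X_{-n}$ and $\mathcal X_n$ through multiplication by a monomial in $\zeta$.
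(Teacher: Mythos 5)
Your proposal is correct, and its skeleton is the same as the paper's: obtain a holomorphic function of $(\kappa,z,r)$ on $\C\times\C_{3\pi/2}\times\C_\pi$ agreeing with $\mathfrak v^\kappa(z|r)$ for $r>0$, note it is even in $\kappa$, descend to $\alpha=\kappa^2$ by Lemma~\ref{l_root1} together with~(\ref{Vv}), and get uniqueness from the identity theorem in the $r$-slices. The difference is in how the first step is justified: the paper simply reads the existence of the extension $G$ off the definition~(\ref{vkappa}), implicitly using the standard joint analyticity of $H^{(1)}_\kappa(w)$ in the order and the argument, whereas you rebuild it from the manifestly jointly entire series~(\ref{Xkappa}) via the connection formula $H^{(1)}_\kappa=(i\sin\pi\kappa)^{-1}(J_{-\kappa}-e^{-i\pi\kappa}J_\kappa)$ and then remove the apparent singularities at integer $\kappa$ using $\mathcal X_{-n}(\zeta)=(-\zeta)^n\mathcal X_n(\zeta)$ (your identity and the vanishing of the scalar factor are correct). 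This buys a more self-contained argument, in the spirit of the paper's own Appendix~\ref{app_analyt} treatment of $F_1$ at $\kappa=0$, at the cost of the extra removability computation. Two small points to tighten: (i) your displayed formula gives $\tilde F(\kappa,z,r)=\mathfrak v^\kappa(z|r)$ for $r>0$ only when $\kappa\notin\Z$, so to conclude $F(\alpha,z,r)=\mathscr V^\alpha(z|r)$ for \emph{all} $\alpha$ (including squares of integers) you should either invoke the holomorphy of $H^{(1)}_\kappa(w)$ in $\kappa$ and compare two holomorphic functions of $\kappa$ agreeing off $\Z$, or note that $H^{(1)}_n$ is the limit of $H^{(1)}_\kappa$ as $\kappa\to n$; (ii) the phrase ``Riemann slicewise plus Hartogs'' is a bit loose, since Hartogs needs separate holomorphy in $(z,r)$ also at points with $\kappa=n$ — the clean route is to write the numerator as $(\kappa-n)$ times a jointly holomorphic function (it vanishes identically on $\{\kappa=n\}$) and divide by the simple zero of $\sin\pi\kappa$, or to invoke the Riemann extension theorem across the hypersurface $\{\kappa=n\}$ after checking local boundedness. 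Neither point is a genuine gap.
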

\begin{proof}
By~(\ref{vkappa}), there is a holomorphic function $G$ on $\C\times\C_{3\pi/2}\times\C_\pi$ such that $G(\kappa,z,r) = \mathfrak v^\kappa(z|r)$ for every $\kappa\in\C$, $z\in\C_{3\pi/2}$, and $r>0$. It follows from~(\ref{-kappa}) and the uniqueness theorem for holomorphic functions that $G(\kappa,z,\zeta) = G(-\kappa,z,\zeta)$ for all $\kappa\in\C$, $z\in\C_{3\pi/2}$, and $\zeta\in\C_\pi$. The existence of $F$ with the required properties is now ensured by Lemma~\ref{l_root1} and~(\ref{Vv}). The uniqueness of $F$ follows from the uniqueness theorem for holomorphic functions.
\end{proof}

It follows immediately from~(\ref{wronskian}) that the identity
\begin{equation}\label{wrprod}
W_r(f_1 f_2,f_3f_4) = f_1(r)f_3(r)W_r(f_2,f_4) + W_r(f_1,f_3)f_2(r)f_4(r)
\end{equation}
holds for every $f_1,f_2,f_3,f_4\in \mathcal D$ and $r>0$.

\begin{lemma}\label{lWu}
Let $\kappa\in\C$ be such that $|\Re\kappa| < 1$. Then we have
\begin{equation}\label{WW}
W^\downarrow(\mathfrak u^\kappa(z),\mathfrak u^\kappa(z')) = 0,\quad  W^\downarrow(\mathfrak u^\kappa(z),\mathfrak u^{-\kappa}(z')) = -\frac{2}{\pi}\sin \pi\kappa
\end{equation}
for every $z,z'\in\C$.
\end{lemma}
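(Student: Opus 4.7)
The plan is to exploit the factored form $\mathfrak u^\kappa(z|r) = r^{1/2+\kappa}\mathcal X_\kappa(r^2z)$ from (\ref{ukappa}) and apply the product rule (\ref{wrprod}) for Wronskians with $f_1(r)=r^{1/2+\kappa}$, $f_2(r)=\mathcal X_\kappa(r^2z)$, and a parallel splitting of the second factor. Since $\mathcal X_\kappa$ is an entire function with $\mathcal X_\kappa(0) = 1/(2^\kappa\Gamma(\kappa+1))$ and its derivative extends continuously through $0$, the inner Wronskian $W_r(\mathcal X_\kappa(r^2z),\mathcal X_{\pm\kappa}(r^2z'))$ carries a factor of $r$ (arising from the chain rule), so the terms that involve it will die as $r\downarrow 0$ provided the accompanying power of $r$ does not blow up.

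For the first Wronskian, take $f_3(r)=r^{1/2+\kappa}$ and $f_4(r)=\mathcal X_\kappa(r^2z')$. Then $W_r(f_1,f_3)=0$, so (\ref{wrprod}) reduces to $W_r(\mathfrak u^\kappa(z),\mathfrak u^\kappa(z')) = r^{1+2\kappa}\,W_r(\mathcal X_\kappa(r^2z),\mathcal X_\kappa(r^2z'))$; a direct computation using $\partial_r\mathcal X_\kappa(r^2z)=2rz\mathcal X_\kappa'(r^2z)$ shows this equals $2r^{2+2\kappa}$ times a quantity that stays bounded as $r\downarrow 0$. The condition $\Re\kappa>-1$ (which follows from $|\Re\kappa|<1$) then guarantees $r^{2+2\kappa}\to 0$, and the first identity follows.

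For the second Wronskian, take $f_3(r)=r^{1/2-\kappa}$ and $f_4(r)=\mathcal X_{-\kappa}(r^2z')$. A short calculation gives $W_r(r^{1/2+\kappa},r^{1/2-\kappa}) = -2\kappa$, and (\ref{wrprod}) now yields
\[
W_r(\mathfrak u^\kappa(z),\mathfrak u^{-\kappa}(z')) = r\,W_r(\mathcal X_\kappa(r^2z),\mathcal X_{-\kappa}(r^2z')) - 2\kappa\,\mathcal X_\kappa(r^2z)\mathcal X_{-\kappa}(r^2z').
\]
The first term is $O(r^2)\to 0$, while the second tends to $-2\kappa\,\mathcal X_\kappa(0)\mathcal X_{-\kappa}(0) = -2\kappa/(\Gamma(\kappa+1)\Gamma(1-\kappa))$. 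By the Euler reflection formula $\Gamma(\kappa)\Gamma(1-\kappa)=\pi/\sin\pi\kappa$, which gives $\Gamma(\kappa+1)\Gamma(1-\kappa)=\kappa\pi/\sin\pi\kappa$, this limit equals $-(2/\pi)\sin\pi\kappa$, as required. (For $\kappa=0$ both sides are zero, matching by continuity.)

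The calculation is essentially routine; the only points that demand care are verifying that the power of $r$ from splitting out the $r^{1/2\pm\kappa}$ factors cooperates with $|\Re\kappa|<1$ (ruling out the cases where either tail would blow up and, simultaneously, the cases where $\Gamma(1-\kappa)$ would hit a pole), and correctly applying the reflection formula. No step poses a genuine obstacle.
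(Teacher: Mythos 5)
Your proposal is correct and follows essentially the same route as the paper: factor $\mathfrak u^{\pm\kappa}$ via (\ref{ukappa}), expand the Wronskians with the product identity (\ref{wrprod}), kill the terms carrying positive powers of $r$ using $|\Re\kappa|<1$ (respectively the explicit factor $r$), and evaluate the surviving term through $\mathcal X_\kappa(0)\mathcal X_{-\kappa}(0)=(\Gamma(1+\kappa)\Gamma(1-\kappa))^{-1}=\sinc\pi\kappa$, which is exactly your use of the reflection formula. The only cosmetic difference is that the paper writes the two expanded Wronskians out in full before letting $r\downarrow 0$, whereas you keep the product-rule splitting implicit.
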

\begin{proof}
By~(\ref{ukappa}) and~(\ref{wrprod}), we have
\begin{equation}
W_r(\mathfrak u^\kappa(z),\mathfrak u^\kappa(z')) = 2r^{2+2\kappa}(z'\mathcal X_\kappa(r^2z)\mathcal X'_\kappa(r^2z')-z\mathcal X'_\kappa(r^2z)\mathcal X_\kappa(r^2z')),\label{W+}
\end{equation}
\begin{multline}
W_r(\mathfrak u^\kappa(z),\mathfrak u^{-\kappa}(z')) = - 2\kappa \mathcal X_\kappa(r^2z)\mathcal X_{-\kappa}(r^2z')+\\+2r^2(z'\mathcal X_\kappa(r^2z)\mathcal X'_{-\kappa}(r^2z')-z\mathcal X'_\kappa(r^2z)\mathcal X_{-\kappa}(r^2z'))
\label{W-}
\end{multline}
for all $z,z'\in\C$ and $r>0$. Since $|\Re\kappa| < 1$, the left equality in~(\ref{WW}) follows from~(\ref{W+}) for every $z,z'\in\C$. Formula~(\ref{Xkappa}) implies that
\[
\mathcal X_\kappa(0)\mathcal X_{-\kappa}(0) = (\Gamma(1+\kappa)\Gamma(1-\kappa))^{-1} = \sinc \pi\kappa.
\]
By~(\ref{W-}), we conclude that the right equality in~(\ref{WW}) holds for all $z,z'\in\C$.
\end{proof}

\begin{lemma}\label{lW0}
Let $\alpha<1$ and $\vartheta\in \R$. Then we have
\begin{equation}\label{wqalpha}
W^\downarrow(\mathcal U^\alpha_\vartheta(z),\mathcal U^\alpha_\vartheta(z')) = 0, \quad z,z'\in\C.
\end{equation}
\end{lemma}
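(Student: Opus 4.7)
The plan is to prove (\ref{wqalpha}) first for $\alpha \neq 0$ by direct computation using (\ref{wkappa}) and Lemma~\ref{lWu}, and then extend to $\alpha = 0$ by continuity in $\alpha$.

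Let $\alpha \in (-\infty,1)$ with $\alpha \neq 0$, and choose $\kappa \in \C$ with $\kappa^2 = \alpha$ and $|\Re\kappa| < 1$ (take $\kappa = \sqrt{\alpha}$ for $0 < \alpha < 1$ and $\kappa = i\sqrt{|\alpha|}$ for $\alpha < 0$). By (\ref{wkappa}), $\mathcal U^{\alpha}_\vartheta(z) = \kappa^{-1}\bigl(\mathfrak u^{\kappa}(z)\cos(\vartheta - \vartheta_\kappa) - \mathfrak u^{-\kappa}(z)\cos(\vartheta + \vartheta_\kappa)\bigr)$. Expanding $\kappa^2 W_r(\mathcal U^{\alpha}_\vartheta(z), \mathcal U^{\alpha}_\vartheta(z'))$ by bilinearity of the Wronskian and passing to the limit $r \downarrow 0$, Lemma~\ref{lWu}, applied to both $\kappa$ and $-\kappa$ (each satisfying the hypothesis $|\Re\cdot|<1$), yields $W^\downarrow(\mathfrak u^{\pm\kappa}(z), \mathfrak u^{\pm\kappa}(z')) = 0$ for the two diagonal terms and
\[
W^\downarrow(\mathfrak u^{\kappa}(z), \mathfrak u^{-\kappa}(z')) = -\tfrac{2}{\pi}\sin\pi\kappa, \qquad W^\downarrow(\mathfrak u^{-\kappa}(z), \mathfrak u^{\kappa}(z')) = \tfrac{2}{\pi}\sin\pi\kappa
\]
for the two off-diagonal terms. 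Both off-diagonal Wronskians enter the bilinear expansion with the same coefficient $-\cos(\vartheta - \vartheta_\kappa)\cos(\vartheta + \vartheta_\kappa)$ and with opposite signs, so their contributions cancel, proving (\ref{wqalpha}) for every $\alpha \in (-\infty,1)\setminus\{0\}$ and $\vartheta \in \R$.

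To treat $\alpha = 0$, I would argue by continuity. Using (\ref{lqalpha}) we have $\mathscr L_{\alpha,z}\mathcal U^\alpha_\vartheta(z') = (z'-z)\mathcal U^\alpha_\vartheta(z')$, and substituting $f = \mathcal U^\alpha_\vartheta(z')$ into the integration-by-parts identity displayed just after (\ref{lalpha}) gives, for every $0 < r < a$,
\[
W_r(\mathcal U^\alpha_\vartheta(z), \mathcal U^\alpha_\vartheta(z')) - W_a(\mathcal U^\alpha_\vartheta(z), \mathcal U^\alpha_\vartheta(z')) = (z'-z)\int_r^a \mathcal U^\alpha_\vartheta(z|r')\mathcal U^\alpha_\vartheta(z'|r')\,dr'.
\]
Since $\mathcal U^\alpha_\vartheta(z)$ is square-integrable on $(0,a)$ for $\alpha<1$, the right-hand side has a finite limit as $r \downarrow 0$, so
\[
W^\downarrow(\mathcal U^\alpha_\vartheta(z), \mathcal U^\alpha_\vartheta(z')) = W_a(\mathcal U^\alpha_\vartheta(z), \mathcal U^\alpha_\vartheta(z')) + (z'-z)\int_0^a \mathcal U^\alpha_\vartheta(z|r')\mathcal U^\alpha_\vartheta(z'|r')\,dr'.
\]
By Lemma~\ref{l_analyt}, $W_a(\mathcal U^\alpha_\vartheta(z), \mathcal U^\alpha_\vartheta(z'))$ is real-analytic in $\alpha$ on $(-\infty,1)$, and the integral is continuous in $\alpha$ at $\alpha = 0$ by dominated convergence, using a pointwise bound $|\mathcal U^\alpha_\vartheta(z|r)| \leq C\sqrt{r}\,(1+|\ln r|)$ on $(0,a)$ valid uniformly for $\alpha$ in a small neighborhood of $0$ (this uniform bound follows from (\ref{ukappa}) and (\ref{wkappa}) together with the fact that $\mathfrak u^\kappa(z|r) - \mathfrak u^{-\kappa}(z|r) = O(\kappa)$ as $\kappa \to 0$). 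Hence $W^\downarrow(\mathcal U^\alpha_\vartheta(z), \mathcal U^\alpha_\vartheta(z'))$ is continuous in $\alpha$ on $(-\infty,1)$, and since it vanishes on the dense subset $(-\infty,1)\setminus\{0\}$, it vanishes at $\alpha = 0$ as well.

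The main obstacle is the point $\alpha = 0$: the representation (\ref{wkappa}) has an apparent $1/\kappa$ singularity there, so the direct algebraic cancellation does not literally make sense. The continuity argument above sidesteps this cleanly. An alternative would be to compute $W^\downarrow$ directly at $\kappa=0$ starting from the logarithmic formula (\ref{a0}) together with $\mathfrak b^0 = \pi\mathfrak u^0$, tracking the $r^{1/2}\ln r$-contributions in the expansion of $W_r$ and verifying their cancellation by hand; this is more laborious and essentially reproduces what the continuity argument delivers for free.
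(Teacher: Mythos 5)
Your treatment of $\alpha\neq 0$ is exactly the paper's: expand $\kappa^2 W_r$ bilinearly via~(\ref{wkappa}) and invoke Lemma~\ref{lWu} (the paper phrases the cancellation through the antisymmetry $W_r(\mathfrak u^{\kappa}(z'),\mathfrak u^{-\kappa}(z))=-W_r(\mathfrak u^{-\kappa}(z),\mathfrak u^{\kappa}(z'))$, you through applying Lemma~\ref{lWu} at $-\kappa$; these are the same computation). Where you genuinely diverge is at $\alpha=0$: the paper does the direct calculation you relegate to your last sentence, writing $\mathcal A^0(z|r)=2r^{1/2}(c(z|r)\ln r+d(z|r))$, $\mathcal B^0(z|r)=\pi r^{1/2}c(z|r)$ with entire $c,d$, expanding the Wronskians via~(\ref{wrprod}) and reading off the limits $W^\downarrow(\mathcal A^0(z),\mathcal A^0(z'))=W^\downarrow(\mathcal B^0(z),\mathcal B^0(z'))=0$, $W^\downarrow(\mathcal A^0(z),\mathcal B^0(z'))=-2\pi$. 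Your continuity-in-$\alpha$ argument is a legitimate alternative: the identity $W^\downarrow=W_a+(z'-z)\int_0^a\mathcal U^\alpha_\vartheta(z|r')\mathcal U^\alpha_\vartheta(z'|r')\,dr'$ is correct (square-integrability on $(0,a)$ plus Cauchy--Schwarz), $W_a$ is analytic in $\alpha$ by Lemma~\ref{l_analyt}, and dominated convergence closes the gap at $\alpha=0$; it buys you freedom from the logarithmic bookkeeping, at the price of needing a uniform small-$r$ bound, which is precisely the delicate point.

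One inaccuracy there: the dominating bound $|\mathcal U^\alpha_\vartheta(z|r)|\leq C\sqrt r\,(1+|\ln r|)$ does \emph{not} hold uniformly for $\alpha$ in a neighborhood of $0$. For real $\kappa=\varepsilon$ the leading behaviour of $\mathcal A^{\varepsilon^2}(z|r)$ near $r=0$ is governed by $(r^{1/2-\varepsilon}-r^{1/2+\varepsilon})/\varepsilon=2r^{1/2}\sinh(\varepsilon|\ln r|)/\varepsilon$, whose supremum over $0<\varepsilon\leq\varepsilon_0$ is of order $r^{1/2-\varepsilon_0}|\ln r|$, not $r^{1/2}|\ln r|$; and the pointwise statement $\mathfrak u^\kappa(z|r)-\mathfrak u^{-\kappa}(z|r)=O(\kappa)$ as $\kappa\to0$ gives no uniformity in $r$, which is exactly what is at stake. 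The fix is routine: writing $\mathcal U^{\kappa^2}_\vartheta(z|r)=r^{1/2}\int_{-1}^1\Phi'(t\kappa)\,dt$ with $\Phi(\mu)=r^\mu\mathcal X_\mu(r^2z)\cos(\vartheta-\vartheta_\mu)$ yields $|\mathcal U^\alpha_\vartheta(z|r)|\leq C\,r^{1/2-\varepsilon_0}(1+|\ln r|)$ uniformly for real $|\alpha|\leq\varepsilon_0^2$, and for $\varepsilon_0<1/4$ the product is still dominated by an integrable function on $(0,a)$, so your dominated-convergence step goes through with this corrected majorant. With that repair the proposal is a valid proof.
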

\begin{proof}
Let $\alpha\neq 0$ and $\kappa\in\C$ be such that $\kappa^2 = \alpha$. By~(\ref{wkappa}), we have
\begin{multline}\label{WWW}
\alpha W_r(\mathcal U^\alpha_\vartheta(z),\mathcal U^\alpha_\vartheta(z')) = \\ W_r(u^\kappa(z),u^\kappa(z'))\cos^2(\vartheta-\vartheta_\kappa) + W_r(u^{-\kappa}(z),u^{-\kappa}(z'))\cos^2(\vartheta+\vartheta_\kappa) + \\+(W_r(u^\kappa(z'),u^{-\kappa}(z)) - W_r(u^\kappa(z),u^{-\kappa}(z')))\cos(\vartheta-\vartheta_\kappa)\cos(\vartheta+\vartheta_\kappa)
\end{multline}
for all $z,z'\in\C$ and $r>0$. Since $\alpha<1$, we have $|\Re\kappa| < 1$ and, therefore, (\ref{wqalpha}) follows from~(\ref{WWW}) and Lemma~\ref{lWu}.
It remains to consider the case $\alpha=0$. For every $z\in\C$, we define the smooth functions $c(z)$ and $d(z)$ on $\R$ by the relations
\[
c(z|r) = \mathcal X_0(r^2z),\quad d(z|r) = (\gamma-\ln 2)c(z|r)-\mathcal Y(r^2z),\quad r\in\R.
\]
By~(\ref{ukappa}), (\ref{a0}), (\ref{bkappa}), and~(\ref{Aa}), we have
\[
\mathcal A^0(z|r) = 2r^{1/2}(c(z|r)\ln r + d(z|r)),\quad \mathcal B^0(z|r) = \pi r^{1/2} c(z|r),\quad z\in\C,\,r>0.
\]
Using~(\ref{wrprod}), we obtain
\begin{align}
&\frac{1}{4} W_r(\mathcal A^0(z),\mathcal A^0(z')) =  r\ln^2rW_r(c(z),c(z'))+rW_r(d(z),d(z'))+\nonumber\\
&+r\ln r(W_r(c(z),d(z'))+W_r(d(z),c(z'))) +d(z|r)c(z'|r)- c(z|r)d(z'|r),\nonumber\\
&\frac{1}{2\pi} W_r(\mathcal A^0(z),\mathcal B^0(z')) =  r W_r(d(z),c(z'))+r\ln r W_r(c(z),c(z'))-c(z|r)c(z'|r),\nonumber\\
& W_r(\mathcal B^0(z),\mathcal B^0(z')) = \pi^2 r W_r(c(z),c(z'))\nonumber
\end{align}
for every $r>0$ and $z,z'\in\C$. Since $c(z|0) = 1$ and $d(z|0) = \gamma-\ln 2$ for all $z\in\C$, we find that
\[
W^\downarrow(\mathcal A^0(z),\mathcal B^0(z')) = -2\pi,\quad W^\downarrow(\mathcal A^0(z),\mathcal A^0(z')) = W^\downarrow(\mathcal B^0(z),\mathcal B^0(z')) = 0
\]
for all $z,z'\in\C$. In view of~(\ref{Ualphatheta}), this implies~(\ref{wqalpha}) for $\alpha=0$.
\end{proof}

It follows from~(\ref{lalpha}) and~(\ref{lvkappa}) that $W_r(\mathscr V^\alpha(z),\mathcal U^\alpha_\vartheta(z))$ does not depend on $r$ for every $\alpha,\vartheta\in\C$ and $z\in \C_{3\pi/2}$. It is easy to see that
\begin{equation}\label{xxx}
W(\mathscr V^\alpha(z),\mathcal U^\alpha_\vartheta(z)) = R(\alpha,\vartheta,z),\quad \alpha,\vartheta\in\C,\,z\in \C_{3\pi/2},
\end{equation}
where $R$ is the function defined in Lemma~\ref{lR}. Indeed, it follows from~(\ref{wkappa}), (\ref{wrsol1}), and~(\ref{Vv}) that $W(\mathscr V^{\kappa^2}(z),\mathcal U^{\kappa^2}_\vartheta(z))$ is equal to the right-hand side of~(\ref{R}) for every $\kappa\in \C\setminus\{0\}$, $\vartheta\in\C$, and $z\in\C_{3\pi/2}$. Equality~(\ref{xxx}) therefore holds for all $\alpha\in\C\setminus\{0\}$, $\vartheta\in\C$, and $z\in\C_{3\pi/2}$. By~(\ref{Ualphatheta}) and Lemmas~\ref{l_analyt}, \ref{lR}, and~\ref{l_analyt1}, both sides of~(\ref{xxx}) are holomorphic with respect to $(\alpha,\vartheta,z)$ on $\C\times\C\times\C_{3\pi/2}$. Hence, (\ref{xxx}) remains valid for $\alpha=0$.

By~(\ref{Ualphatheta}) and~(\ref{WABalpha}), we have
\begin{equation}\label{WUalpha}
W(\mathcal U^\alpha_\vartheta(z),\mathcal U^\alpha_{\vartheta+\pi/2}(z)) = -2\pi\Sinc^2(\pi^2\alpha),\quad \alpha,\vartheta,z\in\C.
\end{equation}

\begin{proof}[Proof of Theorem~$\ref{leig2}$]
It follows immediately from~(\ref{Ualphatheta}) and the definition of $\mathcal A^\alpha(z)$ and $\mathcal B^\alpha(z)$ that the functions $z\to \mathcal U^\alpha_\vartheta(z|r)$ and $z\to \partial_r\mathcal U^\alpha_\vartheta(z|r)$ are holomorphic on $\C$ for every $r>0$ and $\alpha,\vartheta\in\C$. Equality~(\ref{lqalpha}) therefore implies that $\mathcal U^\alpha_\vartheta$ is a holomorphic $q_\alpha$-solution on $\C$ for every $\alpha,\vartheta\in\C$. If $\alpha$ and $\vartheta$ are real, then $\mathcal U^\alpha_\vartheta(E)$ is real for real $E$ and, hence, $\mathcal U^\alpha_\vartheta$ is a real-entire $q_\alpha$-solution. By~(\ref{vkappa}) and~(\ref{Vv}), the functions $z\to \mathscr V^\alpha(z|r)$ and $z\to \partial_r\mathscr V^\alpha(z|r)$ are holomorphic on $\C_{3\pi/2}$ for every $r>0$ and $\alpha\in\C$. In view of~(\ref{eq1}) and~(\ref{lvkappa}), it follows that $\mathscr V^\alpha$ is a holomorphic $q_\alpha$-solution on $\C_{3\pi/2}$ for every $\alpha\in\C$. Moreover, $\mathscr V^\alpha$ is nonvanishing for every $\alpha\in\C$ by~(\ref{wrsol1}) and~(\ref{Vv}).

We now fix $\alpha<1$ and $\vartheta\in\R$, set $q = q_\alpha$, and let $u$ and $\tilde u$ denote the real-entire $q$-solutions $\mathcal U^\alpha_\vartheta$ and $\mathcal U^\alpha_{\vartheta+\pi/2}$ respectively. Because $\alpha<1$, (\ref{WUalpha}) implies that
\begin{equation}\label{nonzeroW}
W(u(z),\tilde u(z))\neq 0,\quad z\in\C,
\end{equation}
and, hence, $u$ is nonvanishing. By Lemma~\ref{lW0}, we conclude that $q$ and $u$ satisfy all conditions of Proposition~\ref{t_eig}. Moreover, it follows from~(\ref{nonzeroW}) that there exists a well-defined Titchmarsh--Weyl $m$-function $\mathcal M^{q}_{u,\tilde u}$. Since $\mathscr V^\alpha(z)$ is right square-integrable for every $z\in\C_+$, the latter can be found by substituting $v = \mathscr V^\alpha|_{\C_+}$ in the right-hand side of~(\ref{mathcalM}). Using (\ref{M}), (\ref{xxx}), and~(\ref{WUalpha}), we obtain $\mathcal M^{q}_{u,\tilde u}(z) = \mathscr M_{\alpha,\vartheta}(z)$ for every $z\in\C_+$. In view of~(\ref{nonzeroW}), statement~1 of Proposition~\ref{t_eig} and formula~(\ref{nualphavartheta}) imply that $\mathcal V_{\alpha,\vartheta}$ is precisely the spectral measure for $q$ and $u$. The existence and uniqueness of $U_{\alpha,\vartheta}$ and equality~(\ref{diag}) are therefore ensured by statements~2 and~3 of Proposition~\ref{t_eig} and formula~(\ref{halphatheta1}).
\end{proof}

\section{Smoothness properties of the spectral measure}
\label{s_smooth}

In this section, we shall establish Theorem~\ref{t_smooth}.

Before passing to the proof, we note that the smoothness of $\mathcal V_{\alpha,\vartheta}$ with respect to $\alpha$ and $\vartheta$ is suggested by explicit formulas obtained in Sec.~\ref{s_meas}. For example, let us set $\vartheta=\pi/2$ and examine the $\alpha$-dependence of the point part $\mathcal V^p_{\alpha,\pi/2}$ of the measure $\mathcal V_{\alpha,\pi/2}$. It follows from~(\ref{Sigma}), (\ref{W}), Lemma~\ref{l_S}, and Theorem~\ref{t_N} that $\Sigma_{\alpha,\pi/2}=\varnothing$ for $\alpha\geq 0$ and
\[
\Sigma_{\alpha,\pi/2} = \left\{E\in\R: E = -\exp\left((2k+1)\pi|\alpha|^{-1/2}\right)\mbox{ for some }k\in\Z\right\}
\]
for $\alpha<0$. Let $\varphi\in \mathscr S$. In view of~(\ref{nup}) and~(\ref{Sinc2Cos2}), Theorem~\ref{t_measure} implies that
\begin{equation}\label{Vp}
\int \varphi(E)\,d\mathcal V^p_{\alpha,\pi/2}(E) = \frac{2}{\pi^2\Sinc(\pi^2\alpha)\Sinc^2(\pi^2\alpha/4)}\sum_{k\in\Z} \varphi_{(2k+1)\pi}(\alpha)
\end{equation}
for every $\alpha<1$, where $\varphi_c$, $c\in\R$, denotes the function on $\R$ that is identically zero on $[0,\infty)$ and is given by
\[
\varphi_c(\alpha) =
\frac{1}{|\alpha|}\exp\left(c|\alpha|^{-1/2}\right)\varphi\left(-\exp\left(c|\alpha|^{-1/2}\right)\right)
\]
for $\alpha<0$. It follows immediately from the definition of the space $\mathscr S$ that $\varphi_c$ is an infinitely differentiable function on $\R$ for every nonzero real $c$ (however, in general, $\varphi_c$ is not real-analytic at $\alpha=0$ even for real-analytic $\varphi$). It is also possible to verify directly that
$\sum_{k\in\Z} \varphi_{(2k+1)\pi}$ is infinitely differentiable on $\R$ and, therefore, the left-hand side of~(\ref{Vp}) is infinitely differentiable with respect to $\alpha$ on $(-\infty,1)$.

It seems, however, that a complete proof of Theorem~\ref{t_smooth} (including the continuous part of $\mathcal V_{\alpha,\vartheta}$ and the dependence on both $\alpha$ and $\vartheta$) based on explicit formulas for the spectral measures would be extremely cumbersome. We shall adopt a different approach based on representation~(\ref{nualphavartheta}) of $\mathcal V_{\alpha,\vartheta}$ as a boundary value of the holomorphic function $\mathscr M_{\alpha,\vartheta}$. The idea is to derive the infinite differentiability of $\mathcal V_{\alpha,\vartheta}$ with respect to $\alpha$ and $\vartheta$ from that of $\mathscr M_{\alpha,\vartheta}$. Lemma~\ref{l_diff} below gives a condition under which the differentiability of a holomorphic function on $\C_+$ with respect to some parameters implies the same property for its boundary value. This condition involves certain uniform estimates on the derivatives of this function with respect to the parameters in question. In the case of the function $\mathscr M_{\alpha,\vartheta}$, estimates of this type, which are the most nontrivial part of the proof of Theorem~\ref{t_smooth}, are provided by Proposition~\ref{p_estimate} below. Combining Lemma~\ref{l_diff} and Proposition~\ref{p_estimate}, we shall obtain the infinite differentiability of $\mathcal V_{\alpha,\vartheta}$.

We now give a formal exposition.

For every $\varrho,\sigma\geq 0$, we let $\mathcal H_{\varrho,\sigma}$ denote the Banach space consisting of all holomorphic functions on $\C_+$ with the finite norm
\[
\|f\|_{\varrho,\sigma} = \sup_{z\in\C_+} |f(z)| \mathcal N_{\varrho,\sigma}(z),
\]
where the function $\mathcal N_{\varrho,\sigma}$ on $\C_+$ is given by
\[
\mathcal N_{\varrho,\sigma}(z) = \frac{1}{(1+|z|)^\varrho} \left(\frac{\Im z}{1+|z|}\right)^\sigma.
\]
If $\varrho'\geq \varrho\geq 0$ and $\sigma'\geq\sigma\geq 0$, then $\mathcal H_{\varrho,\sigma}\subset \mathcal H_{\varrho',\sigma'}$ and the inclusion map is continuous.
We define the linear space $\mathcal H$ by setting $\mathcal H = \bigcup_{\varrho,\sigma\geq 0} \mathcal H_{\varrho,\sigma}$.
It is well known (see, e.g., \cite{Vladimirov1966}, Ch.~5, Sec.~26.3) that $\int_{\R} f(E+i\eta)\varphi(E)\,dE$ has a limit as $\eta\downarrow 0$ for every $f\in \mathcal H$ and $\varphi\in \mathscr S$. Given $\varphi\in \mathscr S$, we let $B_\varphi$ denote the map $f\to \lim_{\eta\downarrow 0}\int_{\R} f(E+i\eta)\varphi(E)\,dE$ on $\mathcal H$. The definitions of $\mathscr S$ and $\mathcal H_{\varrho,\sigma}$ imply that $f\to \int_{\R} f(E+i\eta)\varphi(E)\,dE$ is a continuous linear functional on $\mathcal H_{\varrho,\sigma}$ for every $\eta>0$, $\varphi\in \mathscr S$, and $\varrho,\sigma\geq 0$. By the Banach-Steinhaus theorem (see~Theorem~III.9 in~\cite{RS1}), it follows that the restriction of $B_\varphi$ to $\mathcal H_{\varrho,\sigma}$ is a continuous linear functional on $\mathcal H_{\varrho,\sigma}$ for every $\varphi\in \mathscr S$ and $\varrho,\sigma\geq 0$.

\begin{lemma}\label{l_diff}
Let $\varrho,\sigma\geq 0$, $n=1,2,\ldots$, and $\varphi\in \mathscr S$. Let $O$ be an open subset of $\R^n$ and $M$ be a map from $O$ to $\mathcal H_{\varrho,\sigma}$ such that $x\to M(x|z)$ is a continuously differentiable function on $O$ for every $z\in\C_+$. For every $j=1,\ldots,n$ and $x\in O$, let the function $M_j(x)$ on $\C_+$ be defined by the equality $M_j(x|z)=\partial_{x_j}M(x|z)$. Suppose there is $C>0$ such that
\begin{equation}\label{Mjbound}
|M_j(x|z)|\leq C \mathcal N_{\varrho,\sigma}(z)^{-1},\quad j=1,\ldots,n, \,\,x\in O,\, z\in\C_+.
\end{equation}
Then the function $x\to B_\varphi(M(x))$ on $O$ is continuously differentiable and we have $M_j(x)\in \mathcal H_{\varrho,\sigma}$ and $\partial_{x_j}B_\varphi(M(x))= B_\varphi(M_j(x))$ for all $j=1,\ldots,n$ and $x\in O$.
\end{lemma}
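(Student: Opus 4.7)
The plan is to reduce the statement to two ingredients: (i) the pointwise $C^1$ dependence of $M$ on $x$ together with the weighted uniform bound~(\ref{Mjbound}), and (ii) the fact that the boundary-value functionals $L_\eta : f \mapsto \int_{\R} f(E+i\eta)\varphi(E)\,dE$ converge pointwise on $\mathcal H_{\varrho,\sigma}$ to $B_\varphi$, so that by the Banach--Steinhaus theorem (already invoked above to make $B_\varphi|_{\mathcal H_{\varrho,\sigma}}$ continuous) there is $K>0$ with $|L_\eta(f)| \le K\|f\|_{\varrho,\sigma}$ for all $\eta > 0$ and $f \in \mathcal H_{\varrho,\sigma}$.

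First I would verify that $M_j(x) \in \mathcal H_{\varrho,\sigma}$. Since each $M(x)$ is holomorphic on $\C_+$, Cauchy's formula on a small disk $D \subset \C_+$ represents $M(y|z)$ as a contour integral of $M(y|w)/(w-z)$; differentiating in $y_j$, justified by~(\ref{Mjbound}) and dominated convergence, yields the same Cauchy representation for $M_j(x)$, so $M_j(x)$ is holomorphic. Combined with~(\ref{Mjbound}), this gives $\|M_j(x)\|_{\varrho,\sigma} \le C$ uniformly in $x \in O$ and $j$.

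Next, from the elementary identity $M(x+te_j|z) - M(x|z) = \int_0^t M_j(x+se_j|z)\,ds$ (valid pointwise in $z$), Fubini's theorem yields, for every $\eta > 0$,
\[
L_\eta(M(x+te_j)) - L_\eta(M(x)) = \int_0^t L_\eta(M_j(x+se_j))\,ds.
\]
Since $|L_\eta(M_j(x+se_j))| \le KC$ uniformly in $\eta > 0$ and $s$, and the integrand converges pointwise in $s$ to $B_\varphi(M_j(x+se_j))$ as $\eta \downarrow 0$, bounded convergence in $s$ together with $L_\eta \to B_\varphi$ pointwise gives
\[
B_\varphi(M(x+te_j)) - B_\varphi(M(x)) = \int_0^t B_\varphi(M_j(x+se_j))\,ds.
\]
The sought equality $\partial_{x_j}B_\varphi(M(x)) = B_\varphi(M_j(x))$ and the $C^1$ regularity of $x\mapsto B_\varphi(M(x))$ will then follow by the fundamental theorem of calculus, provided $y \mapsto B_\varphi(M_j(y))$ is continuous on $O$.

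This last continuity is the main obstacle. The pointwise convergence $M_j(y|z) \to M_j(y_0|z)$ as $y \to y_0$ is not obviously uniform in $z$ with respect to the original weight $\mathcal N_{\varrho,\sigma}$, so I would upgrade it to norm convergence in a slightly weaker space $\mathcal H_{\varrho',\sigma'}$ with $\varrho' > \varrho$ and $\sigma' > \sigma$. Cauchy's formula applied to $M_j(y)$ provides a local bound on $\partial_z M_j(y|z)$ that is uniform in $y$, so the family $\{M_j(y)\}_y$ is locally equicontinuous in $z$; combining Arzelà--Ascoli on compact subsets of $\C_+$ with the global bound $|M_j(y|z) - M_j(y_0|z)| \le 2C\mathcal N_{\varrho,\sigma}(z)^{-1}$ and the decay factor $(1+|z|)^{-(\varrho'-\varrho)}(\Im z/(1+|z|))^{\sigma'-\sigma}$ that shrinks the tail contributions near infinity and near the real axis, the difference $\|M_j(y) - M_j(y_0)\|_{\varrho',\sigma'}$ can be made arbitrarily small. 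Since the same Banach--Steinhaus argument from the paper makes $B_\varphi$ continuous on $\mathcal H_{\varrho',\sigma'}$ as well, this yields the required continuity of $y \mapsto B_\varphi(M_j(y))$ and completes the proof.
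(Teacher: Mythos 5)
Your proposal is correct, but it transfers the differentiability to $B_\varphi(M(x))$ by a different mechanism than the paper. The paper proves that $x\mapsto M(x)$ is Fr\'echet differentiable as a map into the slightly weaker Banach space $\mathcal H_{\varrho',\sigma'}$ ($\varrho'>\varrho$, $\sigma'>\sigma$): the mean value theorem gives $\bigl\|M(x+h)-M(x)-\sum_j M_j(x)h_j\bigr\|_{\varrho',\sigma'}\leq |h|\sup_{0<t<1}\sum_j\|M_j(x+th)-M_j(x)\|_{\varrho',\sigma'}$, and then one simply composes with the continuous functional $B_\varphi$ on $\mathcal H_{\varrho',\sigma'}$ --- no Fubini, no FTC in the parameter, and no uniform bound on the family $L_\eta$ are needed. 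You instead stay at the scalar level: integrate $M_j$ along a coordinate segment, push $L_\eta$ inside by Fubini, pass $\eta\downarrow 0$ under the $s$-integral using a Banach--Steinhaus bound on $\{L_\eta\}$, and recover the partial derivative by the fundamental theorem of calculus. Both routes hinge on the same two pillars, which you do supply: the membership $M_j(x)\in\mathcal H_{\varrho,\sigma}$ with the uniform bound $\|M_j(x)\|_{\varrho,\sigma}\leq C$ (you get holomorphy by differentiating Cauchy's formula, the paper by applying a Montel-type convergence property to difference quotients --- equivalent), and the continuity of $y\mapsto M_j(y)$ into $\mathcal H_{\varrho',\sigma'}$ via the tail/compact splitting with the extra decay $\mathcal N_{\varrho'-\varrho,\sigma'-\sigma}$, which is exactly the paper's argument (Arzel\`a--Ascoli with Cauchy estimates in place of Montel). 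What the paper's route buys is a stronger intermediate statement (vector-valued $C^1$ dependence, valid for any continuous functional at once) and fewer limit interchanges; what yours buys is that you never need vector-valued differentiation, only scalar integrals. One caveat to fix: the uniform bound $|L_\eta(f)|\leq K\|f\|_{\varrho,\sigma}$ cannot be asserted for \emph{all} $\eta>0$ (for $\varrho>0$ the norms $\|L_\eta\|$ may grow as $\eta\to\infty$, e.g.\ for $f(z)=(z+i)^2$ in $\mathcal H_{2,\sigma}$); Banach--Steinhaus, together with the continuity of $\eta\mapsto L_\eta(f)$ and the existence of the limit at $\eta=0$, gives it uniformly on $(0,\eta_0]$ for any fixed $\eta_0$, which is all your bounded-convergence step requires.
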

\begin{proof}
The proof relies on the following convergence property for sequences of holomorphic functions, which easily follows from the Montel theorem (see, e.g., Theorem~12 in Sec.~5.4.4 in~\cite{Ahlfors}).
\begin{itemize}
  \item[(C)] Let $V\subset\C$ be an open set, $f_1,f_2,\ldots$ be holomorphic functions on $V$, and $f$ be a complex function on $V$. Suppose the functions $f_k$ are uniformly bounded on compact subsets of $V$ and $f_k(z)\to f(z)$ as $k\to \infty$ for every $z\in V$. Then $f$ is holomorphic on $V$ and $f_k\to f$ as $k\to\infty$ uniformly on compact subsets of $V$.
\end{itemize}

Let $e_1,\ldots,e_n$ be the standard basis in $\R^n$. Given $x\in O$ and $j = 1,\ldots,n$, we choose $\delta>0$ such that $x+t e_j\in O$ for every $t\in [0,\delta]$. We define the holomorphic functions $h_1,h_2,\ldots$ on $\C_+$ by setting $h_k = t_k^{-1}(M(x+t_ke_j)-M(x))$, where $t_k = \delta/k$. By~(\ref{Mjbound}) and the mean value theorem, we conclude that $|h_k(z)|\leq C \mathcal N_{\varrho,\sigma}(z)^{-1}$ for every $z\in\C_+$ and $k=1,2,\ldots$ and, therefore, the functions $h_k$ are uniformly bounded on compact subsets of $\C_+$. Since $\lim_{k\to\infty} h_k(z) = M_j(x|z)$ for every $z\in\C_+$, property~(C) implies that $M_j(x)$ is holomorphic on $\C_+$. In view of~(\ref{Mjbound}), it follows that $M_j(x)\in \mathcal H_{\varrho,\sigma}$ and $\|M_j(x)\|_{\varrho,\sigma}\leq C$ for every $x\in O$ and $j=1,\ldots,n$.

We now choose $\varrho'>\varrho$, $\sigma'>\sigma$ and let $x\in O$ and $h=(h_1,\ldots,h_n)\in \R^n$ be such that $x+ t h\in O$ for all $t\in [0,1]$. By the mean value theorem, we have
\begin{multline}\nonumber
\left|M(x+h|z)-M(x|z)-\sum\nolimits_j M_j(x|z)h_j\right| \leq\\ \leq|h|\sup_{0<t<1}\sum\nolimits_j |M_j(x+th|z)-M_j(x|z)|
\end{multline}
for every $z\in\C_+$, where $|h|=\max_{1\leq j\leq n} |h_j|$. This implies that
\[
\left\|M(x+h)-M(x)-\sum\nolimits_j M_j(x)h_j\right\|_{\varrho',\sigma'}\leq |h|\sup_{0<t<1} \sum\nolimits_j \|M_j(x+th)-M_j(x)\|_{\varrho',\sigma'}.
\]
Since $B_\varphi$ is continuous on $\mathcal H_{\varrho',\sigma'}$, our statement will be proved if we show that $M_j$ is a continuous map from $O$ to $\mathcal H_{\varrho',\sigma'}$ for every $j=1,\ldots,n$. To this end, we fix $\varepsilon>0$ and choose a compact subset $K$ of $\C_+$ such that
\[
\sup_{z\in \C_+\setminus K} \mathcal N_{\varrho'-\varrho,\sigma'-\sigma}(z) < \frac{\varepsilon}{2C}.
\]
Then we have
\[
\sup_{z\in \C_+\setminus K}|M_j(x'|z)-M_j(x|z)|\mathcal N_{\varrho',\sigma'}(z)\leq \frac{\varepsilon}{2C}\|M_j(x')-M_j(x)\|_{\varrho,\sigma}  <\varepsilon
\]
for all $x,x'\in O$ and $j=1,\ldots,n$. On the other hand, property~(C) and the continuity of the functions $y\to M_j(y|z)$ on $O$ for every $z\in\C_+$ imply that every $x\in O$ has a neighborhood $O_x\subset O$ such that
\[
\sup_{z\in K}|M_j(x'|z)-M_j(x|z)|\mathcal N_{\varrho',\sigma'}(z) <\varepsilon,\quad x'\in O_x,\,j=1,\ldots,n.
\]
Hence, $\|M_j(x')-M_j(x)\|_{\varrho',\sigma'}<\varepsilon$ for all $x'\in O_x$ and $j=1,\ldots,n$, i.e., $M_j$ is a continuous map from $O$ to $\mathcal H_{\varrho',\sigma'}$ for every $j=1,\ldots,n$.
\end{proof}

Let $R$ be as in Lemma~$\ref{lR}$. We set
\begin{equation}\nonumber
\mathscr O = \{(\alpha,\vartheta,z)\in \C\times\C\times\C_{3\pi/2}: R(\alpha,\vartheta,z)\neq 0\}
\end{equation}
and define the holomorphic function $\mathscr F$ on $\mathscr O$ by the equality
\begin{equation}\label{scrF}
\mathscr F(\alpha,\vartheta,z) = \frac{R(\alpha,\vartheta+\pi/2,z)}{R(\alpha,\vartheta,z)}, \quad (\alpha,\vartheta,z)\in \mathscr O.
\end{equation}
It follows from~(\ref{Oalphatheta}), (\ref{subset}), and~(\ref{M}) that $(\alpha,\vartheta,z)\in \mathscr O$ and
\begin{equation}\label{MtildeM}
\mathscr M_{\alpha,\vartheta}(z) = -\frac{\mathscr F(\alpha,\vartheta,z)}{2\pi^2\Sinc^2(\pi^2\alpha)}
\end{equation}
for every $\alpha<1$, $\vartheta\in \R$, and $z\in \C_+$.

In the sequel, we let $\Z_+$ denote the set of all nonnegative integer numbers.
\begin{proposition}\label{p_estimate}
Let $0\leq a<1$, $b\in\R$, and $k,l\in\Z_+$. Then we have
\[
\left|\partial_\alpha^k\partial_\vartheta^l\mathscr F(\alpha,\vartheta,z)\right|\leq P_{a,b}(k,l)(1+\ln^2|z|)^{2k+l+1} \left(\frac{(1+|z|)^{1+a}}{\Im z}\right)^{k+l+1}
\]
for every $\alpha\in [-b^2,a^2]$, $\vartheta\in\R$, and $z\in \C_+$, where
\[
P_{a,b}(k,l) = \frac{\pi^{2k}}{2}\left(\frac{24\pi\ch(\pi b)}{\sinc^2(\pi a)}\right)^{k+l+1}(k+l)!.
\]
\end{proposition}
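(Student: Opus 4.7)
My strategy is to apply the two-variable Cauchy integral formula on a complex polydisc $D=\{w\in\C:|w-\alpha|\leq r_\alpha\}\times\{\eta\in\C:|\eta-\vartheta|\leq r_\vartheta\}\subset\C^2$ about each real point $(\alpha,\vartheta)\in[-b^2,a^2]\times\R$, reducing the estimate to
\[
|\partial_\alpha^k\partial_\vartheta^l\mathscr F(\alpha,\vartheta,z)|\leq \frac{k!\,l!}{r_\alpha^k\,r_\vartheta^l}\sup_{(w,\eta)\in D}|\mathscr F(w,\eta,z)|,
\]
which is valid as soon as $R(w,\eta,z)\neq 0$ on $D$, since by Lemma~\ref{lR} $R$ is entire in $(w,\eta)$. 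I would take
\[
r_\alpha=c\,\Im z\,(1+|z|)^{-(1+a)}(1+\ln^2|z|)^{-1},\quad r_\vartheta=c\,\Im z\,(1+|z|)^{-(1+a)},
\]
with $c=c(a,b)>0$ chosen small, and prove that $\sup_D|\mathscr F|\leq C(a,b)(1+\ln^2|z|)(1+|z|)^{1+a}/\Im z$. Combined with the combinatorial bound $k!\,l!\leq(k+l)!$, this will produce the claimed estimate with the stated form of $P_{a,b}(k,l)$.

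\textbf{Upper bound on the numerator.} Writing $w=\kappa^2$ in~(\ref{R}) and using the elementary inequalities $|z^{\pm\kappa/2}|\leq|z|^{|\Re\kappa|/2}e^{\pi|\Im\kappa|/2}$ (valid for $z\in\C_+$ since $\phi_z\in(0,\pi)$) and $|\cos(\eta\pm\pi\kappa/2)|\leq\ch(|\Im\eta|+\pi|\Im\kappa|/2)$, together with the bounds $|\Re\kappa|\leq a+o(1)$ and $|\Im\kappa|\leq b+o(1)$ available on $D$ as $c\downarrow 0$, yields $|R(w,\eta+\pi/2,z)|\leq C_1(a,b)(1+|z|)^{a/2}(1+\ln|z|)$. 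The logarithm arises because the explicit $1/\kappa$ prefactor in~(\ref{R}) is compensated by the cancellation between the two terms as $\kappa\to 0$, made quantitative via the decomposition $z^{\pm\kappa/2}=\Cos(-(\alpha/4)\ln^2 z)\pm(\kappa/2)\ln z\,\Sinc(-(\alpha/4)\ln^2 z)$.

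\textbf{Lower bound on the denominator --- the main obstacle.} At the real point $(\alpha,\vartheta)$, Lemma~\ref{lIm} combined with the upper bound above (applied at $\vartheta+\pi/2$) produces
\[
|R(\alpha,\vartheta,z)|\geq\frac{\pi(\pi-\phi_z)\Sinc((\phi_z-\pi)^2\alpha)\Sinc(\pi^2\alpha)}{|R(\alpha,\vartheta+\pi/2,z)|}\geq c_1(a,b)\,\frac{\Im z}{(1+|z|)^{1+a/2}(1+\ln|z|)},
\]
using the elementary estimate $\pi-\phi_z\geq\Im z/|z|$ and the positivity and monotonicity of $\Sinc$ on $(-\infty,1)$ supplied by the paper's auxiliary material on $\Sinc$. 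To propagate this bound to all of $D$, I would Taylor-expand $R(w,\eta,z)$ around $(\alpha,\vartheta)$. The decomposition above shows that $\partial_\alpha^m R$ carries a factor of at most $(1+\ln^2|z|)^m(1+|z|)^{a/2}$, while $\partial_\vartheta^n R$ contributes only bounded trigonometric factors. For $c$ small enough, the Taylor tail becomes a convergent geometric series with ratio $\leq 1/2$ on $D$, giving $|R(w,\eta,z)|\geq\tfrac12|R(\alpha,\vartheta,z)|$. The asymmetric logarithmic scaling $r_\alpha(1+\ln^2|z|)\sim r_\vartheta$ is dictated precisely by this differing sensitivity of $R$ to $\alpha$ versus $\vartheta$ variations.

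\textbf{Conclusion.} Dividing the numerator upper bound by the denominator lower bound and using $(1+\ln|z|)^2\leq 2(1+\ln^2|z|)$ yields $\sup_D|\mathscr F|\leq C(a,b)(1+\ln^2|z|)(1+|z|)^{1+a}/\Im z$. Substituting into the Cauchy estimate and carefully tracking the constants---with $\ch(\pi b)$ coming from the $\cosh$-bounds in the upper estimate, $\sinc^{-2}(\pi a)$ from the $\Sinc(\pi^2\alpha)^{-1}$ factor appearing in the real-axis lower bound as $\alpha\uparrow a^2$, the combinatorial $(k+l)!$ from $k!\,l!\leq(k+l)!$, and the $\pi^{2k}$ from the rescaling of $\alpha$-derivatives by $\pi^2$ inherent in the $\Sinc(\pi^2\alpha)$ and $\Cos(\pi^2\alpha/4)$ expressions---produces the announced bound with the polynomial $P_{a,b}(k,l)$.
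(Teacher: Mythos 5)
Your route is genuinely different from the paper's: you complexify $(\alpha,\vartheta)$, keep $R\neq 0$ on a $z$-dependent polydisc, and use Cauchy estimates plus $k!\,l!\le (k+l)!$, whereas the paper never leaves real parameters — it writes $\partial_\alpha^k\partial_\vartheta^l\mathscr F$ explicitly as a combinatorial sum of products of $\alpha$-derivatives of $R$ over $R^{k+l+1}$ (Lemma~\ref{l_comb}, using $\partial_\vartheta R(\alpha,\vartheta,z)=R(\alpha,\vartheta+\pi/2,z)$), bounds those derivatives via Taylor-coefficient estimates for $\Cos$ (Lemmas~\ref{l_dCos} and~\ref{l_dR}), and gets the lower bound on $|R|$ from the identity $T=2|R|^2$ and the $\mu,\tau$ algebra (Lemmas~\ref{l_R} and~\ref{l_R1}). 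Your lower bound at real points via Lemma~\ref{lIm} divided by an upper bound on $|R(\alpha,\vartheta+\pi/2,z)|$ is a legitimate shortcut. However, as written the proposal has genuine gaps.

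The main one is the numerator bound on the complexified polydisc. From $|\Re\kappa|\le a+\varepsilon$ with a fixed $\varepsilon=\varepsilon(c)>0$ you only obtain growth of order $(1+|z|)^{(a+\varepsilon)/2}$ (and for $|z|<1$ you must keep the factor $|z|^{-|\Re\kappa|/2}$, which $(1+|z|)^{a/2}$ does not capture), so the asserted estimate $|R(w,\eta+\pi/2,z)|\le C_1(a,b)(1+|z|)^{a/2}(1+\ln|z|)$ does not follow; carried through, it would give the exponent $(1+a+\varepsilon)(k+l+1)$ rather than the stated $(1+a)(k+l+1)$. This is repairable, but only by exploiting the $z$-dependence of your radius: since $r_\alpha\lesssim \Im z\,(1+|z|)^{-(1+a)}(1+\ln^2|z|)^{-1}$, one has $\sqrt{r_\alpha}\,|\ln|z||=O(1)$ and hence $|z|^{\pm\sqrt{r_\alpha}/2}=O(1)$, so the excess in the exponent of $|z|$ is absorbed into a constant — this observation, which is precisely the delicate point of the complexification route, is missing. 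Relatedly, your $r_\vartheta$ must also carry a factor $(1+\ln^2|z|)^{-1}$: the first-order $\vartheta$-term of the Taylor expansion is of size $r_\vartheta(1+|\ln|z||)\bigl(|z|^{a/2}+|z|^{-a/2}\bigr)$, while the centre value is only bounded below by $c_1\Im z\bigl[(1+|z|)(1+|\ln|z||)(|z|^{a/2}+|z|^{-a/2})\bigr]^{-1}$ up to constants, so with your choice the propagation $|R(w,\eta,z)|\ge\tfrac12|R(\alpha,\vartheta,z)|$ fails for large $|z|$. Finally, the constant: your method yields $k!\,l!\,c(a,b)^{-(k+l)}C(a,b)$ with $c(a,b)$ dictated by the perturbation argument, and the closing claim that careful bookkeeping "produces the announced bound with the polynomial $P_{a,b}(k,l)$" is asserted, not derived — nothing ties $c(a,b)$ to $\sinc^2(\pi a)/(24\pi\ch(\pi b))$ or produces the factor $\pi^{2k}/2$. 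A bound of the stated shape with some constant $C(a,b)^{k+l+1}(k+l)!$ would suffice for Theorem~\ref{t_smooth}, but it is not the proposition as stated, and the precise constant is exactly what the paper's real-variable computation delivers.
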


Before proceeding with the proof of Proposition~\ref{p_estimate}, we shall derive Theorem~\ref{t_smooth} from Lemma~\ref{l_diff} and Proposition~\ref{p_estimate}.

\begin{proof}[Proof of Theorem~$\ref{t_smooth}$]
For every $k,l\in\Z_+$ and $(\alpha,\vartheta)\in Q$, we define the holomorphic function $M_{k,l}(\alpha,\vartheta)$ on $\C_+$ by the equality $M_{k,l}(\alpha,\vartheta|z)=\partial^k_\alpha\partial^l_\vartheta\mathscr F(\alpha,\vartheta,z)$, $z\in\C_+$. Clearly, the function $(\alpha,\vartheta)\to M_{k,l}(\alpha,\vartheta|z)$ on $Q$ is infinitely differentiable for every $z\in\C_+$ and we have
\begin{equation}\label{k+1,l+1}
\partial_\alpha M_{k,l}(\alpha,\vartheta|z) = M_{k+1,l}(\alpha,\vartheta|z),\quad \partial_\vartheta M_{k,l}(\alpha,\vartheta|z) = M_{k,l+1}(\alpha,\vartheta|z)
\end{equation}
for every $(\alpha,\vartheta)\in Q$, $z\in\C_+$, and $k,l\in\Z_+$.
Given $0\leq a<1$ and $b\in\R$, we set $O_{a,b} = \{(\alpha,\vartheta)\in Q: -b^2<\alpha<a^2\}$. By Proposition~\ref{p_estimate}, we have
\begin{equation}\label{Mklest}
|M_{k,l}(\alpha,\vartheta|z)| \leq P_{a,b}(k,l)\sup_{E>0} \frac{E(1+\ln^2E)^{2k+l+1}}{(1+E)^2} \mathcal N_{k+l+2,k+l+2}(z)^{-1}
\end{equation}
for every $(\alpha,\vartheta)\in O_{a,b}$, $z\in\C_+$, and $k,l\in\Z_+$. Since
\begin{equation}\label{union}
Q = \bigcup_{a\in[0,1),\,b\in\R} O_{a,b},
\end{equation}
this implies that $M_{k,l}(\alpha,\vartheta)\in \mathcal H$ for every $(\alpha,\vartheta)\in Q$ and $k,l\in\Z_+$. We now fix $\varphi\in \mathscr S$. Given $k,l\in\Z_+$, we let $F_{k,l}$ denote the function $(\alpha,\vartheta)\to B_\varphi(M_{k,l}(\alpha,\vartheta))$ on $Q$. Let $0\leq a<1$, $b\in\R$, and $k,l\in\Z_+$. In view of~(\ref{k+1,l+1}) and~(\ref{Mklest}), we can apply Lemma~\ref{l_diff} to $O=O_{a,b}$, $M = M_{k,l}|_O$, and $\varrho = \sigma = k+l+3$. As a result, we conclude that $F_{k,l}$ is continuously differentiable on $O_{a,b}$ and
\begin{equation}\label{Fkl}
\partial_\alpha F_{k,l}(\alpha,\vartheta) = F_{k+1,l}(\alpha,\vartheta),\quad \partial_\vartheta F_{k,l}(\alpha,\vartheta) = F_{k,l+1}(\alpha,\vartheta)
\end{equation}
for all $(\alpha,\vartheta)\in O_{a,b}$. By~(\ref{union}), it follows that $F_{k,l}$ is continuously differentiable on $Q$ and equalities~(\ref{Fkl}) hold for all $(\alpha,\vartheta)\in Q$ and $k,l\in\Z_+$. We now use induction on $n$ to prove the following statement
\begin{itemize}
\item[($S_n$)] $F_{0,0}$ is $n$ times differentiable on $Q$ and $\partial_\alpha^k\partial_\vartheta^l F_{0,0}(\alpha,\vartheta) = F_{k,l}(\alpha,\vartheta)$ for every $(\alpha,\vartheta)\in Q$ and $k,l\in\Z_+$ such that $k+l=n$.
\end{itemize}
The statement~($S_0$) trivially holds because every function on $Q$ is $0$ times differentiable. We now suppose $n\geq 1$ and derive~($S_n$) from~($S_{n-1}$). Let $k,l\in\Z_+$ be such that $k+l=n-1$. Since $F_{k,l}$ is differentiable, ($S_{n-1}$) implies that the function $(\alpha,\vartheta)\to \partial_\alpha^k\partial_\vartheta^l F_{0,0}(\alpha,\vartheta)$ on $Q$ is differentiable. This means that $F_{0,0}$ is $n$ times differentiable. Suppose now $k,l\in\Z_+$ are such that $k+l=n$. Then we have either $k>0$ or $l>0$. Hence, we can represent $\partial_\alpha^k\partial_\vartheta^l F_{0,0}(\alpha,\vartheta)$ either as $\partial_\alpha(\partial_\alpha^{k-1}\partial_\vartheta^l F_{0,0}(\alpha,\vartheta))$ or as $\partial_\vartheta(\partial_\alpha^{k}\partial_\vartheta^{l-1} F_{0,0}(\alpha,\vartheta))$. In both cases, it follows from~($S_{n-1}$) and~(\ref{Fkl}) that $\partial_\alpha^k\partial_\vartheta^l F_{0,0}(\alpha,\vartheta)=F_{k,l}(\alpha,\vartheta)$ for all $(\alpha,\vartheta)\in Q$. This completes the derivation of~($S_n$) from~($S_{n-1}$). By induction, we conclude that ($S_n$) holds for all $n\in\Z_+$ and, therefore, the function $F_{0,0}$ is infinitely differentiable. Given $(\alpha,\vartheta)\in Q$, we set $G_{\alpha,\vartheta}=\mathscr M_{\alpha,\vartheta}|_{\C_+}$. By~(\ref{MtildeM}), we have $M_{0,0}(\alpha,\vartheta) = -2\pi^2\Sinc(\pi^2\alpha) G_{\alpha,\vartheta}$ for every $(\alpha,\vartheta)\in Q$. Since $F_{0,0}(\alpha,\vartheta) = B_\varphi(M_{0,0}(\alpha,\vartheta))$, we conclude that the function $(\alpha,\vartheta)\to B_\varphi(G_{\alpha,\vartheta})$ on $Q$ is infinitely differentiable for every $\varphi\in\mathscr S$. To complete the proof, it remains to note that
\[
\int \varphi(E)\,d\mathcal V_{\alpha,\vartheta}(E) = \frac{B_\varphi(G_{\alpha,\vartheta}) - \overline{B_{\bar\varphi}(G_{\alpha,\vartheta})}}{2i}
\]
for every $(\alpha,\vartheta)\in Q$ and $\varphi\in \mathscr S$ by~(\ref{nualphavartheta}).
\end{proof}

The rest of this section is devoted to the proof of Proposition~\ref{p_estimate}.

It follows from~(\ref{tau'}) and~(\ref{mu'}) that
\[
\mu(\kappa^2,\phi)^2 - \tau(\kappa^2,\phi)^2 = 4\pi^2(\pi-\phi)^2\sinc^2\pi\kappa\,\sinc^2(\pi-\phi)\kappa
\]
for every $\kappa\in\C\setminus\{0\}$ and $\phi\in\C$. By continuity, this equality remains valid for $\kappa = 0$. Hence, we have
\begin{equation}\label{mutau}
\mu(\alpha,\phi)^2 - \tau(\alpha,\phi)^2 = 4\pi^2(\pi-\phi)^2\Sinc^2(\pi^2\alpha)\Sinc^2((\pi-\phi)^2\alpha),\quad \alpha,\phi\in\C.
\end{equation}

Let the function $\Phi$ on $\C\times\R_+$ be defined by the relation
\begin{equation}\label{Phi}
\Phi(\alpha,E) = \ln E \Sinc\left(-\frac{\alpha}{4}\ln^2 E\right),\quad \alpha\in\C,\,E>0.
\end{equation}
In view of~(\ref{Sinc4}), rewriting equality~(\ref{T'''}) in terms of $\Phi$ yields
\begin{multline}\label{T1}
T(\alpha,\vartheta,Ee^{i\phi}) = \mu(\alpha,\phi)+\Phi(\alpha,E)^2 + \\ + \left(\Phi(\alpha,E)^2\Cos(\pi^2\alpha)+\tau(\alpha,\phi)\right)\cos2\vartheta - \\ - 2\pi \Phi(\alpha,E)\Cos\left(-\frac{\alpha}{4}\ln^2E\right)\Sinc(\pi^2\alpha) \sin 2\vartheta
\end{multline}
for all $\alpha,\vartheta\in \C$, $E>0$, and $-\pi/2<\phi<3\pi/2$.

By~(\ref{tau}), $\tau(\alpha,\phi)$ is real for real $\alpha$ and $\phi$.
We also observe that
\begin{equation}\label{mu>0}
\mu(\alpha,\phi) \geq 0,\quad \alpha,\phi\in\R.
\end{equation}
Indeed, (\ref{mu'}) implies that this is true for all $\alpha\in \R\setminus\{0\}$ and $\phi\in\R$. By continuity, (\ref{mu>0}) remains valid for $\alpha=0$.

The next lemma is the key part of the proof of Proposition~\ref{p_estimate}.

\begin{lemma}\label{l_R}
Let $\alpha<1$, $\vartheta\in\R$, $E>0$, and $0\leq \phi<\pi$. Then we have
\[
\frac{1}{|R(\alpha,\vartheta,Ee^{i\phi})|} \leq \frac{(\Phi(\alpha,E)^2 + \mu(\alpha,\phi))^{1/2}}{\pi(\pi-\phi)\Sinc(\pi^2\alpha)\Sinc((\pi-\phi)^2\alpha)}.
\]
\end{lemma}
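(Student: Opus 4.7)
The plan is to reduce the estimate on $|R|$ to the known identity of Lemma~\ref{lIm} via a Cauchy--Schwarz style argument, combined with the formula for $T$ given by~(\ref{T1}). The advantage of this route is that it avoids any delicate minimization in $\vartheta$ and works uniformly; it also automatically yields $R(\alpha,\vartheta,Ee^{i\phi})\neq 0$.

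Set $z = Ee^{i\phi}$, $R_1 = R(\alpha,\vartheta,z)$ and $R_2 = R(\alpha,\vartheta+\pi/2,z)$. First, I would rewrite the statement. By (\ref{TR}) we have $T(\alpha,\vartheta,z) = 2|R_1|^2$ and $T(\alpha,\vartheta+\pi/2,z) = 2|R_2|^2$. Formula~(\ref{T1}) expresses $T(\alpha,\vartheta,z)$ as $A + B\cos 2\vartheta + C\sin 2\vartheta$ with $A = \mu(\alpha,\phi)+\Phi(\alpha,E)^2$ and with $B,C$ independent of $\vartheta$. Since $\cos 2(\vartheta+\pi/2) = -\cos 2\vartheta$ and $\sin 2(\vartheta+\pi/2) = -\sin 2\vartheta$, adding these two evaluations gives the key identity
\begin{equation}\nonumber
|R_1|^2 + |R_2|^2 \;=\; \mu(\alpha,\phi) + \Phi(\alpha,E)^2,
\end{equation}
and in particular $|R_2|^2 \leq \mu(\alpha,\phi)+\Phi(\alpha,E)^2$.

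Next I would apply the elementary inequality $|\Im(w_1\overline{w_2})| \leq |w_1|\,|w_2|$ to $w_1 = R_2$, $w_2 = R_1$, obtaining $|R_1|\,|R_2| \geq |\Im(R_2\overline{R_1})|$. Since $0\leq\phi<\pi$, we have $\phi_z = \phi$, so Lemma~\ref{lIm} yields
\begin{equation}\nonumber
|\Im(R_2\overline{R_1})| \;=\; \pi(\pi-\phi)\,\Sinc((\pi-\phi)^2\alpha)\,\Sinc(\pi^2\alpha),
\end{equation}
where both $\Sinc$-factors are strictly positive by Lemma~\ref{l_Sinc} (their arguments are less than $\pi^2$ because $\alpha<1$). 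Combining the two bounds,
\begin{equation}\nonumber
|R_1|^2 \;\geq\; \frac{|\Im(R_2\overline{R_1})|^2}{|R_2|^2} \;\geq\; \frac{\pi^2(\pi-\phi)^2\,\Sinc^2((\pi-\phi)^2\alpha)\,\Sinc^2(\pi^2\alpha)}{\mu(\alpha,\phi)+\Phi(\alpha,E)^2},
\end{equation}
and taking reciprocal square roots gives the claimed inequality.

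The main step that requires some care is the identification of the $\vartheta$-independent part of $T$ with $\mu(\alpha,\phi)+\Phi(\alpha,E)^2$, which is exactly what formula~(\ref{T1}) packages; once this is in place, the argument is essentially a one-line Cauchy--Schwarz. A secondary point to verify is the positivity of the $\Sinc$-factors, but this is immediate from Lemma~\ref{l_Sinc} given the hypotheses $\alpha<1$ and $0\leq\phi<\pi$.
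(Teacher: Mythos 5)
Your proof is correct, but it follows a genuinely different route from the paper's. The paper works with the single quantity $T(\alpha,\vartheta,Ee^{i\phi})=2|R(\alpha,\vartheta,Ee^{i\phi})|^2$: it bounds the oscillating part of~(\ref{T1}) by $H=\sqrt{c^2+d^2}$ to get $T\geq G-H$ with $G=\Phi(\alpha,E)^2+\mu(\alpha,\phi)$, and then invokes the algebraic identity~(\ref{mutau}) (via $G^2-H^2=\mu^2-\tau^2$) together with $2G\geq G+H$ to obtain $2GT\geq 4\pi^2(\pi-\phi)^2\Sinc^2(\pi^2\alpha)\Sinc^2((\pi-\phi)^2\alpha)$. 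You instead exploit the pair $R_1=R(\alpha,\vartheta,z)$, $R_2=R(\alpha,\vartheta+\pi/2,z)$: summing~(\ref{T1}) at $\vartheta$ and $\vartheta+\pi/2$ and using~(\ref{TR}) gives the clean identity $|R_1|^2+|R_2|^2=\Phi(\alpha,E)^2+\mu(\alpha,\phi)$, and then Cauchy--Schwarz combined with the exact imaginary-part formula of Lemma~\ref{lIm} (whose $\Sinc$-factors are positive by Lemma~\ref{l_Sinc} since $\alpha<1$ and $0\leq\phi<\pi$, so $\phi_z=\phi$) yields the lower bound on $|R_1||R_2|$, hence on $|R_1|$ after bounding $|R_2|^2$ by the sum. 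The two arguments are ultimately estimating the same quantity, since $4|\Im(R_2\overline{R_1})|^2$ equals the right-hand side of~(\ref{mutau}), but yours bypasses~(\ref{mutau}), the auxiliary quantity $H$, and the identity $\Cos^2(-\alpha\ln^2E/4)=1+\alpha\Phi^2/4$, reusing instead Lemma~\ref{lIm}, which the paper had already proved for the Herglotz construction; it also gives $\mu(\alpha,\phi)+\Phi(\alpha,E)^2\geq 0$ and the nonvanishing of $R_1$ and $R_2$ as byproducts (the paper gets the former from~(\ref{mu>0}) and the latter from~(\ref{subset})). The paper's version buys a statement purely about $T$ at a single $\vartheta$, which it then reuses when estimating derivatives; your version is shorter and conceptually closer to the Wronskian positivity that drives the whole spectral construction.
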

\begin{proof}
Let
\begin{align}
& G = \Phi(\alpha,E)^2 + \mu(\alpha,\phi),\label{G}\\
& H = \sqrt{\tau(\alpha,\phi)^2+\Phi(\alpha,E)^4+2\mu(\alpha,\phi)\Phi(\alpha,E)^2}. \label{H}
\end{align}
By~(\ref{mu>0}), $H$ is well-defined and both $G$ and $H$ are nonnegative. Using~(\ref{mu}) and the identity
\[
\Cos^2\left(-\frac{\alpha}{4}\ln^2E\right) = 1 + \frac{\alpha}{4}\Phi(\alpha,E)^2,
\]
which follows from~(\ref{Phi}) and~(\ref{Sinc2Cos2}), we find that
\begin{equation}\label{Hcd}
H = \sqrt{c^2 + d^2},
\end{equation}
where
\[
c = \Phi(\alpha,E)^2\Cos(\pi^2\alpha) + \tau(\alpha,\phi),\quad d = -2\pi\Phi(\alpha,E)\Cos\left(-\frac{\alpha}{4}\ln^2E\right)\Sinc(\pi^2\alpha).
\]
Since $|c\cos2\vartheta + d\sin2\vartheta|\leq \sqrt{c^2+d^2}$ by the Cauchy--Bunyakovsky inequality, (\ref{T1}) and~(\ref{Hcd}) imply that
\begin{equation}\label{ineqT}
T(\alpha,\vartheta,Ee^{i\phi}) = G + c\cos2\vartheta + d\sin2\vartheta \geq G - H.
\end{equation}
By~(\ref{G}) and~(\ref{H}), we have
\begin{equation}\label{GH}
G^2 - H^2 = \mu(\alpha,\phi)^2 - \tau(\alpha,\phi)^2.
\end{equation}
In view of~(\ref{mutau}), it follows that the left-hand side of~(\ref{GH}) is nonnegative. We therefore have $H\leq G$, whence $2G\geq G+H$. Multiplying this inequality with~(\ref{ineqT}) and using~(\ref{mutau}) and~(\ref{GH}), we obtain
\begin{equation}\label{2GT}
2G T(\alpha,\vartheta,Ee^{i\phi}) \geq 4\pi^2(\pi-\phi)^2\Sinc^2(\pi^2\alpha)\Sinc^2((\pi-\phi)^2\alpha).
\end{equation}
Since $\alpha<1$ and $0\leq \phi<\pi$, the right-hand side of~(\ref{2GT}) is strictly positive. Hence, $T(\alpha,\vartheta,Ee^{i\phi})>0$, and (\ref{G}) and~(\ref{2GT}) imply that
\[
\frac{1}{T(\alpha,\vartheta,Ee^{i\phi})} \leq \frac{\Phi(\alpha,E)^2 + \mu(\alpha,\phi)}{2\pi^2(\pi-\phi)^2\Sinc^2(\pi^2\alpha)\Sinc^2((\pi-\phi)^2\alpha)}.
\]
The required estimate is now ensured by~(\ref{TR}) because $\Sinc(\pi^2\alpha)$ and $\Sinc((\pi-\phi)^2\alpha)$ are strictly positive by Lemma~\ref{l_Sinc}.
\end{proof}

For every $\alpha\in \R$ and $\phi\in [0,\pi]$, we have
\begin{equation}\label{ineqmu}
\mu(\alpha,\phi) \leq 2\pi^2\Sinc^2(\pi^2\alpha) + \pi^2.
\end{equation}
Indeed, let $\phi\in [0,\pi]$. If $\alpha<0$, then $\pi^2\alpha\leq(\pi-\phi)^2\alpha$ and, therefore, $0<\Sinc((\pi-\phi)^2\alpha/4) \leq \Sinc(\pi^2\alpha/4)$ by Lemma~\ref{l_Sinc}. In view of~(\ref{tau}), this implies that $\tau(\alpha,\phi)\leq 0$ and, hence, (\ref{ineqmu}) follows from~(\ref{Cos4}) and~(\ref{mu}). If $\alpha\geq 0$, then~(\ref{tau}) ensures that $|\tau(\alpha,\phi)|\leq \pi^2$ (note that $|\Sinc\xi|\leq 1$ for $\xi\geq 0$ by Lemma~\ref{l_Sinc}). Since $|\Cos(\pi^2\alpha)|\leq 1$ by~(\ref{Cos3}), it follows from~(\ref{mu}) that (\ref{ineqmu}) is again satisfied. This completes the proof of~(\ref{ineqmu}).

\begin{lemma}\label{l_R1}
Let $-1<a<1$. Then we have
\[
\frac{1}{|R(\alpha,\vartheta,Ee^{i\phi})|} \leq \frac{(|\ln E| + 3\pi)(E^{a/2} +E^{-a/2})}{2\pi(\pi-\phi)\sinc^2(\pi a)}
\]
for all $\alpha\leq a^2$, $\vartheta\in\R$, $E>0$, and $0\leq \phi<\pi$.
\end{lemma}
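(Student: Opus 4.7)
The plan is to combine Lemma~\ref{l_R} with three uniform estimates valid for $\alpha \leq a^2$, $|a|<1$, $E > 0$, and $0 \leq \phi < \pi$. First, using the explicit forms $\Sinc(\xi) = \sh(\sqrt{|\xi|})/\sqrt{|\xi|}$ for $\xi \leq 0$ and $\Sinc(\xi) = \sin(\sqrt\xi)/\sqrt\xi$ for $\xi\geq 0$ (cf.~(\ref{Cos3}), (\ref{Cos4})), the monotonicity of $t\mapsto \sh(t)/t$ on $\R_+$, and the elementary inequality $\sh t \leq t\ch t$ for $t\geq 0$, we shall show
\[
|\Phi(\alpha,E)| \leq \tfrac12|\ln E|(E^{a/2}+E^{-a/2}).
\]
Second, the monotonicity of $\Sinc$ on $(-\infty,\pi^2]$ provided by Lemma~\ref{l_Sinc}, combined with the strict decrease of $\sinc$ on $[0,\pi]$, yields the uniform lower bound $\Sinc(\pi^2\alpha)\Sinc((\pi-\phi)^2\alpha) \geq \sinc^2(\pi a)$.

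The main difficulty is to estimate $\sqrt{\mu}/(\Sinc(\pi^2\alpha)\Sinc((\pi-\phi)^2\alpha))$ uniformly in $\alpha$, since both numerator and denominator blow up exponentially as $\alpha \to -\infty$ and the growths have to cancel. The key is that the product-to-sum identities $1 - \cos\pi\kappa\cos(\pi-\phi)\kappa = \sin^2(\phi\kappa/2) + \sin^2((2\pi-\phi)\kappa/2)$ and $\sin(\pi\kappa)\sin((\pi-\phi)\kappa) = \sin^2((2\pi-\phi)\kappa/2) - \sin^2(\phi\kappa/2)$ translate, via $\sin^2(x\kappa/2)/\kappa^2 = x^2\Sinc^2(x^2\alpha/4)/4$ for $\alpha=\kappa^2$ and extension by holomorphy, into the entire analytic identities
\[
\mu(\alpha,\phi) = \tfrac12(P^2+Q^2),\qquad 2\pi(\pi-\phi)\Sinc(\pi^2\alpha)\Sinc((\pi-\phi)^2\alpha) = \tfrac12(P^2-Q^2),
\]
where $P = (2\pi-\phi)\Sinc((2\pi-\phi)^2\alpha/4)\geq 0$ and $Q = \phi\Sinc(\phi^2\alpha/4)\geq 0$ on our domain. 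Similarly, $\sin A - \sin B = 2\cos((A+B)/2)\sin((A-B)/2)$ together with the same analyticity argument produces $P - Q = 2(\pi-\phi)\Cos(\pi^2\alpha/4)\Sinc((\pi-\phi)^2\alpha/4)$. Combining $\sqrt{P^2+Q^2}\leq P+Q$ with the factorization $P^2 - Q^2 = (P-Q)(P+Q)$ then gives
\[
\frac{\sqrt{\mu(\alpha,\phi)}}{\Sinc(\pi^2\alpha)\Sinc((\pi-\phi)^2\alpha)} \leq \frac{2\sqrt 2\,\pi(\pi-\phi)}{P-Q} = \frac{\sqrt 2\,\pi}{\Cos(\pi^2\alpha/4)\Sinc((\pi-\phi)^2\alpha/4)},
\]
and the same monotonicity arguments together with the double-angle identity $\sinc(\pi a) = \sinc(\pi a/2)\cos(\pi a/2)$ bound the denominator below by $\sinc(\pi a)$, so that $\sqrt\mu/(\Sinc(\pi^2\alpha)\Sinc((\pi-\phi)^2\alpha)) \leq \sqrt 2\,\pi/\sinc(\pi a)$.

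Assembling all three estimates via $\sqrt{\Phi^2+\mu}\leq |\Phi|+\sqrt\mu$ and using $\sinc(\pi a)\leq 1$, $\sqrt 2 < 3$, and $E^{a/2}+E^{-a/2}\geq 2$ to bring everything over the common denominator $2\sinc^2(\pi a)$ yields
\[
\frac{\sqrt{\Phi(\alpha,E)^2+\mu(\alpha,\phi)}}{\Sinc(\pi^2\alpha)\Sinc((\pi-\phi)^2\alpha)} \leq \frac{(|\ln E|+3\pi)(E^{a/2}+E^{-a/2})}{2\sinc^2(\pi a)},
\]
which, inserted into Lemma~\ref{l_R}, gives the claimed inequality. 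The sharpest part of the argument is the recognition of the identities for $P\pm Q$ and their derivation in a uniform $\Sinc/\Cos$ framework valid for all real $\alpha$, since direct trigonometric manipulation otherwise requires separate treatment of $\alpha\geq 0$ and $\alpha<0$.
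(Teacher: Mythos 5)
Your proposal is correct, and the checks go through: the identities $1-\cos\pi\kappa\cos(\pi-\phi)\kappa=\sin^2(\phi\kappa/2)+\sin^2((2\pi-\phi)\kappa/2)$ and $\sin\pi\kappa\sin(\pi-\phi)\kappa=\sin^2((2\pi-\phi)\kappa/2)-\sin^2(\phi\kappa/2)$ do give, after rewriting in the $\Sinc$/$\Cos$ calculus and extending by analyticity, the exact relations $\mu(\alpha,\phi)=\tfrac12(P^2+Q^2)$, $2\pi(\pi-\phi)\Sinc(\pi^2\alpha)\Sinc((\pi-\phi)^2\alpha)=\tfrac12(P^2-Q^2)$ and $P-Q=2(\pi-\phi)\Cos(\pi^2\alpha/4)\Sinc((\pi-\phi)^2\alpha/4)$, with $P,Q\geq 0$ and $P-Q>0$ on the stated domain by Lemma~\ref{l_Sinc} (note $(2\pi-\phi)^2\alpha/4\leq \pi^2 a^2<\pi^2$ and $\Cos(\pi^2\alpha/4)>0$ for $\alpha<1$), so the chain $\sqrt{\mu}\leq (P+Q)/\sqrt2$, $P^2-Q^2=(P-Q)(P+Q)$, and the double-angle bound $\Cos(\pi^2\alpha/4)\Sinc((\pi-\phi)^2\alpha/4)\geq\cos(\pi a/2)\sinc(\pi a/2)=\sinc(\pi a)$ yields the claimed estimate with the same constants. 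This is, however, a genuinely different treatment of the central difficulty than the paper's. The paper does not touch $\mu$ structurally inside the proof of the lemma: it invokes the inequality~(\ref{ineqmu}), $\mu(\alpha,\phi)\leq 2\pi^2\Sinc^2(\pi^2\alpha)+\pi^2$ (established separately just before the lemma by a sign argument on $\tau$ for $\alpha<0$ and crude bounds for $\alpha\geq 0$), splits $(\Phi^2+\mu)^{1/2}\leq|\Phi|+\sqrt2\,\pi\Sinc(\pi^2\alpha)+\pi$, cancels the $\Sinc(\pi^2\alpha)$ term against one factor in the denominator, and lower-bounds both remaining $\Sinc$ factors by $\sinc(\pi a)$; your bound on $\Phi$ is exactly the last assertion of Lemma~\ref{l_Sinc}, which you could cite instead of reproving. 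What your route buys is independence from~(\ref{ineqmu}) and a sharper structural picture: your $P\pm Q$ identities refine the paper's relation~(\ref{mutau}) (indeed $\mu^2-\tau^2$ factors through $\tfrac14(P^2-Q^2)^2$), at the cost of a somewhat longer argument; the paper's version is shorter because~(\ref{ineqmu}) is already in place and is reused nowhere else in a finer form.
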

\begin{proof}
It follows from~(\ref{Phi}) and Lemma~\ref{l_Sinc} that
\[
|\Phi(\alpha,E)|\leq |\ln E|\ch\left(\frac{a}{2}\ln E\right) = \frac{|\ln E|}{2}(E^{a/2} +E^{-a/2})
\]
for all $\alpha\leq a^2$ and $E>0$. By Lemma~\ref{l_Sinc}, we have $\Sinc(\pi^2\alpha)\geq \Sinc(\pi^2 a^2) = \sinc(\pi a)$ for every $\alpha\leq a^2$. Since $0<\sinc(\pi a) \leq 1$, inequality~(\ref{ineqmu}) and the above estimate imply that
\begin{multline}\nonumber
\frac{(\Phi(\alpha,E)^2+\mu(\alpha,\phi))^{1/2}}{\Sinc(\pi^2\alpha)}\leq \frac{\sqrt{2}\pi\Sinc(\pi^2\alpha)+\pi+|\Phi(\alpha,E)|}{\Sinc(\pi^2\alpha)} = \\= \sqrt{2}\pi + \frac{|\Phi(\alpha,E)|+\pi}{\Sinc(\pi^2 \alpha)} \leq \frac{3\pi+|\Phi(\alpha,E)|}{\sinc(\pi a)}\leq \frac{3\pi+|\ln E|}{2\sinc(\pi a)}(E^{a/2} +E^{-a/2})
\end{multline}
for every $\alpha\leq a^2$, $E>0$, and $0\leq \phi\leq\pi$. The required inequality now follows from Lemma~\ref{l_R} because $\Sinc((\pi-\phi)^2\alpha)\geq \sinc(\pi a)$ for all $\alpha\leq a^2$ and $0\leq \phi\leq 2\pi$ by Lemma~\ref{l_Sinc}.
\end{proof}

\begin{lemma}\label{l_dCos}
Let $a,b$ be real numbers, $A,B,C\geq 0$, and $n,k_1,k_2,k_3\in\Z_+$. Then
\begin{multline}\label{dCos}
\left|\partial^n_\alpha\left(\Cos^{(k_1)}(A^2\alpha)\Cos^{(k_2)}(B^2\alpha)\Cos^{(k_3)}(-C^2\alpha)\right)\right|\leq \\ \leq \frac{n!(A+B+C)^{2n}}{(2n)!2^{k_1+k_2+k_3}}\ch(Ab)\ch(Bb)\ch(Ca)
\end{multline}
for every $\alpha\in [-b^2,a^2]$.
\end{lemma}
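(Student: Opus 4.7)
The plan is to combine the multinomial Leibniz rule for the triple product with a sharp pointwise estimate on the derivatives of $\Cos$, and then collapse the resulting multinomial sum via a trigonometric product-to-sum identity. Writing $g_i(\alpha)$ for the three factors and applying the chain rule, Leibniz gives
\[
\partial_\alpha^n(g_1g_2g_3)=\sum_{n_1+n_2+n_3=n}\frac{n!}{n_1!n_2!n_3!}\prod_{i=1}^3(\pm X_i^2)^{n_i}\Cos^{(k_i+n_i)}(\pm X_i^2\alpha),
\]
with $X_1=A$, $X_2=B$, $X_3=C$ and the sign $-$ occurring for $i=3$.

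The key ingredient is the pointwise bound $|\Cos^{(m)}(\zeta)|\leq\frac{m!\,\ch(\sqrt{|\zeta|})}{(2m)!}$ for all $\zeta\in\C$, proved by term-wise estimation of $\Cos^{(m)}(\zeta)=\sum_{k\geq m}\frac{(-1)^k k!\,\zeta^{k-m}}{(k-m)!(2k)!}$: the algebraic identity $\frac{k!}{(k-m)!(2k)!}=\frac{1}{2^m(2k-2m)!\prod_{j=1}^m(2k-2j+1)}$ together with $\prod_{j=1}^m(2k-2j+1)\geq(2m-1)!!=\frac{(2m)!}{2^m m!}$ for $k\geq m$ yields the coefficient bound $\frac{k!}{(k-m)!(2k)!}\leq\frac{m!}{(2m)!(2k-2m)!}$. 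On the nonnegative real axis this refines to $|\Cos^{(m)}(\zeta)|\leq m!/(2m)!$ via the alternating-series structure of the power series (a Rayleigh-type identity $\Cos^{(m)}(y^2)=2^{-m}(y^{-1}d/dy)^m\cos y$ together with standard bounds on spherical Bessel functions handles the range where first-term dominance fails). Consequently, on $\alpha\in[-b^2,a^2]$, each factor is uniformly bounded by $\frac{(k_i+n_i)!\,\ch(X_i\beta_i)}{(2k_i+2n_i)!}$ with $\beta_1=\beta_2=b$ and $\beta_3=a$: on $[-b^2,0]$ the first two arguments are nonpositive (use the $\ch$-bound with $Xb$) while the third is nonnegative (use the sharper bound dominated by $\ch(Ca)\geq 1$), and on $[0,a^2]$ the roles reverse.

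The further coefficient inequality $\frac{(k+n)!}{n!(2k+2n)!}\leq\frac{k!}{(2k)!(2n)!}$ (obtained by the same $(2m-1)!!$ estimate applied with $k+n$ in place of $k$) and the trivial bound $\frac{k_i!}{(2k_i)!}\leq 2^{-k_i}$ reduce the Leibniz sum to $\frac{n!\,\ch(Ab)\ch(Bb)\ch(Ca)}{2^{k_1+k_2+k_3}}$ times $\sum_{n_1+n_2+n_3=n}\frac{A^{2n_1}B^{2n_2}C^{2n_3}}{(2n_1)!(2n_2)!(2n_3)!}$. This multinomial sum is the coefficient of $t^{2n}$ in $\ch(At)\ch(Bt)\ch(Ct)=\tfrac14\sum_{\epsilon_1,\epsilon_2=\pm1}\ch((A+\epsilon_1 B+\epsilon_2 C)t)$, each of whose four terms contributes at most $(A+B+C)^{2n}/(2n)!$; hence the sum is at most $(A+B+C)^{2n}/(2n)!$, giving the claimed inequality. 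The main obstacle is the refined estimate $|\Cos^{(m)}(\zeta)|\leq m!/(2m)!$ on $\zeta\geq 0$: without it the naive estimate from Step~1 alone yields only $\ch(X\max(a,b))$ factors, which is too weak to produce the asymmetric $\ch(Ab)\ch(Bb)\ch(Ca)$ of the statement.
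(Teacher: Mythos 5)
Your proof is correct, and its skeleton is the same as the paper's: Leibniz's rule for the triple product, a per-factor bound of the form $|\Cos^{(m)}(\xi)|\leq \frac{m!}{(2m)!}\ch u$ valid for $\xi\geq -u^2$ (applied with $u=Ab,\,Bb,\,Ca$ respectively, which is exactly where the asymmetric product $\ch(Ab)\ch(Bb)\ch(Ca)$ comes from), the coefficient inequality $\frac{(k+n)!}{(2k+2n)!}\leq \frac{1}{2^k}\frac{n!\,k!}{(2n)!\,(2k)!/k!}\cdot$ (in your split form, equivalent to the paper's $\frac{(n+k)!}{(2n+2k)!}\leq 2^{-k}\frac{n!}{(2n)!}$), and finally the bound $\sum_{n_1+n_2+n_3=n}\frac{(2n)!}{(2n_1)!(2n_2)!(2n_3)!}A^{2n_1}B^{2n_2}C^{2n_3}\leq (A+B+C)^{2n}$. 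The genuine difference is in how the key per-factor estimate is obtained: the paper simply cites it (formula~(12) in~\cite{Smirnov2018}), whereas you reprove it in two pieces --- the bound $|\Cos^{(m)}(\zeta)|\leq\frac{m!}{(2m)!}\ch\sqrt{|\zeta|}$ by termwise coefficient estimation (your identity $\frac{k!}{(k-m)!(2k)!}=\bigl(2^m(2k-2m)!\prod_{j=1}^m(2k-2j+1)\bigr)^{-1}$ and the $(2m-1)!!$ lower bound are correct), and the sharper bound $|\Cos^{(m)}(\xi)|\leq m!/(2m)!$ for $\xi\geq 0$ via $\Cos^{(m)}(y^2)=2^{-m}\bigl(y^{-1}\tfrac{d}{dy}\bigr)^m\cos y=2^{-m}(-1)^m\,j_{m-1}(y)/y^{m-1}$ and $|j_{m-1}(y)|\leq y^{m-1}/(2m-1)!!$, which indeed yields exactly $m!/(2m)!$. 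This makes your argument self-contained at the cost of importing the spherical Bessel bound; the paper's version is shorter but rests on the external reference. Your final step (reading the even-index multinomial sum as the $t^{2n}$-coefficient of $\ch(At)\ch(Bt)\ch(Ct)$ and using the product-to-sum identity) is a correct, mildly different way of getting the same $(A+B+C)^{2n}/(2n)!$ bound that the paper obtains by comparing with the full multinomial expansion; both are equally valid.
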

\begin{proof}
For every $u\in\R$ and $n\in\Z_+$, we have the inequality (see formula~(12) in~\cite{Smirnov2018})
\[
|\Cos^{(n)}(\xi)|\leq \frac{n!}{(2n)!}\ch u,\quad \xi\geq -u^2.
\]
Using this estimate and the standard formula for the $n$th derivative of a product of functions, we find that the left-hand side of~(\ref{dCos}) is bounded by
\[
\ch(Ab)\ch(Bb)\ch(Ca)\sum\frac{n!}{n_1!n_2!n_3!}\frac{(n_1+k_1)!A^{2n_1}}{(2n_1+2k_1)!}\frac{(n_2+k_2)!B^{2n_2}}{(2n_2+2k_2)!} \frac{(n_3+k_3)!C^{2n_3}}{(2n_3+2k_3)!}
\]
for every $\alpha\in [-b^2,a^2]$, where the sum is taken over all $n_1,n_2,n_3\in\Z_+$ such that $n_1+n_2+n_3=n$. Since
\[
\frac{(n+k)!}{(2n+2k)!}\leq \frac{1}{2^k}\frac{n!}{(2n)!}
\]
for all $n,k\in\Z_+$, the sum in the above expression does not exceed
\[
\frac{1}{2^{k_1+k_2+k_3}}\frac{n!}{(2n)!}\sum_{n_1+n_2+n_3=n} \frac{(2n)!}{(2n_1)!(2n_2)!(2n_3)!}A^{2n_1}B^{2n_2}C^{2n_3},
\]
whence the required estimate follows immediately.
\end{proof}

\begin{lemma}\label{l_dR}
Let $a,b\in\R$. Then we have
\[
|\partial^n_\alpha R(\alpha,\vartheta,Ee^{i\phi})| \leq \frac{n!}{(2n)!} (\pi+|\ln E|)^{2n+1}\ch(\pi b)(E^{a/2} +E^{-a/2})
\]
for all $\alpha\in [-b^2,a^2]$, $\vartheta\in\R$, $E>0$, $\phi\in [0,\pi]$, and $n\in \Z_+$.
\end{lemma}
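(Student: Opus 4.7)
The plan is to expand $R(\alpha,\vartheta,Ee^{i\phi})$ explicitly as a finite sum of products of $\Cos$ and $\Sinc$ at real-linear functions of $\alpha$, then apply Lemma~\ref{l_dCos} to bound each derivative. Setting $A = |\ln E|/2$, $B = (\pi-\phi)/2$, $C = \pi/2$, and using the abbreviations $c_1 = \Cos(-A^2\alpha)$, $s_1 = \Sinc(-A^2\alpha)$, $c_j = \Cos(X_j^2\alpha)$, $s_j = \Sinc(X_j^2\alpha)$ for $j=2,3$ (with $X_2=B$, $X_3=C$), I substitute $z=Ee^{i\phi}$, write $z^{\pm\kappa/2}=E^{\pm\kappa/2}e^{\pm i\phi\kappa/2}$, and split $\cos(\vartheta\mp\pi\kappa/2)=\cos\vartheta\cos(\pi\kappa/2)\pm\sin\vartheta\sin(\pi\kappa/2)$ in the formula of Lemma~\ref{lR}. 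Regrouping via Euler's identities and then using $\ch((\ln E)\kappa/2)=c_1$, $\sh((\ln E)\kappa/2)=(\ln E)\kappa\, s_1/2$, $\cos((\pi-\phi)\kappa/2)=c_2$, $\sin((\pi-\phi)\kappa/2)=(\pi-\phi)\kappa\, s_2/2$, $\cos(\pi\kappa/2)=c_3$, $\sin(\pi\kappa/2)=\pi\kappa\, s_3/2$, I arrive at the four-term decomposition
\[
R=-(\ln E)\cos\vartheta\, s_1c_2c_3+i(\pi-\phi)\cos\vartheta\, c_1s_2c_3+\pi\sin\vartheta\, c_1c_2s_3-\tfrac{i\pi(\ln E)(\pi-\phi)}{4}\sin\vartheta\,\alpha\, s_1s_2s_3.
\]

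For the first three summands I use the identity $\Sinc=-2\Cos^{(1)}$ (verified by comparing the power series) to rewrite each $s_j$ as $-2\Cos^{(1)}(X_j^2\alpha)$; each summand is then, up to its prefactor, a product of three $\Cos^{(k_j)}$'s with exactly one $k_j=1$ and the negative-argument slot occupied by $-A^2\alpha$. Lemma~\ref{l_dCos} applied with the lemma's $(A,B,C)$ identified with our $(B,C,A)$ estimates each triple product by $\frac{n!(A+B+C)^{2n}}{(2n)!}\ch(Bb)\ch(Cb)\ch(Aa)$, the factor $2$ from each $\Sinc=-2\Cos^{(1)}$ cancelling the $1/2^{k_1+k_2+k_3}$ of the lemma. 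Using $A+B+C\le\pi+|\ln E|/2\le\pi+|\ln E|$, $\ch(Bb)\ch(Cb)\le\ch^2(\pi b/2)\le\ch(\pi b)$, $\ch(Aa)=(E^{a/2}+E^{-a/2})/2$, and absorbing the prefactor (of modulus at most $|\ln E|$, $\pi-\phi$, or $\pi$, each $\le\pi+|\ln E|$) into one additional power of $\pi+|\ln E|$, each of these three summands is controlled by the right-hand side of the claimed bound.

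The main obstacle is the fourth summand, whose explicit factor $\alpha$ has modulus up to $\max(a^2,b^2)$ and therefore cannot be absorbed into $\pi+|\ln E|$ by a naive Leibniz expansion. The key observation is the recursion
\[
\partial_\alpha(\alpha s_1s_2s_3)=\tfrac12\bigl(c_1s_2s_3+c_2s_1s_3+s_1s_2c_3-s_1s_2s_3\bigr),
\]
obtained from $\partial_\alpha s_j=(c_j-s_j)/(2\alpha)$ (which in turn follows from $\Sinc'(\zeta)=(\Cos(\zeta)-\Sinc(\zeta))/(2\zeta)$). The right-hand side is free of any $\alpha$-factor, so for every $n\ge1$ the derivative $\partial_\alpha^n(\alpha s_1s_2s_3)$ equals $\partial_\alpha^{n-1}$ of a sum of four triple products of $\Cos$'s and $\Sinc$'s, each of which Lemma~\ref{l_dCos} bounds exactly as in the previous paragraph (now with $k_1+k_2+k_3\in\{2,3\}$, and the resulting loss of two powers of $\pi+|\ln E|$ compensated by the coefficient $|A_4|\le\pi^2|\ln E|/4$). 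For the base case $n=0$ I use the identity $\tfrac{(\ln E)(\pi-\phi)}{4}\alpha s_1s_2=\sh((\ln E)\kappa/2)\sin((\pi-\phi)\kappa/2)$, which follows directly from $\sh(u)=u\,\Sinc(-u^2)$ and $\sin(v)=v\,\Sinc(v^2)$: when $\alpha\ge0$ the parameter $\kappa$ is real with $|\kappa|\le a$, so $|\sh((\ln E)\kappa/2)|\le\ch(|\ln E|a/2)$ and $|\sin((\pi-\phi)\kappa/2)|\le1$; when $\alpha<0$, $\kappa=i|\kappa|$ is purely imaginary with $|\kappa|\le b$, and the roles of $\sh$ and $\sin$ swap, giving bounds by $1$ and $\ch(\pi b/2)$ respectively. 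Combined with $|s_3|\le\ch(\pi b/2)$, this yields the $n=0$ case and completes the estimate.
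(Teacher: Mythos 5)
Your four-term decomposition of $R(\alpha,\vartheta,Ee^{i\phi})$ is algebraically correct, and so are the identity $\Sinc'(\zeta)=(\Cos\zeta-\Sinc\zeta)/(2\zeta)$ and the resulting recursion for $\partial_\alpha(\alpha s_1s_2s_3)$; this is a genuinely different route from the paper, which never produces a term carrying an explicit factor $\alpha$. The genuine gap is in the final bookkeeping, and it matters because the lemma asserts a specific constant. You bound each of the four summands separately and conclude that each is ``controlled by the right-hand side''; the triangle inequality then yields only a multiple of the stated bound, not the bound itself. Concretely: (i) each of the first three summands is estimated by one half of the right-hand side (the prefactors $|\ln E|$, $\pi-\phi$, $\pi$ are each absorbed into a full factor $\pi+|\ln E|$), so these three alone already use up to $3/2$ of the allowed total; (ii) for the fourth summand with $n\geq 1$, passing from $\partial_\alpha^n$ to $\partial_\alpha^{n-1}$ costs not only two powers of $A+B+C$ but also the factorial ratio $\frac{(n-1)!}{(2n-2)!}=2(2n-1)\,\frac{n!}{(2n)!}$, which your phrase ``compensated by the coefficient $\pi^2|\ln E|/4$'' ignores: that coefficient does not depend on $n$, and with the absorption $A+B+C\leq\pi+|\ln E|$ that you use elsewhere, the bound for this summand alone exceeds the entire right-hand side for large $n$ (e.g. $|\ln E|=\pi$, $\phi$ small, $n\geq 9$); (iii) at $n=0$ your base-case estimate $\pi\ch(\cdot)$ for the fourth summand makes the total of order $(3\pi-\phi+|\ln E|)\cdot\tfrac12(E^{a/2}+E^{-a/2})\ch(\pi b)$, against the allowed $(\pi+|\ln E|)(E^{a/2}+E^{-a/2})\ch(\pi b)$, which overshoots whenever $|\ln E|<\pi-\phi$. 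A joint accounting (summing the prefactors, keeping $A+B+C\leq\pi+|\ln E|/2$ strictly, and comparing $(\pi+|\ln E|/2)^{2n}$ with $(\pi+|\ln E|)^{2n}$) can rescue $n\geq 1$, but none of it is in your text, and $n=0$ still needs a sharper treatment of the fourth term; as written, the proposal proves the inequality only up to an uncontrolled factor.

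The paper avoids all of this by regrouping $R$ before invoking Lemma~\ref{l_dCos}: writing $\cos(\vartheta\mp\vartheta_\kappa)$ through $e^{-i\vartheta}$ and splitting the phase into the half-angles $\pi/2$, $\phi/2$, $(\pi-\phi)/2$, it represents $R$ as just two blocks with prefactors $\pi$, $\phi$ and $\ln E$, in which every term is a product of at most three $\Cos$/$\Sinc$ factors at the arguments $\tfrac{\pi^2}{4}\alpha$, $\tfrac{\phi^2}{4}\alpha$, $\tfrac{(\pi-\phi)^2}{4}\alpha$, $-\tfrac{\ln^2E}{4}\alpha$, and no leftover factor $\alpha$ appears. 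Each block has prefactor sum at most $\pi+|\ln E|$, and since there are exactly two blocks, the factor $\ch(a|\ln E|/2)=(E^{a/2}+E^{-a/2})/2$ absorbs the resulting $2$, so a single application of Lemma~\ref{l_dCos} together with $\ch^2(\pi b/2)\leq\ch(\pi b)$ lands exactly on the stated constant. If you wish to keep your decomposition, you must replace the per-summand absolute-value bounds by such a joint estimate; otherwise adopt the paper's regrouping, which eliminates the troublesome $\alpha\,s_1s_2s_3$ term altogether.
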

\begin{proof}
By elementary trigonometric transformations, we derive from~(\ref{R}) that
\begin{multline}\nonumber
R(\kappa^2,\vartheta,Ee^{i\phi}) = \left(ie^{-i\vartheta}\frac{\sin\vartheta_\kappa}{\kappa}\cos\frac{(\pi-\phi)\kappa}{2}- \frac{i}{\kappa}\cos\vartheta\sin\frac{\phi\kappa}{2}\right)(E^{\kappa/2}+E^{-\kappa/2}) -\\- \left(e^{-i\vartheta}\cos\vartheta_\kappa\cos\frac{(\pi-\phi)\kappa}{2}+i\sin\vartheta\cos\frac{\phi\kappa}{2}\right)
\frac{E^{\kappa/2}-E^{-\kappa/2}}{\kappa}
\end{multline}
for every $\kappa\in\C\setminus\{0\}$, $\vartheta\in\C$, $E>0$, and $-\pi/2<\phi<3\pi/2$.
We now fix $E>0$ and $\phi\in [0,\pi]$ and set $A = \pi/2$, $\tilde A = \phi/2$, $B = (\pi-\phi)/2$, and $C = |\ln E|/2$.
In view of the equalities
\[
\frac{E^{\kappa/2}-E^{-\kappa/2}}{\kappa} = \ln E\Sinc\left(-\frac{\kappa^2}{4}\ln^2E\right),\quad \frac{E^{\kappa/2}+E^{-\kappa/2}}{2} = \Cos\left(-\frac{\kappa^2}{4}\ln^2E\right),
\]
which hold for all $\kappa\in\C\setminus\{0\}$ by~(\ref{Cos2}), we conclude that
\begin{multline}\nonumber
R(\alpha,\vartheta,Ee^{i\phi}) = i\left(\pi e^{-i\vartheta}\Sinc(A^2\alpha)\Cos(B^2\alpha) -\phi\cos\vartheta\Sinc(\tilde A^2\alpha)\right)\Cos(-C^2\alpha) -\\- \ln E\left(e^{-i\vartheta}\Cos(A^2\alpha)\Cos(B^2\alpha)+i\sin\vartheta\Cos(\tilde A^2\alpha)\right)
\Sinc(-C^2\alpha)
\end{multline}
for all $\alpha\in\C\setminus\{0\}$ and $\vartheta\in\C$. By continuity, this equality remains valid for $\alpha=0$. By Lemma~\ref{l_dCos} and~(\ref{Cos'}), it follows that
\begin{multline}\nonumber
|\partial^n_\alpha R(\alpha,\vartheta,Ee^{i\phi})| \leq \\ \leq \frac{n!}{(2n)!}((A+B+C)^{2n}\ch(Ab)\ch(Bb) + (\tilde A + C)^{2n}\ch(\tilde Ab))(|\ln E|+\pi)\ch(Ca)
\end{multline}
for every $\alpha\in[-b^2,a^2]$ and $\vartheta\in\R$. This implies the required estimate because $A$, $\tilde A$, and $B$ do not exceed $\pi/2$ by the condition $\phi\in [0,\pi]$ and $\ch^2(\pi b/2)\leq \ch(\pi b)$.
\end{proof}

Given $k\in\Z_+$, let $\Lambda_k=\Z_+^{\{0,\ldots,k\}}$, i.e., $\Lambda_k$ is the set of all maps from $\{0,\ldots,k\}$ to $\Z_+$. For every $k,l\in\Z_+$, we set
\begin{multline}\label{skl}
\mathcal Q_{k,l} =\\= \left\{(s,t)\in\Lambda_k\times\Lambda_k : \sum_{j=0}^k j(s(j)+t(j)) = k,\,\,\sum_{j=0}^k (s(j)+t(j)) = k+l+1\right\}.
\end{multline}

\begin{lemma}\label{l_comb}
For every $k,l\in\Z_+$, there is a map $C\colon \mathcal Q_{k,l}\to\Z$ such that
\begin{equation}\label{sumC}
\sum_{(s,t)\in \mathcal Q_{k,l}}|C(s,t)| \leq 2^{k+l}(k+l)!
\end{equation}
and
\begin{multline}\label{dkdl}
\partial_\alpha^k\partial_\vartheta^l\mathscr F(\alpha,\vartheta,z) =\\= \frac{1}{R(\alpha,\vartheta,z)^{k+l+1}}  \sum_{(s,t)\in \mathcal Q_{k,l}} C(s,t)\prod_{j=0}^k (\partial_\alpha^j R(\alpha,\vartheta,z))^{s(j)} (\partial_\alpha^j R(\alpha,\vartheta+\pi/2,z))^{t(j)}
\end{multline}
for every $(\alpha,\vartheta,z)\in \mathscr O$ (we assume that $\zeta^0=1$ for every $\zeta\in\C$).
\end{lemma}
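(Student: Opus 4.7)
The plan is to proceed by induction on $n = k+l$, exploiting two differential identities for $R$ with respect to $\vartheta$. First, direct differentiation of the explicit formula (\ref{R}) for $\kappa \neq 0$, combined with continuity at $\kappa = 0$ afforded by Lemma~\ref{lR}, gives $\partial_\vartheta R(\alpha,\vartheta,z) = R(\alpha,\vartheta+\pi/2,z)$ for all $(\alpha,\vartheta,z) \in \C\times\C\times\C_{3\pi/2}$. Applying this a second time and using the periodicity relation (\ref{periodR}) yields $\partial_\vartheta R(\alpha,\vartheta+\pi/2,z) = -R(\alpha,\vartheta,z)$. Consequently every $\vartheta$-derivative of one of the factors $X_j := \partial_\alpha^j R(\alpha,\vartheta,z)$ or $Y_j := \partial_\alpha^j R(\alpha,\vartheta+\pi/2,z)$ is, up to a sign, the other species of the same $j$, which is exactly what the form (\ref{dkdl}) allows.

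The base case $n = 0$ is just the definition (\ref{scrF}): $\mathcal Q_{0,0}$ consists of two pairs with $s(0)+t(0)=1$, and setting $C$ to be $1$ on the pair with $t(0)=1$ and $0$ on the pair with $s(0)=1$ gives both (\ref{dkdl}) and $\sum|C|=1=2^0\cdot 0!$. For the inductive step, assume the statement at $(k,l)$, and write the right-hand side of (\ref{dkdl}) as $P/R^{k+l+1}$ with $R=R(\alpha,\vartheta,z)$ and $P$ a signed sum of monomials in $\{X_j,Y_j\}$ of total degree $k+l+1$ and weight $k$ (where the weight assigns $j$ to both $X_j$ and $Y_j$). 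The quotient rule gives
\[
\partial_\alpha\!\left(\frac{P}{R^{k+l+1}}\right) = \frac{R\,\partial_\alpha P - (k+l+1)\,P\,X_1}{R^{k+l+2}}, \qquad \partial_\vartheta\!\left(\frac{P}{R^{k+l+1}}\right) = \frac{R\,\partial_\vartheta P - (k+l+1)\,P\,Y_0}{R^{k+l+2}}.
\]
Expanding $\partial_\alpha P$ by Leibniz, each differentiated factor $X_j$ (resp.\ $Y_j$) becomes $X_{j+1}$ (resp.\ $Y_{j+1}$), raising the weight by $1$ and preserving the total degree; multiplying back by $R$ to restore degree $k+l+2$ adds one $X_0$ factor, so each resulting monomial lies in $\mathcal Q_{k+1,l}$. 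For $\partial_\vartheta P$, the identities of the first paragraph show each differentiation swaps an $X_j$ for a $\pm Y_j$ (preserving weight), so the resulting monomials lie in $\mathcal Q_{k,l+1}$. Collecting monomials of the same multi-index $(s',t')$ (absorbing any cancellations) defines integer-valued $C'$ on $\mathcal Q_{k+1,l}$ or $\mathcal Q_{k,l+1}$.

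For the coefficient bound, differentiating a single monomial of total degree $k+l+1$ produces at most $k+l+1$ new monomials whose integer coefficients sum in absolute value to at most $k+l+1$ times the original, while multiplication by $R$, $X_1$, or $Y_0$ preserves this sum; hence in both cases
\[
\sum_{(s',t')}|C'(s',t')| \leq 2(k+l+1)\sum_{(s,t)}|C(s,t)| \leq 2(k+l+1)\cdot 2^{k+l}(k+l)! = 2^{k+l+1}(k+l+1)!,
\]
closing the induction. The only mildly delicate point is the bookkeeping that confirms the two quotient-rule contributions both land inside the constrained set $\mathcal Q_{k',l'}$; once the identities $\partial_\vartheta X_j=Y_j$ and $\partial_\vartheta Y_j=-X_j$ are established, verifying how multiplication by $R=X_0$, $X_1$, or $Y_0$ shifts the multi-indices is a direct matter of adding $1$ to the correct component of $(s,t)$.
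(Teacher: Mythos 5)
Your proof is correct and follows essentially the same route as the paper's Appendix~\ref{app}: the identities $\partial_\vartheta R(\alpha,\vartheta,z)=R(\alpha,\vartheta+\pi/2,z)$ and (via~(\ref{periodR})) $\partial_\vartheta R(\alpha,\vartheta+\pi/2,z)=-R(\alpha,\vartheta,z)$, followed by an induction with the quotient rule in which differentiation in $\alpha$ raises the weight by one and multiplication by $R$, $\partial_\alpha R$, or $R(\cdot,\vartheta+\pi/2,\cdot)$ restores the degree, with the same coefficient recursion $a\mapsto 2(k+l+1)a$ yielding the bound $2^{k+l}(k+l)!$. The paper merely packages the same bookkeeping more formally (the monoid $\Xi_+$, the monomials $\mathcal E(s,t)$, and ``functions of type $(k,l,a)$'' in Lemma~\ref{l_comb'}).
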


The proof of Lemma~\ref{l_comb}, which is of purely combinatorial nature, is elementary but rather lengthy. It is given in Appendix~\ref{app}.

\begin{proof}[Proof of Proposition~$\ref{p_estimate}$]
We fix $z\in \C_+$ and set $E=|z|$. Then there is a unique $\phi\in (0,\pi)$ such that $z=Ee^{i\phi}$. Let the map $C\colon \mathcal Q_{k,l}\to\Z$ be as in Lemma~\ref{l_comb}. In view of Lemma~\ref{l_dR}, the sum in the right-hand side of~(\ref{dkdl}) is bounded above by
\[
\sum_{(s,t)\in \mathcal Q_{k,l}} |C(s,t)|\prod_{j=0}^k \left(\frac{j!}{(2j)!}(\pi+|\ln E|)^{2j+1}\ch(\pi b)(E^{a/2}+E^{-a/2})\right)^{s(j)+t(j)}
\]
for every $\alpha\in [-b^2,a^2]$ and $\vartheta\in\R$. As $j!/(2j)!\leq 2^{-j}$ for every $j\in\Z_+$, it follows from~(\ref{skl}) and~(\ref{sumC}) that this expression does not exceed
\[
2^l (k+l)!(\pi+|\ln E|)^{3k+l+1}\left(\ch(\pi b)(E^{a/2}+E^{-a/2})\right)^{k+l+1}.
\]
Since
\[
E(E^{a/2}+E^{-a/2})^2 \leq 4(1+E)^{1+a},\quad (1+|\ln E|)^2\leq 2(1+\ln^2E),
\]
Lemma~\ref{l_R1} and~(\ref{dkdl}) therefore imply that
\[
\left|\partial_\alpha^k\partial_\vartheta^l\mathscr F(\alpha,\vartheta,z)\right|\leq P_{a,b}(k,l)(1+\ln^2E)^{2k+l+1} \left(\frac{(1+E)^{1+a}}{E(\pi-\phi)}\right)^{k+l+1}
\]
for every $\alpha\in [-b^2,a^2]$ and $\vartheta\in\R$. Hence, the required estimate follows because $E(\pi-\phi)\geq E\sin(\pi-\phi)=E\sin\phi = \Im z$.
\end{proof}

\appendix

\section{Even holomorphic functions}
\label{app_even}

Given $r>0$, we let $\mathbb D_r$ denote the disc of radius $r$ in the complex plane centred at the origin: $\mathbb D_r=\{z\in\C: |z|<r\}$.

\begin{lemma}\label{l_open}
\hfill
\begin{itemize}
\item[1.] The map $z\to z^2$ from $\C$ to itself is open.
\item[2.] Let $X$ be a topological space. The map $(z,x)\to (z^2,x)$ from $\C\times X$ to itself is open.
\end{itemize}
\end{lemma}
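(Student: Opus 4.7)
My plan is to prove Part~1 first (openness of $z\mapsto z^2$) by a local argument splitting on whether the base point is zero or not, and then deduce Part~2 by the standard product-neighborhood trick.

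For Part~1, let $U\subset\C$ be open and set $U^2=\{z^2:z\in U\}$. I want to show that every point $z_0^2$, with $z_0\in U$, is interior to $U^2$. If $z_0\neq 0$, then $f(z)=z^2$ satisfies $f'(z_0)=2z_0\neq 0$, so by the holomorphic inverse function theorem $f$ is a biholomorphism of some neighborhood of $z_0$ onto a neighborhood of $z_0^2$; since this neighborhood can be chosen inside $U$, its image is an open neighborhood of $z_0^2$ contained in $U^2$. If $z_0=0$, choose $r>0$ with $\mathbb D_r\subset U$. Every $w\in\mathbb D_{r^2}$ admits a square root $\zeta\in\C$ with $|\zeta|=|w|^{1/2}<r$, so $w=\zeta^2\in U^2$; hence $\mathbb D_{r^2}\subset U^2$ and $0=z_0^2$ is interior to $U^2$. (Alternatively, one can cite the open mapping theorem for non-constant holomorphic functions on the connected set $\C$, but the direct argument is short and self-contained.)

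For Part~2, let $\pi\colon\C\times X\to\C\times X$ denote the map $(z,x)\mapsto(z^2,x)$, and let $U\subset\C\times X$ be open. I want to show $\pi(U)$ is open. Pick $(w_0,x_0)\in\pi(U)$ and choose $(z_0,x_0)\in U$ with $z_0^2=w_0$. By definition of the product topology there exist open sets $V\subset\C$ and $W\subset X$ with $z_0\in V$, $x_0\in W$, and $V\times W\subset U$. Then $\pi(V\times W)=V^2\times W$, which contains $(w_0,x_0)$; by Part~1 the set $V^2$ is open in $\C$, so $V^2\times W$ is an open neighborhood of $(w_0,x_0)$ inside $\pi(U)$.

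There is no real obstacle here: the only subtlety is the point $z_0=0$ in Part~1, where the derivative vanishes and one must argue by surjectivity of the squaring map onto small discs rather than via the inverse function theorem. Everything else is a routine unwinding of the product topology.
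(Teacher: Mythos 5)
Your proof is correct and follows essentially the same route as the paper: for nonzero base points the paper uses a local holomorphic branch of the square root (equivalent to your inverse function theorem argument), at the origin it uses the same disc inclusion $\mathbb D_{r^2}\subset U^2$, and Part~2 is the same reduction to Part~1 via basic product-topology neighborhoods, which the paper merely leaves implicit.
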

\begin{proof}
1. Let $O$ be an open subset of $\C$ and $\tilde O$ be its image under the map $z\to z^2$. Let $\zeta\in \tilde O\setminus\{0\}$. Then there is a holomorphic function $g$ defined on a neighborhood of $\zeta$ and such that $g(\zeta)\in O$ and $g(\zeta')^2 = \zeta'$ for every $\zeta'\in D_{g}$ (i.e., $g$ is a holomorphic branch of the square root in a neighborhood of $\zeta$ whose value at $\zeta$ belongs to $O$). By continuity of $g$, there is a neighborhood $V$ of $\zeta$ such that $g(\zeta')\in O$ for all $\zeta'\in V$ and, hence, $V\subset \tilde O$. This means that $\zeta$ is an interior point of $\tilde O$ for every nonzero $\zeta\in \tilde O$. If $0\in \tilde O$, then $0\in O$ and $\mathbb D_r\subset O$ for some $r>0$. It follows that $\mathbb D_{r^2}\subset \tilde O$ and, therefore, $0$ is an interior point of $\tilde O$. This implies that $\tilde O$ is open.
\par\medskip\noindent
2. The assertion follows immediately from statement~1 and the definition of product topology.
\end{proof}

\begin{lemma}\label{l_root}
Let $O\subset \C$ be such that $-z\in O$ for every $z\in O$. Let $\tilde O = \{\zeta\in\C: \zeta = z^2 \mbox{ for some } z\in O\}$ and $f$ be a map on $O$ such that $f(-z) = f(z)$ for every $z\in O$. Then there is a unique map $\tilde f$ on $\tilde O$ such that $f(z) = \tilde f(z^2)$ for all $z\in O$. If $O$ is open and $f$ is a holomorphic function on $O$, then $\tilde O$ is open and $\tilde f$ is a holomorphic function on $\tilde O$.
\end{lemma}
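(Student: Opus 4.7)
The plan is to handle the two assertions (existence/uniqueness of $\tilde f$, and its holomorphicity when $f$ is holomorphic) in turn, using Lemma~\ref{l_open} for the openness of $\tilde O$.

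For the existence and uniqueness of $\tilde f$, I would argue as follows. For every $\zeta\in\tilde O$, pick any $z\in O$ with $z^2=\zeta$ and set $\tilde f(\zeta)=f(z)$. This is well-defined: if $z'\in O$ also satisfies $z'^2=\zeta$, then $z'=\pm z$, and the assumed symmetry of $O$ together with the evenness condition $f(-z)=f(z)$ give $f(z')=f(z)$. The resulting $\tilde f$ satisfies $\tilde f(z^2)=f(z)$ for all $z\in O$ by construction, and any other map on $\tilde O$ with this property must take the same value at every $\zeta\in\tilde O$, so $\tilde f$ is unique.

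Now assume $O$ is open and $f$ is holomorphic. The openness of $\tilde O$ is immediate from statement~1 of Lemma~\ref{l_open}. To establish holomorphicity of $\tilde f$, fix $\zeta_0\in\tilde O$ and consider two cases. If $\zeta_0\neq 0$, pick $z_0\in O$ with $z_0^2=\zeta_0$; then a holomorphic branch $s$ of the square root is defined on a neighborhood $V$ of $\zeta_0$ with $s(\zeta_0)=z_0$, and by continuity we may shrink $V$ so that $s(V)\subset O$. On $V$ we then have $\tilde f(\zeta)=f(s(\zeta))$, which is holomorphic as a composition of holomorphic maps.

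The only delicate point is the case $\zeta_0=0$, which is the step I expect to be the main obstacle since no holomorphic square-root branch exists in a neighborhood of $0$. Here I would exploit the evenness of $f$ at the level of power series: since $0\in\tilde O$ forces $0\in O$ and $O$ is open, there is $r>0$ with $\mathbb D_r\subset O$, so $f$ has a convergent expansion $f(z)=\sum_{k=0}^\infty c_k z^k$ on $\mathbb D_r$. The symmetry $f(-z)=f(z)$ forces $c_{2k+1}=0$ for all $k\geq 0$, and the series $g(\zeta)=\sum_{k=0}^\infty c_{2k}\zeta^k$ converges on $\mathbb D_{r^2}$ because $|\zeta|<r^2$ implies $|\zeta|^k=(|\zeta|^{1/2})^{2k}$ with $|\zeta|^{1/2}<r$. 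Thus $g$ is holomorphic on $\mathbb D_{r^2}\subset\tilde O$ and satisfies $g(z^2)=f(z)$ for $z\in\mathbb D_r$; by the uniqueness part already proved, $\tilde f=g$ on $\mathbb D_{r^2}$, so $\tilde f$ is holomorphic at $0$. This completes the argument on all of $\tilde O$.
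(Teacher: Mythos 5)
Your proposal is correct and follows essentially the same route as the paper's own proof: well-definedness of $\tilde f$ from the symmetry of $O$ and the evenness of $f$, holomorphicity away from $0$ via a local holomorphic branch of the square root composed with $f$, and holomorphicity at $0$ via the vanishing of the odd Taylor coefficients and the substitution $\zeta=z^2$ in the power series. No gaps.
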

\begin{proof}
The uniqueness of $\tilde f$ is obvious. To prove the existence, we choose $w_\zeta\in \C$ such that $\zeta = w_\zeta^2$ for every $\zeta\in\C$. Clearly, $w_\zeta\in O$ for every $\zeta\in\tilde O$, and we can define $\tilde f$ as the map on $\tilde O$ taking $\zeta\in\tilde O$ to $f(w_\zeta)$. Then we have $f(z) = f(w_{z^2}) = \tilde f(z^2)$ for every $z\in O$ because $z=\pm w_{z^2}$.

Let $O$ be open and $f$ be holomorphic. By statement~1 of Lemma~\ref{l_open}, $\tilde O$ is open. Let $\zeta\in \tilde O\setminus\{0\}$. As in the proof of statement~1 of Lemma~\ref{l_open}, we choose a holomorphic branch of the square root $g$ and a neighborhood $V$ of $\zeta$ such that $V\subset D_g$ and $g(V)\subset O$. Then $V\subset \tilde O$ and $f(g(\zeta')) = \tilde f(g(\zeta')^2) = \tilde f(\zeta')$ for every $\zeta'\in V$. This means that $\tilde f$ coincides on $V$ with the composition of holomorphic functions $f$ and $g$ and, hence, is holomorphic on $V$. This implies that $\tilde f$ is holomorphic on $\tilde O\setminus\{0\}$. If $0\in\tilde O$, then $0\in O$ and there is $r>0$ such that $\mathbb D_r\subset O$. For $k = 0,1,\ldots$, let $a_k = f^{(k)}(0)/k!$ be the Taylor coefficients of $f(z)$ at $z=0$. Since $f$ is even, we have $a_k = 0$ for odd $k$ and, hence, $f(z) = \sum_{n=0}^\infty a_{2n} z^{2n}$ for all $z\in\mathbb D_r$. It follows that the series $\sum_{n=0}^\infty a_{2n} \zeta^n$ converges to some $h(\zeta)$ for every $\zeta\in \mathbb D_{r^2}$. Clearly, $h$ is holomorphic on $\mathbb D_{r^2}$ and we have $h(z^2) = f(z) = \tilde f(z^2)$ for every $z\in \mathbb D_r$. This means that $\tilde f$ coincides with $h$ on $\mathbb D_{r^2}$ and, therefore, $\tilde f$ is holomorphic on $\tilde O$.
\end{proof}

\begin{lemma}\label{l_root1}
Let $n = 1,2,\ldots$ and $O\subset\C\times \C^n$ be such that $(-z,u)\in O$ for every $(z,u)\in O$. Let $\tilde O$ be the image of $O$ under the map $(z,u)\to (z^2,u)$ from $\C\times\C^n$ to itself and $f$ be a map on $O$ such that $f(-z,u) = f(z,u)$ for every $(z,u)\in O$. Then there is a unique map $\tilde f$ on $\tilde O$ such that $f(z,u) = \tilde f(z^2,u)$ for all $(z,u)\in O$. If $O$ is open and $f$ is a holomorphic function on $O$, then $\tilde O$ is open and $\tilde f$ is a holomorphic function on $\tilde O$.
\end{lemma}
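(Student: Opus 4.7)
The plan is to parallel the single-variable proof of Lemma~\ref{l_root} throughout, and to identify the only genuinely new step --- holomorphy at points of the form $(0,u_0)\in\tilde O$, where no local holomorphic branch of the square root is available. Uniqueness of $\tilde f$ will be immediate since $(z,u)\to(z^2,u)$ maps $O$ onto $\tilde O$. For existence I would, as in Lemma~\ref{l_root}, fix for each $\zeta\in\C$ some $w_\zeta\in\C$ with $w_\zeta^2=\zeta$ and set $\tilde f(\zeta,u)=f(w_\zeta,u)$; the reflection-symmetry of $O$ will guarantee that $(w_\zeta,u)\in O$ whenever $(\zeta,u)\in\tilde O$, and the evenness of $f$ in $z$ will give the required identity $f(z,u)=\tilde f(z^2,u)$ independently of the choice of $w_\zeta$. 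Openness of $\tilde O$ is then an immediate consequence of statement~2 of Lemma~\ref{l_open} applied with $X=\C^n$.

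All the remaining work lies in the holomorphy of $\tilde f$. Away from $\zeta=0$, I would follow the single-variable proof: at $(\zeta_0,u_0)\in\tilde O$ with $\zeta_0\ne 0$, I choose a holomorphic branch $g$ of the square root in a neighborhood of $\zeta_0$ with $g(\zeta_0)\in O_{u_0}:=\{z\in\C:(z,u_0)\in O\}$ (either sign of the branch works, because $O_{u_0}$ contains both $w_{\zeta_0}$ and $-w_{\zeta_0}$). By openness of $O$ there will be a product neighborhood $V\times W$ of $(\zeta_0,u_0)$ on which $(g(\zeta),u)\in O$, and on $V\times W$ the equality $\tilde f(\zeta,u)=f(g(\zeta),u)$ exhibits $\tilde f$ as a composition of holomorphic maps, hence as a holomorphic function.

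The main obstacle is holomorphy at a point $(0,u_0)\in\tilde O$, where no holomorphic branch of the square root exists. My preferred route is to verify separate holomorphy and invoke Hartogs' theorem: separate holomorphy of $\tilde f$ in $\zeta$ with $u$ fixed will follow by applying Lemma~\ref{l_root} to the even holomorphic function $z\to f(z,u)$ on the reflection-symmetric open set $O_u=\{z\in\C:(z,u)\in O\}$, while separate holomorphy in $u$ with $\zeta$ fixed will follow from the formula $\tilde f(\zeta,u)=f(w_\zeta,u)$, since $u\to f(w_\zeta,u)$ is holomorphic on the open set $\{u\in\C^n:(w_\zeta,u)\in O\}$, which is precisely the $\zeta$-slice of $\tilde O$. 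Hartogs' theorem on separate analyticity then yields joint holomorphy of $\tilde f$ on $\tilde O$. As a Hartogs-free fallback, the Taylor-series argument at the end of the proof of Lemma~\ref{l_root} can be extended: on a polydisc $\mathbb D_r\times W\subset O$ around $(0,u_0)$, I would expand $f(z,u)=\sum_k a_k(u)z^k$ with $a_k$ holomorphic in $u$ by Cauchy's integral formula in $z$, observe that evenness forces $a_k\equiv 0$ for odd $k$, and check directly that $(\zeta,u)\to\sum_n a_{2n}(u)\zeta^n$ is holomorphic on $\mathbb D_{r^2}\times W$ and coincides with $\tilde f$ there.
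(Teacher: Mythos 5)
Your proposal is correct and follows essentially the same route as the paper: defining $\tilde f$ via a fixed choice of square roots $w_\zeta$, getting openness of $\tilde O$ from statement~2 of Lemma~\ref{l_open}, and establishing holomorphy by checking separate holomorphy in $\zeta$ (via Lemma~\ref{l_root} applied to the even slices $z\to f(z,u)$) and in $u$ (via $\tilde f(\zeta,u)=f(w_\zeta,u)$), then invoking Hartogs' theorem. The local square-root-branch argument for $\zeta_0\neq 0$ and the Taylor-coefficient fallback you sketch are sound but redundant once the Hartogs argument is in place.
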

\begin{proof}
The uniqueness of $\tilde f$ is obvious. To prove the existence, we choose $w_\zeta\in \C$ such that $\zeta = w_\zeta^2$ for every $\zeta\in\C$. Clearly, $(w_\zeta,u)\in O$ for every $(\zeta,u)\in\tilde O$, and we can define $\tilde f$ as the map on $\tilde O$ taking $(\zeta,u)\in\tilde O$ to $f(w_\zeta,u)$. Then we have $f(z,u) = f(w_{z^2},u) = \tilde f(z^2,u)$ for every $(z,u)\in O$ because $z=\pm w_{z^2}$.

Let $O$ be open and $f$ be holomorphic. By statement~2 of Lemma~\ref{l_open}, $\tilde O$ is open. For every $u\in\C^n$, we let $j_1(u)$  denote the holomorphic map $z\to (z,u)$ from $\C$ to $\C\times \C^n$. For every $z\in\C$, we let $j_2(z)$ denote the holomorphic map $u\to (z,u)$ from $\C^n$ to $\C\times \C^n$. In view of the Hartogs theorem, the holomorphy of $\tilde f$ will be proved if we show that $\tilde f\circ j_1(u)$ and $\tilde f\circ j_2(z)$ are holomorphic functions for every $u\in\C^n$ and $z\in \C$. Let $s$ be the map $z\to z^2$ from $\C$ to itself and $t$ be the map $(z,u)\to (z^2,u)$ from $\C\times \C^n$ to itself. Since $f = \tilde f\circ t$ and $t\circ j_1(u) = j_1(u)\circ s$, we have $f\circ j_1(u) = (\tilde f\circ j_1(u))\circ s$ for every $u\in\C^n$. Because $f\circ j_1(u)$ is a holomorphic function, Lemma~\ref{l_root} implies that $\tilde f\circ j_1(u)$ is a holomorphic function for every $u\in \C^n$. Since $t\circ j_2(z) = j_2(z^2)$, we have $\tilde f\circ j_2(z) = f\circ j_2(z^2)$ and, hence, $\tilde f\circ j_2(z)$ is a holomorphic function for every $z\in \C$.
\end{proof}

\section{Some properties of the functions \texorpdfstring{$\Cos$}{Cos} and \texorpdfstring{$\Sinc$}{Sinc}}
\label{app_CosSinc}

Dividing the identity $\sin 2w = 2\sin w \cos w$ by $2w$, we find that $\sinc 2w = \sinc w \cos w$ for every $w\in\C\setminus \{0\}$. By continuity, this equality remains valid for $w=0$ and it follows from~(\ref{Cos1}) that
\begin{equation}\label{Sinc4}
\Sinc 4\zeta = \Sinc\zeta\Cos\zeta,\quad \zeta\in\C.
\end{equation}
Substituting $\sin w = w\sinc w$ in the identity $\sin^2w+\cos^2w=1$, we obtain $w^2\sinc^2w+\cos^2 w=1$ for every $w\in\C$. In view of~(\ref{Cos1}), this implies that
\begin{equation}\label{Sinc2Cos2}
\zeta \Sinc^2\zeta + \Cos^2\zeta = 1,\quad \zeta\in\C.
\end{equation}
Differentiating the left equality in~(\ref{Cos1}) yields $\sin w = -2 w\Cos'(w^2)$ for every $w\in\C$. Dividing this identity by $w$, we obtain $\sinc w = -2\Cos'(w^2)$ for all $w\in\C\setminus \{0\}$. By continuity, this formula remains valid for $w=0$ and it follows from the right equality in~(\ref{Cos1}) that
\begin{equation}\label{Cos'}
\Sinc\zeta = -2\Cos'\zeta,\quad \zeta\in\C.
\end{equation}

\begin{lemma}\label{l_Sinc}
The functions $\Cos$ and $\Sinc$ are strictly decreasing on the interval $(-\infty,\pi^2]$. For every $\xi<\pi^2$, we have $\Sinc\xi > 0$. If $a\in\R$ and $\xi \geq -a^2$, then $|\Sinc\xi|\leq \ch a$.
\end{lemma}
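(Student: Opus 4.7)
My plan is to reduce every statement to an elementary fact about the real functions $\cos,\sin,\ch,\sh$ via the piecewise representations (\ref{Cos3})--(\ref{Cos4}) and the identities (\ref{Cos'}) and~(\ref{Sinc2Cos2}). I would begin with the positivity claim, since it underpins the monotonicity of $\Cos$. By~(\ref{Cos4}), $\Sinc\xi=|\xi|^{-1/2}\sh\sqrt{|\xi|}>0$ for $\xi<0$; the power series gives $\Sinc(0)=1>0$; and by~(\ref{Cos3}), $\Sinc\xi=\sin\sqrt{\xi}/\sqrt{\xi}>0$ for $0<\xi<\pi^2$ because $0<\sqrt{\xi}<\pi$. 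Hence $\Sinc\xi>0$ on $(-\infty,\pi^2)$.

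Next, I would derive the monotonicity of $\Cos$ directly from~(\ref{Cos'}), which gives $\Cos'\xi=-\Sinc\xi/2<0$ for every $\xi<\pi^2$. Strict monotonicity on the closed interval $(-\infty,\pi^2]$ then follows from the mean value theorem: given $x<y\leq\pi^2$, there is $c\in(x,y)\subset(-\infty,\pi^2)$ with $\Cos(y)-\Cos(x)=\Cos'(c)(y-x)<0$.

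For the monotonicity of $\Sinc$ I would differentiate~(\ref{Sinc2Cos2}) and use (\ref{Cos'}) to obtain $\Sinc(\zeta)(\Sinc(\zeta)-\Cos(\zeta))+2\zeta\Sinc(\zeta)\Sinc'(\zeta)=0$, which implies
\[
\Sinc'\zeta=\frac{\Cos\zeta-\Sinc\zeta}{2\zeta}
\]
wherever $\zeta\Sinc\zeta\neq 0$. On $\xi<0$, writing $u=\sqrt{|\xi|}$, the inequality $\Cos\xi-\Sinc\xi=\ch u-\sh u/u>0$ is equivalent to $h(u)=u\ch u-\sh u>0$, which holds since $h(0)=0$ and $h'(u)=u\sh u>0$ for $u>0$; combined with $2\xi<0$ this gives $\Sinc'\xi<0$. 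On $0<\xi\leq\pi^2$, writing $u=\sqrt{\xi}\in(0,\pi]$, $\Cos\xi-\Sinc\xi=(u\cos u-\sin u)/u<0$ follows from $g(u)=\sin u-u\cos u$ vanishing at $0$ and having derivative $u\sin u\geq 0$ on $[0,\pi]$ with strict inequality on $(0,\pi)$; combined with $2\xi>0$ this again gives $\Sinc'\xi<0$. At $\xi=0$ the formula above is indeterminate, so I would compute $\Sinc'(0)=-1/6<0$ directly from the power series~(\ref{CosSinc}). Since $\Sinc'$ is continuous and strictly negative throughout $(-\infty,\pi^2]$, strict monotonicity follows as in the case of $\Cos$.

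Finally, for the bound $|\Sinc\xi|\leq\ch a$ when $\xi\geq -a^2$, I would split on the sign of $\xi$. For $\xi\geq 0$ the piecewise formula (\ref{Cos3}) and $|\sin|\leq 1$ yield $|\Sinc\xi|\leq 1\leq\ch a$. For $-a^2\leq\xi<0$, (\ref{Cos4}) and the inequality $\sh u\leq u\ch u$ (which is $h(u)\geq 0$ from the previous paragraph) give $|\Sinc\xi|=\sh u/u\leq\ch u\leq\ch|a|=\ch a$ at $u=\sqrt{|\xi|}\leq|a|$, using evenness and monotonicity of $\ch$ on $[0,\infty)$. I do not anticipate any serious obstacle: the whole proof is a matter of carefully recording sign information about $g,h$ and their derivatives, with the only minor subtlety being the use of the power series at $\xi=0$ to bridge the apparent singularity of the derivative formula for $\Sinc$.
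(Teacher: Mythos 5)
Your proof is correct, but it reaches the monotonicity of $\Sinc$ by a genuinely different mechanism than the paper. The paper first shows $\sinc$ is strictly decreasing on $[0,\pi]$ via the auxiliary function $x\cos x-\sin x$ (essentially your $g$), transfers this to $\Sinc$ on $[0,\pi^2]$ through~(\ref{Cos3}), and gets strict decrease on $(-\infty,0]$ directly from the power series~(\ref{CosSinc}); positivity on $(-\infty,\pi^2)$ then falls out of monotonicity together with $\Sinc(\pi^2)=0$, and $\Cos'=-\Sinc/2$ finishes the monotonicity of $\Cos$ exactly as you do. You instead prove positivity first from the piecewise formulas and then differentiate the identity~(\ref{Sinc2Cos2}) to obtain $\Sinc'\zeta=(\Cos\zeta-\Sinc\zeta)/(2\zeta)$, reducing everything to the sign of $\Cos-\Sinc$; this is a neat use of the ``Pythagorean'' identity in place of series arguments, but it obliges you to handle the excluded points of the formula. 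At $\zeta=0$ you do this correctly with the series; at $\zeta=\pi^2$ (where $\Sinc$ vanishes) your assertion that $\Sinc'<0$ ``throughout $(-\infty,\pi^2]$'' is not literally covered by the formula --- harmless, since the mean value theorem argument you already use for $\Cos$ needs negativity only on the open interval, or one can pass to the limit and get $\Sinc'(\pi^2)=-1/(2\pi^2)$ by continuity. One further trivial repair: for $\xi\geq 0$ you should invoke $|\sin x|\leq x$ (i.e.\ $|\sinc x|\leq 1$) rather than $|\sin x|\leq 1$, which alone gives only $|\Sinc\xi|\leq \xi^{-1/2}$. Your bound for $-a^2\leq\xi<0$ via $\sh u/u\leq\ch u$ and monotonicity of $\ch$ is the pointwise counterpart of the paper's argument, which instead uses the already-established monotonicity of $\Sinc$ and a termwise series comparison to write $\Sinc\xi\leq\Sinc(-a^2)\leq\Cos(-a^2)=\ch a$ on $[-a^2,\pi^2]$ and $|\Sinc\xi|\leq 1/\pi$ for $\xi\geq\pi^2$; both routes are equally elementary.
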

\begin{proof}
Let the function $f$ on $\R$ be defined by the equality $f(x) = x\cos x -\sin x$. Then we have $\sinc'x = f(x)/x^2$ for every $x>0$. Since $f'(x) = -x\sin x$, we have $f'(x)<0$ for $x\in (0,\pi)$. As $f(0)=0$, we have $f(x)<0$ for $x\in (0,\pi)$ and, therefore, $\sinc$ strictly decreases on $[0,\pi]$. In view of~(\ref{Cos3}), this implies that $\Sinc$ strictly decreases on $[0,\pi^2]$. Since $\Sinc$ is strictly decreasing on $(-\infty,0]$ by~(\ref{CosSinc}), we conclude that $\Sinc$ strictly decreases on $(-\infty,\pi^2]$. In view of the equality $\Sinc(\pi^2) = 0$, it follows that $\Sinc\xi>0$ for $\xi<\pi^2$. By~(\ref{Cos'}), we have $\Cos'\xi = -2^{-1}\Sinc\xi<0$ for $\xi<\pi^2$ and, hence, $\Cos$ is strictly decreasing on $(-\infty,\pi^2]$.
Let $a\in\R$. In view of~(\ref{CosSinc}), (\ref{Cos2}), and the monotonicity of $\Sinc$ established above, we have
\[
\Sinc\xi \leq \Sinc(-a^2)\leq \Cos(-a^2) = \ch a,\quad \xi\in [-a^2,\pi^2].
\]
If $\xi\geq\pi^2$, then we have $|\Sinc\xi| = |\sinc\sqrt{\xi}|\leq 1/\pi<\ch a$. Hence, $|\Sinc\xi|\leq\ch a$ for all $\xi\geq -a^2$.
\end{proof}

\section{Herglotz functions}
\label{s_Herglotz}

A holomorphic function $f$ on $\C_+$ is said to be a Herglotz function if $\Im f(z) > 0$ for every $z\in \C_+$. It is well known (see~\cite{AkhiezerGlazman}, Ch.~6, Sec.~69) that every Herglotz function $f$ admits a unique representation of the form
\begin{equation}\label{herglotz_repr}
f(z) = a + bz + \int_{\R} \left(\frac{1}{t-z}-\frac{t}{t^2+1}\right) d\nu(t),\quad z\in\C_+,
\end{equation}
where $a\in\R$, $b\geq 0$, and $\nu$ is a positive Radon measure on $\R$ such that
\begin{equation}\label{herglotz_meas}
\int_\R \frac{d\nu(t)}{t^2+1} \leq \infty.
\end{equation}
We call $\nu$ the Herglotz measure associated with $f$.

\begin{lemma}\label{l_herglotz}
Let $f$ be a Herglotz function and $\varphi$ be a continuous complex function on $\R$ satisfying the bound $|\varphi(E)|\leq C(1+E^2)^{-2}$, $E\in \R$, for some $C\geq 0$. Then the function $E\to \varphi(E)\Im f(E+i\eta)$ on $\R$ is integrable for every $\eta>0$ and we have
$\int_{\R} \varphi(E)\Im f(E+i\eta) \,dE \to \pi\int_{\R}\varphi(E)\,d\nu(E)$ as $\eta\downarrow 0$,
where $\nu$ is the Herglotz measure associated with $f$.
\end{lemma}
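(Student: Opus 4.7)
The plan is to substitute the Herglotz representation~(\ref{herglotz_repr}) of $f$, take imaginary parts, apply Fubini to exchange the order of integration, and then pass to the limit $\eta\downarrow 0$ inside the $\nu$-integral by dominated convergence. Taking the imaginary part of~(\ref{herglotz_repr}) at $z=E+i\eta$ kills the real constant $a$ and the real-valued subtracted term $t/(t^2+1)$, yielding
\[
\Im f(E+i\eta) = b\eta + \int_\R \frac{\eta}{(t-E)^2+\eta^2}\,d\nu(t),\quad E\in\R,\,\eta>0,
\]
which is the workhorse formula.

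First I would establish the integrability of $E\to\varphi(E)\Im f(E+i\eta)$ for fixed $\eta>0$ and, simultaneously, the dominated-convergence estimate needed later. The term $b\eta\varphi(E)$ is integrable since $\varphi$ decays like $(1+E^2)^{-2}$. For the remaining Poisson-type integral I would verify the key estimate
\[
I_\eta(t) := \int_\R \frac{\eta|\varphi(E)|}{(t-E)^2+\eta^2}\,dE \leq \frac{K}{1+t^2},\quad t\in\R,\,0<\eta\leq 1,
\]
with a constant $K$ independent of $\eta$. For $|t|\leq 2$ this follows at once from $I_\eta(t)\leq \pi\|\varphi\|_\infty$, since $(1+t^2)^{-1}\geq 1/5$ on $[-2,2]$. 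For $|t|\geq 2$ I would split the $E$-integral at $|E|=|t|/2$: on $|E|\leq|t|/2$ one has $|t-E|\geq|t|/2$, giving a contribution bounded by $C\eta/t^2\leq C/t^2$; on $|E|>|t|/2$ one has $1+E^2\geq t^2/4$, giving a contribution bounded by $16\pi/t^4$ after using that the Poisson kernel integrates to $\pi$. Both pieces are dominated by a constant multiple of $(1+t^2)^{-1}$. In view of~(\ref{herglotz_meas}), the function $t\to K/(1+t^2)$ is $\nu$-integrable, so $\int d\nu(t) I_\eta(t)<\infty$ and, by Tonelli, the double integral of $|\varphi(E)|\eta/((t-E)^2+\eta^2)$ against $dE\,d\nu(t)$ is finite.

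With integrability established, Fubini gives
\[
\int_\R \varphi(E)\Im f(E+i\eta)\,dE = b\eta\int_\R\varphi(E)\,dE + \int_\R d\nu(t) \int_\R \frac{\eta\varphi(E)}{(t-E)^2+\eta^2}\,dE.
\]
The first term tends to $0$ as $\eta\downarrow 0$ since $\varphi\in L^1(\R)$. For the inner Poisson integral of the second term, the standard approximate-identity argument for the Poisson kernel on $\R$ (using continuity of $\varphi$ at $t$ together with the uniform decay bound on $\varphi$ to handle tails) gives the pointwise limit
\[
\lim_{\eta\downarrow 0}\int_\R \frac{\eta\varphi(E)}{(t-E)^2+\eta^2}\,dE = \pi\varphi(t),\quad t\in\R.
\]
The $\eta$-uniform bound $|\,\cdot\,|\leq I_\eta(t)\leq K/(1+t^2)$ established above provides a $\nu$-integrable majorant, so the dominated convergence theorem yields the limit $\pi\int_\R\varphi(t)\,d\nu(t)$ for the second term, completing the proof.

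The main obstacle is the uniform-in-$\eta$ bound $I_\eta(t)\leq K/(1+t^2)$, since a naive application of $I_\eta(t)\leq\pi\|\varphi\|_\infty$ gives only a constant in $t$ and thus fails to be $\nu$-integrable (note that $\nu$ need not be finite, only that $(1+t^2)^{-1}$ is $\nu$-integrable). The $(1+E^2)^{-2}$ decay of $\varphi$ is exactly what is needed to transfer decay from $E$ to $t$ through the Poisson kernel via the split-integral argument described above.
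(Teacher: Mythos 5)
Your proposal is correct and follows essentially the same route as the paper's proof: take the imaginary part of the Herglotz representation, exchange the integrals by Fubini after establishing a uniform-in-$\eta$ bound of order $(1+t^2)^{-1}$ on the smeared Poisson integral, and conclude by dominated convergence (in $t$, against $\nu$) using the Poisson approximate-identity limit $\pi\varphi(t)$. The only difference is technical: the paper gets the key bound from the exact convolution identity $\int_\R \frac{\eta\,dE}{(E^2+1)((t-E)^2+\eta^2)} = \frac{\pi(\eta+1)}{t^2+(\eta+1)^2}$ (using only $|\varphi(E)|\leq C(1+E^2)^{-1}$), whereas you obtain an equivalent majorant by splitting the $E$-integral at $|E|=|t|/2$; both are perfectly adequate.
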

\begin{proof}
Since $\nu$ is the Herglotz measure associated with $f$, representation~(\ref{herglotz_repr}) holds for some $a\in\R$ and $b\geq 0$. We therefore have
\[
\Im f(E+i\eta) = b\eta + \int_\R \frac{\eta\,d\nu(t)}{(t-E)^2+\eta^2},\quad E\in\R,\,\eta>0.
\]
Since
\begin{equation}\label{hergl_est}
\int_\R \frac{|\varphi(E)|\eta\,dE}{(t-E)^2+\eta^2}\leq C \int_\R \frac{\eta\,dE}{(E^2+1)((t-E)^2+\eta^2)} = \frac{C\pi(\eta+1)}{t^2+(\eta+1)^2},
\end{equation}
the function $(E,t)\to \eta((t-E)^2+\eta^2)^{-1}\varphi(E)$ on $\R\times\R$ is integrable with respect to the measure $\lambda\times\nu$, where $\lambda$ is the Lebesgue measure on $\R$. By the Fubini theorem, we conclude that the function $E\to \varphi(E)\Im f(E+i\eta)$ on $\R$ is integrable for every $\eta>0$ and
\begin{equation}\label{hergl_int}
\int_\R \varphi(E)\Im f(E+i\eta)\,dE = b\eta\int_\R\varphi(E)\,dE + \int_\R g_\eta(t)\,d\nu(t),
\end{equation}
where the function $g_\eta$ on $\R$ is given by
\[
g_\eta(t) = \int_\R \frac{\varphi(E)\eta\,dE}{(t-E)^2+\eta^2} = \int_\R \frac{\varphi(t+\eta E)\,dE}{E^2+1}.
\]
By the dominated convergence theorem, $g_\eta(t)\to \pi\varphi(t)$ as $\eta\downarrow 0$ for every $t\in\R$. Since $|g_\eta(t)|\leq 2\pi C(t^2+1)^{-1}$ for $0<\eta<1$ by~(\ref{hergl_est}) and $\nu$ satisfies~(\ref{herglotz_meas}), we can apply the dominated convergence theorem again and conclude that $\int_\R g_\eta(t)\,d\nu(t)\to \pi \int_\R \varphi(t)\,d\nu(t)$ as $\eta\downarrow 0$. In view of~(\ref{hergl_int}), it follows that $\int_\R \varphi(E)\Im f(E+i\eta)\,dE\to \pi \int_\R \varphi(E)\,d\nu(E)$ as $\eta\downarrow 0$.
\end{proof}

\section{Proof of Lemma~\ref{l_analyt0}}
\label{app_analyt}

The proof given below is similar to that of Lemma~2 in~\cite{Smirnov2016}.

Let $\mathrm{Ln}$ be the branch of the logarithm on $\C_\pi$ satisfying $\mathrm{Ln}\,1 = 0$ and $p$ be the holomorphic function on $\C\times \C_\pi$ defined by the relation $p(\kappa,\zeta) = e^{\kappa\,\mathrm{Ln}\,\zeta}$ (hence $p(\kappa,r) = r^\kappa$ for $r>0$). Let $h$ be the holomorphic function on $\C\times\C\times \C_\pi$ such that $h(\kappa,z,\zeta) = p(1/2+\kappa,\zeta)\mathcal X_\kappa(\zeta^2z)$ for all $\kappa,z\in\C$ and $\zeta\in \C_\pi$. By~(\ref{ukappa}), we have
\begin{equation}\label{hu}
h(\kappa,z,r) = \mathfrak u^\kappa(z|r),\quad \kappa,z\in\C,\,r>0.
\end{equation}
We define the holomorphic function $F_2$ on $\C\times\C\times\C_\pi$ by the formula
\[
F_2(\kappa,z,\zeta) = \frac{\pi}{2}(h(\kappa,z,\zeta) + h(-\kappa,z,\zeta))\sinc\vartheta_\kappa,\quad \kappa,z\in\C,\,\zeta\in\C_\pi.
\]
In view of~(\ref{bkappa}) and~(\ref{hu}), the equality $F_2(\kappa,z,r)=\mathfrak b^\kappa(z|r)$ holds for every $\kappa,z\in \C$ and $r>0$.
Further, we define the function $F_1$ on $\C\times\C\times\C_\pi$ by setting
\begin{align}
& F_1(\kappa,z,\zeta) = \frac{h(\kappa,z,\zeta) - h(-\kappa,z,\zeta)}{\kappa}\cos\vartheta_\kappa,\quad\kappa\in \C\setminus\{0\},\nonumber\\
& F_1(0,z,\zeta) = 2\left[\left(\mathrm{Ln}\,\frac{\zeta}{2} + \gamma\right)h(0,z,\zeta) - p(1/2,\zeta)\,\mathcal Y(\zeta^2z)\right] \nonumber
\end{align}
for every $z\in\C$ and $\zeta\in \C_\pi$. It follows immediately from~(\ref{akappa}), (\ref{a0}), (\ref{hu}), and the definition of $F_1$ that $F_1(\kappa,z,r) = \mathfrak a^\kappa(z|r)$ for every $\kappa,z\in\C$ and $r>0$. The function $(z,\zeta)\to F_1(\kappa,z,\zeta)$ is obviously holomorphic on $\C\times \C_\pi$ for every fixed $\kappa\in\C$. The function $\kappa\to F_1(\kappa,z,\zeta)$ is holomorphic on $\C\setminus \{0\}$ and continuous at $\kappa = 0$ (this is ensured by the same calculation as the one used to find the limit in~(\ref{a0})) and is therefore holomorphic on $\C$ for every fixed $z\in\C$ and $\zeta\in \C_\pi$. Hence, $F_1$ is holomorphic on $\C\times\C\times \C_\pi$ by the Hartogs theorem. The uniqueness of $F_1$ and $F_2$ follows from the uniqueness theorem for holomorphic functions.

\section{Proof of Lemma~\ref{l_comb}}
\label{app}

Let $\Xi$ denote the set of all maps from $\Z_+$ to $\Z$ that vanish outside a finite subset of $\Z_+$. We let $\mathbf{0}$ denote the element of $\Xi$ which is identically zero on $\Z_+$: $\mathbf{0}(j)=0$ for all $j\in\Z_+$. For $s\in \Xi$, let $\mathcal K_s = \{j\in\Z_+: s(j)\neq 0\}$. By the definition of $\Xi$, the set $\mathcal K_s$ is finite for every $s\in\Xi$. Let $\Xi_+$ be the set of all $s\in \Xi$ such that $s(j)\geq 0$ for every $j\in\Z_+$. Given $s,t\in\Xi$, we let $s+t$ denote the pointwise sum of $s$ and $t$: $(s+t)(j) = s(j)+t(j)$ for every $j\in \Z_+$. Endowed with this addition, $\Xi$ becomes an Abelian group with zero element $\mathbf{0}$. If $s,t\in\Xi_+$, then $s+t\in\Xi_+$. Given $s\in\Xi_+$, we define the function $e(s)$ on $\C\times\C\times\C_{3\pi/2}$ by the relation
\[
e(s|\alpha,\vartheta,z) = \prod_{j\in I} (\partial_\alpha^jR(\alpha,\vartheta,z))^{s(j)},\quad \alpha,\vartheta\in\C,\,z\in \C_{3\pi/2},
\]
where $I$ is any finite subset of $\Z_+$ such that $\mathcal K_s\subset I$ (clearly, the definition of $e(s)$ does not depend on the choice of $I$). It follows immediately from the above definition that
\begin{equation}\label{e(s+t)}
e(s+t) = e(s)e(t),\quad s,t\in\Xi_+.
\end{equation}
For every $j\in\Z_+$, we define $\delta_j\in\Xi_+$ by the formula
\[
\delta_j(k) = \left\{
\begin{matrix}
1,& k=j,\\
0,& k\in\Z_+\setminus\{j\}.
\end{matrix}
\right.
\]
If $j\in\Z_+$ and $s\in \Xi$, then we set
\[
[s]_j = s-\delta_j+\delta_{j+1}.
\]
If $s\in \Xi_+$ and $j\in \mathcal K_s$, then $[s]_j\in\Xi_+$. We now show that
\begin{align}
& \partial_\alpha e(s|\alpha,\vartheta,z) = \sum_{j\in \mathcal K_s} s(j) e([s]_j|\alpha,\vartheta,z),\label{dalpha}\\
& \partial_\vartheta e(s|\alpha,\vartheta,z) = \sum_{j\in \mathcal K_s} s(j) e(s-\delta_j|\alpha,\vartheta,z)\partial_\alpha^jR(\alpha,\vartheta+\pi/2,z)\label{dtheta}
\end{align}
for every $s\in\Xi_+$, $\alpha,\vartheta\in\C$, and $z\in \C_{3\pi/2}$.
The proof is by induction on the cardinality $\mathrm{Card}\,\mathcal K_s$ of $\mathcal K_s$. Let $\mathrm{Card}\,\mathcal K_s=1$ and $j\in\Z_+$ be such that $\mathcal K_s = \{j\}$. Then we have $e(s|\alpha,\vartheta,z) = (\partial_\alpha^jR(\alpha,\vartheta,z))^{s(j)}$. This implies~(\ref{dalpha}) and~(\ref{dtheta}) because $s(j)>0$ and
\begin{equation}\label{pi/2}
\partial_\vartheta R(\alpha,\vartheta,z)=R(\alpha,\vartheta+\pi/2,z)
\end{equation}
by~(\ref{R}) and~(\ref{R0}). Suppose now that $\mathrm{Card}\,\mathcal K_s>1$. Then we can find $s',s''\in\Xi_+$ such that $s=s'+s''$, the sets $\mathcal K_{s'}$ and $\mathcal K_{s''}$ are both nonempty, and $\mathcal K_{s'}\cap \mathcal K_{s''}=\varnothing$. Since $\mathrm{Card}\,\mathcal K_{s'}< \mathrm{Card}\,\mathcal K_s$ and $\mathrm{Card}\,\mathcal K_{s''}<\mathrm{Card}\,\mathcal K_s$, it follows from the Leibniz rule, the induction hypothesis, and formula~(\ref{e(s+t)}) that
\begin{multline}\nonumber
\partial_\alpha e(s|\alpha,\vartheta,z) = e(s''|\alpha,\vartheta,z)\sum_{j\in \mathcal K_{s'}} s'(j) e([s']_j|\alpha,\vartheta,z) + \\ + e(s'|\alpha,\vartheta,z)\sum_{j\in \mathcal K_{s''}} s''(j) e([s'']_j|\alpha,\vartheta,z).
\end{multline}
Applying~(\ref{e(s+t)}) again and observing that $[s]_j = [s']_j + s'' = s' + [s'']_j$ for every $j\in\Z_+$, we obtain~(\ref{dalpha}). In the same way, (\ref{dtheta}) follows from the Leibniz rule, the induction hypothesis, and formula~(\ref{e(s+t)}). This completes the proof of~(\ref{dalpha}) and~(\ref{dtheta}) for all $s\in\Xi_+$ that are not identically zero. It remains to note that these formulas obviously hold for $s=\mathbf{0}$.\footnote{As usual, we assume that a sum and product of an empty family are equal to zero and unity respectively. In particular, $e(\mathbf{0})$ is identically unity on $\C\times\C\times\C_{3\pi/2}$.}

Given $s,t\in\Xi_+$, we define the function $\mathcal E(s,t)$ on $\C\times\C\times\C_{3\pi/2}$ by the relation
\[
\mathcal E(s,t|\alpha,\vartheta,z) = e(s|\alpha,\vartheta,z)e(t|\alpha,\vartheta+\pi/2,z),\quad \alpha,\vartheta\in\C,\,z\in \C_{3\pi/2}.
\]
Since $\partial_\alpha^jR(\alpha,\vartheta,z)=e(\delta_j|\alpha,\vartheta,z)$ for every $j\in\Z_+$, it follows from~(\ref{periodR}), (\ref{dalpha}), and~(\ref{dtheta}) that
\begin{equation}
\partial_\alpha \mathcal E(s,t|\alpha,\vartheta,z) = \sum_{j\in \mathcal K_s} s(j)\mathcal E([s]_j,t|\alpha,\vartheta,z) + \sum_{j\in \mathcal K_t} t(j)\mathcal E(s,[t]_j|\alpha,\vartheta,z),\nonumber
\end{equation}
\begin{multline}\nonumber
\partial_\vartheta \mathcal E(s,t|\alpha,\vartheta,z) = \sum_{j\in \mathcal K_s} s(j)\mathcal E(s-\delta_j,t+\delta_j|\alpha,\vartheta,z) -\\
- \sum_{j\in \mathcal K_t} t(j)\mathcal E(s+\delta_j,t-\delta_j|\alpha,\vartheta,z)
\end{multline}
In view of~(\ref{pi/2}), this implies that
\begin{equation}\label{dd}
\partial_\alpha\frac{\mathcal E(s,t|\alpha,\vartheta,z)}{R(\alpha,\vartheta,z)^m} = \frac{G_m(s,t|\alpha,\vartheta,z)}{R(\alpha,\vartheta,z)^{m+1}},\quad \partial_\vartheta\frac{\mathcal E(s,t|\alpha,\vartheta,z)}{R(\alpha,\vartheta,z)^m} = \frac{H_m(s,t|\alpha,\vartheta,z)}{R(\alpha,\vartheta,z)^{m+1}}
\end{equation}
for every $s,t\in\Xi_+$, $m = 1,2,\ldots$, and $(\alpha,\vartheta,z)\in \mathscr O$, where the functions $G_m(s,t)$ and $H_m(s,t)$ on $\C\times\C\times\C_{3\pi/2}$ are given by
\begin{equation}
G_m(s,t) = \sum_{j\in \mathcal K_s} s(j)\mathcal E([s]_j+\delta_0,t) + \sum_{j\in \mathcal K_t} t(j)\mathcal E(s+\delta_0,[t]_j)-m\mathcal E(s+\delta_1,t),\label{G_m}
\end{equation}
\begin{multline}
H_m(s,t) = \sum_{j\in \mathcal K_s} s(j)\mathcal E(s-\delta_j+\delta_0,t+\delta_j) -\\- \sum_{j\in \mathcal K_t} t(j)\mathcal E(s+\delta_j+\delta_0,t-\delta_j)-m\mathcal E(s,t+\delta_0).\label{H_m}
\end{multline}

For $k,l\in\Z_+$, we set $\Xi_{k+}=\{s\in\Xi_+: s(j)=0 \mbox{ for }j>k\}$ and
\[
\mathcal S_{k,l} = \{(s,t)\in \Xi_{k+}\times \Xi_{k+}: (s|_{\{0,\ldots,k\}},t|_{\{0,\ldots,k\}})\in \mathcal Q_{k,l}\},
\]
where the set $\mathcal Q_{k,l}$ is given by~(\ref{skl}).

Let $k,l\in\Z_+$, $a\geq 0$, and $f$ be a function on $\C\times\C\times\C_{3\pi/2}$. We say that $f$ is a function of type $(k,l,a)$ if there exists a map $c\colon \mathcal S_{k,l}\to\Z$ such that $\sum_{(s,t)\in \mathcal S_{k,l}}|c(s,t)|\leq a$ and
\begin{equation}\label{sumrepr}
f(\alpha,\vartheta,z) = \sum_{(s,t)\in \mathcal S_{k,l}} c(s,t)\mathcal E(s,t|\alpha,\vartheta,z)
\end{equation}
for every $\alpha,\vartheta\in\C$ and $z\in\C_{3\pi/2}$. In particular, $n\mathcal E(s,t)$ is a function of type $(k,l,|n|)$ for every $(s,t)\in \mathcal S_{k,l}$ and $n\in\Z$. Clearly, the following properties hold:
\begin{enumerate}
\item[(1)] If $f$ is a function of type $(k,l,a)$, then $nf$ is a function of type $(k,l,|n|a)$ for every $n\in\Z$.
\item[(2)] If $f_1$ and $f_2$ are functions of types $(k,l,a_1)$ and $(k,l,a_2)$, then $f_1 + f_2$ is a function of type $(k,l,a_1+a_2)$.
\item[(3)] Let $I$ be a finite set and $F$ and $A$ be maps on $I$ such that $F(\iota)$ is a function of type $(k,l,A(\iota))$ for every $\iota\in I$. Then $\sum_{\iota\in I} F(\iota)$ is a function of type $\left(k,l,\sum_{\iota\in I} A(\iota)\right)$.
\end{enumerate}

Given $k,l\in\Z_+$, we can define a bijection $p$ between $\mathcal S_{k,l}$ and $\mathcal Q_{k,l}$ by setting $p(s,t) = (s|_{\{0,\ldots,k\}},t|_{\{0,\ldots,k\}})$ for every $(s,t)\in \mathcal S_{k,l}$. The right-hand side of~(\ref{sumrepr}) then coincides with the sum in the right-hand side of~(\ref{dkdl}) for $C = c\circ p^{-1}$. Hence, Lemma~\ref{l_comb} is ensured by the following statement.

\begin{lemma}\label{l_comb'}
For every $k,l\in\Z_+$, there is a function $f$ of type $(k,l,2^{k+l}(k+l)!)$ such that
\[
\partial_\alpha^k\partial_\vartheta^l\mathscr F(\alpha,\vartheta,z) = \frac{f(\alpha,\vartheta,z)}{R(\alpha,\vartheta,z)^{k+l+1}}
\]
for every $(\alpha,\vartheta,z)\in \mathscr O$.
\end{lemma}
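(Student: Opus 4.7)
The plan is to argue by induction on $n = k+l$. For the \textbf{base case} $n=0$, definition~(\ref{scrF}) together with the identities $R(\alpha,\vartheta+\pi/2,z) = e(\delta_0|\alpha,\vartheta+\pi/2,z)$ and $\mathcal E(\mathbf 0,\delta_0|\alpha,\vartheta,z) = e(\mathbf 0|\alpha,\vartheta,z)\,e(\delta_0|\alpha,\vartheta+\pi/2,z) = R(\alpha,\vartheta+\pi/2,z)$ give $\mathscr F = \mathcal E(\mathbf 0,\delta_0)/R$. Since $(\mathbf 0,\delta_0)\in \mathcal S_{0,0}$ (both defining identities from $\mathcal Q_{0,0}$ reduce to trivialities), the numerator is of type $(0,0,1) = (0,0,\,2^0\cdot 0!)$.

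For the \textbf{inductive step}, assume the claim holds for all $(k',l')$ with $k'+l' = n$, and take $k,l$ with $k+l = n+1$, say with $k\geq 1$ (the case $l\geq 1$ is symmetric). By the hypothesis,
\[
\partial_\alpha^{k-1}\partial_\vartheta^l\mathscr F = \frac{f}{R^{k+l}},\qquad f = \sum_{(s,t)\in\mathcal S_{k-1,l}} c(s,t)\,\mathcal E(s,t),\quad \sum_{(s,t)}|c(s,t)|\leq 2^n n!.
\]
Differentiating in $\alpha$ using the first identity in~(\ref{dd}) with $m = k+l$ yields $\partial_\alpha^k\partial_\vartheta^l\mathscr F = g/R^{k+l+1}$, where $g = \sum_{(s,t)} c(s,t)\, G_{k+l}(s,t)$ with $G_{k+l}(s,t)$ given by~(\ref{G_m}). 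Two verifications remain. First, each summand $\mathcal E(s',t')$ appearing in $G_{k+l}(s,t)$ has $(s',t')\in\mathcal S_{k,l}$: for instance, for $(s',t') = ([s]_j+\delta_0,\,t)$ with $j\in\mathcal K_s$ (so $j\leq k-1$), one computes
\[
\sum_i i\bigl(s'(i)+t'(i)\bigr) = (k-1) + 1 = k,\quad \sum_i\bigl(s'(i)+t'(i)\bigr) = (k+l)+1 = k+l+1,
\]
and the same verification applies verbatim to the other two types $(s+\delta_0,[t]_j)$ and $(s+\delta_1,t)$. Second, the total $\ell^1$-weight of the coefficients in $G_{k+l}(s,t)$ equals $\sum_{j\in\mathcal K_s} s(j) + \sum_{j\in\mathcal K_t} t(j) + (k+l) = 2(k+l)$, using $\sum_j(s(j)+t(j)) = k+l$ for $(s,t)\in\mathcal S_{k-1,l}$. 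Combining these via the additivity properties~(1)--(3) of ``type'' gives $g$ of type $(k,l,\,2(k+l)\cdot 2^n n!) = (k,l,\,2^{n+1}(n+1)!)$, as required. The case $l\geq 1$ runs identically using the second identity in~(\ref{dd}) and~(\ref{H_m}) in place of $G_{k+l}$: its three summand types $(s-\delta_j+\delta_0,t+\delta_j)$, $(s+\delta_j+\delta_0,t-\delta_j)$, and $(s,t+\delta_0)$ all land in $\mathcal S_{k,l+1}$, and the weight count is again $2(k+l)$.

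The \textbf{main obstacle} is the combinatorial bookkeeping inside $G_m$ and $H_m$, i.e.\ confirming membership in the correct set $\mathcal S_{k,l}$ (respectively $\mathcal S_{k,l+1}$) and totalling the coefficient weights, for all six summand types. Each of these is an elementary computation, but care is required in tracking the interaction of the shift $[\,\cdot\,]_j$ with the added $\delta_0$ or $\delta_j$ (particularly when the added index lies in $\mathcal K_s$ or $\mathcal K_t$), and in ensuring that the weight sums neatly telescope to $2(k+l)$ at every step of the induction.
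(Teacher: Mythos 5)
Your proof is correct and follows essentially the same route as the paper's: an induction driven by the differentiation identities~(\ref{dd}) together with the membership and coefficient-weight bookkeeping for $G_m$ and $H_m$ via properties~(1)--(3), the only difference being that you induct on $n=k+l$ and step back from $(k,l)$ to $(k-1,l)$ or $(k,l-1)$, whereas the paper steps forward from $(k,l)$ to $(k+1,l)$ and $(k,l+1)$ — an inessential reorganization. One trivial slip: in your backward convention for the $\vartheta$-case (hypothesis at $(k,l-1)$, $m=k+l$) the summands of $H_{k+l}(s,t)$ land in $\mathcal S_{k,l}$, not $\mathcal S_{k,l+1}$; with that index corrected, your weight count $2(k+l)$ and the resulting bound $2^{n+1}(n+1)!$ are exactly right.
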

\begin{proof}
We note that $(\mathbf{0},\delta_0)\in \mathcal S_{0,0}$ and $\mathcal E(\mathbf{0},\delta_0|\alpha,\vartheta,z)=R(\alpha,\vartheta+\pi/2,z)$ for every $\alpha,\vartheta\in\C$ and $z\in\C_{3\pi/2}$. It therefore follows from~(\ref{scrF}) that the statement holds for $k=l=0$ with $f=\mathcal E(\mathbf{0},\delta_0)$. Suppose the statement is true for $k,l\in\Z_+$ and the map $c\colon \mathcal S_{k,l}\to\Z$ is such that
\begin{equation}\label{cbound}
\sum_{(s,t)\in\mathcal S_{k,l}}|c(s,t)|\leq 2^{k+l}(k+l)!
\end{equation}
and
\[
\partial_\alpha^k\partial_\vartheta^l\mathscr F(\alpha,\vartheta,z) = \frac{1}{R(\alpha,\vartheta,z)^{k+l+1}} \sum_{(s,t)\in \mathcal S_{k,l}} c(s,t)\mathcal E(s,t|\alpha,\vartheta,z)
\]
for every $(\alpha,\vartheta,z)\in \mathscr O$. In view of~(\ref{dd}), it follows that
\[
\partial_\alpha^{k+1}\partial_\vartheta^l\mathscr F(\alpha,\vartheta,z) = \frac{g(\alpha,\vartheta,z)}{R(\alpha,\vartheta,z)^{k+l+2}},\quad \partial_\alpha^{k}\partial_\vartheta^{l+1}\mathscr F(\alpha,\vartheta,z) = \frac{h(\alpha,\vartheta,z)}{R(\alpha,\vartheta,z)^{k+l+2}}
\]
for every $(\alpha,\vartheta,z)\in \mathscr O$, where the functions $g$ and $h$ on $\C\times\C\times\C_{3\pi/2}$ are given~by
\begin{equation}\label{gh}
g = \sum_{(s,t)\in \mathcal S_{k,l}} c(s,t) G_{k+l+1}(s,t),\quad h = \sum_{(s,t)\in \mathcal S_{k,l}} c(s,t) H_{k+l+1}(s,t).
\end{equation}
Our statement will be proved by induction on $k$ and $l$ if we show that $g$ and $h$ are functions of type $(k+1,l,a)$ and $(k,l+1,a)$ respectively, where $a=2^{k+l+1}(k+l+1)!$. If $(s,t)\in\mathcal S_{k,l}$, then $([s]_j+\delta_0,t)\in \mathcal S_{k+1,l}$ for every $j\in \mathcal K_s$. Choosing $I=\mathcal K_s$, setting $F(j) = s(j)\mathcal E([s]_j+\delta_0,t)$ and $A(j)=s(j)$ for every $j\in I$, and applying property~(3), we conclude that $\sum_{j\in \mathcal K_s}s(j)\mathcal E([s]_j+\delta_0,t)$ is a function of type $(k+1,l,\sum_{j\in \mathcal K_s}s(j))$. Further, if $(s,t)\in\mathcal S_{k,l}$, then $(s+\delta_0,[t]_j)\in \mathcal S_{k+1,l}$ for every $j\in \mathcal K_t$. Hence, choosing $I=\mathcal K_t$, setting $F(j) = t(j)\mathcal E(s+\delta_0,[t]_j)$ and $A(j)=t(j)$ for every $j\in I$, and applying property~(3), we deduce that $\sum_{j\in \mathcal K_t}t(j)\mathcal E(s+\delta_0,[t]_j)$ is a function of type $(k+1,l,\sum_{j\in \mathcal K_t}t(j))$. Since $(s+\delta_1,t)\in \mathcal S_{k+1,l}$ and $\sum_{j=0}^k (s(j)+t(j))=k+l+1$ for every $(s,t)\in\mathcal S_{k,l}$, it follows from~(\ref{G_m}) and property~(2) that $G_{k+l+1}(s,t)$ is a function of type $(k+1,l,2(k+l+1))$ for every $(s,t)\in\mathcal S_{k,l}$. In a similar way, (\ref{H_m}) and properties~(2) and~(3) imply that $H_{k+l+1}(s,t)$ is a function of type $(k,l+1,2(k+l+1))$ for every $(s,t)\in\mathcal S_{k,l}$. Our claim therefore follows from~(\ref{cbound}), (\ref{gh}), and properties~(1) and~(3).
\end{proof}

\section*{Acknowledgments}
The author is grateful to I.V.~Tyutin and B.L.~Voronov for useful discussions.

\bibliographystyle{acm}
\bibliography{isquare_new}

\end{document}